\definecolor{ForestGreen}{rgb}{0.1333,0.5451,0.1333}
\newcommand{\showccc}[0]{0}
\newcommand{\ccc}[2][nothing]{%
	\ifthenelse{\showccc=0}{}{
		\ensuremath{^{\Lsh\Rsh}}\marginpar{\raggedright\tiny\textsf{%
				\ifthenelse{\equal{#1}{nothing}}{}{\textbf{#1}\\}#2}}}}
\newcounter{hours}\newcounter{minutes}
\newcommand{\hhmm}{%
	\setcounter{hours}{\time/60}%
	\setcounter{minutes}{\time-\value{hours}*60}%
	\ifthenelse{\value{hours}<10}{0}{}\thehours:%
	\ifthenelse{\value{minutes}<10}{0}{}\theminutes}
\newtheorem{theorem}{Theorem}
\newtheorem{proposition}{Proposition}
\newtheorem{corollary}{Corollary}
\newtheorem{definition}{Definition}
\newtheorem{remark}{Remark}
\newtheorem{lemma}{Lemma}
\newtheorem{fact}{Fact}
\newtheorem{assumption}{Assumption}
\newcommand{\defeq}{:=}
\newcommand*{\abs}[1]{\left\lvert#1\right\rvert}      
\newcommand{\norm}[1]{\left\lVert#1\right\rVert}
\newcommand{\normop}[1]{\left\lVert#1\right\rVert_{\textup{op}}}
\newcommand{\inprod}[2]{\left\langle#1, #2\right\rangle}
\newcommand{\eps}{\epsilon}
\newcommand{\R}{\mathbb{R}}
\newcommand{\N}{\mathbb{N}}
\newcommand{\half}{\frac{1}{2}}
\newcommand{\thalf}{\tfrac{1}{2}}
\newcommand{\1}{\mathbbm{1}}
\newcommand{\E}{\mathbb{E}}
\newcommand{\Var}{\textup{Var}}
\newcommand{\Nor}{\mathcal{N}}
\newcommand{\Tr}{\textup{Tr}}
\newcommand{\ma}{\mathbf{A}}
\newcommand{\mb}{\mathbf{B}}
\newcommand{\mm}{\mathbf{M}}
\newcommand{\my}{\mathbf{Y}}
\newcommand{\mx}{\mathbf{X}}
\newcommand{\id}{\mathbf{I}}
\newcommand{\tO}{\widetilde{O}}
\newcommand{\tOmega}{\widetilde{\Omega}}
\newcommand{\lmax}{\lambda_{\textup{max}}}
\newcommand{\lmin}{\lambda_{\textup{min}}}
\newcommand{\Par}[1]{\left(#1\right)}
\newcommand{\Brack}[1]{\left[#1\right]}
\newcommand{\Brace}[1]{\left\{#1\right\}}
\newcommand{\msig}{\boldsymbol{\Sigma}}
\newcommand{\mzero}{\mathbf{0}}
\newcommand{\ndir}{N_{\textup{dir}}}
\newcommand{\kmax}{k_{\max}}
\newcommand{\med}{\textup{med}}
\newcommand{\Tin}{T_{\textup{in}}}
\newcommand{\Tout}{T_{\textup{out}}}
\newcommand{\Tmid}{T_{\textup{mid}}}
\newcommand{\drd}{\delta_{\textup{rd}}}
\newcommand{\nin}{n_{\text{in}}}
\newcommand{\GMultifilter}{\mathsf{FastGaussianMultifilter}}
\newcommand{\NaiveCluster}{\mathsf{NaiveCluster}}
\newcommand{\NaiveClusterPlus}{\mathsf{NaiveClusterPlus}}
\newcommand{\GMultifilterBD}{\mathsf{FastGaussianMultifilterBoundedDiameter}}
\newcommand{\GPartition}{\mathsf{GaussianPartition}}
\newcommand{\GPartitionOneD}{\mathsf{Gaussian1DPartition}}
\newcommand{\Partition}{\mathsf{Partition}}
\newcommand{\PartitionOneD}{\mathsf{1DPartition}}
\newcommand{\GSOrC}{\mathsf{GaussianSplitOrCluster}}
\newcommand{\SOrC}{\mathsf{SplitOrCluster}}
\newcommand{\TBOC}{\mathsf{SplitOrTailBound}}
\newcommand{\Multifilter}{\mathsf{FastMultifilter}}
\newcommand{\MultifilterBD}{\mathsf{FastMultifilterBoundedDiameter}}
\newcommand{\RandDrop}{\mathsf{RandDrop}}
\newcommand{\IteratePostProcess}{\mathsf{IteratePostProcess}}
\newcommand*{\Fix}{\mathsf{Fixing}}
\newcommand{\ClusterUGMM}{\mathsf{ClusterUniformGMM}}
\newcommand{\ClusterGMM}{\mathsf{ClusterRobustGMM}}
\newcommand{\ClusterBFMM}{\mathsf{ClusterRobustBFMM}}
\newcommand{\ClusterBCMM}{\mathsf{ClusterRobustBCMM}}
\newcommand{\alg}{\mathcal{A}}
\newcommand{\mg}{\mathbf{G}}
\newcommand{\mproj}{\mathbf{P}}
\newcommand{\dist}{\mathcal{D}}
\newcommand{\Sym}{\mathbb{S}}
\newcommand{\cov}{\textup{Cov}}
\newcommand{\tcov}{\widetilde{\cov}}
\newcommand{\mus}{\mu^*}
\newcommand{\hmu}{\hat{\mu}}
\newcommand{\bmu}{\bar{\mu}}
\newcommand{\taumed}{\tau_{\textup{med}}}
\newcommand{\tL}{\widetilde{L}}
\newcommand{\tX}{\widetilde{X}}
\newcommand{\ml}{\mathbf{L}}
\newcommand{\tmu}{\tilde{\mu}}
\newcommand*\poly{\operatorname{poly}}
\definecolor{burntorange}{rgb}{0.8, 0.33, 0.0}
\def\colorful{1}
\newtheorem{question}[theorem]{Question}
\begin{document}

\begin{titlepage}
  \def\thepage{}
  \thispagestyle{empty}
  
  \title{Clustering Mixture Models in Almost-Linear Time \\ via List-Decodable Mean Estimation}  
  \date{}
  \author{
    Ilias Diakonikolas\thanks{University of Wisconsin, Madison, {\tt ilias@cs.wisc.edu}.}
    \and
    Daniel M.\ Kane\thanks{University of California, San Diego, {\tt dakane@cs.ucsd.edu}.}
    \and
    Daniel Kongsgaard\thanks{University of California, San Diego, {\tt dkongsga@ucsd.edu}.} 
    \and
    Jerry Li\thanks{Microsoft Research, {\tt jerrl@microsoft.com}}
    \and
    Kevin Tian\thanks{Stanford University, {\tt kjtian@stanford.edu}. Part of this work was done as an intern at Microsoft Research.}
  }
  
  \maketitle

\abstract{
We study the problem of list-decodable mean estimation, where an adversary can corrupt a majority of the dataset. 
Specifically, we are given a set $T$ of $n$ points in $\mathbb{R}^d$ and a parameter $0< \alpha <\frac 1 2$ 
such that an $\alpha$-fraction of the points in $T$ are i.i.d.\ samples from a well-behaved distribution $\mathcal{D}$ 
and the remaining $(1-\alpha)$-fraction are arbitrary. The goal is to output a small list of vectors,
at least one of which is close to the mean of $\mathcal{D}$. We develop new algorithms for list-decodable mean estimation, 
achieving nearly-optimal statistical guarantees, with running time $O(n^{1 + \eps_0} d)$, for any fixed $\eps_0 > 0$. 
All prior algorithms for this problem had additional polynomial factors in $\frac 1 \alpha$. 
We leverage this result, together with additional techniques, to obtain the first {\em almost-linear time} algorithms 
for clustering mixtures of $k$ separated well-behaved distributions, nearly-matching the statistical guarantees of spectral methods. 
Prior clustering algorithms inherently relied on an application of $k$-PCA, thereby incurring runtimes of $\Omega(n d k)$. 
This marks the first runtime improvement for this basic statistical problem in nearly two decades.

The starting point of our approach is a novel and simpler near-linear time robust mean estimation algorithm in the $\alpha \to 1$ regime, 
based on a one-shot matrix multiplicative weights-inspired potential decrease. We crucially leverage this new algorithmic framework 
in the context of the iterative multi-filtering technique of \cite{DiakonikolasKS18, DiakonikolasKK20}, 
providing a method to simultaneously cluster and downsample points using {\em one-dimensional} 
projections --- thus, bypassing the $k$-PCA subroutines required by prior algorithms.
}

\end{titlepage}
\pagenumbering{gobble}
\newpage
\setcounter{tocdepth}{2}
{
  \hypersetup{linkcolor=black}
  \tableofcontents
}
\newpage
\pagenumbering{arabic}

\section{Introduction}\label{sec:intro}

We develop novel algorithms achieving almost-optimal runtimes for two closely related fundamental problems in high-dimensional statistical estimation: clustering well-separated mixture models and mean estimation in the list-decodable learning (``majority-outlier'') regime. Before we formally state our contributions, we provide the necessary background and motivation for this work.

\paragraph{Clustering well-separated mixture models.} Mixture models are a well-studied class of generative models used widely in practice. Given a family of distributions $\cal{F}$, a mixture model $\cal{M}$ with $k$ components is specified by $k$ distributions $\dist_1, \ldots, \dist_k \in \cal{F}$ and nonnegative mixing weights $\alpha_1, \ldots, \alpha_k$ summing to one, and its law is given by $\sum_{i \in [k]} \alpha_i \dist_i$.
That is, to draw a sample from $\cal{M}$, we first choose $i \in [k]$ with probability $\alpha_i$, and
then draw a sample from $\dist_i$. When the weights are all equal to $\frac 1 k$, we call the mixture \emph{uniform}.
Mixture models, especially Gaussian mixture models, have been widely studied in statistics since pioneering
work of Pearson in 1894 \cite{Pearson94}, and more recently, in theoretical computer science ~\cite{dasgupta1999learning, arora2005learning, vempala2004spectral, achlioptas2005spectral, KannanSV08, BrubakerV08, awasthi2012improved, RegevV17}.

A canonical learning task for mixture models is the {\em clustering problem}.
Namely, given independent samples drawn from $\cal{M}$, the goal is to
approximately recover which samples came from which component.
To ensure that this inference task is information-theoretically possible,
a common assumption is that $\cal{M}$ is ``well-separated'' and ``well-behaved'':
for example, we may assume each component $\dist_i$ is sufficiently concentrated
(with sub-Gaussian tails or bounded moments),
and that component means have pairwise distance at least $\Delta$, for sufficiently large $\Delta$.
The goal is then to efficiently and accurately cluster samples from $\cal{M}$ %
with as small a separation as possible.

The prototypical example is the case of uniform mixtures of bounded-covariance Gaussians, i.e.\
mixtures of the form $\mathcal{M} = \sum_{i \in [k]} \frac 1 k \Nor(\mu_i, \msig_i)$,
where each $\msig_i$ is unknown and satisfies $\normop{\msig_i} \leq \sigma^2$.
Prior to the current work, the fastest known algorithm for this learning problem was due to~\cite{achlioptas2005spectral},
building on \cite{vempala2004spectral}. Notably, \cite{achlioptas2005spectral}  gave a polynomial-time clustering
algorithm when $\Delta = \Omega(\sigma \sqrt k)$. 
Interestingly, the algorithmic approach of~\cite{vempala2004spectral, achlioptas2005spectral} is 
surprisingly simple and elegant:
first, run $k$-PCA on the set of $n$ samples in $\R^d$ to find a $k$-dimensional subspace
(which can be shown to approximately capture the span of the component means),
and then perform a distance-based clustering algorithm in this subspace.
The runtime of this algorithm is dominated by $\tOmega(ndk)$ -- the cost 
of (approximate) $k$-PCA.\footnote{Throughout the paper, when convenient, 
	we hide polylogarithmic factors in the sample size and algorithm failure probabilities with the $\tO$ notation. 
	We reserve the terminology ``almost-linear'' to mean linear up to subpolynomial factors, 
	and the terminology ``nearly-linear'' to mean linear up to polylogarithmic factors.} 
The idea of using $k$-PCA as a subroutine to solve the clustering problem 
is very natural and has also been useful in practice.
Indeed, %
using PCA as a preprocessing step before applying further learning algorithms 
(such as clustering) is so ubiquitous that it is commonly suggested 
by introductory textbooks on machine learning, see e.g.~\cite{murphy2012machine}.

However, in our setting, since the size of the description this problem is $O(n d)$, the runtime of $k$-PCA is off from linear time by a factor of roughly $k$. 
In many real-world settings, this factor of $k$ is quite significant. 
For instance, modern image datasets such as ImageNet~\cite{deng2009imagenet} 
often have hundreds or thousands of different classes and subclasses~\cite{santurkar2020breeds}. 
As a result, many clustering tasks on these datasets often have $k$ of the same order.
The resulting overhead would cause many tasks to be infeasible at scale on these datasets.
Yet, despite considerable attention over the last two decades,\footnote{We note that a recent line of work has developed sophisticated
	polynomial-time clustering algorithms under smaller separation assumptions, see e.g.~\cite{DiakonikolasKS18,hopkins2018mixture,kothari2018robust}.
	These algorithms leverage higher moments of the distribution and consequently require significantly
	higher sample and computational complexity.} no faster algorithm has been developed for the clustering
task. In particular, the runtime of $k$-PCA has remained a bottleneck in this setting.

The preceding discussion motivates the following natural question.

\begin{question}\label{qn:mixtures}
	Can we cluster mixtures of $k$ ``well-separated'' structured distributions without the use of $k$-PCA?
	More ambitiously, is there a clustering algorithm that runs in (almost)-{\em linear time}?
\end{question}
Prior to the current work, this question remained open even for uniform $k$-mixtures of identity covariance Gaussians
with pairwise mean separation as large as $\poly(k)$. In addition to its fundamental interest, a runtime improvement
of this sort may have significant practical implications for clustering at scale in real-world applications, see e.g.~\cite{patinkin2011method,wierzchon2018modern}, where spectral methods are commonly used.
As our main contribution, {\em we resolve Question~\ref{qn:mixtures} for the general class of
mixtures of bounded-covariance distributions under information-theoretically near-optimal separation.}

\paragraph{List-decodable mean estimation.}
In many statistical settings, including machine learning security~\cite{BarrenoNJT10, BiggioNL12, SteinhardtKL17,diakonikolas2019sever}
and exploratory data analysis e.g.\ in biology~\cite{RosenbergPWCKZF02, LiATSCRCBFCM08, PaschouLJD10},
datasets contain arbitrary --- and even adversarially chosen --- outliers.
The central question of the field of robust statistics
is to design estimators tolerant to a small amount of unconstrained contamination.
Classical work in this field~\cite{anscombe1960rejection,tukey1960survey,huber1964robust,tukey1975mathematics, HampelRRS86, huber2004robust} developed
robust estimators for many basic tasks, although with computational costs scaling exponentially in the problem dimension.
More recently, a line of work in computer science, starting with~\cite{diakonikolas2019robust, lai2016agnostic},
developed the first computationally-efficient learning algorithms (attaining near-optimal error)
for various estimation problems. Subsequently, there has been significant progress
in algorithmic robust statistics in a variety of settings (see~\cite{diakonikolas2019recent} for a survey).

In many of these works, it is typically assumed that the fraction of corrupted points is less than $\half$.
Indeed, when more than half the points are corrupted, the problem is ill-posed:
there is not necessarily a uniquely-defined notion of ``uncorrupted samples.''
While outputting a {\em single} accurate hypothesis in this regime is information-theoretically impossible,
one may be able to compute a {\em small list} of hypotheses with the guarantee that {\em at least one of them} is accurate.
This relaxed notion of estimation is known as \emph{list-decodable learning}~\cite{balcan2008discriminative, CharikarSV17}.

\begin{definition}[List-decodable learning]\label{def:ld}
Given a parameter $0<\alpha < \half$ and a distribution family $\cal{F}$ on $\R^d$,
a \emph{list-decodable learning algorithm} takes as input $\alpha$ and a multiset $T$ of $n$ points
such that an unknown $\alpha$ fraction of $T$ are independent samples
from an unknown distribution $\dist \in \mathcal{F}$, and no assumptions are made on the remaining samples.
Given $T$ and $\alpha$, the goal is to output a ``small'' list of hypotheses
at least one of which is close to the target parameter of $\dist$.
\end{definition}

\noindent

Arguably the most fundamental problem in the list-decodable learning setting is mean estimation,
wherein the goal is to output a small list of hypotheses, one of which is close to the true mean.
A natural problem in its own right, list-decodable mean estimation generalizes the problem of 
learning well-separated mixture models (as explained below) and can model important 
applications such as crowdsourcing~\cite{steinhardt2016avoiding, meister2018data} 
or semi-random community detection in stochastic block models \cite{CharikarSV17}. Moreover, it is particularly 
useful in the context of semi-verified learning~\cite{CharikarSV17, meister2018data}, 
where a learner can audit a small amount of trusted data. {\em An important remark is
that the parameter $\alpha \in (0, \half)$ can be quite small in some of these applications and should not necessarily
be thought of as a constant.} In addition to applications in clustering mixture models, 
a concrete example is the crowdsourcing setting with many unreliable responders 
studied in \cite{meister2018data}, where the parameter $\alpha$ is tiny, 
depending inversely-polynomially on other problem parameters such as the dimension.

The parameter $\alpha$ in the list-decodable mean estimation setting plays a very similar role
to the parameter $\frac 1 k$ in learning (uniform) mixture models.
This is no coincidence: list-decodable mean estimation can be thought of as a natural robust generalization
of clustering well-separated mixtures. Indeed, if we run a list-decodable mean estimation algorithm
on a dataset drawn from a uniform mixture of $k$ sufficiently nice and well-separated distributions
with $\alpha$ set to $\frac 1 k$, the output list {\em must} contain
a candidate mean which is close to the mean of each component. This is because from the perspective
of the list-learning algorithm, each component could be the ``true'' unknown distribution $\dist$,
and thus the list must contain a hypothesis close to the mean of this ``true'' distribution.
This small list of hypotheses can then typically be used %
to cluster the original dataset. One conceptually important implication
of this observation is that list-decodable mean estimation algorithms also naturally lead to algorithms
for clustering well-separated mixture models (even in the presence of a small fraction of corrupted samples) ---
a reduction we formalize in this work.

The first polynomial-time algorithm for list-decodable mean estimation, 
when $\cal{F}$ is the family of bounded-covariance distributions, 
was by \cite{CharikarSV17}. The \cite{CharikarSV17} algorithm 
was based on black-box calls to semidefinite program solvers and had a large polynomial runtime.
Since then, a sequence of works \cite{DiakonikolasKK20, CherapanamjeriMY20, DiakonikolasKKLT20}
have obtained substantially improved runtimes for this problem,
while retaining the (near-optimal) statistical guarantees of \cite{CharikarSV17}.
The algorithm by \cite{DiakonikolasKKLT20} runs in time 
$\widetilde{O}(\frac{nd}{\alpha})$ and achieves near-optimal error
(within a polylogarithmic factor). %

Interestingly, as in the case of clustering mixture models, 
the $\widetilde{\Omega}(\frac{nd}{\alpha})$ runtime dependence 
of the \cite{DiakonikolasKKLT20} algorithm is {\em also} due to running a $k$-PCA subroutine 
--- for $k = \Omega(\frac 1 \alpha)$ ---  
to reduce the problem to a $k$-dimensional subspace. In more detail, 
the algorithm of~\cite{DiakonikolasKKLT20} can be viewed as a reduction 
from list-decodable mean estimation to polylogarithmically many calls
to $k$-PCA (for carefully chosen matrices).
Thus, the cost of $k$-PCA appears as a runtime barrier in state-of-the-art algorithms 
for list-decodable mean estimation as well.
In regimes where $\alpha$ is small, the $\widetilde{\Omega}(\frac{nd}{\alpha})$
runtime is significantly sub-optimal in the input size. This leaves open whether 
the extraneous linear dependence on $\alpha^{-1}$ is improvable, and brings us to our second main question.

\begin{question}\label{qn:ld}
Can we perform list-decodable mean estimation with near-optimal statistical guarantees in (almost)-linear time?
\end{question}

In this paper, {\em we similarly resolve Question~\ref{qn:ld} for the class of
bounded-covariance distributions.}

\subsection{Our results} \label{ssec:results}

We answer both Question~\ref{qn:mixtures} and Question~\ref{qn:ld} in the affirmative, up to subpolynomial factors.
Perhaps surprisingly, to resolve the longstanding open problem of clustering mixture models in almost-linear time,
we develop an almost-linear time algorithm for the (much more general) problem of list-decodable mean estimation.
To then solve the clustering problem, we develop a fast post-processing technique that efficiently reduces the clustering task to list-decodable mean estimation.
In light of this development, we begin by presenting our list-decodable estimation result.

\begin{theorem}[informal, see Theorem~\ref{thm:multifilter}]\label{thm:inflistmean}
For any fixed constant $\eps_0 > 0$, there is an algorithm $\Multifilter$ with the following guarantee.
Let $\dist$ be a distribution over $\R^d$ with unknown mean $\mus$ and unknown covariance $\msig$ with
$\normop{\msig} \le \sigma^2$, and let $\alpha \in (0, 1)$.
Given $\alpha$ and a multiset of $n = \Omega(\frac d \alpha)$ points on $\R^d$
such that an $\alpha$-fraction are i.i.d.\ draws from $\dist$, $\Multifilter$ runs in time $O(n^{1 + \eps_0} d)$ and
outputs a list $L$ of $O(\alpha^{-1})$ hypotheses so that with high probability
we have
\[\min_{\hmu \in L} \norm{\hmu - \mus}_2 = O \Par{\frac {\sigma \log \alpha^{-1}} {\sqrt{\alpha}}}.\]
\end{theorem}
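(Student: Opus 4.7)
The plan is to combine the iterative multi-filter framework of \cite{DiakonikolasKS18, DiakonikolasKK20} with a new matrix-multiplicative-weights-style inner loop that replaces the $k$-PCA subroutine of prior list-decodable mean estimation algorithms by one-dimensional projections. At a high level, $\Multifilter$ maintains a collection of candidate weighted subsets of the input. In each iteration, for every such subset it either (i) certifies that its weighted empirical covariance has small operator norm and adds the weighted mean to the output list, or (ii) identifies a one-dimensional direction in which the subset exhibits atypically heavy-tailed projections, enabling it to either filter out points in a heavy tail or split the subset into two children while guaranteeing, via a carefully chosen threshold, that at least one child retains an $\alpha$-fraction of good points.

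The crucial new primitive is a fast robust mean estimation routine in the $\alpha \to 1$ regime that avoids any top-eigenvector computation. Following a matrix multiplicative weights template, we (implicitly) track a PSD potential matrix $\mm = \exp(\eta \mx)$, where $\mx$ aggregates the second-moment contributions of outlier-reducing update directions discovered so far. Rather than materializing $\mm$ or computing its leading eigenvector, we approximate the action of $\mm^{1/2}$ on a small number of random Gaussian vectors via a Lanczos iteration on $\mx$. Each such vector yields a 1D ``test direction'' along which projecting the data gives, in expectation, an unbiased estimator of a trace-style quantity involving $\mm$ and the current weighted covariance; if this estimator exceeds a threshold, some test direction must exhibit a heavy 1D tail with respect to the weights, and a scoring-based downweighting along this direction is justified. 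Each such downweighting decreases the MMW potential by a constant multiplicative factor, which (as in the standard MMW analysis) bounds the total number of iterations by a polylogarithmic quantity.

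To lift this subroutine to list-decodable mean estimation, I would embed it in a recursion tree in the style of \cite{DiakonikolasKK20}: each node stores a weighting and invokes the inner loop, which either returns a candidate mean or a 1D direction along which to act. Given this direction, in the spirit of $\SOrC$, we either filter (deleting points far from a weighted median, when nearly all good points lie in a tight window) or split into two children by thresholding the 1D projection otherwise. The split threshold is chosen so that at least one child retains $(1 - o(1))$ of the good mass, and a standard accounting shows the resulting tree has $O(\alpha^{-1})$ leaves and $\poly \log n$ depth, with each leaf producing one entry of the output list; the best leaf achieves the claimed error $O(\sigma \log \alpha^{-1} / \sqrt{\alpha})$ by the tail-bound guarantee of the inner loop.

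The main obstacle will be the amortized runtime analysis. A naive implementation would spend $\tO(nd)$ per tree node, blowing up to $\tO(nd / \alpha)$ total. The key will be to maintain the MMW potential $\mx$ and the relevant weighted first/second-moment statistics \emph{incrementally} under splits --- each point participates in only one descendant per recursion level --- and to set the Lanczos iteration count, the number of Gaussian test vectors, and the filter slack all to subpolynomial values of the form $n^{O(\eps_0)}$. Combined with a geometric decrease in the surviving good mass per recursion level, this should yield the target runtime $O(n^{1 + \eps_0} d)$, at the cost of the extra $\log \alpha^{-1}$ factor in the statistical error relative to the information-theoretic optimum.
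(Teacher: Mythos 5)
Your high-level outline (multifilter tree, split-or-filter on univariate projections, potential-based depth bound, amortized runtime via careful per-level accounting) matches the paper's overall architecture, but two of your central technical choices are exactly the ones the paper identifies as obstacles and deliberately avoids, and a key claim in your runtime argument is false.

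First, the potential. You propose tracking an MMW/matrix-exponential potential $\exp(\eta \mx)$ and claim that each downweighting step decreases this potential by a constant factor ``as in the standard MMW analysis.'' The paper argues explicitly (Section~\ref{sssec:prior-techniques}) that this is precisely why MMW is hard to combine with a multifilter: MMW regret bounds guarantee progress only after \emph{multiple consecutive rounds}, not monotone progress at each round, and a multifilter tree interleaves two different kinds of steps (split vs.\ cluster) with incompatible per-step guarantees, so there is no single ``run'' over which to sum a regret bound. The paper's fix is to replace MMW with the Schatten-norm potential $\Phi_t = \Tr(\my_t^2)$, $\my_t = \mm_t^{\log d}$, and to show via Facts~\ref{fact:eltapp} and~\ref{fact:alphasplit} that \emph{each individual parent-to-child step} satisfying $\inprod{\my_p^2}{\mm_{c_\ell}} \le R^2\Tr(\my_p^2)$ halves $\Phi$; this is what makes the depth argument go through along an arbitrary root-to-leaf path. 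Your exponential potential also has no obvious cheap update under the hard-thresholding that a split step performs (the children are \emph{subsets}, not soft reweightings), whereas the paper's Schatten potential is automatically monotone under $\mm_{c_\ell} \preceq \mm_p$. Separately, you invoke Lanczos to approximate $\exp(\eta\mx)^{1/2}$ acting on test vectors; the paper sidesteps this by using a plain polynomial $\mm_t^{\log d}$, applied by $O(\log d)$ matrix-vector products against $\pm 1$ Johnson--Lindenstrauss vectors, with no iterative eigensolver required.

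Second, the amortization. You write that ``each point participates in only one descendant per recursion level.'' This is false for a multifilter: a split step returns two \emph{overlapping} half-lines (cf.\ \eqref{eq:gtwosetsdef}, \eqref{eq:twosetsdef}), so points in the overlap region of width $2r\norm{v}_2$ appear in both children. This overlap is essential for correctness --- you do not know on which side the true mean falls, so both children must be retained, and any deterministic disjoint partition would lose the guarantee that some child keeps most of $S$. The paper handles this by a different device: every split is required to satisfy the $(1+\beta)$-power inequality $|T_{\textup{out}}^{(1)}|^{1+\beta} + |T_{\textup{out}}^{(2)}|^{1+\beta} < |\Tin|^{1+\beta}$, which bounds the total work at every layer by $n^{1+\beta}$ even though the sets overlap. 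Your suggestion of a ``geometric decrease in the surviving good mass per recursion level'' is also the opposite of what is needed: the whole point of $\gamma$-saturation (Definition~\ref{def:saturation} and Lemma~\ref{lem:safeimpliessat}) is that along at least one root-to-leaf path the good mass does \emph{not} decrease by more than a constant factor over polylogarithmically many levels; if it shrank geometrically per level, no leaf would retain a usable fraction of $S$. Finally, a hard-thresholded split step is not a soft downweighting, so even if you had an MMW regret bound, you would need a separate argument (such as the paper's $\RandDrop$ machinery, Lemma~\ref{lem:randdrop}) to show that turning soft filter scores into 0-1 subset membership preserves a saturation-type invariant with high probability.
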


\noindent Notably, in the setting of
Theorem~\ref{thm:inflistmean}, a sample complexity of $\Omega(\frac d \alpha)$,
error of $\Omega(\sigma \alpha^{-\half})$, and list size $\Omega(\alpha^{-1})$ are all information-theoretically necessary~\cite{DiakonikolasKS18}. Hence, up to a $\log(\alpha^{-1})$ factor in the error,
Theorem~\ref{thm:inflistmean} achieves optimal statistical guarantees
for this problem in almost-linear time. 

Leveraging Theorem~\ref{thm:inflistmean}, and combining it with a new almost-linear time post-processing
procedure of the resulting list, we achieve our almost-linear runtime for clustering well-separated mixtures under only a second moment bound assumption
--- even in the presence of a small fraction of outliers. In more detail, our algorithm can tolerate a fraction
of outliers proportional to the relative size of the smallest true cluster. For brevity, in this introduction, we will state the natural special case of our clustering result for uniform bounded-covariance mixtures without outliers.
We also achieve similar (indeed, slightly stronger) guarantees when the mixture components 
are sub-Gaussian or have bounded fourth moments.

\begin{theorem}[informal, see Corollaries~\ref{cor:fullgmm},~\ref{cor:fullbfmm},~\ref{cor:fullbcmm}]\label{thm:infcluster}
For any fixed constant $\eps_0 > 0$, there is an algorithm with the following guarantee.
Given a multiset of $n = \Omega (d k)$ i.i.d.\ samples from a uniform mixture model 
$\mathcal{M} = \sum_{i \in [k]} \frac 1 k \dist_i$,
where each component $\dist_i$ has unknown mean $\mu_i$, unknown covariance matrix $\msig_i$
with $\normop{\msig_i} \le \sigma^2$, and $\min_{i, i' \in [k], i \neq i'} \| \mu_i - \mu_{i'}\|_2 =
\tOmega(\sqrt{k}) \, \sigma$, the algorithm runs in time $O(n^{1 + \eps_0} \max(k, d))$, 
and with high probability correctly clusters $99\%$ of the points.
\end{theorem}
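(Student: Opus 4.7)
The plan is to derive Theorem~\ref{thm:infcluster} from Theorem~\ref{thm:inflistmean} by a post-processing step that turns a list of candidate means into a clustering. I would first invoke $\Multifilter$ on the whole sample set $T$ with $\alpha = c/k$ for a small absolute constant $c$: for each $i \in [k]$, the samples drawn from $\dist_i$ are exactly an $\alpha$-fraction of $T$ and the remaining samples may be treated as arbitrary outliers, so a single execution simultaneously certifies every component. Theorem~\ref{thm:inflistmean} then returns, in time $O(n^{1+\eps_0}d)$, a list $L$ of $O(k)$ candidate means such that for every $i \in [k]$ there exists $\hat\mu_i \in L$ with $\|\hat\mu_i - \mu_i\|_2 = O(\sigma\sqrt{k}\log k)$. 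The hypothesized separation $\tOmega(\sigma\sqrt{k})$ absorbs these $\log k$ slack factors, so the ``good'' candidates $\hat\mu_1,\dots,\hat\mu_k$ are pairwise separated by $\tOmega(\sigma\sqrt{k})$ as well.

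Given $L$, the natural assignment rule is nearest-candidate: output $\arg\min_{\hat\mu \in L}\|x-\hat\mu\|_2$ for each $x \in T$. Expanding the square, this is equivalent to $\arg\max_{\hat\mu \in L}\bigl(2\inprod{x}{\hat\mu} - \|\hat\mu\|_2^2\bigr)$, so the comparison between $\hat\mu_i$ and any other $\hat\mu_j \in L$ reduces to the one-dimensional perpendicular-bisector test $\inprod{x - \tfrac{1}{2}(\hat\mu_i + \hat\mu_j)}{\hat\mu_i - \hat\mu_j}>0$. Writing $x = \mu_i + \xi$, the only random term in this test is $\inprod{\xi}{\hat\mu_i - \hat\mu_j}$, whose variance is at most $\sigma^2\|\hat\mu_i - \hat\mu_j\|_2^2$ by bounded covariance. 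A Chebyshev tail bound combined with $\|\hat\mu_i - \hat\mu_j\|_2 = \tOmega(\sigma\sqrt{k})$ gives per-pair misclassification probability $O(1/(k\log^2 k))$, and summing over the $O(k)$ competing candidates yields an expected misclassification rate $O(1/\log^2 k) \ll 1\%$. Before the assignment step I would prune $L$ by keeping only candidates attracting $\Omega(n/k)$ samples; by the same concentration this retains all good candidates and discards spurious ones. The sub-Gaussian and bounded-fourth-moment cases (Corollaries~\ref{cor:fullgmm} and~\ref{cor:fullbfmm}) follow from the same argument with Bernstein-type or higher-moment tail bounds in place of Chebyshev, explaining the slightly stronger separations asserted there.

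The principal computational obstacle is producing all $n \cdot |L|$ inner products $\inprod{x}{\hat\mu}$ within $O(n^{1+\eps_0}\max(k,d))$ instead of the naive $O(ndk)$. The key structural observation is that, although $\|x - \hat\mu\|_2 \sim \sigma\sqrt{d}$ is dominated by the ambient dimension, the bisector tests only involve quantities lying in the $O(k)$-dimensional span of $L$, and the relevant noise $\inprod{\xi}{\hat\mu_i - \hat\mu_j}$ has magnitude only $O(\sigma\|\hat\mu_i - \hat\mu_j\|_2)$. A fast Johnson--Lindenstrauss map that is also an $\ell_2$-subspace embedding for the span of $L$ (target dimension $\tO(k/\eps_0)$) therefore preserves every bisector test up to a $(1\pm o(1))$ multiplicative perturbation of the margin, even though the raw distances $\|x-\hat\mu\|_2$ are badly distorted. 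After sketching, the $n \times |L|$ matrix of sketched inner products can be formed by a rectangular matrix product of shape $(n \times \tO(k))\cdot(\tO(k)\times O(k))$, fitting into the claimed time budget. I expect the hardest step in a full proof to be showing that a single random sketch simultaneously preserves all $O(k^2)$ pairwise candidate directions \emph{and} controls the within-cluster noise along each of them, so that the Chebyshev analysis above carries over verbatim to the sketched space.
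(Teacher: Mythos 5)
Your top-level reduction — run $\Multifilter$ with $\alpha = \Theta(1/k)$ so each component's samples are inliers for one notional run, then post-process the $O(k)$ candidates into a clustering by nearest-hypothesis assignment — is exactly the reduction the paper uses, and your Chebyshev-margin calculation is essentially the one behind the final assignment step. However, there are two concrete gaps in the bounded-covariance case, which is the setting Theorem~\ref{thm:infcluster} actually asserts (Corollary~\ref{cor:fullbcmm}).

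The first is the pruning rule. You keep candidates attracting $\Omega(n/k)$ points under nearest-candidate assignment. This can fail when the list contains several hypotheses near the same $\mu_i$: they split $\dist_i$'s samples among themselves and none need clear a threshold that is a constant fraction of $\alpha n$. The paper's sub-Gaussian and fourth-moment procedures ($\ClusterGMM$, $\ClusterBFMM$) avoid this by pooling counts over all candidates within $O(\Delta)$ of one another (the sets $\mathcal{B}_{\hmu}$ in Corollary~\ref{cor:lprimedichotomy}), but even the pooled version needs the fourth-moment bound of Lemma~\ref{lem:pseudobfm} to make the total ``pseudo-adversarial'' mass $o(\alpha n)$. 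With only a second-moment bound, the pseudo-adversarial count is $\Theta(n/\textup{polylog}(k)) = \omega(\alpha n)$, because union-bounding the Chebyshev tail over the $O(k)$ competitors only yields $O(1/\textup{polylog}(k))$ per point, not $o(\alpha)$. The paper therefore changes the pruning statistic in $\ClusterBCMM$: it projects each point by $\mproj$, the orthogonal projection onto $\mathrm{span}(L)$ — so each cluster concentrates at radius $O(\sigma\sqrt{k})$ rather than $O(\sigma\sqrt{d})$ — and keeps $\hmu$ iff the radius-$2.5\Delta$ ball around it \emph{in this projected space} contains $\geq 0.9\alpha n$ points. Ball-membership around a fixed $\hmu$ is a \emph{single} bisector test, so Chebyshev gives $o(\alpha)$ capture per point for spurious $\hmu$ directly, with no union over candidates (Lemma~\ref{lem:boundcovmain}).

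The second gap is the sketch. You call for a map that is simultaneously a JL transform and an $\ell_2$-subspace embedding for $\mathrm{span}(L)$ in dimension $\tO(k/\eps_0)$, then form the $n\times|L|$ inner-product table via an $(n\times\tO(k))\cdot(\tO(k)\times O(k))$ product. That product costs $\tO(nk^2)$, which exceeds the claimed $O(n^{1+\eps_0}\max(k,d))$ budget when $k\asymp d$: with $n\asymp d^2$, the cost is $\tO(d^4)$ while the budget is $O(d^{3+2\eps_0})$, so this fails for every $\eps_0<\tfrac12$. Moreover, a subspace embedding does not suppress the orthogonal component $\mproj^{\perp}x$: the sketched cross term $\inprod{\ms\mproj^{\perp}x}{\ms(\hmu_i-\hmu_j)}$ scales like $\eps\norm{\mproj^{\perp}x}_2\norm{\hmu_i-\hmu_j}_2\approx\eps\sigma^2\sqrt{dk}\,\textup{polylog}(k)$, which dominates the $\Theta(\sigma^2 k\,\textup{polylog}(k))$ signal unless $\eps\lesssim\sqrt{k/d}$, and in dimension $\tO(k)$ one only gets $\eps\sim\tO(1/\sqrt{k})$, which is insufficient once $d\gg k^2$. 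The paper's fix in $\ClusterBCMM$ is the composite map $\mg^\top\mproj$ with $\mg$ an $O(\log(n/\delta))$-dimensional JL matrix applied \emph{after} an explicit projection: this literally zeroes the orthogonal noise and brings the per-point comparison cost to $O(k\log n)$, fitting the budget. You have correctly identified that the relevant geometry lives in $\mathrm{span}(L)$, but the subspace-embedding hope does not carry the argument; the projection must actually be performed as an algorithmic step, and it is what rescues both the pruning and the runtime.
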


Some remarks are in order. First, we note that pairwise mean separation of $\Omega(\sqrt{k} \, \sigma)$
is information-theoretically necessary for accurate clustering to be possible for bounded covariance components.
The algorithm establishing Theorem~\ref{thm:infcluster} 
nearly achieves the optimal separation.
Secondly, and crucially, our clustering algorithm runs in almost-linear time.
Finally, as previously alluded to, our clustering method is robust to outliers, and can handle mixtures with arbitrary weights, with guarantees depending on the smallest weight (see Corollary~\ref{cor:fullbcmm} for a precise statement).

It is worth commenting on the $\max(k, d) $ term appearing in the running time of Theorem~\ref{thm:infcluster}.
Our algorithm runs in almost-linear time as long as $k \leq d$.
For the extreme regime where $k \gg d$, our algorithm has running time
$O(n^{1 + \eps_0} k)$. In this parameter regime, it is plausible
that $\Omega(n k)$ is a runtime bottleneck for the following reason: even if we are given (exactly)
the centers $\mu_i$, $i \in [k]$ for free, $\Omega(n k)$ time seems to be required to simply assign
each of the $n$ points to its closest center.

\begin{remark}[Prior work]\label{rem:lit}
{\em The prior works~\cite{achlioptas2005spectral,awasthi2012improved} obtained polynomial-time clustering
algorithms with similar statistical guarantees as Theorem~\ref{thm:infcluster}, under the (much stronger) assumption
that each component distribution $\dist_i$ has sub-Gaussian tails. For bounded covariance distributions, these algorithms require the stronger mean separation of $\Omega(k \sigma)$~\cite{Aw21}. On the other hand,
the clustering methods obtained in \cite{CharikarSV17} (as an application of their list-decodable mean estimator)
(i) require sub-Gaussian components, and (ii) partition the dataset into $C \cdot k$ for some constant $C>2$ 
--- as opposed to $k$ --- clusters.
In summary, prior work has not explicitly obtained {\em even a polynomial-time}
clustering algorithm in the bounded covariance setting with separation $o(k) \sigma$.}
\end{remark}

\subsection{Technical overview}\label{ssec:techniques}

Here, we describe the techniques developed in this paper at a high level, and how they circumvent several conceptual runtime barriers encountered by prior approaches to list-decodable mean estimation and clustering mixture models. 
Our full proofs are quite technically challenging, and involve several additional steps which we omit here for clarity of exposition. 
Throughout this section, we assume that the ``scale'' of the problem is $\sigma = 1$ for simplicity (e.g.\ distribution covariances are bounded by $\id$).

\subsubsection{Prior approaches and their limitations} \label{sssec:prior-techniques}

In this section, we briefly describe two recent fast algorithms for list-decodable mean estimation, developed by \cite{DiakonikolasKK20} and \cite{DiakonikolasKKLT20},\footnote{We focus on \cite{DiakonikolasKKLT20} instead of \cite{CherapanamjeriMY20} in this technical exposition, as they both apply Ky Fan semidefinite programming machinery to obtain fast runtimes, but the \cite{DiakonikolasKKLT20} approach is more relevant to this paper.} focusing on tools used in their analyses and bottlenecks in extending their techniques to obtain (almost)-linear runtimes.

\paragraph{Multifiltering.} 
Filtering is one of the most popular techniques for robust estimation \cite{diakonikolas2019robust, DiakonikolasKK017, Li18, Steinhardt18, diakonikolas2019recent}. In the minority-outlier setting, filtering is based on the idea of designing certificates of corruption, which either ensure that a current estimate suffices, or can be used to identify a set of points to filter 
on containing more outliers (corrupted points) than inliers (clean points). 
Iterating this process terminates in polynomial time, because (roughly speaking) it eventually removes all outliers.

In the context of list-decodable mean estimation, standard filtering guarantees are insufficient, 
because we cannot afford to remove as many inliers as outliers. 
To overcome this difficulty, \cite{DiakonikolasKS18} introduced the ``multifilter'' in the context of Gaussian mean estimation, 
which was extended to bounded covariance distributions in~\cite{DiakonikolasKK20}. 
At a high level, a multifilter iterates through a tree of candidate subsets, and looks for ways to either ``cluster'' a subset 
or ``split'' it into multiple (overlapping) subsets.\footnote{In \cite{DiakonikolasKK20}, these subsets were replaced by weight functions, but the intuition is very similar in both cases.} To ensure an efficient runtime, a multifilter maintains a potential 
guaranteeing that the tree size does not blow up (i.e.\ there are never too many candidate subsets), 
and carefully chooses to split or cluster based on subset sample statistics, thus 
ensuring that some tree node always retains a large fraction of inliers. 
Previous multifilters chose to split or cluster subsets based on \emph{one-dimensional} projections 
along top eigenvectors of sample covariances, which can be dominated by a single outlier. 
In the worst case, this leads to an iteration count scaling polynomially with the dimension.

\paragraph{Filtering via matrix multiplicative weights.} 
The approach taken by the fastest algorithms for mean estimation 
in both majority-inlier~\cite{DongH019} and majority-outlier~\cite{DiakonikolasKKLT20} settings
is heavily motivated by filtering. In the majority-inlier case, every iteration of the filter is nearly-linear time, 
so the only bottleneck to an overall fast runtime is the number of iterations. 
However, simple hard instances show that only projecting onto the worst directions of empirical covariances may 
lead to an $\Omega(d)$ runtime overhead.
The main idea of~\cite{DongH019} was to choose scores capturing \emph{multiple} bad directions at a time, 
preventing this worst-case behavior. These scores were based on quadratic forms with certain trace-one matrices derived from the \emph{matrix multiplicative weights} (MMW) regret minimization framework from semidefinite programming \cite{WarmuthK06, AroraK07}. By using MMW regret bounds, \cite{DongH019} designs a filter that efficiently decreases 
the empirical covariance operator norm, which is used as a potential to yield convergence 
in {\em polylogarithmically} many iterations.

In the majority-outlier setting, the story is somewhat murkier. 
To overcome complications of prior list-decodable mean estimation algorithms (e.g.\ the multifilter), 
which interleaved ``filtering'' and ``clustering'' steps, \cite{DiakonikolasKKLT20} designed a ``$k$-dimensional filter'', 
for $k = \Theta(\frac 1 \alpha)$, that they called $\mathsf{SIFT}$, decoupling the two goals. 
Specifically, $\mathsf{SIFT}$ uses scores based on $k$-dimensional projections to hone in on a subspace outside of which the empirical mean is accurate. It then efficiently clusters in just this subspace; 
combined with appropriate Ky Fan norm generalizations of MMW, 
the number of iterations is then improved to polylogarithmic. 
However, this approach of decoupling filtering and clustering appears to inherently use 
$k$-dimensional PCA as a subroutine, for $k = \Theta(\frac 1 \alpha)$, 
even just to learn an ``important'' subspace a single time. 
Hence, this approach encounters a similar runtime bottleneck as prior algorithms for clustering mixture models~\cite{vempala2004spectral, achlioptas2005spectral}.

\paragraph{Challenges in combining techniques.} 
As mentioned, the approach of \cite{DiakonikolasKKLT20} seems to inherently run into a runtime barrier at 
$\Omega(\frac {nd} \alpha)$ due to its reliance on $k$-PCA. This suggests that to overcome this barrier, 
we need to develop a new algorithm which both (1) does not disentangle filtering and clustering steps, and (2) relies on univariate projections. It is natural to then try to merge the multifilter with a MMW-based potential to ensure rapid convergence.

Unfortunately, there are several obstacles towards combining these frameworks. 
A primary complication is that the regret minimization approach of \cite{DongH019} 
requires multiple consecutive rounds before it can ensure an appropriate potential decreases. 
This is because of its reliance on MMW, a ``mirror descent'' algorithm which typically does 
not provide monotone guarantees on iterates (and hence requires multiple iterations to bound regret)~\cite{DaskalakisISZ18}. 
It is unclear how to make these arguments work within the multifilter framework, 
which interleaves two types of steps (splitting and clustering) that 
may have incompatible guarantees across iterations. 

Finally, even if it were possible to combine the multifilter with a MMW-based potential analysis, 
there are still various difficulties towards obtaining an almost-linear runtime 
coming from the size of our hypothesis tree. For example, making the decision to split or cluster 
at a node typically requires $\Omega(nd)$ time (e.g.\ to compute scores), 
which we cannot afford to perform more than subpolynomially often. 
This is problematic because our multifilter tree certainly contains $\Omega(\frac 1 \alpha)$ nodes: 
in the uniform mixture model case, our tree must contain hypotheses corresponding to every true cluster.

\subsubsection{Our techniques} \label{sssec:new-techniques}

\paragraph{One-shot potential framework.} 
In order to deal with the first of the two obstacles discussed (the non-monotonicity of MMW regret guarantees), 
our starting point is a framework for fast robust mean estimation (cf.\ Section~\ref{ssec:fastfilter}), 
essentially matching the guarantees of \cite{DongH019} with a more transparent analysis. 
Crucially, our new framework comes with a ``one-shot'' potential function that shows 
monotone progress {\em at every iteration}, making it more amenable to combination with a multifilter 
(which needs to argue how potentials evolve between different types of steps).

In more detail, our new fast algorithm in the majority-inlier setting guarantees monotone progress 
on the ``Schatten-norm'' potential $\Tr(\my_t^2)$, where $\my_t:= \mm_t^{\log(d)}$ and 
$\mm_t = \sum_{i \in T} [w_t]_i (X_i - \mu_t)(X_i - \mu_t)^\top$ is the weighted empirical covariance 
with respect to the current weight vector $w_t$. We then use $\my_t$ to sample carefully chosen 
Gaussian random vectors to locate outliers in multiple univariate directions. By using the guarantees 
of Johnson-Lindenstrauss projections, we can use these univariate filters to ensure 
the next (weighted) empirical covariance matrix satisfies
\begin{equation}\label{eq:keypotfast}
\inprod{\my_t^2}{\mm_{t + 1}} \le O(1)\Tr(\my_t^2) \;.
\end{equation}
Combining \eqref{eq:keypotfast} with a fact from \cite{jambulapati2020robust} shows that 
our potential decays geometrically, resulting in rapid convergence. 
Fortunately, we can use the same potential in the multifilter context, as long as we guarantee that 
\eqref{eq:keypotfast} holds for \emph{every} child of a node (whether a split or cluster step is used). 
In particular, applying \eqref{eq:keypotfast} repeatedly for any path in the multifilter tree implies 
that the depth is $\text{polylog}(d)$. It remains to bound the \emph{width} of the tree (the computational cost per layer), 
while maintaining the invariant that at least one node on every level preserves enough inliers.

\paragraph{Warmup: fast Gaussian multifilter via indicator weights.} 
Recall that our other obstacle towards an almost-linear runtime is that each of the $\Omega(\alpha^{-1})$ nodes 
of our multifilter tree requires $\Omega(nd)$ time to decide on a multifiltering step. 
Our strategy is to reduce the \emph{total number of nodes} across each layer of the tree, 
so that the total cost of multifiltering on all of them is roughly $nd$. We achieve this goal 
by ensuring that our multifilter always maintains nodes which specify subsets of our original data 
(i.e.\ $0$-$1$ weights rather than soft weights $\in [0, 1]$). Hence, each layer of our new multifilter 
trades off the \emph{number of subsets} with the \emph{cost of multifiltering} on each subset. 
Considering the two extreme layers is illustrative of this tradeoff: at the root, 
our algorithm performs a single one-dimensional projection on the entire dataset; 
at the leaves, it performs $O(\alpha^{-1})$ one-dimensional projections, 
each on a subset consisting of roughly an $\alpha$-fraction of the original dataset.

As a warmup, we first show how to achieve this in the case where the ground-truth, $\dist$, 
is a bounded-covariance Gaussian (see Section~\ref{sec:gaussian}), 
so we can exploit strong concentration bounds. In particular, we know that in any linear projection 
almost all of the inliers will lie in an interval of logarithmic length. If almost all of our sample points 
in a subset are clustered within such an interval, we can explicitly remove all samples outside of it. 
On the other hand, if our samples are spread out, we can split them into two (unweighted) subsets 
with sufficient overlap to ensure that at least one of the children subsets will contain almost all the inliers, 
as long as the parent did. We can in fact apply such a partitioning strategy iteratively along each 
univariate projection, until each remaining subset is contained in a short interval; 
this suffices to imply \eqref{eq:keypotfast}. 

\paragraph{From Gaussians to bounded-covariance distributions.} 
Substantially more technical care is required in the bounded-covariance setting 
to achieve an almost-linear runtime without sacrificing the error rate. 
Notably, we will no longer be able to guarantee that the subsets lie in short intervals, 
due to weaker concentration properties. This also means that we cannot deterministically 
remove points, making it more challenging to ensure the weight functions we keep are indicators.

We overcome these challenges in Section~\ref{sec:boundedcov} through several new technical developments. 
We first weaken the outcome guarantee of our recursive partitioning strategy, 
from ensuring each cluster lies in a short interval, to requiring bounded variance, 
which we show suffices to advance on the potential \eqref{eq:keypotfast}. 
Furthermore, we use a randomized dropout strategy in place of the ``clustering'' step of the multifilter, 
and design fast quantile checks to ensure the ``split'' step can be conducted in nearly-constant time. 
By carefully combining these subroutines, we can indeed ensure every child of nodes 
in a layer satisfies \eqref{eq:keypotfast}, and that the total computational cost of splitting 
or clustering on the entire layer is almost-linear. With our earlier depth bound, 
this yields our full runtime guarantee.

\paragraph{Reducing clustering to list-decodable learning.} 
In Section~\ref{sec:clustering}, we demonstrate that several mixture model clustering tasks 
enjoy benefits from the speedups afforded by our list-decodable learning methods. 
In the following, assume we have a list $L$ of size $O(\alpha^{-1})$ and $L \supseteq \{\hmu_i\}_{i \in [k]}$ 
with $\norm{\hmu_i - \mu_i}_2 = \tO(\sqrt {\alpha^{-1}})$ for all $i \in [k]$, 
where $\mu_i$ is the mean of the mixture component $\dist_i$.

For sub-Gaussian components, we build on a clustering algorithm of \cite{DiakonikolasKS18} 
and improve it to run in nearly-linear time via randomized distance comparisons. 
The main idea of the \cite{DiakonikolasKS18} algorithm is to exploit concentration, 
which implies that with high probability, all points drawn from $\dist_i$ have a closest hypothesis 
in $L$ at distance $\tO(\sqrt {\alpha^{-1}})$ from $\mu_i$. By rounding every sample to its nearest hypothesis, 
and assuming separation $\tOmega(\sqrt {\alpha^{-1}})$ between component means, 
we can perform an efficient equivalence class partitioning which clusters the data. 
We observe that this framework is tolerant to a small amount of poorly-behaved points or outliers 
and generalizes to cluster components with bounded fourth moments.

For our most general application of clustering mixtures under only bounded component covariances, as stated in Theorem~\ref{thm:infcluster}, 
the same framework does not apply as a constant fraction of all points may be misbehaved due to weak concentration. 
To address this, we develop a new postprocessing technique, relying on the following observation: 
letting $\mproj$ be the projection onto the $O(\alpha^{-1})$-dimensional subspace spanned by $L$, 
any sample hit by $\mproj$ will lie within distance $O(\sqrt{\alpha^{-1}})$ of its corresponding cluster mean 
in the low-dimensional subspace with constant probability. We use this observation to drop hypotheses 
which are too far away from the true means, and then an appropriate equivalence relation suffices for clustering. 
The runtime bottleneck of this strategy is the computation and application of $\mproj$ to our dataset, 
which can be quite expensive. We show that by instead measuring distances in a $O(\log d)$-dimensional subspace 
formed by random projections within $\mproj$, and clustering based on these estimates, 
we obtain similar clustering performance by exploiting guarantees of Johnson-Lindenstrauss transforms.

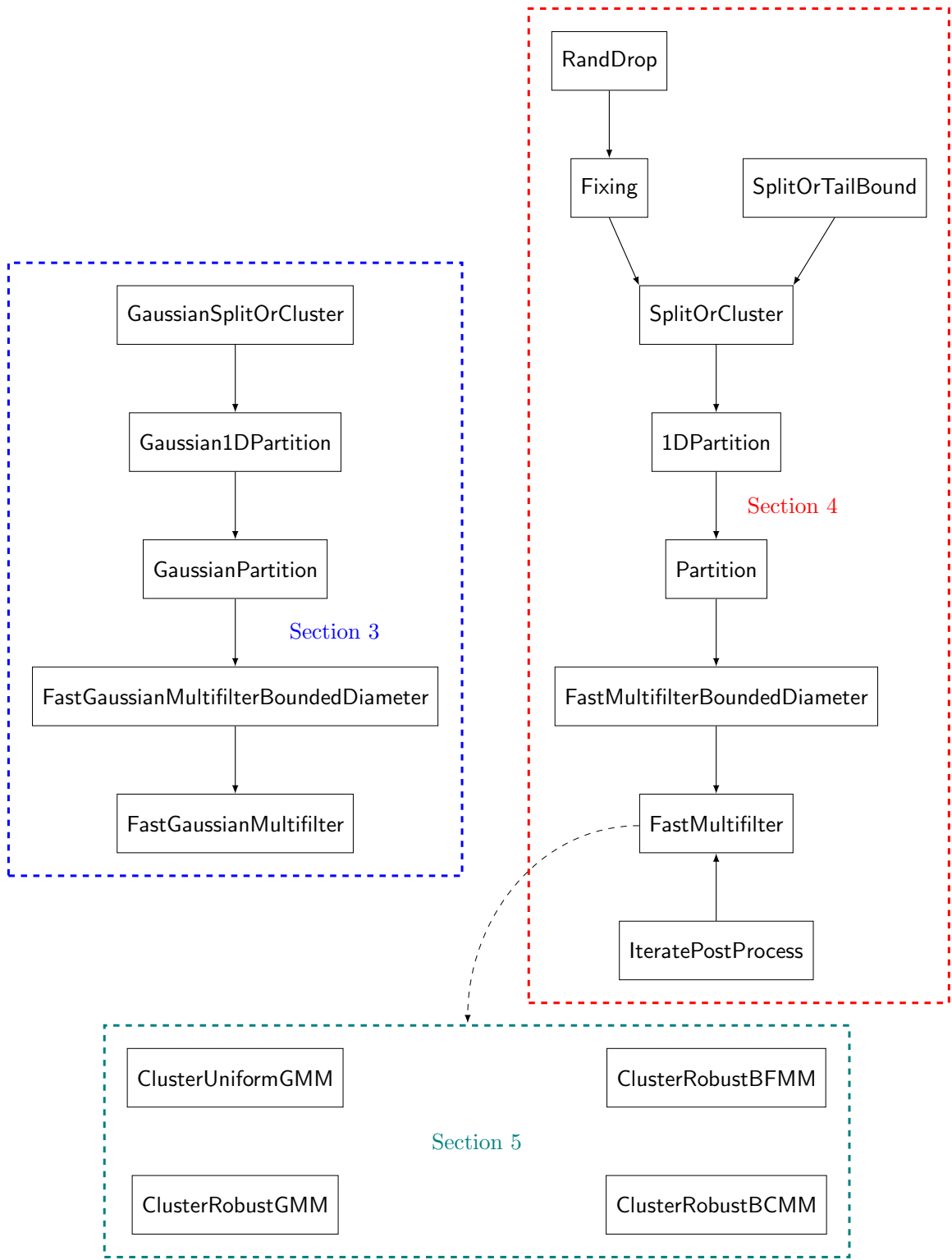
\begin{figure}[ht!]
 	\centering
 	\scalebox{0.95}{%
 		\begin{tikzcd}[%
 			,ampersand replacement=\&,
 			,cells={nodes={draw, minimum width=1.0cm, minimum height=1.0cm}}
 			,column sep=0em, row sep=3em
 			,every arrow/.append style={-latex},
 			execute at end picture={%
 				\draw[blue, very thick, dashed] ([xshift=-4.8em,yshift=1em]C1.north west) rectangle ([xshift=4.8em,yshift=-1em]C2.south east) node[pos=.6,right]{Section 3};
 				\draw[red, very thick, dashed] ([xshift=-1em,yshift=1em]D1.north west) rectangle ([xshift=6em,yshift=-1em]D2.south east) node[pos=.5,right]{Section 4};
 				\draw[teal, very thick, dashed] ([xshift=-1em,yshift=1em]E1.north west) rectangle ([xshift=1em,yshift=-1em]E2.south east) node[pos=0.5]{Section 5};
 				\draw[-latex,dashed] (M.west) to [out=180,in=90] ([xshift=5.5em,yshift=1.1em]E1.north east);
 			}
 			]
 			\&[5em] |[alias=D1]| \RandDrop \ar[d] \\
 			\& \Fix \ar[dr, start anchor = south, end anchor = north west] \&[-5em]\&[-6em] \TBOC \ar[dl, start anchor = south, end anchor = north east] \\
 			|[alias=C1]| \GSOrC \ar[d] \&\& \SOrC \ar[d] \\
 			\GPartitionOneD \ar[d] \&\& \PartitionOneD \ar[d] \\
 			\GPartition \ar[d] \&\& \Partition \ar[d]\\
 			\GMultifilterBD \ar[d] \&\& \MultifilterBD \ar[d] \\
 			|[alias=C2]| \GMultifilter \&\& |[alias=M]| \Multifilter \\
 			\&\& |[alias=D2]| \IteratePostProcess \ar[u] \\
 			|[alias=E1]| \ClusterUGMM \&\& \ClusterBFMM \\
 			\ClusterGMM \&\& |[alias=E2]| \ClusterBCMM
 		\end{tikzcd}
 	}%
 	\caption{An illustration of the dependencies of the different algorithms of this paper together with which section the algorithms are described in. The dashed arrow from $\Multifilter$ to Section~\ref{sec:clustering} is meant to indicate that the algorithms of this section utilize $\Multifilter$ indirectly.}
 	\label{fig:algo-depen}
\end{figure}

\subsection{Related work}\label{ssec:related}
Here we survey the most relevant prior work on learning mixture models
and robust statistics.

\paragraph{Mixture models.} %
The closest line of work to our results studies efficiently clustering mixture models 
under mean-separation conditions, and in particular Gaussian mixtures~\cite{dasgupta1999learning,vempala2004spectral,achlioptas2005spectral,dasgupta2007probabilistic,arora2005learning, kumar2010clustering,awasthi2012improved,regev2017learning,DiakonikolasKS18,hopkins2018mixture,kothari2018robust,mixon2017clustering}.
As mentioned previously, within the class of algorithms with 
runtime $\widetilde{O}(\frac{nd}{\alpha})$,
\cite{achlioptas2005spectral} achieves the best known mean separation condition 
(scaling as $\Theta(\alpha^{-\half})$, where $\alpha$ is the minimum component weight) 
for clustering mixtures of Gaussians with bounded covariance. 
This separation condition is nearly-matched (within a logarithmic factor) 
by our almost-linear time algorithm (Theorem~\ref{thm:infcluster}), 
which in addition is robust to outliers and generalizes to broader distribution families. 
We note that for the special case where the covariances are all known to be {\em exactly} the identity, 
prior to~\cite{achlioptas2005spectral}, \cite{vempala2004spectral} gave a similar algorithm 
attaining the same runtime of $\widetilde{O}(\frac{nd}{\alpha})$, 
under a weaker separation condition (scaling as roughly $\alpha^{-\frac 1 4}$). 
We are not aware of algorithms with this runtime and separation condition 
for clustering Gaussian mixtures when the covariances are only \emph{spectrally bounded by} the identity.

Subsequent work generalized the statistical setting studied in~\cite{achlioptas2005spectral, arora2005learning,kumar2010clustering,awasthi2012improved}, 
by improving on the separation condition using more sophisticated algorithmic tools, see, e.g.~\cite{DiakonikolasKS18,hopkins2018mixture,kothari2018robust, DiakonikolasK20}.
More recent work developed efficient algorithms for clustering mixtures of Gaussians, 
in the presence of a small constant fraction of outliers, under even weaker (algebraic) separation conditions~\cite{bakshi2020outlier,diakonikolas2020robustly, BakshiDHKKK20}. 
Beyond clustering, stronger notions of learning have been studied in this setting, 
including parameter estimation~\cite{moitra2010settling, belkin2015polynomial,hardt2015tight},
proper learning~\cite{feldman2006pac,daskalakis2014faster,acharya2014near,li2017robust,ashtiani2018nearly},
and their robust analogues~\cite{kane2021robust, liu2020settling, bakshi2020robustly}.
All of the aforementioned algorithms are statistically and computationally intensive, 
in particular have sample complexities and runtimes scaling super-polynomially with the number of components.
Finally, we acknowledge a related line of work studying learning in smoothed settings~\cite{hsu2013learning,anderson2014more,bhaskara2014smoothed,ge2015learning} and density estimation~\cite{devroye2012combinatorial,chan2014efficient,acharya2017sample}. 
These latter results are orthogonal to the results of the current paper.

\paragraph{Robust statistics and list-decodable learning.} 
Since the pioneering work from the statistics community in the 1960s and 1970s~\cite{anscombe1960rejection,tukey1960survey,huber1964robust,tukey1975mathematics}, 
there has been a tremendous amount of work on designing robust estimators, e.g.~\cite{HampelRRS86,huber2004robust}.
However, as discussed earlier, the estimators proposed in the statistics community are intractable to compute in high dimensions. 
The first algorithmic progress on robust statistics in high dimensions came in two independent works from the theoretical computer science community~\cite{diakonikolas2019robust,lai2016agnostic}.
Since then, there has been an explosion of work in this area, resulting in 
computationally efficient estimators for a range of increasingly complex tasks, 
including the aforementioned work on robust clustering, amongst many others, e.g.\ \cite{DiakonikolasKK017,balakrishnan2017computationally,cheng2018robust,klivans2018efficient,DiakonikolasKS19,prasad2018robust,diakonikolas2019sever,DiakonikolasKKP19,tran2018spectral}.
For a more complete account, the reader is referred to~\cite{Li18, Steinhardt18, diakonikolas2019recent}.

We also highlight a line of work, relevant to our main result, 
which combines tools from robust statistics with ideas from continuous optimization 
to achieve near-linear runtimes for high-dimensional robust estimation 
tasks~\cite{cheng2019high,cheng2019faster,DongH019,LiY20, jambulapati2020robust}.
Importantly, these algorithms only work in the regime where the fraction of corrupted samples is small, i.e.\ when $\alpha \to 1$.

The list-decodable learning setting we consider (i.e.\ when the trusted proportion of the data is $\alpha < \half$) was first considered in~\cite{balcan2008discriminative,CharikarSV17}.
In particular,~\cite{CharikarSV17} gave the first polynomial-time algorithm 
with near-optimal statistical guarantees for the problem of list-decodable mean estimation 
under bounded covariance assumptions. Shortly thereafter, efficient list-decodable mean estimators with near-optimal
error guarantees were developed under stronger distributional assumptions~\cite{DiakonikolasKS18, kothari2018robust}.
More recently, a line of work developed list-decodable learners for more challenging tasks, 
including linear regression~\cite{raghavendra2020list,karmalkar2019list} and subspace recovery~\cite{raghavendra2020list,bakshi2020list}. 
Similar techniques were also crucial in the recent development of robust clustering algorithms we previously described.

The most directly related prior research to the current paper is the sequence of recent papers 
developing faster algorithms for list-decodable mean estimation~\cite{CherapanamjeriMY20,DiakonikolasKK20,DiakonikolasKKLT20}.
We note that the algorithms in~\cite{CherapanamjeriMY20,DiakonikolasKKLT20} critically use projection of the data 
onto a $O(\frac 1 \alpha)$-dimensional subspace, and therefore are bottlenecked by the cost of this projection, 
yielding $\Omega(\frac {nd}{\alpha})$ runtimes. In the regime that $\alpha$ is a fixed constant, these works 
achieve runtimes which are linear in $n$ and $d$, by reinterpreting the problem of list-decodable mean estimation 
in a way which is amenable to speedups via tools from continuous optimization 
(specifically, regret guarantees over the ``$k$-Fantopes'', which capture Ky Fan norms in hindsight). 
On the other hand, the multifilter approach of~\cite{DiakonikolasKK20} only uses one-dimensional projections. 
However, their algorithm and its analysis does not have a direct interpretation as a continuous optimization method, 
using more problem-specific potentials, which guarantee termination after $n$ iterations. 

In many ways, the algorithm we develop in this paper can be viewed as the synthesis of these two approaches, 
by incorporating the ideas of~\cite{CherapanamjeriMY20,DiakonikolasKKLT20} to find better univariate projections 
for the multifilter of~\cite{DiakonikolasKK20}, and then designing a new potential function inspired 
by regret analyses of matrix multiplicative weights to demonstrate rapid termination of our fast multifilter.

\subsection{Organization}

In Section~\ref{sec:prelims}, we define notation and recall tools from prior work. We also give a technical overview of our one-shot potential approach to fast robust mean estimation. In Section~\ref{sec:gaussian}, we give a fast multifilter implementation in the Gaussian setting, as a simplified introducion to our techniques. In Section~\ref{sec:boundedcov}, we give our full fast multifilter for bounded-covariance distributions, proving Theorem~\ref{thm:inflistmean}. In Section~\ref{sec:clustering}, we give our applications to clustering mixture models, proving Theorem~\ref{thm:infcluster}.
See Figure~\ref{fig:algo-depen} for a pictorial depiction of the organization of the remainder of the paper.

\section{Preliminaries}\label{sec:prelims}

In Section~\ref{ssec:notation}, we define the notation used throughout this paper. Next, we recall some technical tools (primarily from prior work) which we draw upon in Section~\ref{ssec:priortools}. We conclude with a sketch of our potential function approach to filtering in Section~\ref{ssec:fastfilter} by demonstrating how it works for robust mean estimation in the minority-outlier regime, giving an alternative approach to obtaining the runtimes of \cite{DongH019}. This new approach bypasses an explicit matrix multiplicative weights argument in favor of a one-step potential. We ultimately generalize this technique to the list-decodable setting by interlacing it with clustering steps, inspired by the multifilter algorithm of \cite{DiakonikolasKS18, DiakonikolasKK20}.

\subsection{Notation}\label{ssec:notation}

\paragraph{General notation.} For mean $\mu \in \R^d$ and positive semidefinite covariance matrix $\msig \in \R^{d \times d}$, we let $\Nor(\mu, \msig)$ be the standard multivariate Gaussian. For $d \in \N$ we let $[d] \defeq \{j \mid j \in \N, 1 \le j \le d\}$. We refer to the $\ell_p$ norm of a vector argument by $\norm{\cdot}_p$, and overload this to mean the Schatten-$p$ norm of a symmetric matrix argument. The all-ones vector (when the dimension is clear from context) is denoted $\1$. The (solid) probability simplex is denoted $\Delta^n \defeq \{x \in \R^{n}_{\ge 0}, \norm{x}_1 \le 1\}$. We refer to the $i^{\text{th}}$ coordinate of a vector $v$ by $[v]_i$.

\paragraph{Matrices.} Matrices will be in boldface throughout, and when the dimension is clear from context we let $\mzero$ and $\id$ be the zero and identity matrices. The set of $d \times d$ symmetric matrices is $\Sym^d$ and the $d \times d$ positive semidefinite cone is $\Sym_{\ge 0}^d$. We use the Loewner partial ordering $\preceq$ on $\Sym_{\ge 0}^d$. The largest eigenvalue, smallest eigenvalue, and trace of a matrix are given by $\lmax(\cdot), \lmin(\cdot), \Tr(\cdot)$ respectively. We use $\normop{\cdot}$ to mean the ($\ell_2$-$\ell_2$) operator norm, which is the largest eigenvalue for arguments in $\Sym_{\ge 0}^d$. The inner product on $\ma, \mb \in \Sym^d$ is given by $\inprod{\ma}{\mb} \defeq \Tr(\ma \mb)$. 

\paragraph{Distributions.} We often associate a weight vector $w \in \Delta^n$ with a set of points $T \subset \R^d$ with $|T| = n$. Typically we denote this set by $\{X_i\}_{i \in T}$, where we overload $T$ to mean the indices as well as the points. For any $T' \subseteq T$ we let $w_{T'} \in \Delta^n$ be the vector which agrees with $w$ on $T'$ and is $0$ elsewhere. The empirical mean and covariance matrix on any subset are denoted
\[\mu_w(T') \defeq \sum_{i \in T'} \frac{w_i}{\norm{w'_T}_1} X_i,\; \cov_w(T') \defeq \sum_{i \in T'} \frac{w_i}{\norm{w'_T}_1} \Par{X_i - \mu_w(T')}\Par{X_i - \mu_w(T')}^\top.\]
For convenience, we also define the unnormalized convariance matrix by
\[\tcov_w(T') \defeq \sum_{i \in T'} w_i \Par{X_i - \mu_w(T')}\Par{X_i - \mu_w(T')}^\top.\]
We say distribution $\dist$ with $\E_{X \sim \dist}[X] = \mus$ has sub-Gaussian parameter $\sigma$ in all directions if $\E_{X \sim \dist}[\exp(s\inprod{X - \mus}{v})] \le \exp(\frac{\sigma^2 s^2}{2})$ for all unit vectors $v$. In Section~\ref{sec:clustering} we use concentration properties of sub-Gaussian random variables, which are well-known and can be found e.g.\ in the reference \cite{RigolletH17}.

\paragraph{List-decodable mean estimation.} We state the model of list-decodable mean estimation we use throughout the paper; the setting we consider is standard from the literature, and this description is repurposed from \cite{DiakonikolasKKLT20}. Fix a parameter $0 < \alpha < \thalf$. Then a set $T \defeq \{X_i\}_{i \in T} \subset \R^d$ of size $|T| = n = \Theta(d\alpha^{-1})$ is given, containing a subset $S \subset T$ such that the following assumption holds.

\begin{assumption}\label{assume:sexists}
There is a subset $S \subseteq T \subset \R^d$ of size $\alpha n = \Theta(d)$, and a vector $\mus \in \R^d$, such that
\[
\frac{1}{|S|}\sum_{i \in S} (X_i - \mus)(X_i - \mus)^\top \preceq \id.
\]
\end{assumption}

We remark that this assumption is motivated by the statistical model where there is an underlying distribution $\dist$ supported on $\R^d$ with mean $\mus$ and covariance bounded by $\id$, and the dataset $T$ is formed by $\alpha n$ independent draws from $\dist$ combined with $(1 - \alpha)n$ arbitrary points. Up to constants in the ``good'' fraction $\alpha$ and the covariance bound, Proposition B.1 of \cite{CharikarSV17} guarantees Assumption~\ref{assume:sexists} holds with inverse-exponential failure probability. We also note that Proposition 5.4(ii) of \cite{DiakonikolasKS18} shows that the information-theoretic optimal guarantee for list-decodable estimation in the setting of Assumption~\ref{assume:sexists} is to return a list $L$ of candidate means, where $|L| = \Theta(\alpha^{-1})$, and
\begin{equation}\label{eq:optimalrate}\min_{\mu \in L} \norm{\mu - \mus}_2 = \Theta\Par{\frac{1}{\sqrt{\alpha}}}\end{equation}
This setup handles the more general case where the upper bound matrix in Assumption~\ref{assume:sexists} is $\sigma^2\id$ for some positive parameter $\sigma$ by rescaling the space appropriately, and the error guarantee \eqref{eq:optimalrate} becomes worse by a factor of $\sigma$. Because of this, we will set $\sigma = 1$ throughout for simplicity. 

Finally, throughout Sections~\ref{sec:gaussian} and~\ref{sec:boundedcov} we will make the explicit assumption that $d \ge \alpha^{-1}$; for the regime where this is not the case, Algorithm 14 of \cite{DiakonikolasKKLT20} obtains optimal error rates in time $\tO(\alpha^{-2})$ (and in fact, if we tolerate a list size of $O(\alpha^{-1}\log \frac 1 \delta)$ where $\delta \in (0, 1)$ is the failure probability of the algorithm, obtains optimal error in time $\tO(\alpha^{-1})$).

\subsection{Technical tools}\label{ssec:priortools}

We will use a number of technical tools throughout this work which we list here for convenience. The first few are standard facts about covariance matrices which follow directly from computation. 

\begin{fact}[Convexity of covariance]\label{fact:covconvex}
For any $w \in \Delta^n$ associated with $T \subset \R^d$,
\[\mu_w(T)\mu_w(T)^\top \preceq \sum_{i \in T} \frac{w_i}{\norm{w}_1} X_iX_i^\top.\]
This implies that for any $v \in \R^d$,
\[(\mu_w(T) - v)(\mu_w(T) - v)^\top \preceq \sum_{i \in T} \frac{w_i}{\norm{w}_1} (X_i - v)(X_i - v)^\top. \]
\end{fact}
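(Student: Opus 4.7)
The plan is to prove the first inequality by a direct expansion and then derive the translated version as an immediate corollary by re-centering. Throughout, I will set $p_i \defeq w_i / \norm{w}_1$ so that $\sum_{i \in T} p_i = 1$ and $\mu_w(T) = \sum_{i \in T} p_i X_i$, reducing everything to a convex combination.

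For the first inequality, I would expand the difference between the right-hand side and the left-hand side. A one-line computation using $\mu_w(T) = \sum_i p_i X_i$ gives
\[
\sum_{i \in T} p_i X_i X_i^\top \;-\; \mu_w(T)\mu_w(T)^\top \;=\; \sum_{i \in T} p_i \bigl(X_i - \mu_w(T)\bigr)\bigl(X_i - \mu_w(T)\bigr)^\top.
\]
The right-hand side is a nonnegative combination of rank-one PSD matrices, hence PSD in the Loewner order. This is essentially Jensen's inequality for the operator-convex map $x \mapsto xx^\top$, and gives the first displayed bound.

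For the second inequality, I would apply the first inequality to the translated dataset $Y_i \defeq X_i - v$ (with the same weights $w$). By linearity of the empirical mean, $\mu_w(\{Y_i\}_{i \in T}) = \mu_w(T) - v$, so substituting into the already-proved first inequality yields
\[
\bigl(\mu_w(T) - v\bigr)\bigl(\mu_w(T) - v\bigr)^\top \;\preceq\; \sum_{i \in T} \frac{w_i}{\norm{w}_1}(X_i - v)(X_i - v)^\top,
\]
which is exactly the claim.

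There is no real obstacle to this fact; the main content is recognizing the bias-variance-style identity in the expansion and then noting translation invariance of the argument, so I expect the proof to be essentially one displayed equation plus a one-line reduction.
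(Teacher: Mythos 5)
Your proof is correct and is the standard argument; the paper states this fact without proof as "following directly from computation," which is exactly the expansion you perform. The key identity $\sum_i p_i X_i X_i^\top - \mu_w(T)\mu_w(T)^\top = \sum_i p_i (X_i - \mu_w(T))(X_i - \mu_w(T))^\top$ and the translation reduction are both right.
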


\begin{fact}[Effect of mean shift]\label{fact:meanshift}
For any $w \in \Delta^n$ associated with $T \subset \R^d$, and any $v \in \R^d$,
\[\sum_{i \in T} \frac{w_i}{\norm{w}_1} (X_i - v)(X_i - v)^\top = \cov_w(T) + (\mu_w(T) - v)(\mu_w(T) - v)^\top \succeq \cov_w(T).\]
\end{fact}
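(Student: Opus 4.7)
The plan is to establish the equality by a direct bias-variance style expansion around the weighted empirical mean $\mu_w(T)$, and then to read off the PSD inequality from the fact that outer products are positive semidefinite.

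Concretely, I would first write $X_i - v = (X_i - \mu_w(T)) + (\mu_w(T) - v)$ and expand the outer product $(X_i - v)(X_i - v)^\top$ into four terms. Taking the weighted sum $\sum_{i \in T} \tfrac{w_i}{\norm{w}_1}(\cdot)$ termwise, the contribution of $(X_i - \mu_w(T))(X_i - \mu_w(T))^\top$ is exactly $\cov_w(T)$ by definition. The two cross terms $(X_i - \mu_w(T))(\mu_w(T) - v)^\top$ and its transpose vanish, because $(\mu_w(T) - v)$ is a fixed vector and can be pulled out of the sum, leaving the factor $\sum_{i \in T} \tfrac{w_i}{\norm{w}_1}(X_i - \mu_w(T)) = \mu_w(T) - \mu_w(T) = 0$ by the definition of the weighted mean. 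The remaining term $(\mu_w(T) - v)(\mu_w(T) - v)^\top$ does not depend on $i$, and since the normalized weights $\tfrac{w_i}{\norm{w}_1}$ sum to one, its contribution is precisely $(\mu_w(T) - v)(\mu_w(T) - v)^\top$. Summing these three nonzero contributions gives the claimed equality.

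For the Loewner inequality, I would simply observe that $(\mu_w(T) - v)(\mu_w(T) - v)^\top$ is a rank-one outer product of a vector with itself, hence positive semidefinite, so adding it to $\cov_w(T)$ can only increase the matrix in the $\preceq$ ordering. Combined with the equality, this yields $\sum_{i \in T} \tfrac{w_i}{\norm{w}_1}(X_i - v)(X_i - v)^\top \succeq \cov_w(T)$.

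I do not anticipate any real obstacles: this is a routine matrix analogue of the scalar identity $\mathbb{E}[(X - v)^2] = \Var(X) + (\mathbb{E}[X] - v)^2$. The only minor care required is with normalization, since $w \in \Delta^n$ only guarantees $\norm{w}_1 \le 1$ rather than $\norm{w}_1 = 1$; however, working throughout with the normalized weights $\tfrac{w_i}{\norm{w}_1}$ (which do sum to one) makes the cancellation of cross terms and the collapse of the constant term transparent.
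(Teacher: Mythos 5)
Your proof is correct and is exactly the routine bias-variance computation the paper has in mind: the paper does not write out a proof of this fact (it states it alongside Facts 1 and 3 as facts that "follow directly from computation"), and your expansion around $\mu_w(T)$ with the vanishing cross terms is precisely that computation.
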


\begin{fact}[Alternate covariance characterization]\label{fact:altcovpairs}
For any $w \in \Delta^n$ associated with $T \subset \R^d$,
\[\frac{1}{2\norm{w}_1^2}\Par{\sum_{i, j \in T} w_iw_j(X_i - X_j)(X_i - X_j)^\top} = \cov_w(T).\]
\end{fact}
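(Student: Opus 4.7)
The plan is to reduce both sides of the claimed identity to the familiar ``second moment minus outer product of the mean'' form of a weighted covariance matrix, and observe that they coincide. This is ultimately a routine algebraic expansion, and the main care will be in bookkeeping the factors of $\norm{w}_1$ that arise when summing out one of the two indices on the left-hand side, together with the factor of $\tfrac{1}{2}$.

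First I would expand the outer product $(X_i - X_j)(X_i - X_j)^\top$ on the left-hand side into the four matrices $X_i X_i^\top$, $-X_i X_j^\top$, $-X_j X_i^\top$, $X_j X_j^\top$, and distribute the double sum $\sum_{i,j \in T} w_i w_j$ across them. By symmetry in the summation indices, the two diagonal contributions each equal $\norm{w}_1 \sum_{i \in T} w_i X_i X_i^\top$. The two cross terms factor as
\[
-2\Par{\sum_{i \in T} w_i X_i}\Par{\sum_{j \in T} w_j X_j}^{\top} = -2\norm{w}_1^2\, \mu_w(T)\mu_w(T)^\top,
\]
using the definition of $\mu_w(T)$. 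Dividing by $2\norm{w}_1^2$ then yields $\sum_{i \in T} \tfrac{w_i}{\norm{w}_1} X_i X_i^\top - \mu_w(T)\mu_w(T)^\top$.

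Next I would expand the right-hand side in the same style: writing $(X_i - \mu_w(T))(X_i - \mu_w(T))^\top$ as four terms, multiplying by $\tfrac{w_i}{\norm{w}_1}$, and summing over $i \in T$. The two cross terms collapse using $\sum_i \tfrac{w_i}{\norm{w}_1} X_i = \mu_w(T)$, contributing $-2\mu_w(T)\mu_w(T)^\top$, while the constant term adds back $+\mu_w(T)\mu_w(T)^\top$, for a net $-\mu_w(T)\mu_w(T)^\top$. The leading second-moment term already matches, so the two expansions coincide and the identity follows.

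There is no genuine obstacle here; the identity is simply the classical ``variance-as-pairwise-differences'' formula lifted to matrix form. The only subtlety worth double-checking is the calibration of the normalization $\tfrac{1}{2\norm{w}_1^2}$ on the left: one factor of $\norm{w}_1$ is absorbed when a single index is summed out in the diagonal terms, and the factor of $\tfrac{1}{2}$ accounts for the two symmetric diagonal contributions being identical. Everything else is bilinear expansion.
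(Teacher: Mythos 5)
Your proof is correct, and since the paper simply asserts this fact as a ``standard fact about covariance matrices which follow directly from computation,'' your direct bilinear expansion is exactly the argument being invoked. The bookkeeping of the $\norm{w}_1$ factors and the factor of $\tfrac{1}{2}$ is handled correctly.
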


Next, we need the notion of safe weight removal in the list-decodable setting, adapted from \cite{DiakonikolasKKLT20}. The idea behind safe weight removal is that repeatedly performing a downweighting operation with respect to scores satisfying a certain condition results in weights which preserve some invariant, which we call saturation. We define our notions of safety and saturation, and state a key technical lemma which lets us reason about when saturation is preserved. In the following discussion assume we are in the list-decodable mean estimation setting we defined in Section~\ref{ssec:notation}.

\begin{definition}[$\gamma$-saturated weights]\label{def:saturation}
We call weights $w \in \Delta^n$ \emph{$\gamma$-saturated}, for some $\gamma > 1$, if $w \le \tfrac 1n \1$ entrywise, and
\[\norm{w_S}_1 \ge \alpha \norm{w}_1^{\frac{1}{\gamma}}.\]
\end{definition}

\begin{definition}[$\gamma$-safe scores]\label{def:safety}
We call scores $\{s_i\}_{i \in T} \in \R_{\ge 0}^n$ \emph{$\gamma$-safe with respect to $w$} if $w \in \Delta^n$ and
\[\sum_{i \in S} \frac{w_i}{\norm{w_S}_1} s_i \le \frac{1}{\gamma} \sum_{i \in T}\frac{w_i}{\norm{w_T}_1} s_i.\]
\end{definition}

Roughly speaking, we require this alternative notion of safe scores in the majority-outlier regime because there are less good points we can afford to throw away; see \cite{DiakonikolasKKLT20} for additional exposition. The key property connecting these two definitions is the following.

\begin{lemma}\label{lem:safeimpliessat}
Let $w^{(0)} \in \Delta^n$ be $\gamma$-saturated, and consider any algorithm of the form:
\begin{enumerate}
	\item For $0 \le t < N$:
	\begin{enumerate}
		\item Let $\{s_i^{(t)}\}_{i \in T}$ be $\gamma$-safe with respect to $w^{(t)}$.
		\item Update for all $i \in T$:
		\[w_i^{(t + 1)} \gets \Par{1 - \frac{s_i^{(t)}}{s_{\max}^{(t)}}}w_i^{(t)}, \text{ where } s_{\max}^{(t)} \defeq \max_{i \in T\mid w_i^{(t)} \neq 0} s_i^{(t)}.\]
	\end{enumerate}
\end{enumerate}
Then, $w^{(N)}$ is also $\gamma$-saturated.
\end{lemma}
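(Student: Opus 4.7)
The plan is to prove this by induction on $t$, so it suffices to show that a single update $w \defeq w^{(t)} \to w' \defeq w^{(t+1)}$, with $\gamma$-safe scores $\{s_i\}_{i \in T}$ relative to $w$, preserves the two conditions in Definition~\ref{def:saturation}. The entrywise bound $w' \le \frac{1}{n}\1$ is immediate: each coordinate is multiplied by $1 - s_i/s_{\max} \in [0, 1]$, so coordinates can only decrease. The substantive part is showing that the inequality $\norm{w'_S}_1 \ge \alpha \norm{w'_T}_1^{1/\gamma}$ survives.

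To handle that, I would introduce the two fractional mass-removal parameters
\[
\delta \defeq \frac{1}{s_{\max}\norm{w_S}_1}\sum_{i \in S} w_i s_i, \qquad \epsilon \defeq \frac{1}{s_{\max}\norm{w_T}_1}\sum_{i \in T} w_i s_i,
\]
both of which lie in $[0, 1]$ by the choice of $s_{\max}$. A direct computation from the definition of the update then gives the clean identities
\[
\norm{w'_S}_1 = (1 - \delta)\norm{w_S}_1, \qquad \norm{w'_T}_1 = (1 - \epsilon)\norm{w_T}_1.
\]
In this notation, the $\gamma$-safety condition of Definition~\ref{def:safety} is \emph{exactly} the statement $\delta \le \epsilon/\gamma$, which is the only place safety enters the argument.

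Combining this with the inductive hypothesis $\norm{w_S}_1 \ge \alpha \norm{w_T}_1^{1/\gamma}$, to finish it suffices to prove $(1 - \delta) \ge (1 - \epsilon)^{1/\gamma}$. Since $\delta \le \epsilon/\gamma$, this reduces to the elementary inequality
\[
1 - \tfrac{\epsilon}{\gamma} \;\ge\; (1 - \epsilon)^{1/\gamma}, \qquad \text{for all } \gamma \ge 1 \text{ and } \epsilon \in [0, 1],
\]
which follows either from the concavity of $x \mapsto x^{1/\gamma}$ applied to the tangent line at $x = 1$, or equivalently from the generalized Bernoulli inequality $(1 + x)^r \le 1 + rx$ valid for $r \in [0, 1]$ and $x \ge -1$.

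The argument is therefore not conceptually hard; the only real step is recognizing the correct parametrization $(\delta, \epsilon)$ under which safety and saturation are directly comparable, and verifying the right Bernoulli-type inequality. Care is needed that $s_{\max}^{(t)}$ is always well-defined, i.e.\ that $w^{(t)}$ is not identically zero; but this is automatic from the inductive saturation $\norm{w^{(t)}_S}_1 \ge \alpha \norm{w^{(t)}}_1^{1/\gamma} > 0$, assuming the process did not begin from the trivial weight vector.
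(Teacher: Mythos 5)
Your proposal is correct and follows essentially the same argument as the paper's proof: both reduce to a single step by induction, define the fractional masses removed from $S$ and $T$ (your $\delta, \epsilon$ are the paper's $\delta_S, \delta_T$), observe that $\gamma$-safety is exactly $\delta \le \epsilon/\gamma$, and close with the elementary inequality $1 - \gamma^{-1}\epsilon \ge (1-\epsilon)^{1/\gamma}$. The small additional remarks you make — explicitly checking the entrywise bound $w' \le \tfrac1n\1$ and the well-definedness of $s_{\max}^{(t)}$ — are sensible bits of bookkeeping that the paper leaves implicit.
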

\begin{proof}
It suffices to prove this in the case $N = 1$ and then use induction. Define
\[\delta_S \defeq \sum_{i \in S} \frac{w^{(0)}_i  - w^{(1)}_i }{\norm{w^{(0)}_S}_1},\; \delta_T \defeq \sum_{i \in T} \frac{w^{(0)}_i  - w^{(1)}_i }{\norm{w^{(0)}}_1}.\]
By using the assumption that $s^{(0)}$ is $\gamma$-safe, we conclude
\[\delta_S = \frac{1}{s_{\max}^{(0)}} \sum_{i \in S} \frac{w_i^{(0)}}{\norm{w^{(0)}_S}_1} s_i^{(0)} \le \frac{1}{\gamma} \cdot \frac{1}{s_{\max}^{(0)}} \sum_{i \in T} \frac{w_i^{(0)}}{\norm{w^{(0)}}_1} s_i^{(0)} = \frac{1}{\gamma}\delta_T.\]
Now, using $\gamma$-saturation of $w^{(0)}$ and $1 - \gamma^{-1}\delta \ge (1 - \delta)^{\gamma^{-1}}$ for all $\delta \in [0, 1]$ and $\gamma > 1$,
\[\norm{w^{(1)}_S}_1 = (1 - \delta_S)\norm{w^{(0)}_S}_1 \ge \Par{1 - \delta_T}^{\gamma^{-1}}\norm{w^{(0)}_S}_1 \ge \alpha \Par{(1 - \delta_T)\norm{w^{(0)}}_1}^{\gamma^{-1}} = \alpha\norm{w^{(1)}}_1^{\gamma^{-1}}.\]
\end{proof}

Finally, we include a technical lemma proved in \cite{CherapanamjeriMY20, DiakonikolasKKLT20}. 

\begin{lemma}[Lemma 2, \cite{DiakonikolasKKLT20}]\label{lem:covbounddist}
Let $w \in \Delta^n$ have $w \le \tfrac 1 n \1$ entrywise, and $\norm{w}_1 \ge \alpha^2$. Then,
\[\norm{\mu_w(T) - \mus}_2 \le \sqrt{2\norm{\cov_w(T)}_{\textup{op}}\frac{\norm{w}_1}{\norm{w_S}_1} + \frac{2}{\alpha}}.\]
\end{lemma}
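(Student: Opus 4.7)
The plan is to apply the triangle inequality through the intermediate point $\mu_w(S)$ and bound each resulting piece separately:
\[
\norm{\mu_w(T) - \mus}_2 \le \norm{\mu_w(T) - \mu_w(S)}_2 + \norm{\mu_w(S) - \mus}_2,
\]
and then combine via the standard estimate $(a+b)^2 \le 2a^2 + 2b^2$. The two pieces isolate the two sources of error: corrupted points dragging the weighted mean away from the weighted inlier mean, and non-uniform weighting of inliers biasing the restricted mean $\mu_w(S)$ away from $\mus$.

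For the first piece, I would decompose $\cov_w(T)$ across the partition $T = S \cup (T \setminus S)$ and retain only the PSD contribution from $S$. Applying Fact~\ref{fact:meanshift} with $v = \mu_w(T)$ to the restriction of $w$ to $S$ shows that this contribution equals $\tfrac{\norm{w_S}_1}{\norm{w}_1}\bigl(\cov_w(S) + (\mu_w(S) - \mu_w(T))(\mu_w(S) - \mu_w(T))^\top\bigr)$. Dropping the PSD $\cov_w(S)$ summand and taking operator norms rearranges to
\[
\norm{\mu_w(T) - \mu_w(S)}_2^2 \le \frac{\norm{w}_1}{\norm{w_S}_1}\normop{\cov_w(T)}.
\]

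For the second piece, I would combine the weight cap $w \le \tfrac{1}{n}\1$ with Assumption~\ref{assume:sexists}: for $i \in S$ the cap yields $\tfrac{w_i}{\norm{w_S}_1} \le \tfrac{1}{n\norm{w_S}_1}$, while Assumption~\ref{assume:sexists} gives $\sum_{i \in S}(X_i - \mus)(X_i - \mus)^\top \preceq \alpha n\, \id$. Together they produce
\[
\sum_{i \in S}\frac{w_i}{\norm{w_S}_1}(X_i - \mus)(X_i - \mus)^\top \preceq \frac{\alpha}{\norm{w_S}_1}\id,
\]
and then Fact~\ref{fact:covconvex} applied to the restricted distribution $\{w_i/\norm{w_S}_1\}_{i \in S}$ with $v = \mus$ yields $\norm{\mu_w(S) - \mus}_2^2 \le \tfrac{\alpha}{\norm{w_S}_1}$. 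The hypothesis $\norm{w}_1 \ge \alpha^2$, combined with the saturation invariants maintained throughout the algorithm (which force $\norm{w_S}_1 \ge \alpha^2$ in the regime where the bound is informative), upgrades this further to $\tfrac{1}{\alpha}$.

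Plugging the two estimates into the squared triangle inequality gives the stated bound. The main subtlety is the passage from $\tfrac{\alpha}{\norm{w_S}_1}$ to $\tfrac{1}{\alpha}$: in regimes where $\norm{w_S}_1 < \alpha^2$, the direct upper bound on $\norm{\mu_w(S) - \mus}_2^2$ is weaker, and one has to instead argue that the PSD decomposition of $\cov_w(T)$ forces $\normop{\cov_w(T)}$ to be large enough for the first piece $\frac{\norm{w}_1}{\norm{w_S}_1}\normop{\cov_w(T)}$ to absorb the deficit. Since the lemma will only be invoked on weights that arise from saturation-preserving operations, this subtlety can be swept into the $\norm{w}_1 \ge \alpha^2$ hypothesis.
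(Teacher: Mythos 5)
The paper does not prove this lemma --- it imports it from prior work --- so there is no internal proof to compare against; I evaluate your argument on its own terms. Your decomposition through $\mu_w(S)$ and the two intermediate bounds are correct: restricting $\cov_w(T)$ to the $S$-indices, applying Fact~\ref{fact:meanshift} at $v = \mu_w(T)$, and dropping the PSD summand $\cov_w(S)$ yields $\norm{\mu_w(T) - \mu_w(S)}_2^2 \le \tfrac{\norm{w}_1}{\norm{w_S}_1}\normop{\cov_w(T)}$; the cap $w \le \tfrac 1 n \1$ together with Assumption~\ref{assume:sexists} and Fact~\ref{fact:covconvex} yields $\norm{\mu_w(S) - \mus}_2^2 \le \tfrac{\alpha}{\norm{w_S}_1}$; and the squared triangle inequality gives $\norm{\mu_w(T) - \mus}_2^2 \le \tfrac{2\norm{w}_1}{\norm{w_S}_1}\normop{\cov_w(T)} + \tfrac{2\alpha}{\norm{w_S}_1}$.

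The step you flag at the end is a genuine gap, and your proposed way around it fails. Replacing $\tfrac{2\alpha}{\norm{w_S}_1}$ by $\tfrac{2}{\alpha}$ needs $\norm{w_S}_1 \ge \alpha^2$, which the stated hypothesis $\norm{w}_1 \ge \alpha^2$ does not supply (all the weight could sit on $T \setminus S$). Your fallback --- that when $\norm{w_S}_1 < \alpha^2$ the term $\tfrac{\norm{w}_1}{\norm{w_S}_1}\normop{\cov_w(T)}$ is forced large enough to absorb the deficit --- is false: put every positively weighted point at a single location $X_{i_0}$ with $i_0 \in S$, so $\cov_w(T) = \mzero$ and that term vanishes regardless of how skewed the weight ratio is. Assumption~\ref{assume:sexists} permits $\norm{X_{i_0} - \mus}_2 = \sqrt{\alpha n}$ (rest of $S$ at $\mus$), and $\norm{w}_1 \ge \alpha^2$ holds if the remaining weight sits on $\alpha^2 n$ outliers placed at $X_{i_0}$; then $\norm{\mu_w(T) - \mus}_2 = \sqrt{\alpha n}$, which exceeds $\sqrt{2/\alpha}$ whenever $\alpha^2 n > 2$. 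The hypothesis actually needed is $\norm{w_S}_1 \ge \alpha^2$ --- which every application in the paper satisfies (e.g.\ Lemma~\ref{lem:meanclose} has $\norm{w_S}_1 \ge \alpha/2$), and which follows from $\gamma$-saturation plus $\norm{w}_1 \ge \alpha^2$ --- but appealing to ``saturation invariants maintained throughout the algorithm'' is a comment on where the lemma is used, not a step in its proof. Stop your argument at the clean bound $\tfrac{2\alpha}{\norm{w_S}_1}$ and state explicitly what lower bound on $\norm{w_S}_1$ is being invoked.
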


In light of the lower bound of \cite{DiakonikolasKS18}, Lemma~\ref{lem:covbounddist} shows to learn the mean near-optimally in the list-decodable setting, it suffices to ensure $\norm{w_S}_1 = \Omega(\alpha)$ (i.e.\ we retain a constant fraction of the ``good'' weight) and $\normop{\cov_w(T)} = \tO(1)$ (i.e.\ the weighted covariance of the dataset is bounded).

\subsection{Potential function approach to fast filtering}\label{ssec:fastfilter}

In this section, we outline an example of a potential function approach to fast filtering, an alternative to filtering based on matrix multiplicative weights (MMW) used in recent literature \cite{DongH019, LiY20, DiakonikolasKKLT20}.\footnote{MMW guarantees are also implicitly used in approaches based on packing SDPs, see e.g.\ \cite{cheng2019high, cheng2019faster, CherapanamjeriMY20}. However, \cite{DongH019, LiY20, DiakonikolasKKLT20} use MMW guarantees in a non-black-box way to design filters.} This replacement is very useful in the list-decodable setting, as it greatly simplifies the requirements of our fast multifilter which interlaces clustering and filtering steps.

The example problem we consider in this expository section is the minority-outlier regime for robust mean estimation, when the ``ground truth'' distribution has covariance norm bounded by $\id$. We briefly describe the approach of \cite{DongH019} for this problem, and explain how it can be replaced with our new potential function framework. Throughout this section, fix some $0 < \eps < \thalf$ and assume that amongst the dataset $T \subset \R^d$ of $n$ points, there is a majority subset $S \subset T$ of size $|S| = (1 - \eps)n$ with bounded empirical covariance: $\cov_{\frac 1 n \1}(S) \preceq \id$.

The main algorithmic step in \cite{DongH019} is an efficient subroutine for halving the operator norm of the empirical covariance while filtering more weight from $T \setminus S$ than from $S$. It is well-known in the literature that whenever the operator norm is $O(1)$, the empirical mean is within $O(\sqrt{\eps})$ in $\ell_2$ norm from the ground truth mean (for an example of this derivation, see Lemma 3.2 of \cite{DongH019}). Thus, the key technical challenge is to provide a nearly-linear time implementation of this subroutine. This was accomplished in \cite{DongH019} using MMW-based regret guarantees, with ``gain matrices'' given by covariances and iterative filtering based on MMW responses. The result was a procedure which either terminates with a good mean estimate, or halves the covariance operator norm after $O(\log d)$ rounds of filtering. The latter is an artifact of many regret minimization techniques, which only guarantee progress after multiple rounds. It is natural to ask instead whether an alternative one-shot potential decrease guarantee exists; we now describe such a guarantee.

\paragraph{One-shot potential decrease.} Our algorithm will proceed in a number of iterations, where we modify a weight vector in $\Delta^n$ associated with $T$. We initialize $w^{(0)} \gets \tfrac{1}{n}\1$. In iteration $t$, we will downweight $w^{(t)} \in \Delta^n$ to obtain a new vector $w^{(t + 1)}$ as follows. Define the matrices
\[\mm_t \defeq \tcov_{w^{(t)}}\Par{T} = \sum_{i \in T} w^{(t)}_i (X_i - \mu_{w^{(t)}}(T))(X_i - \mu_{w^{(t)}}(T))^\top,\; \my_t \defeq \mm_t^{\log d}.\]
The potential we will track is $\Phi_t \defeq \Tr(\my_t^2)$. In order to analyze $\Phi_t$, we require two helper facts.

\begin{fact}[Lemma 7, \cite{jambulapati2020robust}]\label{fact:eltapp}
Let $\ma \succeq \mb$ be matrices in $\Sym_{\ge 0}^d$, and let $p \in \N$. Then
\[\Tr(\ma^{p - 1}\mb) \ge \Tr(\mb^p).\]
\end{fact}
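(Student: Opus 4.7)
The plan is to reduce the matrix inequality to a per-eigenvector scalar inequality using the spectral decomposition of $\mb$, and then close it out with a single application of Jensen's inequality in the spectrum of $\ma$. The base case $p = 1$ is trivial since both sides equal $\Tr(\mb)$, so I would focus on $p \ge 2$, for which $x \mapsto x^{p-1}$ is convex on $[0,\infty)$.

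First I would diagonalize $\mb = \sum_i \mu_i v_i v_i^\top$ with orthonormal eigenvectors $v_i$ and eigenvalues $\mu_i \ge 0$. Substituting and using cyclicity of the trace gives $\Tr(\mb^p) = \sum_i \mu_i^p$ and $\Tr(\ma^{p-1}\mb) = \sum_i \mu_i\, v_i^\top \ma^{p-1} v_i$. Since the weights $\mu_i$ are non-negative, the claim reduces to establishing the per-eigenvector bound $v_i^\top \ma^{p-1} v_i \ge \mu_i^{p-1}$ for every $i$; after proving this, I would multiply by $\mu_i$ and sum to conclude.

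The heart of the proof is this per-eigenvector bound, which I would obtain by running Jensen's inequality along the spectrum of $\ma$. Writing $\ma = \sum_k \lambda_k u_k u_k^\top$, for any unit vector $v$ one has $v^\top \ma^{p-1} v = \sum_k \lambda_k^{p-1} (v^\top u_k)^2$, where the numbers $\alpha_k := (v^\top u_k)^2$ form a probability distribution on the eigenvalues of $\ma$ (they are non-negative and sum to $\|v\|_2^2 = 1$). Convexity of $x \mapsto x^{p-1}$ on $[0,\infty)$ then yields $v^\top \ma^{p-1} v \ge (v^\top \ma v)^{p-1}$. Specializing to $v = v_i$ and using $\ma \succeq \mb$ to obtain $v_i^\top \ma v_i \ge v_i^\top \mb v_i = \mu_i$, combined with the monotonicity of $x \mapsto x^{p-1}$ on $[0,\infty)$, delivers the required $v_i^\top \ma^{p-1} v_i \ge \mu_i^{p-1}$.

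I do not expect a substantial obstacle: the two ingredients (spectral reduction of the $\mb$-side, and a single Jensen step on the $\ma$-side) interlock cleanly and reduce the whole claim to the scalar inequality $a \ge b \ge 0 \Rightarrow a^{p-1} b \ge b^p$. The only subtlety worth flagging is that one must apply Jensen in the right direction (using convexity of $x^{p-1}$ for $p \ge 2$) rather than trying to conclude via $\ma^{p-1} \succeq \mb^{p-1}$ in the Loewner order, which in fact fails once $p - 1 > 1$ since $x \mapsto x^{p-1}$ is not operator monotone; this is precisely why the trace operation is essential to the statement.
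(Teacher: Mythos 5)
Your proof is correct. The paper itself does not prove this fact but simply cites Lemma~7 of \cite{jambulapati2020robust}; your argument---diagonalize $\mb$, reduce to the per-eigenvector scalar inequality $v_i^\top \ma^{p-1} v_i \ge \mu_i^{p-1}$, and obtain that via Jensen's inequality applied to the spectral measure $\{(v_i^\top u_k)^2\}_k$ of $\ma$ at the unit vector $v_i$---is the standard and (to my knowledge) the same proof as in the cited source. Your closing remark about why operator monotonicity of $x \mapsto x^{p-1}$ cannot be invoked and why the trace is essential is also accurate and a useful caution.
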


\begin{fact}\label{fact:alphasplit}
For any $\gamma \ge 0$ and $\ma \in \Sym_{\ge 0}^d$,
\[\gamma\Tr(\ma^{2\log d}) \le \Tr(\ma^{2\log d + 1}) + d\gamma^{2\log d + 1}.\]
\end{fact}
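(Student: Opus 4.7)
The plan is to diagonalize and reduce to a scalar inequality. Since $\ma \in \Sym_{\ge 0}^d$, we can write $\ma = \mmu \diag{\lambda_1, \ldots, \lambda_d} \mmu^\top$ for orthogonal $\mmu$ and eigenvalues $\lambda_i \ge 0$. The trace is basis-invariant, so the claim rewrites as
\[
\gamma \sum_{i \in [d]} \lambda_i^{2\log d} \;\le\; \sum_{i \in [d]} \lambda_i^{2\log d + 1} + d \gamma^{2\log d + 1}.
\]
Thus it suffices to prove the scalar inequality $\gamma x^{2\log d} \le x^{2\log d + 1} + \gamma^{2\log d + 1}$ pointwise for every $x \ge 0$, and then sum the $d$ copies, one for each eigenvalue; the $d$ copies of $\gamma^{2\log d+1}$ on the right produce the $d \gamma^{2\log d+1}$ term.

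For the scalar inequality, I would split into two cases depending on whether $x \ge \gamma$ or $x < \gamma$. In the first case, $x^{2\log d + 1} = x \cdot x^{2\log d} \ge \gamma \cdot x^{2\log d}$, so the $x^{2\log d+1}$ term on the right-hand side already dominates the left-hand side alone. In the second case, $\gamma x^{2\log d} \le \gamma \cdot \gamma^{2\log d} = \gamma^{2\log d + 1}$, so the $\gamma^{2\log d+1}$ term alone suffices. Either way the inequality holds, completing the reduction.

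There is no real obstacle here — this is essentially a two-line case analysis, and the only ``loss'' is the additive $d \gamma^{2\log d+1}$ that arises from summing the scalar bound across the $d$ eigenvalues (since each eigenvalue below $\gamma$ potentially contributes a full $\gamma^{2\log d+1}$ to the slack). This loss is exactly what appears in the statement, confirming the bound is tight in structure. The only mild care needed is to note that the inequality is trivial when $\gamma = 0$, so we may assume $\gamma > 0$ without affecting anything.
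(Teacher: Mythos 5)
Your proof is correct and takes essentially the same approach as the paper's: diagonalize, reduce to a per-eigenvalue scalar inequality, and split cases on whether each eigenvalue exceeds $\gamma$. The paper states this in one sentence; your write-up just spells out the two cases explicitly.
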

\begin{proof}
This is immediate since each of the $d$ eigenvalues of $\ma ^{2\log d}$ is either at least $\gamma^{2 \log d}$ or not, and both of these cases are accounted for on the right hand side of the conclusion.
\end{proof}

We now give the potential analysis. Our main goal will be ensuring that
\begin{equation}\label{eq:filtercondex}\inprod{\my_t^2}{\mm_{t + 1}} \le 20\Tr(\my_t^2).\end{equation}
The specific constant in the above equation is not particularly important, but is used for illustration. We now show how \eqref{eq:filtercondex} implies a rapid potential decrease. Observe that
\begin{equation}\label{eq:potargument}
\begin{aligned}
\Phi_{t + 1} &= \Tr\Par{\mm_{t + 1}^{2 \log d}} \le \frac{1}{40} \Tr\Par{\mm_{t + 1}^{2 \log d + 1}} + d(40)^{2 \log d} \\
&\le \frac{1}{40} \Tr\Par{\mm_{t}^{2 \log d} \mm_{t + 1}} + d(40)^{2 \log d} \\
&\le \half \Tr(\my_t^2) + d(40)^{2 \log d} = \half \Phi_t + d(40)^{2 \log d}.
\end{aligned}
\end{equation}
The first line used Fact~\ref{fact:alphasplit} with $\gamma = 40$, the second used Fact~\ref{fact:eltapp} with $\ma = \mm_t$ and $\mb = \mm_{t + 1}$ (noting that if $w^{(t + 1)} \le w^{(t)}$ entrywise, the unnormalized covariance matrices respect the Loewner order by Fact~\ref{fact:meanshift}), and the third line used the assumption \eqref{eq:filtercondex}. This implies that we decrease $\Phi_t$ by a constant factor in every iteration, until it is roughly $d(40)^{2\log d}$, at which point the definition $\Phi_t = \Tr(\mm_t^{2\log d})$ implies that $\normop{\mm_t}$ is bounded by a constant. By using a na\"ive filtering preprocessing step, we can guarantee that $\Phi_0 = d^{O(\log d)}$, and hence the process will terminate in $O(\log^2 d)$ rounds.

\paragraph{Meeting the filter criterion \eqref{eq:filtercondex}.} To complete the outline of this algorithm, we need to explain how to satisfy \eqref{eq:filtercondex} via downweighting, while ensuring that we remove more weight from $T \setminus S$ than $S$. To do so, we define scores
\[s_i^{(t)} \defeq (X_i - \mu_{w^{(t)}}(T))^\top \mm_t^{2 \log d} (X_i - \mu_{w^{(t)}}(T)) \text{ for all } i \in T. \]
We remark that (randomized) constant-factor approximations can be computed to all $s_i^{(t)}$ via Johnson-Lindenstrauss projections in $\tO(nd)$ time, but for this discussion we assume we exactly know all scores. Then, by linearity of trace the condition \eqref{eq:filtercondex} is implied by
\begin{equation}\label{eq:sufficientfilter} \sum_{i \in T} w^{(t + 1)}_i  s_i^{(t)} \le 20\Tr(\my_t^2),\end{equation}
since Fact~\ref{fact:meanshift} implies that $\inprod{\my_t^2}{\mm_{t + 1}} \le \sum_{i \in T} w^{(t + 1)}_i  s_i^{(t)}$. Finally, we note that whenever \eqref{eq:sufficientfilter} does not hold, it must be primarily due to the effect of the outliers $T \setminus S$, because the covariance of $S$ is bounded. Hence, we can simply set
\[w_i^{(t + 1)} = \Par{1 - \frac{s_i^{(t)}}{s_{\max}^{(t)}}}^K w_i^{(t)},\]
where $K$ is the smallest natural number which passes the criterion \eqref{eq:sufficientfilter}. For any smaller $K$, it can be shown that downweighting ``one more time'' preserves the invariant that more outlier mass is removed, precisely because \eqref{eq:sufficientfilter} has not been met. Finally, binary searching to find the smallest value of $K$ meeting \eqref{eq:filtercondex} yields a complete algorithm running in $\tO(nd)$ time (for further details on the implementation of this binary search on $K$, see Theorem 2.4 of \cite{DongH019}).

\paragraph{Generalizing to the majority-outlier regime.} Our algorithms for the list-decodable setting marry this potential function argument with a multifilter, which produces multiple candidate filtered weight vectors on an input weight vector. We will instead show that for the \emph{tree} of weight vectors, where a node has children given by the candidates produced from the multifilter, at least one child both halves the potential and performs only $\gamma$-safe weight removal, for some $\gamma$. After polylogarithmic layers, we will return all empirical means of leaf nodes as our list of estimates. The child on the ``safe branch'' will then have a bounded potential and a $\gamma$-saturated weight vector, which suffices to guarantee an accurate mean estimate.

There are a number of additional complications which arise in this extension, which we briefly mention here as a preface to the following Sections~\ref{sec:gaussian} and~\ref{sec:boundedcov}. In order to process every layer of the multifilter tree in almost-linear time, we need to ensure that the number of datapoints across all the nodes, including repetitions, has not grown by more than a constant factor. The multifilter of \cite{DiakonikolasKK20} gives a variant of this guarantee by tracking the sums of the squared $\ell_1$ norms of weights associated with different nodes as a nonincreasing potential, i.e.\ 
\[\sum_{i \in \mathcal{S}} \norm{w^{(i)}}_1^2,\]
where $\mathcal{S}$ is the set of nodes and each $w^{(i)}$ is a current candidate weight function in $\mathcal{S}$. This is not sufficiently strong of a guarantee in our setting, since even points with very small weights need to be factored into calculations and thus affect runtime. We modify this approach in two ways. First, we replace the downweighting step with a randomly subsampled filter, which we show preserves various safety conditions such as those in Lemma~\ref{lem:safeimpliessat} with high probability. Next, we replace the squared $\ell_1$ potential with one involving $1 + \beta$ powers, for some $\beta \in (0, 1)$, which we prove is compatible with the multifilter. Overall, our filter tree contains polylogarithmically many layers, each of which accounts for sets with total cardinality $O(n^{1 + O(\beta)})$, giving us our final runtime.

\section{Warmup: fast Gaussian multifilter}\label{sec:gaussian}

As a warmup to our later (stronger) developments in Section~\ref{sec:boundedcov}, we give a complete algorithm for list-decodable mean estimation in the Gaussian case, i.e.\ where the ``true'' distribution $\dist$ is drawn from a Gaussian with covariance bounded by $\id$. Conceptually, the types of statements Gaussian concentration (rather than heavy-tailed concentration) allow us to make let us simplify several of the technical difficulties alluded to at the end of Section~\ref{ssec:fastfilter}, in particular the following.
\begin{enumerate}
	\item Instead of a ``covariance bound'' statement such as \eqref{eq:filtercondex} to use in our potential proof, we will simply guarantee that the multifilter returns sets of points which lie in short intervals along a number of random directions given by a Johnson-Lindenstrauss sketch.
	\item Instead of a randomly subsampled filtering step to remove outliers without soft downweighting (to preserve truly small subsets), it will be enough to deterministically set thresholds along $1$-dimensional projections to safely remove the outliers.
	\item The definition of the Gaussian multifilter (see Section~\ref{ssec:gsorc}) will be substantially simpler, since we have more explicit tail bounds to check for outliers.
\end{enumerate}
The strength of the error guarantees of the simpler algorithm in this section are somewhat weaker than those of Section~\ref{sec:boundedcov} even when specialized to the Gaussian case, but we include this section as an introductory exposition of our techniques. We will use the stronger Gaussian concentration assumption in this section, a tightening of Assumption~\ref{assume:sexists}.

Throughout this section, we will assume that $\alpha \in [1/d, 1 / \log^C d]$, for some constant $C > 0$.
We claim this is without loss of generality.
Specifically, for $\alpha^{-1}$ sub-logarithmic in the dimension $d$, the algorithm in the prior work by \cite{DiakonikolasKKLT20} runs in nearly-linear time. 
On the other hand, randomly sampling the dataset solves the list-decodable mean estimation problem near-optimally in time $\tO(\alpha^{-1})$ (see Appendix A of \cite{DiakonikolasKKLT20} for a proof). 

We now formally define the regularity condition which we will use throughout this section.
\begin{assumption}\label{assume:gaussian}
There is a subset $S \subseteq T \subset \R^d$ of size $\alpha n = \Theta(d \cdot \textup{polylog}(d))$, and a vector $\mus \in \R^d$, such that for all unit vectors $v \in \R^d$ and thresholds $t \in \R_{\ge 0}$, 
\[\Pr_{i \sim_{\textup{unif}} S}[\inprod{X_i - \mus}{v} > t] \le \exp(-\Omega(t^2)) + \frac{1}{\Omega\Par{\log^3 d}}.\] Here, the notation $i \sim_{\textup{unif}} S$ means that $i$ is a uniformly random sampled index from $S$.
\end{assumption}

This assumption is standard in the literature, and follows when the true distribution which $S$ is sampled from is Gaussian with identity-bounded covariance (see e.g.\ Definition A.4, Lemma A.5 \cite{DiakonikolasKK017}). We remark that the sample complexity of Assumption~\ref{assume:gaussian} is worse than that of Assumption~\ref{assume:sexists} by a polylogarithmic factor. This lossiness is just to simplify exposition in this warmup section, and indeed in the following Section~\ref{sec:boundedcov} we give an algorithm which recovers stronger guarantees than Theorem~\ref{thm:gaussian}, this section's main export, under only Assumption~\ref{assume:sexists}.

In Section~\ref{ssec:gpartition}, we first give our main subroutine, $\GPartition$, which takes a candidate set and produces a number of children candidate sets which each satisfy a progress guarantee similar to \eqref{eq:filtercondex}. The main difficulty will be in guaranteeing that the children sets are sufficiently small, and that if the parent set was ``good'' (had large overlap with $S$), then at least one child set will as well. We reduce $\GPartition$ to a number of one-dimensional clustering steps, which we implement as $\GSOrC$ in Section~\ref{ssec:gsorc}. Finally, we use the guarantees of $\GPartition$ within our potential-based framework outlined in Section~\ref{ssec:fastfilter}, giving our final algorithm $\GMultifilter$ in Section~\ref{ssec:fullgaussian}. Throughout, sets $S$ and $T$ are fixed and satisfy Assumptions~\ref{assume:sexists} and~\ref{assume:gaussian}.

\subsection{Reducing \texorpdfstring{$\GPartition$}{GaussianPartition} to \texorpdfstring{$\GSOrC$}{GaussianSplitOrCluster}}\label{ssec:gpartition}

Our final algorithm creates a tree of candidate sets. Every node $p$ in the tree is associated with a subset $T_p$. In order to progress down the tree, at a given node $p$ we form children $\{c_\ell\}_{\ell \in [k]}$ with associated sets $\{T_{c_\ell}\}_{\ell \in [k]}$; we call the procedure which produces the children node $\GPartition$, and develop it in this section. There are three key properties of $\GPartition$ which we need.

\begin{enumerate}
	\item The sum of the cardinalities of $\{T_{c_\ell}\}_{\ell \in [k]}$ is not too large compared to $|T_p|$. This is to guarantee that at each layer of the tree, we perform about the same amount of work, namely $\tO(nd)$. We formalize this with a parameter $\beta \in (0, 1]$ throughout the rest of this section, and will guarantee that every time $\GPartition$ is called on a parent node $p$,
	\begin{equation}\label{eq:gsizepot}\sum_{\ell \in [k]} \left|T_{c_\ell}\right|^{1 + \beta} \le |T_p|^{1 + \beta}.\end{equation}
	\item If the parent vertex $p$ has substantial overlap with $S$ (at least $\thalf |S|$ points), then at least one of the produced children continues to retain all but a small fraction of points in $S$.
	\item Defining the matrices
	\begin{equation}\label{eq:gaussianmatdef}
	\begin{gathered}
	\mm_p \defeq \tcov_{\frac 1 n \1}\Par{T_p},\; \my_p \defeq \mm_p^{\log d},\\
	\mm_{c_\ell} \defeq \tcov_{\frac 1 n \1}\Par{T_{c_\ell}},\; \my_{c_\ell} \defeq \mm_{c_\ell}^{\log d} \text{ for all } \ell \in [k],
	\end{gathered}
	\end{equation}
	every $\mm_{c_\ell}$ satisfies the bound
	\begin{equation}\label{eq:gfiltercond}
	\inprod{\my_p^2}{\mm_{c_\ell}} \le R^2 \Tr\Par{\my_p^2},
	\end{equation}
	for some (polylogarithmic) value $R$ we will specify. Note the similarity between this and \eqref{eq:filtercondex}; this will be used in a potential analysis to bound progress on covariance operator norms.
\end{enumerate}

We are now ready to state the algorithm $\GPartition$.

\begin{algorithm}[ht!]
	\caption{$\GPartition(T_p, \alpha, \beta, C, R)$}\label{alg:gpartition}
	\begin{algorithmic}[1]
		\STATE \textbf{Input:} $T_p \subseteq T$, $\alpha \in (0,\thalf)$, $\beta \in (0, 1]$, $C, R \in \R_{\ge 0}$ satisfying (for sufficiently large constants)
		\[R = \Omega\Par{\sqrt{\log\Par{C}} \cdot \frac{\log\log \Par{C\alpha^{-1}}}{\beta}},\;C = \Omega\Par{\log^2 d}.\] 
		\STATE \textbf{Output:} With failure probability $\le \tfrac{1}{d^3}$: subsets $\{T_{c_\ell}\}_{\ell \in [k]}$ of $T_p$, satisfying \eqref{eq:gsizepot}. Every child satisfies \eqref{eq:gfiltercond} (using notation \eqref{eq:gaussianmatdef}). If $|T_p \cap S| \ge (\thalf + \tfrac 1 C)|S|$, at least one child $T_{c_\ell}$ satisfies
		\begin{equation}\label{eq:gkeepmostofs}\left|T_{c_\ell} \cap S\right| \ge \left|T_p \cap S\right| - \frac{1}{C}|S|.\end{equation}
		\STATE Sample $\ndir = \Theta(\log d)$ vectors $\{u_j\}_{j \in [\ndir]} \in \R^d$ each with independent entries $\pm 1$. Following notation \eqref{eq:gaussianmatdef}, let $v_j \gets \my_p u_j$ for all $j \in [\ndir]$.
		\STATE $\mathcal{S}_0 \gets T_p$
		\FOR{$j \in [\ndir]$}
		\STATE $\mathcal{S}_j \gets \emptyset$
		\FOR{$T' \in \mathcal{S}_{j - 1}$}
		\STATE $\mathcal{T} \gets \GPartitionOneD(T', \alpha, v_j, \beta, C\ndir, R)$
		\STATE $\mathcal{S}_j \gets \mathcal{S}_j \cup \mathcal{T}$
		\ENDFOR
		\ENDFOR
		\RETURN $\mathcal{S}_{\ndir}$
	\end{algorithmic}
\end{algorithm}

It heavily relies on a subroutine, $\GPartitionOneD(T', v)$ which takes a subset $T'$ and a vector $v \in \R^d$, and produces children subsets of $T'$ satisfying the first two conditions above, and also guarantees that along the direction $v$, each child subset is contained in a relatively short interval.

\begin{algorithm}[ht!]
	\caption{$\GPartitionOneD(T', \alpha, v, \beta, C, R)$}\label{alg:gpartitiononed}
	\begin{algorithmic}[1]
		\STATE \textbf{Input:} $T' \subseteq T$, $\alpha \in (0,\thalf)$, $v \in \R^d$, $\beta \in (0, 1]$, $C, R \in \R_{\ge 0}$ satisfying (for sufficiently large constants)
		\[R = \Omega\Par{\sqrt{\log\Par{C}} \cdot \frac{\log\log \Par{C\alpha^{-1}}}{\beta}},\;C = \Omega\Par{\log^3 d}.\] 
		\STATE \textbf{Output:} Subsets $\{T''_\ell\}_{\ell \in [k]} \subseteq T'$, such that
		\begin{equation}\label{eq:gpartitiononedsize}\sum_{\ell \in [k]} \left|T''_\ell\right|^{1 + \beta} \le |T'|^{1 + \beta}.\end{equation}
		If $|T' \cap S| \ge (\thalf + \tfrac{1}{C}) |S|$, at least one child $T''_\ell$ satisfies
		\[\left|T''_\ell \cap S \right| \ge \left|T' \cap S \right| - \frac{1}{C} |S|.\]
		Every child has all values $\{\inprod{v}{X_i} \mid i \in T''_\ell\}$ contained in an interval of length $R\norm{v}_2$.
		\STATE $\mathcal{S}_{\text{in}} \gets \{T'\}$, $\mathcal{S}_{\text{out}} \gets \emptyset$
		\WHILE{$\mathcal{S}_{\text{in}} \neq \emptyset$}
		\STATE $T'' \gets$ the first element of $\mathcal{S}_{\text{in}}$
		\STATE $\mathcal{S}_{\text{in}} \gets \mathcal{S}_{\text{in}} \setminus T''$
		\IF{$\GSOrC(T'', \alpha, \beta, R, \tfrac{1}{Cn})$ returns one set $T_{\text{out}}^{(0)}$}
		\STATE $\mathcal{S}_{\text{out}} \gets \mathcal{S}_{\text{out}} \cup \left\{\Tout^{(0)}\right\}$
		\ELSE
		\STATE $T^{(1)}_{\text{out}}, T^{(2)}_{\text{out}} \gets \GSOrC(T'', \alpha, \beta, R, \tfrac{1}{Cn})$
		\STATE $\mathcal{S}_{\text{in}} \gets \mathcal{S}_{\text{in}} \cup \left\{T_{\text{out}}^{(1)}, T_{\text{out}}^{(2)}\right\}$ 
		\ENDIF
		\ENDWHILE
	\end{algorithmic}
\end{algorithm}

Once again, $\GPartitionOneD$ heavily relies on a subroutine, $\GSOrC$, which we implement in Section~\ref{ssec:gsorc}. It takes as input a set $T''$ and either produces one or two subsets of $T''$ as output. If it outputs one set, that set has length at most $R\norm{v}_2$ in the direction $v$; otherwise, $\GPartitionOneD$ simply recurses on the additional two sets. Crucially, $\GSOrC$ guarantees that if $T''$ has substantial overlap with $S$, then so does at least one child; moreover, when $\GSOrC$ returns two sets, they satisfy a size potential such as \eqref{eq:gpartitiononedsize}. We now demonstrate correctness of $\GPartition$, assuming that $\GPartitionOneD$ is correct.

\begin{lemma}\label{lem:gpartition}
The output of $\GPartition$ satisfies the guarantees given in Line 2 of Algorithm~\ref{alg:gpartition}, assuming correctness of $\GPartitionOneD$.
\end{lemma}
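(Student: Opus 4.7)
The plan is to verify in turn the three output properties listed in Line 2 of Algorithm~\ref{alg:gpartition}, treating $\GPartitionOneD$ as a black box. The outer loop of $\GPartition$ just iterates $\GPartitionOneD$ once per random direction $v_j = \my_p u_j$, so all three properties reduce either to a direct induction on $j \in [\ndir]$ or to bounding sample covariances in terms of their one-dimensional projections along the $v_j$. Only the last (the filter condition \eqref{eq:gfiltercond}) is technically nontrivial, and its proof uses two Johnson--Lindenstrauss concentration estimates for the sketch $\{v_j\}_{j \in [\ndir]}$ of the symmetric matrix $\my_p^2$.

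First I would dispatch the size and overlap guarantees by telescoping induction. For \eqref{eq:gsizepot}, the per-call guarantee \eqref{eq:gpartitiononedsize} of $\GPartitionOneD$ applied to each set in $\mathcal{S}_{j-1}$ yields $\sum_{T \in \mathcal{S}_j} |T|^{1+\beta} \le \sum_{T' \in \mathcal{S}_{j-1}} |T'|^{1+\beta}$; iterating gives \eqref{eq:gsizepot} at $j = \ndir$. For the overlap property, I would show by induction that some $T^{(j)} \in \mathcal{S}_j$ satisfies $|T^{(j)} \cap S| \ge |T_p \cap S| - \tfrac{j}{C\ndir}|S|$. The induction step applies the ``good-child'' guarantee of $\GPartitionOneD$ with parameter $C' = C\ndir$; this requires $|T^{(j)} \cap S| \ge (\thalf + \tfrac{1}{C\ndir})|S|$, which holds for every $j < \ndir$ because $|T_p \cap S| \ge (\thalf + \tfrac{1}{C})|S|$ and $\tfrac{j}{C\ndir} \le \tfrac{1}{C} - \tfrac{1}{C\ndir}$. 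Setting $j = \ndir$ yields \eqref{eq:gkeepmostofs}.

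For the main step, fix any $T_{c_\ell} \in \mathcal{S}_{\ndir}$. Because $T_{c_\ell}$ is nested inside a set output by $\GPartitionOneD$ along each $v_j$, the set of projections $\{\inprod{v_j}{X_i} : i \in T_{c_\ell}\}$ lies in an interval of length $R\|v_j\|_2$, so $\inprod{v_j}{X_a - X_b}^2 \le R^2\|v_j\|_2^2$ for every $a,b \in T_{c_\ell}$ and every $j \in [\ndir]$. Rewriting the unnormalized covariance via Fact~\ref{fact:altcovpairs} yields
\[\inprod{\my_p^2}{\mm_{c_\ell}} = \frac{1}{2n|T_{c_\ell}|}\sum_{a,b \in T_{c_\ell}} \|\my_p(X_a - X_b)\|_2^2.\]
Applying a scalar Johnson--Lindenstrauss bound for the $\pm 1$ random vectors $\{u_j\}_{j \in [\ndir]}$ shows that, with failure probability $\exp(-\Omega(\ndir))$ per target vector, $\|\my_p w\|_2^2 \le O(1) \cdot \tfrac{1}{\ndir}\sum_j \inprod{v_j}{w}^2$. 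A union bound over all $\binom{n}{2}$ differences $w = X_a - X_b$ together with $\ndir = \Theta(\log d)$ (with a sufficiently large constant) controls this failure probability. Combined with the interval bound, this gives $\inprod{\my_p^2}{\mm_{c_\ell}} \le O(R^2/\ndir) \cdot \sum_j \|v_j\|_2^2$. A second Hanson--Wright-style concentration shows $\sum_j \|v_j\|_2^2 = \sum_j u_j^\top \my_p^2 u_j \le O(\ndir)\Tr(\my_p^2)$ with failure probability $\exp(-\Omega(\ndir))$, which closes the argument after absorbing universal constants into the lower bound on $R$.

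The hard part is the filter condition, because we must guarantee \eqref{eq:gfiltercond} for \emph{every} child simultaneously, and the only handle on each child set is the deterministic interval-containment property along the same $\ndir$ random directions that we also use as a JL sketch. Fortunately, switching to the pairwise-difference representation from Fact~\ref{fact:altcovpairs} decouples these two roles of $\{u_j\}$: the JL bound only needs to approximate $\|\my_p w\|_2^2$ for the fixed (but a priori large) set of pairwise difference vectors $\{X_a - X_b : a,b \in T\}$, and separately for the trace. Both union bounds are polynomial in $d$ and can be absorbed into the $1/d^3$ failure budget by choosing the constant in $\ndir = \Theta(\log d)$ large enough.
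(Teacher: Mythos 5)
Your proposal is correct and follows essentially the same route as the paper's proof: the size and overlap guarantees telescope inductively from $\GPartitionOneD$ (you are slightly more explicit than the paper in verifying the precondition $|T^{(j)} \cap S| \ge (\thalf + \tfrac{1}{C\ndir})|S|$ at each step, which is a harmless improvement), and the filter condition is obtained from Fact~\ref{fact:altcovpairs} plus two JL-type concentration bounds along the $u_j$'s. The paper deduces the trace bound by applying the \cite{Achlioptas03} guarantee row-by-row to $\my_p$ while you phrase the same estimate as a Hanson--Wright quadratic-form bound, but these are the same calculation up to presentation.
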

\begin{proof}
First, to demonstrate that the subsets satisfy \eqref{eq:gsizepot}, we observe that we can view $\GPartition$ as always maintaining a set of subsets, $\mathcal{S}_j$ (in the beginning, $\mathcal{S}_0 = T_p$). The set $\mathcal{S}_j$ is formed by calling $\GPartitionOneD$ on elements of $\mathcal{S}_{j - 1}$, each of which satisfy \eqref{eq:gpartitiononedsize}, so inductively $\mathcal{S}_{\ndir} = \{T_{c_l}\}_{l \in [k]}$ will satisfy \eqref{eq:gsizepot} with respect to $\mathcal{S}_0 = T_p$ as desired.

Next, by recursively using the guarantee of $\GPartitionOneD$, every $T_{c_l} \in \mathcal{S}_{\ndir}$ will satisfy
\[\text{all values } \left\{\inprod{v_j}{X_i} \mid i \in T_{c_l}\right\} \text{ are contained in an interval of length } R\norm{v_j}_2, \text{ for all } j \in [\ndir].\]
In other words, this set is short along all the directions $\{\my_p u_j = v_j\}_{j \in [\ndir]}$. This lets us conclude
\begin{align*}
\inprod{\my_p^2}{\mm_{c_l}} &= \frac{1}{2n\left|T_{c_l}\right|} \inprod{\my_p^2}{\sum_{i, i' \in T_{c_l}} (X_i - X_{i'})(X_i - X_{i'})^\top} \\
&= \frac{1}{2n\left|T_{c_l}\right|} \sum_{i, i' \in T_{c_l}} \norm{\my_p(X_i - X_{i'})}_2^2 \\
&\le \frac{1.4}{2n\left|T_{c_l}\right|\ndir} \sum_{i, i' \in T_{c_l}} \sum_{j \in [\ndir]} \inprod{\my_p u_j}{X_i - X_{i'}}^2 \\
&\le \frac{1.4}{2n\left|T_{c_l}\right|\ndir} \sum_{i, i' \in T_{c_l}} \sum_{j \in [\ndir]} R^2 \norm{\my_p u_j}_2^2 \\
&\le \frac{1.4}{2\ndir} \sum_{j \in [\ndir]} R^2 \norm{\my_p u_j}_2^2 \le R^2 \Tr\Par{\my_p^2},
\end{align*}
with probability at least $1 - \tfrac{1}{2d^3}$. Here, we used Fact~\ref{fact:altcovpairs} in the first line and linearity of trace in the second line. The third line used the Johnson-Lindenstrauss lemma of \cite{Achlioptas03} which says that for any vector $v$, $\tfrac{1}{\ndir} \sum_{j \in [\ndir]} \inprod{u_j}{v}^2 \in [0.6, 1.4] \norm{v}_2^2$ for a sufficiently large $\ndir = \Theta(\log(d))$ with probability at least $1 - \tfrac{1}{2d^6}$, which we union bound over all $|T_{c_l}|^2 \le n^2 \le d^4$ pairs of points. The fourth line used the radius guarantee of $\GPartitionOneD$, and the fifth used $|T_{c_l}| \le n$ and the Johnson-Lindenstrauss lemma guarantee that $\tfrac{1}{\ndir} \sum_{j \in [\ndir]} \norm{\my_p u_j}_2^2 \in [0.6, 1.4] \Tr(\my_p^2)$ with probability at least $1 - \tfrac{1}{2d^3}$, which can be deduced by the guarantee of \cite{Achlioptas03} applied to the rows of $\my_p$. Union bounding over the two applications of \cite{Achlioptas03} yields the claim.

Finally, to demonstrate that at least one child satisfies \eqref{eq:gkeepmostofs}, suppose $p$ satisfies $|T_p \cap S| \ge (\thalf + \tfrac 1 C)|S|$ (i.e.\ it has substantial overlap with $S$). Then by applying the guarantee of $\GPartitionOneD$ inductively, every $\mathcal{S}_j$ will have at least one element $T'$ satisfying $|T' \cap S| \ge \thalf |S|$. Every call to $\GPartitionOneD$ only removes $\tfrac{1}{C\ndir} |S|$ points in $S$, so overall only $\tfrac{1}{C}|S|$ points are removed.
\end{proof}

\subsection{Implementation of \texorpdfstring{$\GSOrC$}{GaussianSplitOrCluster}}\label{ssec:gsorc}

In this section, we first state $\GSOrC$ and analyze its correctness. We conclude with a full runtime analysis of $\GPartitionOneD$, using our $\GSOrC$ implementation.

\begin{algorithm}[ht!]
	\caption{$\GSOrC(\Tin, \alpha, v, \beta, R, \Delta)$}\label{alg:gsorc}
	\begin{algorithmic}[1]
		\STATE \textbf{Input:} $\Tin \subseteq T$, $\alpha \in (0,\thalf)$, $v \in \R^d$, $\beta \in (0, 1]$, $R \in \R_{\ge 0}$, $\Delta \in (0, 1)$
		\STATE \textbf{Output:} Either one subset $\Tout^{(0)} \subset \Tin$, or two subsets $\Tout^{(1)}, \Tout^{(2)} \subset \Tin$. In the one subset case, $\Tout^{(0)}$ has $\left\{\inprod{v}{X_i} \mid i \in \Tout^{(0)}\right\}$ contained in an interval of length $R\norm{v}_2$. In the two subsets case, they take the form, for some threshold value $\tau \in \R$ and $r \defeq \tfrac{R}{4\kmax}, \kmax = \Theta\Par{\tfrac{\log\log(\frac{1}{\alpha\Delta})}{\beta}}$
		\begin{equation}\label{eq:gtwosetsdef}\Tout^{(1)} \defeq \{X_i \mid \inprod{v}{X_i} \le \tau + r\norm{v}_2\},\; \Tout^{(2)} \defeq \{X_i \mid \inprod{v}{X_i} \ge \tau - r\norm{v}_2\}, \end{equation}
		and satisfy
		\begin{equation}\label{eq:gtwosetssizebound}\left|\Tout^{(1)}\right|^{1 + \beta} + \left|\Tout^{(2)}\right|^{1 + \beta} < \left|\Tin\right|^{1 + \beta}.\end{equation}
		\STATE $Y_i \gets \inprod{v}{X_i}$ for all $i \in \Tin$
		\STATE $\Tout^{(0)} \gets $ indices in the middle $1 - \alpha\Delta$ quantiles of $\{Y_i\}_{i \in \Tin}$
		\IF{$\left\{Y_i \mid i \in \Tout^{(0)}\right\}$ is contained in an interval of length $R\norm{v}_2$}
		\RETURN $\Tout^{(0)}$
		\ELSE
		\STATE $\taumed \gets \med\Par{\Brace{Y_i \mid i \in \Tin}}$, where $\med$ returns the median
		\STATE $\tau_k \gets \taumed + 2kr\norm{v}_2$ for all integers $-\kmax \le k \le \kmax$
		\RETURN $\Tout^{(1)}$, $\Tout^{(2)}$ defined in \eqref{eq:gtwosetsdef} for any threshold $\tau_k$ inducing sets satisfying \eqref{eq:gtwosetssizebound}
		\ENDIF
	\end{algorithmic}
\end{algorithm}

To analyze Algorithm~\ref{alg:gsorc} we first demonstrate that it always returns in at least one case. In particular, we demonstrate that whenever the set $\Tout^{(0)}$ is not sufficiently short, then there will be a threshold parameter $k$ such that the induced sets in \eqref{eq:gtwosetsdef} satisfy the size bound \eqref{eq:gtwosetssizebound}.

\begin{lemma}\label{lem:gsorccorrect}
Suppose Algorithm~\ref{alg:gsorc} does not return on Line 6. Then, there exists a $k \in \mathbb{Z}$ in the range $-\kmax \le k \le \kmax$ such that Algorithm~\ref{alg:gsorc} is able to return on Line 10.
\end{lemma}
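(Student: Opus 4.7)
The plan is to argue by contradiction: suppose that for every $k \in \{-\kmax, \ldots, \kmax\}$, the two sets $\Tout^{(1)}, \Tout^{(2)}$ defined from $\tau_k$ as in \eqref{eq:gtwosetsdef} satisfy $|\Tout^{(1)}|^{1+\beta} + |\Tout^{(2)}|^{1+\beta} \ge |\Tin|^{1+\beta}$. Normalize by writing $\rho_k := |\Tout^{(1)}|/|\Tin|$, $\sigma_k := |\Tout^{(2)}|/|\Tin|$, and let $\omega_k$ denote the fraction of $\{Y_i\}$ lying in the overlap window $[\tau_k - r\norm{v}_2, \tau_k + r\norm{v}_2]$; these satisfy $\rho_k + \sigma_k = 1 + \omega_k$. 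Because consecutive thresholds are spaced exactly $2r\norm{v}_2$ apart, the overlap windows are pairwise disjoint and therefore $\sum_k \omega_k \le 1$. A direct computation yields the monotone dynamics $\rho_{k+1} = \rho_k + \omega_{k+1}$ and $\sigma_{k+1} = \sigma_k - \omega_k$, so $\rho_k$ is nondecreasing and $\sigma_k$ is nonincreasing in $k$ as the trajectory $(\rho_k, \sigma_k)$ sweeps ``southeast'' through $[0,1]^2$.

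The key geometric fact is that the ``safe region'' $\mathcal{R} := \{(\rho, \sigma) \ge 0 : \rho^{1+\beta} + \sigma^{1+\beta} < 1\}$ is convex, contains $(\tfrac 1 2, \tfrac 1 2)$, and its intersection with any line $\{\rho + \sigma = 1 + \omega\}$ is an interval around the midpoint $\rho = (1+\omega)/2$ which is nonempty exactly when $\omega < \omega^* := 2^{\beta/(1+\beta)} - 1 = \Theta(\beta)$; moreover, when $\omega \ll \omega^*$ this interval is wide enough to contain every $\rho \in [0.1, 0.9]$. Since $\sum_k \omega_k \le 1$, pigeonhole shows that at most $O(1/\omega^*) = O(1/\beta) \ll \kmax$ indices can satisfy $\omega_k > \omega^*/8$, so under the contradiction hypothesis the failure at a typical small-$\omega_k$ index must arise from $\rho_k$ being trapped outside $[0.1, 0.9]$.

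I will next consider the first index $k^*$ at which $\rho_{k^*} \ge 1/2$, which exists because $\rho_{\kmax} \ge 1/2$ trivially while $\rho_{-\kmax} < 1/2$ in any nondegenerate case. If the jump $\omega_{k^*} = \rho_{k^*} - \rho_{k^*-1}$ is itself at most $\omega^*/8$, then $\rho_{k^*} \in [1/2, 1/2 + \omega^*/8] \subset [0.1, 0.9]$, already contradicting the failure assumption. Otherwise $\omega_{k^*}$ is large, meaning the single window around $\tau_{k^*}$ absorbs most of the mass; in this case I pivot to $k^* + 1$, for which the disjointness of consecutive windows combined with $\sum_k \omega_k \le 1$ forces $\omega_{k^*+1}$ to be small, and the dynamics gives $\sigma_{k^*+1} = 1 - \rho_{k^*}$ and $\rho_{k^*+1} = \rho_{k^*} + \omega_{k^*+1}$. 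Since $\rho^{1+\beta} + (1-\rho)^{1+\beta} < 1$ strictly for every $\rho \in (0, 1)$, the pivot $k^* + 1$ lies in $\mathcal{R}$ provided $\rho_{k^*}$ is bounded away from $1$ and $\omega_{k^*+1}$ is small compared to the resulting safety margin.

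The main obstacle is to guarantee that $\rho_{k^*}$ is bounded away from $1$, which is where the spread hypothesis enters: since the middle $(1-\alpha\Delta)$-quantile of $\{Y_i\}_{i \in \Tin}$ spans more than $R\norm{v}_2 = 4\kmax \cdot r\norm{v}_2$, at least one of its endpoints $y_{\min}, y_{\max}$ must lie strictly beyond the union of all $2\kmax + 1$ overlap windows, forcing nonnegligible mass to sit outside this union and hence bounding some $\rho_k$ away from $1$. Making this quantitative for the specific $k^*$ above requires a dyadic decomposition of the $\{\omega_k\}$ by magnitude: at each of $O(\log\log(1/(\alpha\Delta)))$ dyadic levels the pigeonhole eliminates a constant fraction of ``bad'' indices, and the choice $\kmax = \Theta(\log\log(1/(\alpha\Delta))/\beta)$ provides just enough budget to iterate the pivot argument across all levels while ensuring that both $\rho_{k^*}$ stays away from $1$ and $\omega_{k^*+1}$ stays below the margin imposed by $1 - \rho_{k^*}$. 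Calibrating these constants against the algorithm's parameter choices is the most delicate step of the proof.
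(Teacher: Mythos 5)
Your proposal takes a genuinely different route from the paper — tracking the trajectory $(\rho_k,\sigma_k)$ in the unit square and arguing via pigeonhole on the overlap masses $\omega_k$ — but it has a fundamental gap, not merely a constant-calibration issue. The pigeonhole ($\sum_k\omega_k\le 1$ implies few indices have $\omega_k>\omega^*/8$) only controls how many increments are \emph{large}; it says nothing that prevents the dangerous failure mode where \emph{every} $\omega_k$ is tiny and $\rho_k$ creeps slowly from $\approx 0$ to $\approx 1$ while staying in the failure region $\{\rho^{1+\beta}+\sigma^{1+\beta}\ge 1\}$ at every index. Ruling that out requires a quantitative decay rate for the tail mass, which your sketch never produces. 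The paper gets this directly: writing $\gamma_k=g(\tau_k-r\norm{v}_2)$ for the upper-tail mass (so $\sigma_k=\gamma_k$ and $\rho_k=1-\gamma_{k+1}$), the failure inequality at index $k$ is $(1-\gamma_{k+1})^{1+\beta}+\gamma_k^{1+\beta}\ge 1$, and since $1-\gamma_{k+1}\ge(1-\gamma_{k+1})^{1+\beta}$ this immediately yields $\gamma_{k+1}\le\gamma_k^{1+\beta}$ — doubly exponential decay from $\gamma_1\le\frac12$. Iterating $\kmax-1$ times drives $\gamma_{\kmax}\le\frac{\alpha\Delta}{2}$, so the $[\frac12,1-\frac{\alpha\Delta}{2}]$ quantiles span at most $\frac{R}{2}\norm{v}_2$, and symmetrically for the lower tail, contradicting that Line 6 failed. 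It is exactly this doubly exponential contraction that explains the $\kmax=\Theta(\log\log(1/(\alpha\Delta))/\beta)$ budget; a linear pigeonhole bound of $O(1/\beta)$ indices is the wrong order of magnitude to interact with it.

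Two smaller issues: the claim that the overlap windows $[\tau_k-r\norm{v}_2,\tau_k+r\norm{v}_2]$ are pairwise disjoint is not quite right (consecutive windows share a boundary point, so $\sum_k\omega_k\le 1$ needs a small tiebreak argument), and the "nondegenerate case" assertion that $\rho_{-\kmax}<\frac12$ is unsupported — though unnecessary, since $\tau_0=\taumed$ already forces $\rho_0\ge\frac12$, so $k^*\le 0$ always exists. If you want to salvage your trajectory viewpoint, the fix is to convert the per-index failure inequality into the contraction $\gamma_{k+1}\le\gamma_k^{1+\beta}$ (your $\sigma_{k+1}\le\sigma_k^{1+\beta}$, reading off the dynamics $\sigma_{k+1}=1-\rho_k$) and iterate, which is precisely what the paper does.
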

\begin{proof}
We instead prove that if there is no such $k$, then we will have a contradiction on the length of the set $\Tout^{(0)}$ in the direction $v$. We first lower bound the length of the $[\thalf, 1 - \tfrac{\alpha\Delta}{2}]$ quantiles of $\{Y_i \mid i \in \Tin\}$ by $\thalf R \norm{v}_2$; the lower bound for the $[\tfrac{\alpha\Delta}{2}, \thalf]$ quantiles will follow analogously. Combining shows that if no threshold works, then the algorithm should have returned $\Tout^{(0)}$.

For any threshold $\tau$, define $g(\tau) \in [0, 1]$ to be the proportion of $\{Y_i \mid i \in \Tin\}$ which are $\ge \tau$. Moreover, define for all $1 \le k \le \kmax$, 
\[\gamma_k \defeq g(\tau_k - r\norm{v}_2) = g(\taumed + (2k - 1) r\norm{v}_2),\]
and note that $\gamma_1 \le \thalf$ by definition, since $\taumed$ was the median. Now, for each $1 \le k \le \kmax$, since $\tau_k$ was not a valid threshold, the sets
\[T_k^{(1)} \defeq \{X_i \mid Y_i \le \taumed + (2k + 1)r\norm{v}_2\},\; T_k^{(2)} \defeq \{X_i \mid Y_i \ge \taumed + (2k - 1)r\norm{v}_2\}\]
do not satisfy the size bound \eqref{eq:gtwosetssizebound}. Normalizing both sides of \eqref{eq:gtwosetssizebound} by $|\Tin|^{1 + \beta}$ and using the definitions of $\{\gamma_k\}$, we obtain the following recursion:
\begin{equation}\label{eq:ggammarecursion}(1 - \gamma_{k + 1})^{1 + \beta} + \gamma_k^{1 + \beta} = \Par{\frac{\left|T_k^{(1)}\right|}{|\Tin|}}^{1 + \beta} + \Par{\frac{\left|T_k^{(2)}\right|}{|\Tin|}}^{1 + \beta} \ge 1 \implies \gamma_{k + 1} \le \gamma_k^{1 + \beta}.\end{equation}
To obtain the above implication, we used $1 - (1 - x)^{1 + \beta} > x^{1 + \beta}$ for all $x, \beta \in [0, 1]$. By repeatedly applying the recursion \eqref{eq:ggammarecursion}, we have
\[\gamma_{\kmax} \le \gamma_1^{(1 + \beta)^{(\kmax - 1)}} \le \Par{\half}^{(1 + \beta)^{(\kmax - 1)}} \le \frac{\alpha\Delta}{2},\]
where we use the definition of $\kmax$ and $\gamma_1 \le \thalf$. Thus, the $[\thalf, 1- \tfrac{\alpha\Delta}{2}]$ quantiles are contained between $\taumed$ and $\taumed + (2\kmax - 1)r\norm{v}_2 \le \taumed + \thalf R\norm{v}_2$. By repeating this argument in the range $-\kmax \le k \le -1$, we obtain a contradiction (as Algorithm~\ref{alg:gsorc} should have returned $\Tout^{(0)}$).
\end{proof}

We next prove that if the input $T'$ to $\GPartitionOneD$ has large overlap with $S$, then the algorithm always returns some child $T''$ which removes at most $\tfrac 1 C |S|$ points from this overlap. This proof uses the implementation of $\GSOrC$ in a white-box way, as well as Assumption~\ref{assume:gaussian}.

\begin{lemma}\label{lem:gsorcpreserves}
Whenever $\GPartitionOneD$ is called on $T'$ with $|T' \cap S| \ge \Par{\thalf + \frac 1 C}|S|$ with parameters $R, C$ satisfying (for sufficiently large constants)
\[R = \Omega\Par{\sqrt{\log(C)} \cdot \frac{\log\log(Cd)}{\beta}},\; C = \Omega\Par{\log^3 d},\]
it produces some child $T''$ satisfying $|T'' \cap S| \ge |T' \cap S| - \frac 1 C |S|$.
\end{lemma}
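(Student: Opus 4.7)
The plan is to exhibit a root-to-leaf path in the recursion tree of $\GPartitionOneD$ (where internal nodes correspond to two-set returns of $\GSOrC$ and leaves to one-set returns) along which at most $|S|/C$ points of $S$ are discarded. Letting $m_0 \defeq \inprod{v}{\mus}$, at each internal node with split threshold $\tau$ I descend into $\Tout^{(1)}$ if $\tau \ge m_0$ and into $\Tout^{(2)}$ otherwise. A short case analysis using \eqref{eq:gtwosetsdef} shows that any $X_i \in S$ lost at such a step satisfies $|\inprod{v}{X_i - \mus}| > r\norm{v}_2$: in the case $\tau \ge m_0$, a lost point has $\inprod{v}{X_i} > \tau + r\norm{v}_2 \ge m_0 + r\norm{v}_2$, and the case $\tau < m_0$ is symmetric.

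The next step is to aggregate the split losses along the path. Since a dropped point never reappears in subsequent subsets along this chain, each ``bad'' point is charged at most once, so the total number of split losses from $S$ is at most
\[
\left|\left\{i \in T' \cap S : \abs{\inprod{v}{X_i - \mus}} > r\norm{v}_2\right\}\right|.
\]
To control this via Assumption~\ref{assume:gaussian}, I first verify $r = R/(4\kmax) = \Omega(\sqrt{\log C})$: since $\alpha \ge 1/d$ and $\Delta = 1/(Cn)$ with $n$ polynomial in $d$, we have $1/(\alpha\Delta) \le \poly(Cd)$, so $\kmax = \Theta(\log\log(1/(\alpha\Delta))/\beta) = O(\log\log(Cd)/\beta)$, and the hypothesis on $R$ yields the claim. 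Applying Assumption~\ref{assume:gaussian} with $t = r$ bounds the tail count by $|S|(\exp(-\Omega(r^2)) + 1/\Omega(\log^3 d))$; by choosing the implicit constants in $R$ and $C$ sufficiently large relative to those of the assumption, each summand becomes at most $|S|/(4C)$, so the total split loss is at most $|S|/(2C)$.

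Finally, the chosen path terminates at a one-set return, where $\GSOrC$ trims to the middle $1 - \alpha\Delta$ quantiles and discards at most $\alpha\Delta|\Tin| \le \alpha/C$ points of $S$ (using $\Delta = 1/(Cn)$ and $|\Tin| \le n$). Since $|S| = \alpha n \ge 2\alpha$, this trim loss is at most $|S|/(2C)$. Combining the split and trim contributions, the leaf $T''$ along the chosen path satisfies $|T'' \cap S| \ge |T' \cap S| - |S|/C$, as required. The main technical subtlety I anticipate is matching the implicit constants in Assumption~\ref{assume:gaussian}'s $1/\Omega(\log^3 d)$ floor against the budget $|S|/C$ under the hypothesis $C = \Omega(\log^3 d)$ — this is what pins down the required constant in $C$ — while the conceptual heart of the argument is dictated directly by the overlap structure of \eqref{eq:gtwosetsdef}.
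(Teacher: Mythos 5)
Your proof is correct and follows essentially the same route as the paper: you choose, at each split step, the child whose half-line contains $\inprod{v}{\mus}$, argue that every $S$-point lost at that step lies at distance $> r\norm{v}_2$ from $\inprod{v}{\mus}$, bound the aggregate via Assumption~\ref{assume:gaussian}, and separately account for the final quantile trim. The paper packages the same accounting as a nested-interval invariant (tracking an interval $[a,b]$ with $\inprod{v}{\mus} \in [a+r, b-r]$), whereas you charge losses directly; since the set of points lost along the path is exactly $T'\cap S$ minus the final interval, the two bookkeeping schemes are identical, and your observation about the delicate interplay between the constants in $C = \Omega(\log^3 d)$ and the $1/\Omega(\log^3 d)$ floor in Assumption~\ref{assume:gaussian} matches the paper's (implicit) constraint as well.
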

\begin{proof}
We first discuss the structure of $\GPartitionOneD$. We say a call to $\GSOrC$ is a ``split step'' if it produces two sets, and otherwise we call it a ``cluster step.'' Every output child of $\GPartitionOneD$ is the result of a consecutive number of split steps, and then one cluster step. Also, every split step replaces an interval with its intersections with two half-lines which overlap by $2r\norm{v}_2 = \Omega(\sqrt{\log(C)}\norm{v}_2)$. Assume for simplicity that $\norm{v}_2 = 1$ in this proof; analogous arguments hold for all $v$ by scaling everything appropriately. Finally, we recall that all calls to $\GSOrC$ in $\GPartitionOneD$ are with $\Delta = \frac{1}{Cn}$.

Our key technical claim is that after any number of split steps forming a partition of the real line, there is always some interval such that $\inprod{v}{\mus}$ is $r$ away from both endpoints (in this proof, we allow intervals to have endpoints at $\pm \infty$). This is clearly true at the beginning, since the only interval is $(-\infty, \infty)$. Next, we induct and assume that on the current partition, after some number of split steps, there is an interval $[a, b]$ in the partition such that $\inprod{v}{\mus} \in [a + r, b - r]$. Consider the intersection of this interval with any split step, parameterized by the half-lines $(-\infty, \tau + r]$ and $[\tau - r, \infty)$ for some $\tau \in \R$. If $\inprod{v}{\mus} \ge \tau$, then one of the resulting intervals is
\[\Brack{\max\Par{a, \tau - r}, b}\]
where we note that this interval is non-degenerate by assumption; $\tau \le \inprod{v}{\mus} \le b - r \implies \tau - r \le b$. If the result of the $\max$ is $[a, b]$, then the claim holds; otherwise, the interval is $[\tau - r, b]$ and the claim holds by induction ($\inprod{v}{\mus} \le b - r$) and the assumption $\inprod{v}{\mus} \ge \tau$. The other case when $\inprod{v}{\mus} \le \tau$ follows symmetrically by considering the interval $[a, \min(b, \tau + r)]$.

Now, consider the partition of the real line which is induced by the eventual children outputted by $\GPartitionOneD$, \emph{right before} the last cluster step is applied to them (in other words, this partition is formed only by split steps). Using the above argument, there is some element of this partition $[a, b]$ so that $\inprod{v}{\mus} \in [a + r, b - r]$. Applying Assumption~\ref{assume:gaussian} shows that if we consider the effects of truncating the set $\{Y_i \mid i \in S\}$ at the endpoints of this interval, we remove at most a $\tfrac{1}{2C}$ fraction of the points from $S$. Finally, the interval that is returned is the result of a cluster step applied to this interval. This can only remove at most an $\alpha\Delta \le \tfrac{\alpha}{2C}$ fraction of the overall points, which is at most $\tfrac{1}{2C}|S|$. Combining these two bounds yields the claim.
\end{proof}

Finally, we conclude with a runtime analysis of $\GPartitionOneD$.

\begin{lemma}\label{lem:gponedruntime}
Let $n' \defeq |T'|$ for some $T' \subseteq T$. $\GPartitionOneD$ called on input $T'$ with parameter $C$ can be implemented to run in time
\[O\Par{n' d + (n')^{1 + \beta} \log n' \cdot \frac{\log\log(Cd)}{\beta}}.\]
\end{lemma}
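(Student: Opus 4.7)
The plan is to separate the runtime into preprocessing and recursive costs. The preprocessing computes $Y_i \defeq \inprod{v}{X_i}$ for all $i \in T'$ in $O(n' d)$ time and sorts these projections in $O(n' \log n')$ time (absorbed into later terms). A key implementation detail is to maintain sorted $Y$-representations for each subset throughout the recursion: when a subset is split by a threshold $\tau$, its sorted array can be partitioned into two sorted sub-arrays in linear time, so sorting need only be performed once at the root of the recursion.

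For a single $\GSOrC$ call on a subset of size $m$ with sorted $Y$-values, the median and the $(1-\alpha\Delta)$-quantile endpoints are retrieved in $O(1)$, and the interval-length check on Line 5 of Algorithm~\ref{alg:gsorc} is $O(1)$. Each of the $O(\kmax) = O(\log\log(Cd)/\beta)$ candidate thresholds $\tau_k$ can be validated against \eqref{eq:gtwosetssizebound} in $O(\log m)$ time via two binary searches on the sorted array to compute $\left|\Tout^{(1)}\right|$ and $\left|\Tout^{(2)}\right|$, and constructing the returned output subsets costs $O(m)$. Therefore each call runs in $O(m + \kmax \log m)$ time.

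To aggregate the cost across all recursive calls, I would view the sequence of calls as a binary tree, with split-returning calls as internal nodes and cluster-returning calls as leaves. Iterating \eqref{eq:gtwosetssizebound} yields $\sum_{\text{leaves}} |T''|^{1+\beta} \le (n')^{1+\beta}$, bounding the number of leaves (and hence total nodes) by $O((n')^{1+\beta})$. By implementing $\GSOrC$ to return the most central valid threshold (the smallest $|k|$ satisfying \eqref{eq:gtwosetssizebound}), I would argue that each split produces children of sizes at most $(1-c)|T''|$ for a universal constant $c > 0$, bounding the recursion tree depth by $O(\log n')$. Using $|T''| \le |T''|^{1+\beta}$ for integer sizes, at each depth level $\sum_{T''} |T''| \le (n')^{1+\beta}$, so $\sum_{\text{nodes}} |T''| = O((n')^{1+\beta} \log n')$. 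Combined with the binary-search contribution $O(\kmax \log n' \cdot (n')^{1+\beta})$ and the $O(n'd)$ preprocessing, this yields the claimed runtime.

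The main obstacle is the depth bound: \eqref{eq:gtwosetssizebound} only guarantees a strict, not constant-factor, decrease in the $(1+\beta)$-potential, so showing that a suitable threshold always induces a balanced split is the subtlest part. I would verify this by examining the recursion $\gamma_{k+1} \le \gamma_k^{1+\beta}$ in Lemma~\ref{lem:gsorccorrect}: whenever the interval-length check fails, the balanced-threshold argument should show the algorithm can locate a $\tau_k$ with both $(1-p)$ and $q$ bounded away from $1$ by a constant, corresponding to a constant-factor size reduction in both children. If this direct route proves delicate, an alternative is to amortize each call's linear cost against the strict decrease in the total $(1+\beta)$-power potential, exploiting the fact that the recursion $\gamma_{k+1} \le \gamma_k^{1+\beta}$ forces a multiplicative drop in the potential for every central valid threshold.
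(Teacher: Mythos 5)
Your accounting of the preprocessing cost ($O(n'd + n'\log n')$), the per-call cost of the quantile/median retrievals and the $O(\kmax)$ threshold binary searches, and the bound of $O((n')^{1+\beta})$ on the total number of $\GSOrC$ calls are all correct and match the paper. However, the extra additive $O(m)$ term you charge per call, for materializing the child subsets, is where the argument goes off the rails, and the paper simply does not incur this cost.

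The paper's proof observes that every set ever placed in $\mathcal{S}_{\text{in}}$ or $\mathcal{S}_{\text{out}}$ is a \emph{contiguous subinterval} of the globally sorted $T'$ (this is preserved inductively because both a cluster step and each child of a split step produce subintervals). Hence each set can be represented \emph{implicitly} by a pair of endpoint indices into the one sorted array computed at the root, and every $\GSOrC$ call is $O(\kmax \log n')$ with no per-call term proportional to $|\Tin|$. You instead allocate explicit arrays per node and so need to bound $\sum_{\text{calls}} |\Tin|$, for which you propose a depth bound of $O(\log n')$. This bound does not hold: \eqref{eq:gtwosetssizebound} permits splits as imbalanced as $m \to (m-1,\,1)$, and the algorithm is not guaranteed a balanced option. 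Concretely, if $m-1$ points are tightly clustered near $\taumed$ and one outlier lies far away, the only valid thresholds (including the most central valid $\tau_k$) peel off just the outlier; there is no universal constant $c$ with children of size $\le (1-c)m$. Your fallback, amortizing $O(m)$ against the potential drop, also fails quantitatively: for the $(m-1,1)$ split the drop in $\sum|T''|^{1+\beta}$ is only $m^{1+\beta} - (m-1)^{1+\beta} - 1 = \Theta(\beta m^\beta)$, which is far smaller than the $\Theta(m)$ cost being charged when $\beta$ is small, so the total potential budget $(n')^{1+\beta}$ cannot absorb these costs. The fix is exactly the paper's implicit interval representation, which eliminates the $O(m)$ term and makes the depth irrelevant.
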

\begin{proof}
We begin by forming all of the one-dimensional projections $\inprod{v}{X_i}$ for all $i \in T'$, and sorting these values. We also store the quantile of each point (i.e.\ the number of points larger than it). The total cost of these operations is $O(n' d + n' \log n')$.

Next, given this total ordering, observe that the structure of $\GPartitionOneD$ means that every set in $\mathcal{S}_{\textup{in}}$ is a subinterval of $T'$, since this is inductively preserved by calls to $\GSOrC$; hence, we can represent every set implicitly by its endpoints. Moreover, given access to the initial quantile information we can implement every call to $\GSOrC$ in time $O(\kmax \log n') = O(\log n' \cdot \tfrac{\log\log(Cd)}{\beta})$, since the cost of checking the length of $\Tout^{(0)}$ is constant, and the cost of checking each candidiate $\tau_k$ is dominated by determining the thresholds of the corresponding induced sets $\Tout^{(1)}$ and $\Tout^{(2)}$. These can be performed via binary searches in $O(\log n')$ time.

It remains to bound the number of calls to $\GSOrC$ throughout the execution of $\GPartitionOneD$. To this end, we bound the number of times $\GSOrC$ can return one set, and the number of times it can return two sets. Every time $\GSOrC$ returns one set, it adds it to $\mathcal{S}_{\text{out}}$, and by using the guarantee \eqref{eq:gtwosetssizebound} recursively, there can only ever be $(n')^{1 + \beta}$ such sets. Similarly, every time it returns two sets it increases $|\mathcal{S}_{\text{in}}| + |\mathcal{S}_{\text{out}}|$ by one, but we know at termination this is at most $(n')^{1 + \beta}$, and this potential never decreases. Thus, the total number of calls to $\GSOrC$ is bounded by $O((n')^{1 + \beta})$, as desired.
\end{proof}

As an immediate corollary, we obtain a runtime bound on $\GPartition$.

\begin{corollary}\label{cor:gpruntime}
Let $n_p \defeq |T_p|$ for some $T_p \subseteq T$. $\GPartition$ called on input $T_p$ with parameter $C$ can be implemented to run in time
\[O\Par{n_p^{1 + \beta} d \log^2(d) + n_p^{1 + \beta} \log^2(d)\cdot \frac{\log\log(Cd))}{\beta}}.\]
\end{corollary}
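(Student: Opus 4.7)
My plan is to decompose the runtime of $\GPartition$ into two phases: (i) the preprocessing of the random projection vectors $\{v_j\}_{j \in [\ndir]}$ on Line 3, and (ii) the main double loop over $j \in [\ndir]$ and $T' \in \mathcal{S}_{j - 1}$, which repeatedly invokes $\GPartitionOneD$ and whose per-call cost is already bounded by Lemma~\ref{lem:gponedruntime}.

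For phase (i), observe that $\my_p = \mm_p^{\log d}$ and that $\mm_p$ is expressible as a sum of $n_p$ rank-one terms, so applying $\mm_p$ to any vector costs $O(n_p d)$. Hence each $v_j = \my_p u_j$ is computable by $\log d$ successive matrix-vector products in $O(n_p d \log d)$ time, and summing over the $\ndir = \Theta(\log d)$ draws $u_j$ this contributes $O(n_p d \log^2 d)$ to the runtime, which is absorbed into the first term of the claim.

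For phase (ii), I would process the layers in order. By induction on $j$, and using the size bound \eqref{eq:gpartitiononedsize} satisfied by every call to $\GPartitionOneD$, the collection $\mathcal{S}_{j - 1}$ satisfies
\[
\sum_{T' \in \mathcal{S}_{j-1}} |T'|^{1 + \beta} \;\le\; n_p^{1 + \beta}.
\]
Since $|T'| \le |T'|^{1 + \beta}$ whenever $|T'| \ge 1$, this also yields $\sum_{T' \in \mathcal{S}_{j - 1}} |T'| \le n_p^{1 + \beta}$. Applying Lemma~\ref{lem:gponedruntime} to each call in layer $j$ and summing term-by-term, the total work at that layer is
\[
O\Par{n_p^{1 + \beta} d + n_p^{1 + \beta} \log n_p \cdot \frac{\log\log(Cd)}{\beta}}.
\]

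Finally, summing over the $\ndir = \Theta(\log d)$ layers and using $\log n_p = O(\log d)$ (which holds under the standing assumption $\alpha \ge 1/d$, so that $n_p \le n = \poly(d)$) yields the second term of the claimed runtime; combining with the preprocessing cost from phase (i) gives the stated bound. The only step in this argument that requires care is verifying inductively that the size potential \eqref{eq:gsizepot} is preserved across layers, which is exactly what prevents the per-layer work from blowing up even though split steps in $\GPartitionOneD$ produce overlapping children. Everything else is a routine summation.
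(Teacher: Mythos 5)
Your proposal is correct and follows essentially the same two-phase decomposition as the paper's proof: bound the cost of forming the projection vectors $\{\my_p u_j\}_{j\in[\ndir]}$ via matrix-vector products through $\mm_p$, then sum the per-call bound of Lemma~\ref{lem:gponedruntime} across layers using the inductive size invariant $\sum_{T' \in \mathcal{S}_j} |T'|^{1+\beta} \le n_p^{1+\beta}$. The extra remarks you include (that $\sum |T'| \le \sum |T'|^{1+\beta}$ handles the linear term, and that $\log n_p = O(\log d)$ under the standing assumption $\alpha \ge 1/d$) are implicit in the paper's terser argument, so there is no substantive difference.
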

\begin{proof}
First, consider the cost of computing all vectors $\my_p u_j$. It is straightforward to implement matrix-vector multiplications through $\mm_p$ in time $O(n_p d)$, so this cost is $O(n_p d \log^2 (d))$.
	
We next require a bound on the cost of $\ndir = \Theta(\log d)$ consecutive calls to $\GPartitionOneD$. The cost of each is given by Lemma~\ref{lem:gponedruntime}, and the result follows by summing this cost over all elements of each $\mathcal{S}_j$, which can be bounded since for all $j \in [\ndir]$, the cardinalities of all sets contained in $\mathcal{S}_j$ have $1 + \beta$ powers bounded by $n_p^{1 + \beta}$ by repeatedly using the guarantee \eqref{eq:gpartitiononedsize}.
\end{proof}

\subsection{Full Gaussian algorithm}\label{ssec:fullgaussian}

Finally, we are ready to give our full algorithm for list-decodable mean estimation under Assumptions~\ref{assume:sexists} and~\ref{assume:gaussian}. We begin by reducing the original problem to a number of subproblems of bounded diameter (following \cite{DiakonikolasKKLT20}), and then showing that for each of these subproblems, polylogarithmic calls to $\GPartition$ yield subsets of bounded covariance operator norm. We conclude by recalling that a covariance operator norm bound suffices to yield guarantees on mean estimation.

\begin{algorithm}[ht!]
	\caption{$\GMultifilter(T,\alpha)$}\label{alg:gmultifilter}
	\begin{algorithmic}[1]
		\STATE \textbf{Input:} $T \subset \R^d$, $|T| = n$ satisfying Assumptions~\ref{assume:sexists} and~\ref{assume:gaussian} with parameter $\alpha \in (0,\thalf)$
		\STATE \textbf{Output:} With failure probability $\le \frac 1 d$: $L$ with $|L| = O(\tfrac 1 \alpha)$ such that some $\hmu \in L$ satisfies
		\begin{equation}\label{eq:gaussianerror}\norm{\hmu - \mus}_2 = O\Par{\frac{\log(d)\log\log^{1.5}(d)}{\sqrt \alpha}}.\end{equation}
		\STATE $\{T'_i\}_{i \in [k]} \gets \NaiveCluster(T)$
		\STATE $\alpha_i \gets \frac{|T|}{|T'_i|}\alpha$ for all $i \in [k]$
		\RETURN $\bigcup_{i \in [k]} \GMultifilterBD(T'_i, \alpha_i)$
	\end{algorithmic}
\end{algorithm}

\begin{algorithm}[ht!]
	\caption{$\GMultifilterBD(T, \alpha)$}\label{alg:gmultifilterbd}
	\begin{algorithmic}[1]
		\STATE \textbf{Input:} $T \subset \R^d$, $|T| = n$ satisfying Assumptions~\ref{assume:sexists} and~\ref{assume:gaussian} with parameter $\alpha \in (0,\thalf)$
		\STATE \textbf{Output:} With failure probability $\le \frac 1 d$: $L_{\text{out}}$ with $|L_{\text{out}}| = O(\tfrac 1 \alpha)$ such that some $\hmu \in L_{\text{out}}$ satisfies
		\[\norm{\hmu - \mus}_2 = O\Par{\frac{\log(d)\log\log^{1.5}(d)}{\sqrt \alpha}}.\]
		\STATE $L^{(0)} \gets \{T\}$, $L_{\text{out}} \gets \emptyset$
		\STATE For sufficiently large constants,
		\[R \gets \Theta\Par{\log(d)\log\log^{1.5}(d)},\; C \gets \Theta(\log^2 d),\; D \gets \Theta(\log^2 d)\]
		\FOR{$\ell \in [D]$}
		\STATE $L^{(\ell)} \gets \emptyset$
		\FOR{$T' \in L^{(\ell - 1)}$}
		\STATE Append all elements of $\GPartition(T', \alpha, \frac{1}{\log d}, C, R)$ to $L^{(\ell)}$ with size at least $\frac{\alpha n}{2}$
		\ENDFOR
		\ENDFOR
		\RETURN List of empirical means of all sets in $L^{(D)}$
	\end{algorithmic}
\end{algorithm}

We begin by stating the guarantees of $\NaiveCluster$, used in Line 3 of $\GMultifilter$.

\begin{lemma}[Lemma 12, \cite{DiakonikolasKKLT20}]\label{lem:naivecluster}
There is a randomized algorithm, $\NaiveCluster$, which takes as input $T \subset \R^d$ satisfying Assumption~\ref{assume:sexists} and partitions it into disjoint subsets $\{T'_i\}_{i \in [k]}$ such that with probability at least $1 - \tfrac{1}{d^2}$, all of $S$ is contained in the same subset, and every subset has diameter bounded by $O(d^{12})$. The runtime of $\NaiveCluster$ is $O(nd + n\log n)$.
\end{lemma}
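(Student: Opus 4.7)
The first step is to observe that Assumption~\ref{assume:sexists} forces $S$ itself to have small diameter. Tracing the covariance bound gives $\sum_{X \in S}\|X - \mus\|_2^2 \le d|S|$, so every $X \in S$ satisfies $\|X - \mus\|_2 \le \sqrt{d|S|}$, and since $|S| = \alpha n = \Theta(d)$ we obtain $\mathrm{diam}(S) \le 2\sqrt{d|S|} = O(d)$. This is the only structural property of $S$ I intend to use, and it lets an extremely crude geometric clustering preserve $S$ as a single cluster.

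My plan for $\NaiveCluster$ is a single-anchor radial partition with a randomized offset. Fix an anchor $y \in T$ (the first point works, since the required randomness comes from the offset), sample a uniform $\delta \in [0, R)$ for $R \defeq C \cdot d^{11}$ with a sufficiently large constant $C$, compute $f_X \defeq \|X - y\|_2$ for every $X \in T$ in time $O(nd)$, and sort the $\{f_X\}$ in time $O(n \log n)$. I then assign each $X$ to the cluster indexed by $\lfloor (f_X - \delta)/R \rfloor$, except that every point with cluster index exceeding a cap of $\Theta(d)$ is placed into its own singleton cluster. Annular clusters inside the cap have outer radius at most $d^{12}$ and hence Euclidean diameter $O(d^{12})$, and singletons have diameter zero, so the output diameter bound holds for every cluster.

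To show $S$ lies in a single cluster, observe that the image $\{f_X : X \in S\}$ lies in a real interval of length at most $\mathrm{diam}(S) = O(d)$, for any fixed $y$. The probability over $\delta$ that a grid boundary falls inside this interval is at most $2\,\mathrm{diam}(S)/R = O(1/d^{10})$, well below the required failure probability $1/d^2$. The remaining correctness requirement is that $\max_{X \in S} f_X$ stays below the cap, i.e., that $\|y - \mus\|_2 = O(d^{12})$. This can be ensured by prepending a cheap preprocessing pass---for instance, taking a $1$-dimensional random projection and discarding points whose projection lies outside the central quantile range---in $O(nd + n\log n)$ time, guaranteeing that $y$ is not a pathologically distant outlier while provably keeping every element of $S$ in the surviving set (using the $1$-D spread bound $\sum_{X \in S} \inprod{u}{X - \mus}^2 \le |S|$).

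The main obstacle I anticipate is exactly that last point: the Euclidean diameter of a radial shell scales with the outer radius, not the shell thickness, so without some control on $y$'s distance from $\mus$, an adversary could force the shell containing $S$ to have diameter much greater than $d^{12}$. I expect the core technical content to be the design of the cheap centering preprocessor---either a deterministic quantile-based filter exploiting that $\alpha > 0$ and that $S$ concentrates in every one-dimensional projection, or a handful of independent random anchor trials combined with a one-dimensional sanity check---and verifying that it preserves $S$ entirely while cutting off the tail of $T$ in a way that keeps the final runtime within $O(nd + n\log n)$. The remainder of the argument is a clean union bound over the randomness in $\delta$ and in the preprocessor, yielding the claimed $1-1/d^2$ success probability.
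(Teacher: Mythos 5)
This lemma is cited from \cite{DiakonikolasKKLT20} rather than proved in the present paper, so there is no in-paper proof to compare against; I will evaluate your sketch on its own terms. Your setup is sound: the trace of the covariance bound in Assumption~\ref{assume:sexists} does give $\mathrm{diam}(S) = O(\sqrt{d|S|}) = O(d)$, a width-$\Theta(d^{11})$ radial shelling with a uniformly random offset keeps $S$ in one shell except with probability $O(d^{-10})$, and capping at $\Theta(d)$ shells with singletons beyond bounds every non-singleton cluster's diameter by $O(d^{12})$. You are also right to flag the anchoring step as the crux.

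The gap is that neither centering idea you float actually closes it. A quantile filter along a single random direction $u$ cannot "provably keep every element of $S$" once $\alpha < \tfrac12$: the adversary controls the other $(1-\alpha)n$ projections, so the central $[\epsilon,1-\epsilon]$ quantile band can be arranged to fall entirely inside the adversary's mass while $S$'s $O(\sqrt{d})$-wide interval sits in the discarded tail, and shrinking $\epsilon$ just trades this for an anchor that can still be wildly far from $\mus$. More fundamentally, a one-dimensional check constrains $\inprod{u}{y-\mus}$ only; an adversarial anchor $y$ that survives the filter can still be arbitrarily far from $\mus$ in the $d-1$ orthogonal directions, which is exactly what blows up the outer radius of the shell containing $S$ (and hence its Euclidean diameter, since an annulus at radius $D$ has diameter $\approx 2D$). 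The "handful of random anchor trials" variant has the same defect: a handful of draws from $T$ misses $S$ with probability close to $1$ when $\alpha$ is as small as $1/d$, and a $1$-D sanity check cannot certify that a surviving anchor is close to $\mus$ in Euclidean norm. So the part of the lemma you correctly identify as "the core technical content" — producing, in $O(nd + n\log n)$ time and with failure probability $\leq 1/d^2$, an anchor within $\mathrm{poly}(d)$ of $\mus$ in full Euclidean distance — remains unproved in your sketch, and the specific mechanisms you propose for it do not work as stated.
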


We next demonstrate that if the operator norm of the (unnormalized) covariance matrix of a set of points $T'$ is bounded, and $T'$ has sufficient overlap with $S$, then its empirical mean is close to $\mus$.

\begin{lemma}\label{lem:meanclose}
For $T' \subset T$ with empirical mean $\hmu$, if $|T' \cap S| \ge \thalf |S|$ and $\tcov_{\tfrac 1 n \1}(T') \le R^2$,
\[\norm{\hmu - \mus}_2 = O\Par{(1 + R)\cdot \frac{1}{\sqrt{\alpha}}}.\]
\end{lemma}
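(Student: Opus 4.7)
The plan is to reduce this immediately to Lemma~\ref{lem:covbounddist}, which is already set up for exactly this sort of statement. Concretely, I would define a weight vector $w \in \Delta^n$ by $w_i = \tfrac{1}{n}$ for $i \in T'$ and $w_i = 0$ otherwise, so that $w \le \tfrac{1}{n}\1$ entrywise and the ``restricted'' mean $\mu_w(T)$ equals $\hmu$ by direct computation (the normalization $\|w\|_1 = |T'|/n$ kills the zero entries). Set $S' \defeq T' \cap S$ so $|S'| \ge \tfrac{1}{2}|S| = \tfrac{\alpha n}{2}$.

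The next step is to check the hypotheses of Lemma~\ref{lem:covbounddist}. Since $|T'| \ge |S'| \ge \tfrac{\alpha n}{2}$, we have $\|w\|_1 \ge \tfrac{\alpha}{2} \ge \alpha^2$ (using $\alpha \le \tfrac{1}{2}$); and $\|w_S\|_1 = |S'|/n \ge \alpha/2$. For the covariance bound, I translate the hypothesis $\tcov_{\tfrac{1}{n}\1}(T') \preceq R^2 \id$ into a bound on $\cov_w(T)$: by definition of the normalized and unnormalized covariances,
\[
\cov_w(T) \;=\; \frac{1}{\|w\|_1}\tcov_w(T) \;=\; \frac{n}{|T'|}\tcov_{\tfrac{1}{n}\1}(T') \;\preceq\; \frac{n R^2}{|T'|}\id,
\]
so $\|\cov_w(T)\|_{\textup{op}} \le nR^2/|T'|$.

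Finally, I plug into Lemma~\ref{lem:covbounddist} and simplify:
\[
\|\hmu - \mus\|_2 \;\le\; \sqrt{2 \cdot \frac{nR^2}{|T'|} \cdot \frac{|T'|/n}{|S'|/n} + \frac{2}{\alpha}} \;=\; \sqrt{\frac{2R^2}{|S'|/n} + \frac{2}{\alpha}} \;\le\; \sqrt{\frac{4R^2 + 2}{\alpha}},
\]
using $|S'|/n \ge \alpha/2$ in the last step. This gives the claimed $O((1+R)/\sqrt{\alpha})$ bound.

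There is no real obstacle here; the only mild care is in tracking the $n/|T'|$ rescaling that converts the unnormalized $\tcov_{\tfrac{1}{n}\1}(T')$ bound appearing in the hypothesis into the normalized $\cov_w(T)$ bound required by Lemma~\ref{lem:covbounddist}, and in verifying that the $|T'|$ factor in the numerator cancels against $|T'|/n$ in the ratio $\|w\|_1/\|w_S\|_1$ so that the final bound depends only on $\alpha$ and $R$.
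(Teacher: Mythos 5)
Your proposal is correct and follows essentially the same route as the paper: define $w = \tfrac{1}{n}\1_{T'}$, verify the hypotheses of Lemma~\ref{lem:covbounddist}, and convert the $\tcov_{\frac 1 n \1}(T')$ bound into a $\cov_w(T)$ bound via the rescaling identity $\norm{w}_1\cov_w(T) = \tcov_{\frac 1 n \1}(T')$. The paper states this identity without writing out the cancellation; you spell it out explicitly, but the argument is the same.
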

\begin{proof}
Let $w$ place weight $\tfrac{1}{n}$ on coordinates in $T'$, and $0$ on all other coordinates. Clearly this $w$ satisfies the assumption of Lemma~\ref{lem:covbounddist}, since its $\ell_1$ norm is simply $\frac{|T'|}{|T|} \ge \thalf \alpha$. The conclusion follows by applying Lemma~\ref{lem:covbounddist}, where we use $\norm{w}_1 \cov_w(T) = \tcov_{\tfrac 1 n\1}(T')$, and the assumed bound.
\end{proof}

We now give a full analysis of $\GMultifilterBD$.

\begin{proposition}\label{prop:gmfboundeddiameter}
$\GMultifilterBD$ meets its output specifications with probability at least $1 - \tfrac 1 d$, within runtime
\[O\Par{nd\log^4(d) + n\log^5(d)\log\log(d)}.\]
\end{proposition}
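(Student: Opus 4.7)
The plan is to isolate a single ``good branch'' of the tree produced by $\GMultifilterBD$ via the overlap guarantee \eqref{eq:gkeepmostofs}, execute the one-shot potential argument of Section~\ref{ssec:fastfilter} along this branch using the covariance bound \eqref{eq:gfiltercond} from $\GPartition$, and then apply Lemma~\ref{lem:meanclose} at the leaf.

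Concretely, inductively define $T^{(0)} \defeq T$ and take $T^{(\ell)}$ to be the child of $T^{(\ell - 1)}$ delivered by \eqref{eq:gkeepmostofs}, i.e.\ the one retaining all but $|S|/C$ of the good points. Telescoping, after $D = \Theta(\log^2 d)$ layers we have removed at most $\tfrac{D}{C}|S| \leq \tfrac{|S|}{4}$ good points once the constant in $C = \Theta(\log^2 d)$ is chosen sufficiently large relative to $D$. Thus $|T^{(\ell)} \cap S| \ge (\tfrac{1}{2} + \tfrac{1}{C})|S|$ at every depth, $|T^{(D)}| \ge \alpha n/2$, and $T^{(D)}$ survives the size filter on Line~7. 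For the potential, the key observation is that $T_c \subseteq T_p$ forces $\mm_c \preceq \mm_p$ for the unnormalized covariances of \eqref{eq:gaussianmatdef}: Fact~\ref{fact:meanshift} implies $\sum_{i \in T_c} \tfrac 1 n (X_i - \mu_p)(X_i - \mu_p)^\top \succeq \mm_c$, and the left-hand side is in turn $\preceq \mm_p$ by discarding the nonnegative contributions indexed by $T_p \setminus T_c$. Starting from $\Phi_0 = \Tr(\mm_0^{2\log d}) \le d^{O(\log d)}$ (from the $O(d^{12})$ diameter bound of $\NaiveCluster$ in Lemma~\ref{lem:naivecluster}), combining Fact~\ref{fact:alphasplit} with $\gamma = 40 R^2$, Fact~\ref{fact:eltapp} (which requires the Loewner ordering just established), and \eqref{eq:gfiltercond} mirrors \eqref{eq:potargument} to yield
\[\Phi_{\ell+1} \le \tfrac{1}{40}\Phi_\ell + d(40 R^2)^{2\log d}.\]
Iterating $D = \Theta(\log^2 d)$ times drives $\Phi_D = O(d(40 R^2)^{2\log d})$, hence $\normop{\mm_D} = O(R^2)$, so Lemma~\ref{lem:meanclose} gives $\norm{\hmu_D - \mus}_2 = O(R/\sqrt \alpha)$, matching \eqref{eq:gaussianerror} for $R = \Theta(\log(d) \log\log^{1.5}(d))$.

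The list-size and runtime bounds come from the size potential \eqref{eq:gsizepot}. Inductively $\sum_{T' \in L^{(\ell)}} |T'|^{1+\beta} \le n^{1+\beta}$, and since every surviving $T'$ has $|T'| \ge \alpha n/2$, we conclude $|L^{(\ell)}| \le (2/\alpha)^{1+\beta} = O(\alpha^{-1})$ (using $\beta = 1/\log d$ and $\alpha \ge 1/d$ to ensure $\alpha^{-\beta} = O(1)$). Summing Corollary~\ref{cor:gpruntime} over $T' \in L^{(\ell-1)}$ with $\sum |T'|^{1+\beta} = O(n)$ gives per-layer cost $O(nd\log^2 d + n\log^3 d \log\log d)$; multiplying by $D$ produces the claimed runtime. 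A union bound over the $O(D/\alpha) \le O(d \log^2 d)$ calls to $\GPartition$, each failing with probability $\le 1/d^3$, keeps the overall failure probability below $1/d$.

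The main obstacle is parameter balancing: the settings $R = \Theta(\log(d)\log\log^{1.5}(d))$, $C = \Theta(\log^2 d)$, $\beta = 1/\log d$, $D = \Theta(\log^2 d)$ must simultaneously (i) satisfy $\GPartition$'s precondition $R = \Omega(\sqrt{\log C}\cdot \log\log(C\alpha^{-1})/\beta)$, (ii) keep $D/C$ small enough that the good branch retains a $\tfrac{1}{2}$-fraction of $S$, (iii) keep $n^{1+\beta} = O(n)$ so each layer is almost-linear, and (iv) permit $\log \Phi_0 = O(\log^2 d)$ geometric decreases to suffice. The only non-routine ingredient inside the potential chain itself is the Loewner domination $\mm_c \preceq \mm_p$ under $T_c \subseteq T_p$: this is not immediate since $\mu_c \ne \mu_p$ in general, and it requires Fact~\ref{fact:meanshift} to absorb the mean shift before Fact~\ref{fact:eltapp} can be applied.
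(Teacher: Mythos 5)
Your proof is correct and follows essentially the same approach as the paper: you use the size potential \eqref{eq:gsizepot} to bound the per-layer work and list size, the overlap guarantee \eqref{eq:gkeepmostofs} to identify a good root-to-leaf path, the one-shot potential recurrence derived from Facts~\ref{fact:alphasplit} and~\ref{fact:eltapp} together with \eqref{eq:gfiltercond} (differing only in the cosmetic choice $\gamma = 40R^2$ versus the paper's $2R^2$), and conclude with Lemma~\ref{lem:meanclose}. The one place you add detail beyond the paper --- justifying $\mm_c \preceq \mm_p$ when $T_c \subseteq T_p$ via Fact~\ref{fact:meanshift} and dropping nonnegative summands --- is a step the paper glosses over in this particular proof, and your argument for it is exactly right.
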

\begin{proof}
Throughout, we denote $\beta \defeq \frac{1}{\log d}$. There are three main guarantees of the algorithm: that the list size is $O(\tfrac 1 \alpha)$, that some list element satisfies \eqref{eq:gaussianerror}, and that the runtime is as claimed.

We first bound the list size. We can view $\GMultifilterBD$ as producing a tree of subsets, of depth $D$. Each layer of the tree is composed by the sets in $L^{(\ell)}$ where $0 \le \ell \le D$, and $L^{(0)}$ is the root node. The children of each node are the results of calling $\GPartition$ on the associated subset. Moreover, by repeatedly using the guarantee \eqref{eq:gsizepot} inductively, the total cardinality of all sets at layer $\ell$ is bounded by $n^{1 + \beta} = O(n)$. Since we only return means from sets with size at least $\tfrac{\alpha n}{2}$ on layer $D$, there can only be $O(\tfrac 1 \alpha)$ such sets. 

Next, we bound error rate. Consider some leaf node, and its path to the root; call the sets associated with these vertices $T_0, T_1, \ldots T_D$, where $T_D$ is the leaf node and $T_0 = T$ is the original set. Define the potential function at each layer $0 \le \ell \le D$,
\[\Phi_{\ell} \defeq \Tr\Par{\mm_{\ell}^{2\log d}},\text{ where } \mm_{\ell} \defeq \tcov_{\frac 1 n \1}\Par{T_{\ell}}.\]
Note that every parent-child pair along this path satisfies the guarantee \eqref{eq:gfiltercond}. We thus conclude that for each $0 \le \ell < D$, we have the recurrence (analogously to \eqref{eq:potargument})
\begin{align*}
\Phi_{\ell + 1} &= \Tr\Par{\mm_{\ell + 1}^{2\log d}} \le \frac{1}{2R^2} \Tr\Par{\mm_{\ell + 1}^{2\log d + 1}} + d(2R^2)^{2 \log d} \\
&\le \frac{1}{2R^2} \Tr\Par{\mm_{\ell}^{2\log d} \mm_{\ell + 1}} + d(2R^2)^{2 \log d} \\
&\le \half \Tr\Par{\mm_{\ell}^{2\log d}} + d(2R^2)^{2 \log d} = \half \Phi_{\ell} + d(2R^2)^{2 \log d}.
\end{align*}
The first line used Fact~\ref{fact:alphasplit} with $\gamma = 2R^2$, the second used Fact~\ref{fact:eltapp}, and the third used the guarantee \eqref{eq:gfiltercond}. Thus, as long as at a layer $\ell$ we have
\[\Phi_{\ell} > 4d(2R^2)^{2 \log d},\]
we have $\Phi_{\ell + 1} \le \frac 3 4 \Phi_{\ell}$, and so the potential is decreasing by at least a constant factor. The potential $\Phi_0$ is bounded by $d^{O(\log d)}$, because we assumed the input set has polynomially bounded diameter, so within $D = \Omega(\log^2 d)$ layers, every node on layer $D$ must have $\Phi_D \le 4d(2R^2)^{2 \log d}$. This implies that the operator norm of $\tcov_{\frac 1 n \1}(T')$ for every node $T'$ on layer $D$ is $O(R^2)$.

We next show at least one node $T'$ on every layer has $|T' \cap S| \ge \thalf |S|$. By inductively using \eqref{eq:gkeepmostofs} with our chosen value of $C$, summing over the $O(\log^2 d)$ layers guarantees that we only remove at most $\thalf |S|$ points from the intersection throughout the root-to-leaf path, for some path. We can now apply Lemma~\ref{lem:meanclose} to guarantee \eqref{eq:gaussianerror}. To obtain the high-probability bound, note that the number of times we call $\GPartition$ is bounded by $O(\tfrac 1 \alpha \log^2 d)$, since at each layer we prune every node with less than $\tfrac{\alpha n}{2}$ points; there can only be $O(\tfrac 1 \alpha)$ surviving nodes per layer (since the total cardinalities of the layer is bounded by $n^{1 + \beta} = O(n)$), and taking a union bound over all calls to $\GPartition$ shows the failure probability is at most $\tfrac 1 d$.

Finally, we discuss runtime. We simply apply Corollary~\ref{cor:gpruntime} to each layer, which bounds the runtime of each layer by $O(nd\log(d) + n\log^3(d)\log\log(d))$, since the sets on that layer satisfy \eqref{eq:gsizepot} inductively. Summing over all layers yields the desired runtime guarantee.
\end{proof}

\begin{theorem}\label{thm:gaussian}
$\GMultifilter$ meets its output specifications with probability at least $1 - \tfrac 1 d$, within runtime
\[O\Par{nd\log^4(d) + n\log^5(d)\log\log(d)}.\]
\end{theorem}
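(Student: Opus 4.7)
The plan is to reduce $\GMultifilter$'s analysis directly to Proposition~\ref{prop:gmfboundeddiameter} via the preprocessing step $\NaiveCluster$. First, I invoke Lemma~\ref{lem:naivecluster}: with probability at least $1-1/d^2$, the partition $\{T'_i\}_{i \in [k]}$ returned by $\NaiveCluster(T)$ has $S$ fully contained in a single piece $T'_{i^*}$, and every piece has diameter $O(d^{12})$. Since $S \subseteq T'_{i^*}$ and $|S| = \alpha n = \alpha_{i^*} |T'_{i^*}|$ by the definition $\alpha_i \defeq (n/|T'_i|)\alpha$, the pair $(T'_{i^*}, \alpha_{i^*})$ inherits Assumption~\ref{assume:sexists} (with the same good subset $S$ and target $\mus$), and Assumption~\ref{assume:gaussian} carries over verbatim because it is a statement about $S$ alone. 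The diameter bound in turn makes the input valid for $\GMultifilterBD$.

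Next, I apply Proposition~\ref{prop:gmfboundeddiameter} to each call $\GMultifilterBD(T'_i, \alpha_i)$. For $i = i^*$, it ensures that with probability at least $1-1/d$, the returned sublist contains some $\hmu$ with $\|\hmu - \mus\|_2 = O(\log(d)\log\log^{1.5}(d)/\sqrt{\alpha_{i^*}})$; since $|T'_{i^*}| \leq n$ gives $\alpha_{i^*} \geq \alpha$, this implies~\eqref{eq:gaussianerror} directly. The calls for $i \neq i^*$ need not produce anything informative for the error guarantee, but they do contribute to the final list, so I control its size by summing: each call outputs $O(\alpha_i^{-1}) = O(|T'_i|/(\alpha n))$ hypotheses, hence $|L| \leq \sum_i O(|T'_i|/(\alpha n)) = O(\alpha^{-1})$ using $\sum_i |T'_i| = n$. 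A union bound over the $\NaiveCluster$ failure and the $\GMultifilterBD$ failure on $T'_{i^*}$, combined with a modest rescaling of the internal failure probability constants, yields the overall $1-1/d$ success probability.

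For runtime, $\NaiveCluster$ costs $O(nd + n\log n)$, and by Proposition~\ref{prop:gmfboundeddiameter} each call $\GMultifilterBD(T'_i, \alpha_i)$ costs $O(|T'_i| d\log^4 d + |T'_i|\log^5 d \log\log d)$, which telescopes over $i$ via $\sum_i |T'_i| = n$ to exactly the stated $O(nd\log^4 d + n\log^5 d\log\log d)$ bound. The one subtle point I expect is handling degenerate parameter regimes: if $\alpha_{i^*}$ fails to be strictly less than $\tfrac{1}{2}$ (which happens precisely when $|T'_{i^*}| \leq 2|S|$), one is effectively in a minority-outlier regime on the surviving cluster where any standard robust mean estimator recovers $\mus$ within the required error bound; analogously, very small clusters can be pruned before invocation since they cannot contain all of $S$. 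These edge cases do not change the dominant contributions to either the list size or the runtime, so the theorem follows.
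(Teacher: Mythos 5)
Your proof is correct and takes essentially the same route as the paper's terse four-sentence argument: invoke $\NaiveCluster$, apply Proposition~\ref{prop:gmfboundeddiameter} to the piece $T'_{i^*}$ containing $S$ (using $\alpha_{i^*} \ge \alpha$ for the error), and sum list sizes and runtimes over the partition via $\sum_i \alpha_i^{-1} = \alpha^{-1}$ and $\sum_i |T'_i| = n$. You are more explicit than the paper about the union bound with $\NaiveCluster$'s $1/d^2$ failure and the degenerate regime $\alpha_{i^*} \geq \tfrac 1 2$ or $\alpha_i > 1$, which the paper silently elides; these are reasonable points of care, though note the paper's $\GMultifilter$ simply calls $\GMultifilterBD$ on every piece rather than swapping in a separate robust mean estimator, and the actual failure probability from Proposition~\ref{prop:gmfboundeddiameter} is comfortably below $1/d$ (coming from a union over $O(\alpha^{-1}\log^2 d)$ calls each failing with probability $1/d^3$), so the union bound does close without explicit constant rescaling.
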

\begin{proof}
We apply Proposition~\ref{prop:gmfboundeddiameter} to the relevant call of $\GMultifilterBD$. Note that all $\alpha_i \ge \alpha$, giving the error guarantee \eqref{eq:gaussianerror}, and
\[\sum_{i \in [k]} \frac{1}{\alpha_i} = \frac{1}{\alpha},\]
giving the list size guarantee. The runtime follows from $\sum_{i \in [k]} |T'_i| = |T|$.
\end{proof}

\section{Fast bounded covariance multifilter}\label{sec:boundedcov}

In this section, we give our algorithm for list-decodable mean estimation under only Assumption~\ref{assume:sexists}. 
As before, we can assume without loss of generality that $\alpha \in [1/d, 1 / \log^C d]$, for some constant $C > 0$.
We begin by giving our main subroutine, $\Partition$, in Section~\ref{ssec:partition}. The goal of $\Partition$ will be to produce child subsets $\{c_\ell\}_{\ell \in [k]}$ of a given input set $p$, which each satisfy the potential criterion in \eqref{eq:gfiltercond}, reproduced here:
\begin{equation}\label{eq:filtercond}\inprod{\my_p^2}{\mm_{c_\ell}} \le R^2 \Tr\Par{\my_p^2}.\end{equation}
Recall that in Section~\ref{sec:gaussian}, the way we produced children satisfying condition \eqref{eq:filtercond} was by ensuring that along logarithmically many random directions, each child $c_\ell$ lied entirely in short intervals. We will satisfy this guarantee in this section by more directly working with the definition of \eqref{eq:filtercond}, which requires each child to have small variance along the random directions, a looser condition.

To bound the variance of the child subsets, in Section~\ref{ssec:sorc} we develop an algorithm, $\SOrC$, which is patterned off our earlier $\GSOrC$. It either certifies that the input set is already ``close'' to having bounded variance in an input direction, or identifies a split point which produces two subsets which are closer to having this property, while maintaining at least one subset retains most points in $S$. In the first case (the ``cluster'' case), we develop a postprocessing procedure $\Fix$ in Section~\ref{ssec:fixing} which randomly filters points according to safe outlier scores (see Definition~\ref{def:safety}) to make the remaining cluster have truly bounded variance. In the second case (the ``split'' case), we develop a fast threshold checking procedure $\TBOC$ in Section~\ref{ssec:tboc} which identifies a valid split in polylogarithmic time, whenever one exists; here, we note the key difficulty is that we can no longer use a fixed radius for splits, because Gaussian concentration does not hold.

We discuss runtimes of all of these algorithms in Section~\ref{ssec:runtime}, and in particular give a runtime bound on $\Partition$. Finally, we use $\Partition$ to develop our full algorithm, $\Multifilter$, which we analyze in Section~\ref{ssec:fullboundedcov} through a potential argument similar to our analysis of $\GMultifilter$. A post-processing step used in $\Multifilter$ is analyzed in Section~\ref{ssec:postprocess}.

\subsection{Reducing \texorpdfstring{$\Partition$}{Partition} to \texorpdfstring{$\SOrC$}{SplitOrCluster}}\label{ssec:partition}

The goal of this section is to develop $\Partition$, the main subroutine of $\Multifilter$. $\Partition$ has very similar guarantees to the algorithm $\GPartition$ developed in Section~\ref{ssec:gpartition}. It takes as input a ``parent set'' $T_p \subseteq T$ and produces a number of ``children subsets'' $\{T_{c_\ell}\}_{\ell \in [k]}$ such that every child satisfies $\inprod{\my_p^2}{\mm_{c_\ell}} \le R^2 \Tr\Par{\my_p^2}$, where we follow the definitions \eqref{eq:gaussianmatdef}, reproduced here:
\begin{equation}\label{eq:matdef} 
\begin{gathered}
\mm_p \defeq \tcov_{\frac 1 n \1}\Par{T_p},\; \my_p \defeq \mm_p^{\log d},\\
\mm_{c_\ell} \defeq \tcov_{\frac 1 n \1}\Par{T_{c_\ell}},\; \my_{c_\ell} \defeq \mm_{c_\ell}^{\log d} \text{ for all } \ell \in [k],
\end{gathered}
\end{equation} 
This will allow us to conduct a potential analysis to bound the depth of the multifilter tree in Section~\ref{ssec:fullboundedcov}. Moreover, we require two additional guarantees of $\Partition$. 
\begin{enumerate}
	\item The first is the same as \eqref{eq:gsizepot}; namely, for some parameter $\beta \in (0, 1]$, we have
	\begin{equation}\label{eq:sizepot}\sum_{\ell \in [k]} \left|T_{c_\ell}\right|^{1 + \beta} \le \left|T_p\right|^{1 + \beta}. \end{equation}
	This will help us bound the total work done in each layer of the multifilter tree.
	\item The second is ensures at least one child preserves most points in $S$, assuming that the parent $T_p$ has this property. To this end, the tools of Section~\ref{ssec:priortools} will vastly simplify the language of this section. In particular, we will ensure that for $\gamma = O(\log(\frac{1}{\alpha}))$, every filter step in this entire section will be with respect to $\gamma$-safe weights in at least one branch. We then apply Lemma~\ref{lem:safeimpliessat} to conclude that some node at every level of the multifilter tree is $\gamma$-saturated.
\end{enumerate}

For the remainder of Section~\ref{sec:boundedcov}, we will define
\[\gamma \defeq 8\log\Par{\frac 1 \alpha}.\]
We demonstrate one important consequence of a set being $\gamma$-saturated.

\begin{lemma}\label{lem:gammasaturated}
Suppose for a set $T' \subset T$, the weights $w \defeq \tfrac 1 n \1_{T'} \in \Delta^n$ which place $\tfrac 1 n$ on coordinates in $T'$ and $0$ otherwise are $\gamma$-saturated (cf.\ Definition~\ref{def:saturation}). Then,
\[|T' \cap S| \ge \frac{\alpha n}{2}.\]
\end{lemma}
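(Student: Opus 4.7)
The plan is to derive the inequality directly from the definitions in two moves: first combine the $\gamma$-saturation hypothesis with the trivial inclusion $T' \cap S \subseteq T'$ to produce a self-bound on $|T'|$, and then substitute this back into the saturation bound to conclude. The structure is: lower bound on $|T'|$, then lower bound on $|T' \cap S|$.

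First I would unpack the saturation condition in this concrete case. Since $w = \tfrac{1}{n}\1_{T'}$, one has $\norm{w}_1 = |T'|/n$ and $\norm{w_S}_1 = |T' \cap S|/n$, so Definition~\ref{def:saturation} becomes
\[
\frac{|T' \cap S|}{n} \ \ge\ \alpha \left(\frac{|T'|}{n}\right)^{1/\gamma}.
\]
Next, the inclusion $T' \cap S \subseteq T'$ gives $|T' \cap S| \le |T'|$, and chaining this with the display yields $|T'|/n \ge \alpha (|T'|/n)^{1/\gamma}$, which rearranges to $|T'|/n \ge \alpha^{\gamma/(\gamma-1)}$. Plugging this lower bound back into the saturation inequality,
\[
\frac{|T' \cap S|}{n} \ \ge\ \alpha \cdot \alpha^{1/(\gamma-1)} \ =\ \alpha^{\gamma/(\gamma-1)} \ =\ \alpha \cdot \alpha^{1/(\gamma-1)}.
\]

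The final step is the numerical check that $\alpha^{1/(\gamma-1)} \ge \tfrac12$, equivalently $\gamma - 1 \ge \log_2(1/\alpha)$, with $\gamma = 8\log(1/\alpha)$. Since the list-decodable setting fixes $\alpha < \tfrac12$, the quantity $\log(1/\alpha)$ is bounded below by $\log 2$, so $\gamma - 1 = 8\log(1/\alpha) - 1$ comfortably dominates $\log(1/\alpha)/\log 2 \approx 1.443\,\log(1/\alpha)$; this reduces to $(8 - 1/\log 2)\log(1/\alpha) \ge 1$, a one-line inequality. The only place subtlety could intrude is this constant comparison — the choice of the constant $8$ in the definition of $\gamma$ is exactly what gives enough slack so the bound $\alpha^{1/(\gamma-1)} \ge \tfrac12$ holds uniformly in $\alpha \in (0,\tfrac12)$. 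So I expect no real obstacle beyond being slightly careful with this arithmetic; the lemma is essentially a bookkeeping consequence of Definition~\ref{def:saturation} once one remembers to invoke the trivial containment $T' \cap S \subseteq T'$ to close the loop.
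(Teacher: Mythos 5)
Your proof is correct and follows essentially the same route as the paper's: both use the trivial containment $T'\cap S\subseteq T'$ to derive a self-bound on $\norm{w}_1$, then substitute back into the saturation inequality and do a final calibration with $\gamma = 8\log(1/\alpha)$. The only cosmetic difference is that you carry the exact lower bound $\norm{w}_1\ge\alpha^{\gamma/(\gamma-1)}$ into the second step, whereas the paper first loosens it to $\norm{w}_1\ge\tfrac34\alpha$ before plugging back in; both calibrations close under $\alpha<\tfrac12$.
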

\begin{proof}
Recall that Definition~\ref{def:saturation} gives
\[\norm{w}_1 \ge \norm{w_S}_1 \ge \alpha\norm{w}_1^{\frac 1 \gamma} \implies \norm{w}_1^{1 - \frac 1 \gamma} \ge \alpha \implies \norm{w}_1 \ge \frac{3\alpha}{4}. \]
The first implication was by rearrangement, and the second used $\alpha^{\frac{1}{1 + \gamma^{-1}}} \ge \alpha^{1 + \frac 2 \gamma} \ge \frac{3\alpha}{4}$. Next,
\[\norm{w_S}_1 \ge \alpha\norm{w}_1^{\frac 1 \gamma} \ge \alpha\Par{\frac{3\alpha}{4}}^{\frac 1 \gamma} \ge \frac \alpha 2.\]
The conclusion follows since $\norm{w_S}_1$ counts the elements of $T' \cap S$, normalized by $\frac 1n$.
\end{proof}

Lemmas~\ref{lem:safeimpliessat} and~\ref{lem:gammasaturated} imply that as long as we can guarantee that at every filtering step, at least one child was produced with respect to $\gamma$-safe scores, that child retains half the elements of $S$. We are now ready to state the algorithm $\Partition$, which heavily relies on a subroutine $\PartitionOneD$.

\begin{algorithm}[ht!]
  \caption{$\Partition(T_p, \alpha, \delta, \beta, R)$}\label{alg:partition}
  \begin{algorithmic}[1]
    \STATE \textbf{Input:} $T_p \subset T$, $\alpha \in (0,\thalf)$, $\delta \in (0, 1)$, $\beta \in (0, 1]$, $R \in \R_{\ge 0}$ satisfying (for a sufficiently large constant) 
    \[R = \Omega\Par{\max\Par{\frac{1}{\beta} \cdot \sqrt{\gamma\log\Par{\frac{1}{\alpha\beta}}}, \sqrt{\gamma\log\Par{\frac{\log d}{\delta}}}}}.\]
    \STATE \textbf{Output:} With failure probability $\le \delta$: subsets $\{T_{c_\ell}\}_{\ell \in [k]}$ of $T_p$, satisfying \eqref{eq:sizepot}. Every child satisfies \eqref{eq:filtercond} (using notation \eqref{eq:matdef}). If $w \defeq \frac{1}{n}\1_{T_p}$ is $\gamma$-saturated, then $w' \defeq \frac{1}{n}\1_{T_{c_\ell}}$ is $\gamma$-saturated for at least one child $T_{c_\ell}$.
    \STATE Sample $\ndir = \Theta(\log \tfrac d \delta)$ vectors $\{u_j\}_{j \in [\ndir]} \in \R^d$ each with independent entries $\pm 1$. Following notation \eqref{eq:matdef}, let $v_j \gets \my_p u_j$ for all $j \in [\ndir]$.
    \STATE $\mathcal{S}_0 \gets T_p$
    \FOR{$j \in [\ndir]$}
    \STATE $\mathcal{S}_j \gets \emptyset$
    \FOR{$T' \in \mathcal{S}_{j - 1}$}
    \STATE $\mathcal{T} \gets \PartitionOneD(T', \alpha, v_j, \frac{\delta}{2\ndir}, \beta, R)$
    \STATE $\mathcal{S}_j \gets \mathcal{S}_j \cup \mathcal{T}$
    \ENDFOR
    \ENDFOR
    \RETURN $\mathcal{S}_{\ndir}$
  \end{algorithmic}
\end{algorithm}

As in the Gaussian case, we develop an algorithm $\PartitionOneD$ which in turn is based on a subroutine $\SOrC$, which we implement in Section~\ref{ssec:sorc}. The algorithm $\PartitionOneD$ takes an input direction $v$ and guarantees that along this direction, every child subset produced has small variance (scaled by the length of $v$). $\PartitionOneD$ is implemented by recursively calling $\SOrC$, which takes as input a set $T''$ and produces either one or two subsets, analogously to $\GSOrC$.

\begin{algorithm}[ht!]
  \caption{$\PartitionOneD(T', \alpha, v, \delta, \beta, R)$}\label{alg:partitiononed}
  \begin{algorithmic}[1]
    \STATE \textbf{Input:} $T' \subset T$, $\alpha \in (0,\thalf)$, $v \in \R^d$, $\delta \in (0, 1)$, $\beta \in (0, 1]$, $R \in \R_{\ge 0}$ satisfying (for a sufficiently large constant) \[R = \Omega\Par{\max\Par{\frac{1}{\beta} \cdot \sqrt{\gamma\log\Par{\frac{1}{\alpha\beta}}}, \sqrt{\gamma\log\Par{\frac{\log d}{\delta}}}}}.\]
    \STATE \textbf{Output:} Subsets $\{T''_\ell\}_{\ell \in [k]}$ of $T'$, such that
    \begin{equation}\label{eq:partitiononedsize}\sum_{\ell \in [k]} \left|T''_\ell\right|^{1 + \beta} \le |T'|^{1 + \beta}.\end{equation}
    Every child $T''_\ell$ for $\ell \in [k]$ has 
    \begin{equation}\label{eq:onedfiltercond}\inprod{\tcov_{\frac 1 n \1} (T''_\ell)}{vv^\top} \le \half R^2\norm{v}_2^2.\end{equation}
    If $w = \frac{1}{n}\1_{T'}$ is $\gamma$-saturated, then $w' = \frac{1}{n}\1_{T_\ell''}$ is $\gamma$-saturated for at least one child $T_\ell''$, with failure probability $\leq \delta$.
    \STATE $\mathcal{S}_{\text{in}} \gets T'$, $\mathcal{S}_{\text{out}} \gets \emptyset$
    \WHILE{$\mathcal{S}_{\text{in}} \neq \emptyset$}
    \STATE $T'' \gets$ the first element of $\mathcal{S}_{\text{in}}$
    \STATE $\mathcal{S}_{\text{in}} \gets \mathcal{S}_{\text{in}} \setminus \{T''\}$
    \IF{$\SOrC(T'', \alpha, v, \delta, \beta, R)$ returns one set $\Tout^{(0)}$}
    \STATE $\mathcal{S}_{\text{out}} \gets \mathcal{S}_{\text{out}} \cup \left\{\Tout^{(0)}\right\}$
    \ELSE
    \STATE $T_{\text{out}}^{(1)}, T_{\text{out}}^{(2)} \gets \SOrC(T'', \alpha, v, \delta, \beta, R)$
    \STATE $\mathcal{S}_{\text{in}} \gets \mathcal{S}_{\text{in}} \cup \left\{T_{\text{out}}^{(1)}, T_{\text{out}}^{(2)}\right\}$ %
    \ENDIF
    \ENDWHILE
    \RETURN $\mathcal{S}_{\text{out}}$
  \end{algorithmic}
\end{algorithm}

\begin{lemma}\label{lem:partition}
The output of $\Partition$ satisfies the guarantees given in Line 2 of Algorithm~\ref{alg:partition}, assuming correctness of $\PartitionOneD$.
\end{lemma}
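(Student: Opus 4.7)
The plan is to adapt the proof of Lemma~\ref{lem:gpartition} to the bounded-covariance setting. Three properties must be verified: the size bound \eqref{eq:sizepot}, the potential bound \eqref{eq:filtercond}, and preservation of $\gamma$-saturation on at least one branch. The size bound follows immediately from an induction on $j \in \{0, 1, \ldots, \ndir\}$, since each call to $\PartitionOneD$ replaces an element of $\mathcal{S}_{j-1}$ with children whose $(1+\beta)$-power cardinalities sum to at most that of the parent by \eqref{eq:partitiononedsize}; setting $j = \ndir$ yields \eqref{eq:sizepot}.

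For the potential bound, I would fix any child $T_{c_\ell} \in \mathcal{S}_{\ndir}$ and expand using Fact~\ref{fact:altcovpairs},
\[\inprod{\my_p^2}{\mm_{c_\ell}} = \frac{1}{2n|T_{c_\ell}|}\sum_{i,i' \in T_{c_\ell}}\norm{\my_p(X_i - X_{i'})}_2^2.\]
Then applying the JL bound of \cite{Achlioptas03} to the random $\pm 1$ sketches $\{u_j\}$ approximates each $\norm{\my_p(X_i - X_{i'})}_2^2$ within a factor $1.4$ by $\tfrac{1}{\ndir}\sum_j \inprod{v_j}{X_i-X_{i'}}^2$. Swapping summations and reversing Fact~\ref{fact:altcovpairs} along each $v_j$ gives
\[\inprod{\my_p^2}{\mm_{c_\ell}} \le \frac{1.4}{\ndir}\sum_{j \in [\ndir]} \inprod{\mm_{c_\ell}}{v_j v_j^\top} \le \frac{1.4}{\ndir}\sum_{j \in [\ndir]} \frac{R^2}{2}\norm{v_j}_2^2,\]
where the second inequality invokes the guarantee \eqref{eq:onedfiltercond} along every $v_j$. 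A second JL application to the rows of $\my_p$ bounds $\tfrac{1}{\ndir}\sum_j\norm{v_j}_2^2 \le 1.4\,\Tr(\my_p^2)$, producing \eqref{eq:filtercond} since $1.4^2\cdot\tfrac{1}{2} < 1$. Choosing $\ndir = \Theta(\log(d/\delta))$ and union-bounding over the $O(n^2) \le d^4$ pairs of points and the $d$ rows of $\my_p$ keeps the total JL failure probability below $\delta/2$.

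For $\gamma$-saturation, I would induct on $j$: if some $T' \in \mathcal{S}_{j-1}$ has $\tfrac{1}{n}\1_{T'}$ $\gamma$-saturated, the saturation guarantee of $\PartitionOneD$ ensures at least one child in $\mathcal{S}_j$ descended from $T'$ inherits this property, with failure probability at most $\delta/(2\ndir)$. Only the $\ndir$ calls along this saturated branch contribute randomness to the argument (the size and second-moment outputs of $\PartitionOneD$ being deterministic), so a union bound gives failure probability at most $\delta/2$; combined with the JL failure probability, the total is $\delta$ as claimed.

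The main obstacle is that \eqref{eq:onedfiltercond} is a second-moment bound rather than a uniform interval-length bound, so the potential computation above requires each $T_{c_\ell}$ to satisfy a variance bound along every earlier direction $v_j$, even though $T_{c_\ell}$ is a strict descendant of the set that directly received $\PartitionOneD$'s guarantee on $v_j$. Unlike the Gaussian interval-length property, a variance bound need not survive passing to subsets in general, so validating this step requires appealing to the specific structure of $\SOrC$ developed in Section~\ref{ssec:sorc}: the ``split'' outputs are obtained by thresholding in direction $v_j$ (controlling the tails of the projected distribution) while the ``cluster'' outputs furnish truly bounded-variance subsets. Making this invariant explicit and carrying it through the chained induction on $j$ is what I expect to be the technically involved part of the proof.
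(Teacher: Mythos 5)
Your proposal follows the same route as the paper's proof, and the parts you state confidently are correct. The one place where you stop short — the worry that \eqref{eq:onedfiltercond}, being a variance bound rather than an interval-length bound, ``need not survive passing to subsets'' and might require white-box reasoning about $\SOrC$ — is not actually a gap. The quantity appearing in \eqref{eq:onedfiltercond} is the \emph{unnormalized} covariance $\tcov_{\frac1n\1}(\cdot)$, and this object is monotone under subset inclusion. Concretely, for any $T'' \subseteq T'$, applying Fact~\ref{fact:meanshift} (with the centering point $v = \mu_{\frac1n\1}(T')$) and then dropping the nonnegative terms indexed by $T' \setminus T''$ gives
\[
\tcov_{\frac1n\1}(T'') \;\preceq\; \frac1n\sum_{i\in T''}\bigl(X_i - \mu_{\frac1n\1}(T')\bigr)\bigl(X_i-\mu_{\frac1n\1}(T')\bigr)^\top \;\preceq\; \tcov_{\frac1n\1}(T').
\]
So the bound $\inprod{\tcov_{\frac1n\1}(\cdot)}{v_jv_j^\top}\le \tfrac12 R^2\norm{v_j}_2^2$ obtained at layer $j$ automatically descends to every element of $\mathcal{S}_{\ndir}$; this is exactly why the paper tracks $\tcov$ rather than the normalized $\cov$. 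No appeal to the internal structure of $\SOrC$ is needed. With that inheritance in hand, your chain of inequalities (two applications of \cite{Achlioptas03} absorbing a $1.4^2$ factor, and the $\tfrac12$ in \eqref{eq:onedfiltercond} covering it) closes out \eqref{eq:filtercond} exactly as in the paper, and your inductions for \eqref{eq:sizepot} and $\gamma$-saturation match as well.
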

\begin{proof}
We will follow the proof of Lemma~\ref{lem:gpartition}. First, to demonstrate that the subsets satisfy \eqref{eq:sizepot}, inducting on the guarantee \eqref{eq:partitiononedsize} of $\PartitionOneD$ suffices. Similarly, $\gamma$-saturation of some child follows from inducting on the corresponding guarantee of $\PartitionOneD$.

Finally, using the guarantee \eqref{eq:onedfiltercond} of $\PartitionOneD$, every $T_{c_\ell} \in \mathcal{S}_{\ndir}$ satisfies
\[\inprod{v_jv_j^\top}{\tcov_{\frac{1}{n}\1}(T_{c_\ell})} \leq \half R^2\norm{v_j}_2^2, \text{ for all } j \in [\ndir].\]
In other words, the variance is small along all directions $\{\my_p u_j = v_j\}_{j \in [\ndir]}$. We conclude
\begin{align*}
  \inprod{\my_p^2}{\mm_{c_l}} &\le \frac{1.4}{2n\left|T_{c_\ell}\right|\ndir} \sum_{i, i' \in T_{c_l}} \sum_{j \in [\ndir]} \inprod{\my_p u_j}{X_i - X_{i'}}^2 \\
  &= \frac{1.4}{\ndir} \sum_{j \in [\ndir]} \inprod{v_jv_j^\top}{\tcov_{\frac{1}{n}\1}(T_{c_\ell})} \\
                              &\le \frac{1.4}{2\ndir} \sum_{j \in [\ndir]} R^2 \norm{v_j}_2^2 \\
                              &= \frac{1.4}{2\ndir} \sum_{j \in [\ndir]} R^2 \norm{\my_p u_j}_2^2 \le R^2 \Tr\Par{\my_p^2},
\end{align*}
with probability at least $1 - \delta$; the first two lines and the last line follow the proof of Lemma~\ref{lem:gpartition}, and we used the variance guarantee of $\PartitionOneD$ in the third line. We remark that we make sure to take the number of directions $\ndir$ to depend logarithmically on $\delta$ (as opposed to just $d$), so we can apply the guarantees of \cite{Achlioptas03} with probability $1 - \frac \delta 2$ on the first and last lines.
\end{proof}

Next, we state a correctness guarantee for $\PartitionOneD$, assuming correctness of its main subroutine, $\SOrC$. The guarantees of $\SOrC$ are summarized in Line 2 of Algorithm~\ref{alg:sorc}. The salient features are that it takes a set $\Tin$ and either produces one set satisfying \eqref{eq:onedfiltercond} deterministically, or two sets which each are strict subsets of $\Tin$ (and hence remove at least one point) deterministically, and \eqref{eq:partitiononedsize} is always maintained. In the two set case, if $\frac 1 n \1_{\Tin}$ is $\gamma$-saturated then so is at least one output deterministically; in the one set case, the output is saturated with probability at least $1 - \delta$. We will use only these features to analyze $\PartitionOneD$ in Lemma~\ref{lem:partitiononed}.

\begin{lemma}\label{lem:partitiononed}
The output of $\PartitionOneD$ satisfies the guarantees given in Line 2 of Algorithm~\ref{alg:partitiononed}, assuming correctness of $\SOrC$.
\end{lemma}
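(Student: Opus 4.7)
The plan is to follow the template established in the proof of Lemma~\ref{lem:partition}, inducting on the binary tree of recursive $\SOrC$ calls that $\PartitionOneD$ generates. First, I would argue termination: every two-set return of $\SOrC$ produces strict subsets of its input, so the sum of cardinalities over $\mathcal{S}_{\text{in}}$ strictly decreases with each split step, and the while loop must halt after finitely many iterations.

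For the size bound \eqref{eq:partitiononedsize}, I would track the potential $\Phi \defeq \sum_{T \in \mathcal{S}_{\text{in}} \cup \mathcal{S}_{\text{out}}} |T|^{1+\beta}$, which initially equals $|T'|^{1+\beta}$. Each one-set call replaces some $T'' \in \mathcal{S}_{\text{in}}$ with a subset $\Tout^{(0)} \in \mathcal{S}_{\text{out}}$, which does not increase $\Phi$. Each two-set call replaces $T''$ with subsets satisfying the analogue of the $\SOrC$-level inequality $|\Tout^{(1)}|^{1+\beta} + |\Tout^{(2)}|^{1+\beta} \le |T''|^{1+\beta}$ (the bounded-covariance analogue of \eqref{eq:gtwosetssizebound}), again not increasing $\Phi$. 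At termination $\mathcal{S}_{\text{in}} = \emptyset$, so $\sum_{\ell} |T''_\ell|^{1+\beta} = \Phi \le |T'|^{1+\beta}$. The variance bound \eqref{eq:onedfiltercond} is then immediate: each $T''_\ell$ was returned as a one-set output of $\SOrC$, which by its deterministic specification satisfies the required $vv^\top$-directional variance inequality.

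The only interesting step is the $\gamma$-saturation claim, and this is where I expect to need the most care regarding how the probabilistic guarantee is accounted for. I would view the recursive $\SOrC$ calls as a rooted binary tree whose root is $T'$, internal nodes correspond to split (two-set) calls with two children, and leaves correspond to cluster (one-set) calls with a single $\Tout^{(0)}$ output. By hypothesis the root is $\gamma$-saturated, and by the deterministic guarantee of $\SOrC$ in the split case, any $\gamma$-saturated internal node has at least one $\gamma$-saturated child. Thus there exists a root-to-leaf path along which every internal node is $\gamma$-saturated; at the terminal leaf of this path, a single cluster step is applied, which by the one-set guarantee of $\SOrC$ preserves $\gamma$-saturation except with probability $\le \delta$.

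The subtle point — and the main thing to emphasize — is that we only need a single saturated output to exist, so only the single cluster step that terminates the saturated path needs to succeed. Even though many other leaves may fail probabilistically, the saturated branch is selected purely as a function of the (adversarially determined) split structure, so the total failure probability along this distinguished branch is at most $\delta$, matching the output specification without any union bound blowup over the leaves.
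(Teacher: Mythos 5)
Your overall approach — viewing the recursion as a tree with deterministic split steps and randomized cluster steps at the leaves, tracking the non-increasing potential $\Phi = \sum_{T \in \mathcal{S}_{\text{in}} \cup \mathcal{S}_{\text{out}}} |T|^{1+\beta}$ for \eqref{eq:partitiononedsize}, and arguing saturation along a distinguished root-to-leaf path whose only probabilistic event is the single cluster step at its terminus — is the same as the paper's, and your size, variance, and saturation arguments are all correct.

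However, your termination argument has a genuine flaw. The two subsets $\Tout^{(1)}, \Tout^{(2)}$ returned by $\SOrC$ in the split case overlap: by \eqref{eq:twosetsdef}, both contain every point whose projection falls in $[\tau - r\norm{v}_2, \tau + r\norm{v}_2]$, and together they cover all of $\Tin$, so $\left|\Tout^{(1)}\right| + \left|\Tout^{(2)}\right| \ge |\Tin|$. Thus the sum of cardinalities over $\mathcal{S}_{\text{in}}$ does \emph{not} strictly decrease with a split step; it typically increases. The correct counter is $|\mathcal{S}_{\text{in}}| + |\mathcal{S}_{\text{out}}|$, which increases by exactly one per split step (one set removed, two added), is unchanged by cluster steps, and is bounded above by $\Phi \le (n')^{1+\beta}$ since each nonempty set contributes at least one to $\Phi$. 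This bounds the number of split steps — and hence the total number of $\SOrC$ calls — by $O\bigl((n')^{1+\beta}\bigr)$, which is both the termination bound and the bound later used in the runtime analysis (Lemma~\ref{lem:partitiononedruntime}). You already have all the ingredients; the fix is to switch from tracking total cardinality to tracking the set count and bounding it via the potential you already introduced.
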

\begin{proof}
Each run of Lines 4-13 results in either one set (which we call a ``cluster step'') or two sets (which we call a ``split step''). We can view this process as a tree, where a leaf node corresponds to the result of a cluster step, and every node on the leaf-to-node path corresponds to a split step. Every time a split step occurs, it increases $|\mathcal{S}_{\textup{in}}| + |\mathcal{S}_{\textup{out}}|$ by one, so there are at most $(n')^{1 + \beta}$ calls to $\SOrC$ where $|T'| = n'$, and thus the algorithm terminates in finite time.

Next, if $T'$ is not saturated, then there is no failure probability, since the conditions \eqref{eq:partitiononedsize} and \eqref{eq:onedfiltercond} deterministically succeed. Otherwise, from the root of this partition tree, consider the root-to-leaf path which at each node corresponding to a split step takes any child which corresponds to a saturated child (one always exists because split steps deterministically succeed). The only failure probability comes from the success of the leaf-parent to leaf cluster step, which fails with probability $\delta$. We can ignore all other bad events, because we only need to ensure one child is saturated. 
\end{proof}

\subsection{Reducing \texorpdfstring{$\SOrC$}{SplitOrCluster} to \texorpdfstring{$\TBOC$}{TailBoundOrCluster} and $\Fix$}\label{ssec:sorc}

In this section, we state and analyze $\SOrC$, the main subroutine of $\PartitionOneD$.

\begin{algorithm}[ht!]
	\caption{$\SOrC(\Tin, \alpha, v, \delta, \beta, R)$}\label{alg:sorc}
	\begin{algorithmic}[1]
		\STATE \textbf{Input:} $\Tin \subseteq T$, $\alpha \in (0,\thalf)$, $v \in \R^d$, $\delta \in (0,1)$, $\beta \in (0, 1]$, $R \in \R_{\ge 0}$ satisfying (for a sufficiently large constant) 
		\[R = \Omega\Par{\max\Par{\frac{1}{\beta} \cdot \sqrt{\gamma\log\Par{\frac{1}{\alpha\beta}}}, \sqrt{\gamma\log\Par{\frac{\log d}{\delta}}}}}.\]
		\STATE \textbf{Output:} Either one subset $\Tout^{(0)} \subset \Tin$, or two subsets $\Tout^{(1)}, \Tout^{(2)} \subset \Tin$. In the one subset case, $\Tout^{(0)}$ has $\inprod{\tcov_{\frac 1 n \1}\Par{\Tout^{(0)}}}{vv^\top} \leq \thalf R^2 \norm{v}_2^2$. In the two subsets case, they take the form, for some threshold value $\tau \in \R$ and $r \in \R_{\ge 0}$
		\begin{equation}\label{eq:twosetsdef}
                  \begin{aligned}
                    \Tout^{(1)} \defeq \{X_i \mid \inprod{v}{X_i} \le \tau + r\norm{v}_2\}, \;
                    \Tout^{(2)} \defeq \{X_i \mid \inprod{v}{X_i} \ge \tau - r\norm{v}_2\},                    
                  \end{aligned}
                \end{equation}
		and satisfy
		\begin{equation}\label{eq:twosetssizebound}
                  \begin{aligned}
                    \left|\Tout^{(1)}\right|^{1 + \beta} + \left|\Tout^{(2)}\right|^{1 + \beta} < \left|\Tin\right|^{1 + \beta}, \\
                    \min\Par{ 1-\frac{\abs{\Tout^{(1)}}}{\abs{\Tin}}, 1-\frac{\abs{\Tout^{(2)}}}{\abs{\Tin}} } \geq \frac{2\gamma}{r^2}.                    
                  \end{aligned}
                \end{equation}
                In either case if $\frac 1 n \1_{\Tin}$ is $\gamma$-saturated then $\frac 1 n \1_{\Tout}$ is $\gamma$-saturated for at least one child $\Tout$, with failure probability $\leq\delta$ only in the case one set is returned (deterministically otherwise).
		\STATE $Y_i \gets \inprod{v}{X_i}$ for all $i \in \Tin$
		\STATE $\taumed \gets \med\Par{\{Y_i \mid i \in \Tin\}}$, where $\med$ returns the median
		\STATE $I \gets \left[\taumed - c, \taumed +c \right]$ is the smallest interval containing the $1 - \tfrac \alpha 4$ quantiles of $\{Y_i \mid i \in \Tin\}$ for $c \in \R_{\ge 0}$ and $2I \gets \left[\taumed -2c,\taumed +2c\right]$
                \IF{$\inprod{\tcov_{\frac{1}{n}\1}\Par{\Tmid}}{vv^\top} \leq \tfrac{1}{8} R^2\norm{v}_2^2$ where $\Tmid \defeq \{X_i \in \Tin \mid Y_i \in 2I\}$}
                \RETURN $\Fix(\Tin,\alpha,v,\delta, R)$
                \ELSE
		\STATE Run both $\TBOC\Par{\Tin,v,\beta, \taumed \pm \frac{1}{2^k}\cdot \sqrt{\frac{2048`}{\beta^2 \alpha}}\norm{v}_2}$ for integers $k$ with
		\[0 \le k \le \log_2\Par{\frac{2048}{\beta^2 \alpha}}\]
		until one returns $\Par{\Tout^{(1)}, \Tout^{(2)}}$ satisfying \eqref{eq:twosetsdef}, \eqref{eq:twosetssizebound}
		\RETURN $\Tout^{(1)}$, $\Tout^{(2)}$ 
		\ENDIF
	\end{algorithmic}
\end{algorithm}

$\SOrC$ uses two subroutines, $\Fix$ and $\TBOC$, which are respectively used to handle the one child and two children cases. Roughly speaking, $\Fix$ takes as input a set $\Tin$ which ``almost'' has bounded variance in the direction $v$, and slightly filters extreme outliers in a way so that the result has truly bounded variance. On the other hand, $\TBOC$ is used at a candidate threshold $\tau$ to either check that it induces sets $\Tout^{(1)}, \Tout^{(2)}$ satisfying \eqref{eq:twosetssizebound} or satisfies a certain tail bound. By stitching together tail bounds at a small number of quantiles, $\SOrC$ guarantees that at least one of these quantiles was a valid threshold, else we would attain a contradiction as Line 6 of $\SOrC$ would have passed. We state guarantees of $\TBOC$ and $\Fix$ as Lemmas~\ref{lem:tboc} and~\ref{lem:fix}, and prove them in Sections~\ref{ssec:tboc} and~\ref{ssec:fixing} respectively. We then use Lemmas~\ref{lem:tboc} and~\ref{lem:fix} to prove Lemma~\ref{lem:sorc}, which demonstrates correctness of $\SOrC$.

\begin{restatable}{lemma}{restatetboc}\label{lem:tboc}
There is an algorithm, $\TBOC$ (Algorithm~\ref{alg:tboc}), which takes as input $\Tin \subseteq T$, $v \in \R^d$, $\beta \in (0, 1]$, $R \in \R_{\ge 0}$, and $\tau_0 \in \R$, and returns in one of two cases we call the ``split'' case and the ``tail bound'' case. In the split case, it returns $(\tau, r)$ such that the induced sets \eqref{eq:twosetsdef} satisfy \eqref{eq:twosetssizebound}, and if $\frac 1 n \1_{\Tin}$ is $\gamma$-saturated then one of the induced sets $\Tout$ has $\frac 1 n \1_{\Tout}$ is $\gamma$-saturated. Otherwise, for all $t \in \R$ define the upper and lower tail probabilities
\begin{equation}\label{eq:tailbounddef}\rho^+(t) \defeq \Pr_{i \sim_{\textup{unif}} \Tin}\Brack{Y_i \ge t},\; \rho^-(t) \defeq \Pr_{i \sim_{\textup{unif}} \Tin} \Brack{Y_i \le t}.\end{equation}
Then, in the tail bound case $\TBOC$ certifies 
\begin{equation}\label{eq:tailbound}\rho^+(\tau_0) \le \frac{128\gamma}{\beta^2\left|\tau_0 - \taumed\right|^2}\norm{v}_2^2\text{ if }\tau_0 \ge \taumed,\text{ and } \rho^-(\tau_0) \le \frac{128\gamma}{\beta^2|\tau_0 - \taumed|^2}\norm{v}_2^2 \text{ if } \tau_0 \le \taumed.\end{equation}
\end{restatable}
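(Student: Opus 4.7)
The plan is to implement $\TBOC$ as a grid search for a valid split over a family of threshold-radius pairs $(\tau, r)$, and either return the first successful split or accumulate all failures into the target tail bound at $\tau_0$ via a recursion in the style of \eqref{eq:ggammarecursion}. WLOG assume $\tau_0 \geq \taumed$ and set $d \defeq \tau_0 - \taumed$. A natural grid uses thresholds $t_k \defeq \taumed + d/2^k$ for $k = 0, 1, \ldots, K$ with $K = \Theta(\beta^{-1}\log\log(1/(\alpha\beta)))$, paired with radii $r_k \defeq d/(2^{k+2}\|v\|_2)$ so that the candidate split at $(\tau, r) = ((t_k + t_{k+1})/2, r_k)$ has cutoffs exactly at $t_k$ and $t_{k+1}$. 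The algorithm computes the cardinalities of $\Tout^{(1)}$ and $\Tout^{(2)}$ via binary search on the pre-sorted projections $\{Y_i\}_{i \in \Tin}$, tests the two conditions of \eqref{eq:twosetssizebound}, and returns the first successful pair; else it outputs the tail bound at $\tau_0$.

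For the tail bound analysis, let $\eta_k \defeq \rho^+(t_k)$, so $\eta_0 = \rho^+(\tau_0)$, $\eta_k$ is non-decreasing in $k$, and $\eta_K \leq \thalf$ since $t_K \geq \taumed$. At each $k$, a failed split means one of: (a) the balance bound on the $\Tout^{(1)}$-side fails, giving $\eta_k < 2\gamma/r_k^2 = 32 \cdot 4^k \gamma \|v\|_2^2/d^2$; or (b) the size bound $(1-\eta_k)^{1+\beta} + \eta_{k+1}^{1+\beta} < 1$ fails, yielding via the elementary inequality $(1-x)^{1+\beta} \leq 1-x$ the recursion $\eta_k \leq \eta_{k+1}^{1+\beta}$, exactly the $(1+\beta)$-power compounding from \eqref{eq:ggammarecursion}. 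If only (b) fires, $\eta_0 \leq \eta_K^{(1+\beta)^K} \leq 2^{-(1+\beta)^K}$, which falls below the target $T \defeq 128\gamma\|v\|_2^2/(\beta^2 d^2)$ by the choice of $K$. If (a) fires first at some index $k^*$, chaining (b) for $k < k^*$ gives $\eta_0 \leq \eta_{k^*}^{(1+\beta)^{k^*}} \leq (32 \cdot 4^{k^*}\gamma\|v\|_2^2/d^2)^{(1+\beta)^{k^*}}$; either the base is already $\leq T$ (which holds for $k^* = O(\log(1/\beta))$) and raising to a power $\geq 1$ preserves this, or the doubly-exponential exponent drives the bound below $T$ anyway. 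The universal constant $128$ in \eqref{eq:tailbound} is calibrated to absorb the $4^{k^*}$ overhead.

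For saturation preservation in the split case, I would verify $\gamma$-safety of the scores induced by the filter $\Tin \to \Tout^{(i)}$ on whichever side of $\tau$ contains $\inprod{v}{\mus}$, and invoke Lemma~\ref{lem:safeimpliessat}. A Chebyshev bound against the $\id$-bounded covariance of $S$ (Assumption~\ref{assume:sexists}), combined with $\gamma$-saturation of $\Tin$ (so $|\Tin \cap S| \geq |S|/2$ by Lemma~\ref{lem:gammasaturated}), shows that the fraction of $\Tin \cap S$ points with $Y_i$ at distance at least $t$ from $\inprod{v}{\mus}$ is at most $2\|v\|_2^2/t^2$. If say $\inprod{v}{\mus} \geq \tau$, applying this with $t = \inprod{v}{\mus} - (\tau - r\|v\|_2) \geq r\|v\|_2$ bounds the $S \cap \Tin$-fraction removed on the $\Tout^{(2)}$-side by $2/r^2$, while the $\Tin$-fraction removed is at least $2\gamma/r^2$ by the balance condition; the ratio is at most $1/\gamma$, giving $\gamma$-safety on this side. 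The symmetric argument handles $\inprod{v}{\mus} \leq \tau$.

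The main technical obstacle will be aligning the parameter choices so that both the size-failure recursion (which needs $K = \Omega(\beta^{-1}\log\log)$ levels to compound to the target) and the balance-failure bound (whose $4^{k^*}$ overhead at intermediate $k^*$ must remain comparable to $T$ after partial compounding) are simultaneously dominated by $T$, for the shared constant $128/\beta^2$; a careful geometric schedule is required. Runtime is straightforward: after a one-time pre-sort of $\{Y_i\}_{i \in \Tin}$, each iteration reduces to a constant number of binary searches in $O(\log n)$ time, for a total of $O(\beta^{-1}\log\log \cdot \log n)$ per call.
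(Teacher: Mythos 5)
Your proposal diverges from the paper's in a way that introduces a real gap. The paper's $\TBOC$ sets the radius \emph{adaptively} at each step, $r_j = \sqrt{2\gamma/g_j}$ where $g_j = \rho^+(\tau_j)$. With that choice, the second (balance) condition of \eqref{eq:twosetssizebound} is met \emph{with equality by construction} --- $g_j = 2\gamma/r_j^2$ and $1-g_{j+1} \geq \frac12 \geq g_j$ --- so a failed split can \emph{only} be a failure of the first (size-potential) condition, yielding $g_j < g_{j+1}^{1+\beta}$ unconditionally. The tail bound then falls out of summing the step sizes $2\sum_j r_j\|v\|_2 > \tau_0 - \taumed$ using $r_j = \sqrt{2\gamma/g_j}$ and the super-geometric bound (Lemma~\ref{lem:supergeometric}), giving $\rho^+(\tau_0) < 128\gamma\|v\|_2^2/(\beta^2|\tau_0-\taumed|^2)$ without any case split.

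Your fixed dyadic grid $t_k = \taumed + d/2^k$ with $r_k \propto d/2^k$ allows \emph{either} condition to be the one that fails, and the two cases cannot both be closed with the shared constant $128/\beta^2$. Concretely, if the balance condition first fails at some intermediate $k^*$, the chained bound is $\rho^+(\tau_0) < (32\cdot 4^{k^*}\gamma\|v\|_2^2/d^2)^{(1+\beta)^{k^*}}$, and there are parameter regimes where this does \emph{not} beat the target. For example, with $\beta = 0.1$ and $k^* = 30$ where $32 \cdot 4^{k^*}\gamma\|v\|_2^2/d^2 = \tfrac12$, the left side is roughly $2^{-(1.1)^{30}} \approx 6\times 10^{-6}$, but the target $128\gamma\|v\|_2^2/(\beta^2 d^2)$ is of order $10^{-16}$, so the bound fails by ten orders of magnitude. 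The $(1+\beta)^{k^*}$ exponent from the recursion grows too slowly to overcome the $4^{k^*}$ radius shrinkage when $k^*$ lies in the intermediate range $\Theta(\log(1/\beta)) \lesssim k^* \lesssim K$. (Your fixed $K = \Theta(\beta^{-1}\log\log(1/(\alpha\beta)))$ is also independent of $|\tau_0-\taumed|$, whereas the pure size-failure case needs $(1+\beta)^K \geq \log_2(\beta^2 d^2/\gamma\|v\|_2^2)$; the paper's algorithm iterates until $\tau_j < \taumed$ with no preset cap, so the number of steps self-adjusts.) You flag this tension yourself as ``the main technical obstacle,'' but it is not a matter of tuning constants --- the fixed geometric schedule genuinely cannot close, and the adaptive radius choice is the idea that resolves it. Your saturation-preservation argument for the split case (Chebyshev against Assumption~\ref{assume:sexists}, Lemma~\ref{lem:gammasaturated} for $|\Tin\cap S|\geq|S|/2$, then Lemma~\ref{lem:safeimpliessat}) is correct and matches the paper.
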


\begin{restatable}{lemma}{restatefix}\label{lem:fix}
There is an algorithm, $\Fix$ (Algorithm~\ref{alg:fix}), which takes as input $\Tin \subseteq T$,  $v \in \R^d$, and $R \in \R_{\ge 0}$ and produces $\Tout$ with the following guarantee with probability at least $1 - \delta$. Define $\Tmid$ as in Line 5 of Algorithm~\ref{alg:sorc}. Then if $\frac1n \1_{\Tin}$ is $\gamma$-saturated, so is $\frac 1n \1_{\Tout}$, and if $\inprod{\tcov_{\frac 1 n \1}(\Tmid)}{vv^\top} \le \frac18 R^2 \norm{v}_2^2$, then $\inprod{\tcov_{\frac 1 n \1}(\Tout)}{vv^\top} \le \frac12 R^2 \norm{v}_2^2$. 
\end{restatable}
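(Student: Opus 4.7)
My plan is to specify $\Fix$ as a one-shot randomized dropout procedure and analyze it via the $\gamma$-safe-score framework combined with Bernoulli concentration. Concretely, $\Fix$ will compute $Y_i = \inprod{v}{X_i}$, set $\taumed \defeq \med(\{Y_i\}_{i \in \Tin})$, and assign bounded scores $s_i \defeq \min\{1, (Y_i - \taumed)^2/\tau\}$ with truncation level $\tau = \Theta(R^2\norm{v}_2^2 / \gamma)$. Independently for each $i \in \Tin$, $\Fix$ deletes $X_i$ with probability $s_i$, producing $\Tout$. As a short-circuit, if $\inprod{\tcov_{\frac 1 n\1}(\Tin)}{vv^\top} \leq \tfrac 1 2 R^2\norm{v}_2^2$ already, just return $\Tin$; both conclusions of the lemma then hold trivially, so below I assume we are in the complementary case where $\Tin$'s variance along $v$ exceeds this threshold.

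For the variance guarantee, each surviving point contributes at most $\min\{(Y_i - \taumed)^2, \tau\}$ in expectation to $\sum_{i \in \Tout}(Y_i - \taumed)^2$, so splitting the expectation over $\Tmid$ and its complement yields
\[ \E\Brack{\sum_{i \in \Tout}(Y_i - \taumed)^2} \leq \sum_{i \in \Tmid}(Y_i - \taumed)^2 + \tau\left|\Tin \setminus \Tmid\right|. \]
The first term I control using the hypothesis $\inprod{\tcov_{\frac 1 n\1}(\Tmid)}{vv^\top} \leq \tfrac 1 8 R^2 \norm{v}_2^2$ together with a mean-median shift: since $\Tmid$ contains the middle $1 - \alpha/4$ quantiles of $\Tin$, $\taumed$ is also the median of $\Tmid$, and the mean-median inequality converts the bound on the sum centered at $\bar{Y}_{\Tmid}$ into a bound on the sum centered at $\taumed$, yielding $\sum_{i \in \Tmid}(Y_i - \taumed)^2 \leq \tfrac 1 4 n R^2 \norm{v}_2^2$. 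The second term is at most $(\alpha n/4)\tau = O(\alpha/\gamma) \cdot nR^2\norm{v}_2^2$, which is negligible for $\alpha < \half$ and $\gamma = \Omega(\log\alpha^{-1})$. A Bernstein inequality applied to the bounded independent summands $\{\1_{X_i \in \Tout}(Y_i - \taumed)^2\}$ upgrades to a $1 - \delta/2$ probability bound on the empirical variance of $\Tout$, and passing from deviations about $\taumed$ to deviations about the empirical mean of $\Tout$ only tightens the estimate, giving the target $\tfrac 1 2 R^2\norm{v}_2^2$.

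For the saturation claim I will verify that $\{s_i\}$ is $\gamma$-safe with respect to $\tfrac 1 n \1_\Tin$, and then invoke a randomized analog of Lemma~\ref{lem:safeimpliessat} via Chernoff concentration. Safety is argued from two inputs: Assumption~\ref{assume:sexists} directly gives $\sum_{i \in S}(Y_i - \inprod{v}{\mus})^2 \leq |S|\norm{v}_2^2$, and $\gamma$-saturation of $\Tin$ combined with the $\Tmid$ variance hypothesis lets me bound $|\taumed - \inprod{v}{\mus}|$ by routing through Lemma~\ref{lem:covbounddist} applied to the uniform weights on $\Tmid$ (whose $v$-directional covariance is small, pinning the mean, and hence via mean-median the value $\taumed$, near the good mean). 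Together these bounds show the average score over $S \cap \Tin$ is $O(\gamma/R^2)$, while in the non-short-circuit branch the total variance bound forces the average score over $\Tin$ to be $\Omega(1)$, establishing $\gamma$-safety for $R$ satisfying its hypothesized lower bound. A Chernoff bound on the two binomial counts $\sum_{i \in S \cap \Tin}\1_{X_i\text{ deleted}}$ and $\sum_{i \in \Tin}\1_{X_i\text{ deleted}}$ transfers safety into saturation preservation with failure probability $\leq \delta/2$.

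The main obstacle is the joint calibration of the truncation level $\tau$ against the concentration slack: $\tau$ must be small enough to kill the outer variance contribution yet large enough that too few well-concentrated points in $S$ (which by Assumption~\ref{assume:sexists} have $O(\norm{v}_2^2)$-squared-deviation on average) get aggressively dropped, while simultaneously the Bernoulli fluctuations in both the variance sum and the saturation counts must be controlled. The hypothesized lower bound $R = \Omega(\sqrt{\gamma\log(\log d/\delta)})$ is exactly tight to balance these three competing constraints, and a secondary technical subtlety is avoiding any appeal to concentration of $\dist$ beyond bounded covariance, which is why the $\taumed$-control argument is routed through the $\Tmid$-median identity rather than attempting a direct tail bound on $S$.
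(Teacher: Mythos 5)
Your proposal takes a genuinely different route from the paper's: you use a single round of randomized dropout with truncated scores $s_i = \min\{1, (Y_i - \taumed)^2/\tau\}$ centered at $\taumed$, whereas the paper removes extreme outliers deterministically and then iterates $\RandDrop$ with \emph{unbounded} scores that measure squared distance from the interval endpoints $\taumed \pm c$, not from $\taumed$. Unfortunately, your safety argument has a fatal gap that the paper's choice of scores is precisely designed to avoid.

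The problem is that $\taumed$ can be very far from $\inprod{v}{\mus}$ along $v$, and your scores are penalized relative to $\taumed$. The interval $I = [\taumed - c, \taumed + c]$ has a half-width $c$ that is \emph{not} controlled by $\Var_v(\Tin\cap S) = O(\norm{v}_2^2)$: the $1 - \alpha/4$ quantile range of $\Tin$ is determined by the whole dataset $\Tin$, most of which may be adversarial. All that the hypothesis gives is $\inprod{\tcov_{\frac1n\1}(\Tmid)}{vv^\top} \leq \frac18 R^2\norm{v}_2^2$, and since this is normalized by $n$ rather than $|\Tmid|$, when $|\Tmid| = \Theta(\alpha n)$ the per-point variance can be $\Theta(R^2 \norm{v}_2^2/\alpha)$, so $c$ can be as large as $\Theta(R\norm{v}_2/\alpha)$. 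The good points in $S \cap \Tin$ satisfy $|Y_i - \mu_v(S\cap\Tin)| = O(\norm{v}_2)$ on average, and by the argument proving \eqref{eq:meanclose} one gets $|\mu_v(S\cap\Tin) - \taumed| \leq c + O(\norm{v}_2)$, but this bound does \emph{not} imply $|Y_i - \taumed|$ is small for $i \in S$: if $\mu_v(S\cap\Tin)$ sits near $\taumed \pm c$, a typical good point has $(Y_i - \taumed)^2 \approx c^2 \gg \tau$, so $s_i = 1$ and the dropout deletes almost all of $S\cap\Tin$. The paper sidesteps this by setting the score to $(Y_i - (\taumed + c))^2$ (resp.\ $(Y_i - (\taumed - c))^2$) for points outside $I$: then the $c$-offset cancels and the bound $\frac1{|S\cap\Tin|}\sum_{i \in S\cap\Tin} s_i \leq 6\Var_v(\Tin\cap S)$ (leading to \eqref{eq:sscoresinit}) holds regardless of how large $c$ is. For the same reason, your attempt to pin $\taumed$ via Lemma~\ref{lem:covbounddist} applied to $\Tmid$ cannot work: plugging $\normop{\cov_w(\Tmid)} = \Omega(R^2/\alpha)$ and $\norm{w}_1/\norm{w_S}_1 = \Theta(1/\alpha)$ into that lemma only gives $|\bar Y_\Tmid - \inprod{v}{\mus}| = O(R/\alpha)$, which after dividing by $\tau = \Theta(R^2\norm{v}_2^2/\gamma)$ yields a safety ratio of $\Theta(\gamma/\alpha^2)$, nowhere near the required $O(\gamma/R^2)$.

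A second, more minor issue: your claim that in the non-short-circuit branch ``the total variance bound forces the average score over $\Tin$ to be $\Omega(1)$'' is also false for truncated scores. A large raw variance $\sum_{i\in\Tin}(Y_i - \bar Y_{\Tin})^2 > \frac12 nR^2\norm{v}_2^2$ could be caused by a handful of very extreme outliers. After truncating at $1$, each such point contributes at most $1$ to $\sum_i s_i$, and points in $\Tmid$ contribute little (since $\Tmid$'s variance is small), so the average truncated score over $\Tin$ could be as small as $O(\alpha)$. This undermines the $\gamma$-safety inequality even if the numerator were repaired. The paper's scores are unbounded (no truncation), so a large variance does force a large score total; the boundedness needed for $\RandDrop$'s Chernoff argument is obtained instead by a single deterministic removal of the extreme outliers (Line 10 of Algorithm~\ref{alg:fix}), which is safe because no $S$-point can have such a large score by \eqref{eq:sscoresinit}, and the remaining max score still halves geometrically across $\RandDrop$ iterations, guaranteeing termination in $O(\log d \log \frac d\delta)$ rounds. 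To salvage your approach you would at minimum have to redefine the scores relative to the interval endpoints, abandon the one-shot structure, and re-derive a threshold for when to stop dropping — at which point you essentially reconstruct the paper's algorithm.
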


Finally, we are ready to prove Lemma~\ref{lem:sorc}, the main export of this section.

\begin{lemma}\label{lem:sorc}
The output of $\SOrC$ satisfies the guarantees given in Line 2 of Algorithm~\ref{alg:sorc}.
\end{lemma}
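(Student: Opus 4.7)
The proof strategy is to case-split on which branch of $\SOrC$ is taken at Line 6. In the ``cluster'' branch, where $\inprod{\tcov_{\frac 1 n \1}(\Tmid)}{vv^\top} \le \frac{1}{8} R^2 \norm{v}_2^2$, the algorithm invokes $\Fix(\Tin,\alpha,v,\delta,R)$ and returns its output as $\Tout^{(0)}$. The output guarantees follow directly from Lemma~\ref{lem:fix}: with failure probability at most $\delta$, we obtain $\inprod{\tcov_{\frac 1 n \1}(\Tout^{(0)})}{vv^\top} \le \frac{1}{2} R^2 \norm{v}_2^2$, and $\gamma$-saturation of $\frac 1 n \1_{\Tin}$ is transferred to $\frac 1 n \1_{\Tout^{(0)}}$. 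This matches the ``one subset'' specification in Line 2 of Algorithm~\ref{alg:sorc}.

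The main case is the ``split'' branch, where the variance check fails. Here I must verify that for at least one tested $\tau_0 = \taumed \pm c_k \norm{v}_2$, with $c_k := 2^{-k}\sqrt{2048/(\beta^2\alpha)}$ and $0 \le k \le K := \log_2(2048/(\beta^2\alpha))$, the subroutine $\TBOC$ returns a split rather than a tail bound. Given such a split, Lemma~\ref{lem:tboc} supplies \eqref{eq:twosetsdef}, \eqref{eq:twosetssizebound}, and deterministic preservation of $\gamma$-saturation in at least one child, completing the output specification of $\SOrC$ for the ``two subsets'' case.

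I will argue by contradiction: suppose every $\TBOC$ call returns a tail bound. Applying Lemma~\ref{lem:tboc} at each of the $K+1$ scales yields
\[\rho^+(\taumed + c_k\norm{v}_2),\ \rho^-(\taumed - c_k\norm{v}_2)\ \le\ \frac{128\gamma}{\beta^2 c_k^2}\ =\ \frac{\gamma\alpha\cdot 4^k}{16}\qquad\text{for all }0\le k\le K,\]
where tails are over $\Tin$. Since $\Tmid\subseteq\Tin\cap 2I$, I bound $\inprod{\tcov_{\frac 1 n \1}(\Tmid)}{vv^\top}\le \frac{1}{n}\sum_{i\in\Tmid}(Y_i-\taumed)^2$ by a layered tail-integration argument: partition the values $|Y_i-\taumed|$ into the core region $[0,\norm{v}_2]$ (contributing at most $\norm{v}_2^2$ since the normalization factor is $n$ and $|\Tmid|\le n$), the dyadic annuli $[c_{k+1}\norm{v}_2,c_k\norm{v}_2]$ for $0\le k<K$ (each contributing $O(\gamma/\beta^2)\norm{v}_2^2$ in the regime where $\gamma\alpha\cdot 4^k\lesssim 1$ and $O(1/(\beta^2\alpha\cdot 4^k))\norm{v}_2^2$ otherwise, the latter summing geometrically to the same order), and the far region beyond $c_0\norm{v}_2$ (controlled by the $k=0$ tail bound together with the a priori bound $|Y_i - \taumed| \le 2c$ on $\Tmid$). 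Summing over the $K = O(\log(1/(\alpha\beta)))$ annuli yields
\[\inprod{\tcov_{\frac 1 n \1}(\Tmid)}{vv^\top}\ =\ O\!\left(\frac{\gamma\log(1/(\alpha\beta))}{\beta^2}\right)\norm{v}_2^2,\]
which, under the hypothesis $R = \Omega(\beta^{-1}\sqrt{\gamma\log(1/(\alpha\beta))})$, is strictly less than $\frac{1}{8}R^2\norm{v}_2^2$, contradicting the hypothesis of Case 2.

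The main obstacle is executing the layered tail aggregation cleanly, in particular handling the transition between annuli where the $\TBOC$-provided tail bound is non-vacuous and those where only the trivial bound $\rho\le 1$ applies, as well as the ``far tail'' contributions from points with $|Y_i-\taumed|\in(c_0\norm{v}_2,2c]$ which require using that $\Tmid\subseteq 2I$ together with the $k=0$ tail bound. Once this computation is done with the specified constants in $R$, the contradiction closes the split-branch case, and combining with the direct application of Lemma~\ref{lem:fix} in the cluster branch establishes Lemma~\ref{lem:sorc}.
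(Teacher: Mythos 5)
Your overall strategy matches the paper's proof of Lemma~\ref{lem:sorc} closely: the cluster branch is dispatched to Lemma~\ref{lem:fix} exactly as the paper does, and the split branch is argued by contradiction, using the tail-bound guarantee of $\TBOC$ (Lemma~\ref{lem:tboc}) at the $K+1$ dyadic thresholds to bound $\inprod{\tcov_{\frac 1 n \1}(\Tmid)}{vv^\top}$ and derive a contradiction with the check on Line~6. The main presentational difference is that the paper passes through the distributional identity $\E[Z^2]=\int_0^{Z_m} 2tG(t)\,dt$ and then introduces a dyadic cover of $[\norm{v}_2,\,C\norm{v}_2]$ inside that integral, whereas you sum discrete annulus contributions directly; these are equivalent computations.

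One step you flag as an ``obstacle'' is genuinely where you are currently incomplete, and the paper resolves it differently than your sketch suggests. Your treatment of the far region $\{|Y_i-\taumed|>c_0\norm{v}_2\}$ uses the $k=0$ tail bound to control the \emph{count} of far points but multiplies by the a~priori length $(2c)^2$ without a bound on $c$; as written this does not close, since $c$ could be arbitrarily large and is only defined through quantiles of $\{Y_i\}$. The paper instead first uses the tail bound at $k=1$ (combining $\rho^+$ and $\rho^-$) to argue that the middle $1-\tfrac{\alpha}{4}$ quantiles are within $\sqrt{512/(\beta^2\alpha)}\,\norm{v}_2$ of $\taumed$, hence $c\le \tfrac{C}{2}\norm{v}_2$ and $2I\subset[\taumed-C\norm{v}_2,\taumed+C\norm{v}_2]$ --- which makes the far region entirely empty and removes the need for any separate far-tail accounting. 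Establishing that bound on $c$ \emph{before} the annulus sum is the cleanest way to finish your argument; without it, the dyadic decomposition does not cover all of $\Tmid$.
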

\begin{proof}
If the check in Line 6 passes (the one subset case), correctness of $\SOrC$ follows immediately from the guarantees of $\Fix$ in Lemma~\ref{lem:fix}. We now focus on the two subset case.

Assume throughout this proof that the $\{Y_i\}_{i \in \Tmid}$ are ordered by distance to $\taumed$, so $|Y_1 - \taumed| \le \ldots \le |Y_{m} - \taumed|$, where $m \defeq |\Tmid|$; we order the remaining elements $i \in \Tin \setminus \Tmid$ arbitrarily. We also define $\taumed$, $\Tmid$, $c$, $I$, and $2I$ as in Lines 4-6 of $\SOrC$. If Line 9 outputs a split for any $k$, then by Lemma~\ref{lem:tboc}, \eqref{eq:twosetsdef}, \eqref{eq:twosetssizebound} are satisfied, and the saturation condition is met for one of the children. It remains to show that some value of $k$ will result in the split case of $\TBOC$. We show this by contradiction; if all $k$ resulted in $\TBOC$ returning with a tail bound guarantee, we prove $\Tmid$ would have passed Line 6. Assume for the remainder of the proof that $\TBOC$ failed to find a split for all $k$.

First, we bound the length of the interval $2I$. Because $\SOrC$ failed to return a split for $k = 1$, by combining the corresponding tail bounds \eqref{eq:tailbound},
\[\Pr_{i \sim_{\textup{unif}} \Tin} \Brack{\left|Y_i - \taumed \right| > \sqrt{\frac{512}{\beta^2 \alpha}} \norm{v}_2} \le \frac \alpha 4.\]
Thus, we conclude
\[2I \subset \Brack{\taumed - C\norm{v}_2, \taumed + C\norm{v}_2},\text{ for } C \defeq \sqrt{\frac{2048}{\beta^2 \alpha}}.\]
We proceed to bound $\inprod{\tcov_{\frac 1 n \1}(\Tmid)}{vv^\top}$ to obtain our desired contradiction. Observe that
\begin{equation}\label{eq:tcovzbound}
\begin{aligned}
\tcov_{\frac 1 n \1}(\Tmid) &\preceq \cov_{\frac 1 n \1}(\Tmid) \\
&= \frac{1}{|\Tmid|} \sum_{i \in \Tmid} \Par{X_i - \mu_{\frac 1 n \1}\Par{\Tmid}}\Par{X_i - \mu_{\frac 1 n \1}\Par{\Tmid}}^\top \\
&\preceq \frac{1}{|\Tmid|} \sum_{i \in \Tmid} \Par{X_i - \bar{X}}\Par{X_i - \bar{X}}^\top \text{ where } \inprod{v}{\bar{X}} = \taumed \\
\implies \inprod{\tcov_{\frac 1 n \1}(\Tmid)}{vv^\top} &\le \frac{1}{|\Tmid|} \sum_{i \in \Tmid} \Par{Y_i - \taumed}^2.
\end{aligned}
\end{equation}
Here, we used the definitions of $\tcov$, $\cov$ in the first two lines, and Fact~\ref{fact:meanshift} in the third. The last follows by definition of $\{Y_i\}_{i \in \Tin}$ and $\taumed$. Define the random variable $Z$ to be the realization of $\abs{Y_i - \taumed}$ for $i$ a uniform draw from $\Tmid$, let $Z_i \defeq \abs{Y_i - \taumed}$, and let $G(t) = \Pr[Z \geq t]$ be the inverse cumulative density function of $Z$. Notice that directly expanding implies that (where we let $m \defeq |\Tmid|$, $Z_0 \defeq 0$, and recall we argued $Z_m \le C\norm{v}_2$ earlier)
  \[
    \E[Z^2] = \sum_{i \in [m]} (Z_i^2 - Z_{i-1}^2)G(Y_i) = \int_0^{Z_{m}} 2tG(t) dt \le \int_0^{C\norm{v}_2} 2tG(t)dt.
  \]
  Define now $K(t)$ for each $\norm{v}_2 \leq t \leq C\norm{v}_2$ to be the smallest $k$ such that $t^{(k)} \defeq \frac{C}{2^k}\norm{v}_2 \leq t$, so $K(C\norm{v}_2) = 0$, $K(t) = 1$ for $t \in \left[\frac{C}{2}\norm{v}_2, C\norm{v}_2\right)$, and so on. By construction, for all relevant $t$,
  \[
    t^{(K(t))} \leq t < 2t^{(K(t))}.
  \]
  Since $G$ is decreasing in its argument, we can write
  \[
    \int_{0}^{C\norm{v}_2} 2tG(t) dt \leq \int_0^{\norm{v}_2} 2t dt + \int_{\norm{v}_2}^{C\norm{v}_2} 2t G\left(t^{(K(t))} \right) dt \le \norm{v}_2^2 + \int_{\norm{v}_2}^{C\norm{v}_2} 2t G\left(t^{(K(t))} \right) dt.
  \]
  Now, for all $0 \le k \le \log_2(\frac{80}{\beta^2\alpha})$, we recall that we assumed both calls to $\TBOC$ with thresholds $\taumed \pm \frac{C}{2^k}\norm{v}_2$ failed to produce a split, and hence certify a tail bound \eqref{eq:tailbound}. Thus,
  \[G\Par{t^{(k)}} \le \frac{512\gamma}{\beta^2\Par{t^{(k)}}^2} \norm{v}_2^2 \le \frac{2048\gamma}{\beta^2t^2} \norm{v}_2^2,\text{ for any } t \text{ with } K(t) = k.\]
  Here, the first inequality used both tail bounds in \eqref{eq:tailbound} and accounted for the fact that the quantizations $\rho^+$, $\rho^-$ are defined over $\Tin$, and $G$ is defined over $\Tmid$ with $|\Tmid| \ge \thalf |\Tin|$; the second inequality used that for any such $t$, $t < 2t^{(k)}$. Putting all these pieces together,
  \[\E[Z^2] \le \norm{v}_2^2 + \int_{\norm{v}_2}^{C\norm{v}_2} \frac{4096\gamma}{\beta^2 t} \norm{v}_2^2 dt \le\norm{v}_2^2 + \frac{4096\gamma}{\beta^2} \log(C) \norm{v}_2^2 = O\Par{\frac{\gamma}{\beta^2} \cdot \log\Par{\frac{1}{\alpha\beta}}}\norm{v}_2^2.\]
  Finally, recall \eqref{eq:tcovzbound} shows that $\inprod{\tcov_{\frac 1 n \1}(\Tmid)}{vv^\top} \le \E[Z^2]$. Thus, the set $\Tmid$ should have passed the check in Line 6 under the assumed lower bound on $R$, yielding the desired contradiction.
\end{proof}

\subsection{Implementation of \texorpdfstring{$\TBOC$}{TailBoundOrCluster}}\label{ssec:tboc}

In this section, we prove Lemma~\ref{lem:tboc} by providing $\TBOC$ and giving its analysis.

\begin{algorithm}[ht!]
  \caption{$\TBOC(\Tin, v, \beta, \tau_0)$}\label{alg:tboc}
  \begin{algorithmic}[1]
    \STATE \textbf{Input:} $\Tin \subseteq T$, $v \in \R^d$, $\beta \in (0, 1]$, $\tau_0 \in \R$
    \STATE \textbf{Output:} Either outputs $(\tau, r)$ such that $\Tout^{(1)} \defeq \{X_i \mid \inprod{v}{X_i} \le \tau + r\norm{v}_2\},
    \Tout^{(2)} \defeq \{X_i \mid \inprod{v}{X_i} \ge \tau - r\norm{v}_2\}$ satisfy 
    \begin{align*}\left|\Tout^{(1)}\right|^{1 + \beta} + \left|\Tout^{(2)}\right|^{1 + \beta} < \left|\Tin\right|^{1 + \beta},\;    \min\Par{ 1-\frac{\abs{\Tout^{(1)}}}{\abs{\Tin}}, 1-\frac{\abs{\Tout^{(2)}}}{\abs{\Tin}} } \geq \frac{2\gamma}{r^2},                    \end{align*}
    or returns ``Tail bound'' guaranteeing that for $\taumed \defeq \med\Par{\Brace{\inprod{v}{X_i} \mid i \in \Tin}}$ (following \eqref{eq:tailbound})
    \[\rho^+(t) \le \frac{128\gamma}{\beta^2\left|\tau_0 - \taumed\right|^2}\norm{v}_2^2\text{ if }\tau_0 \ge \taumed,\text{ and } \rho^-(t) \le \frac{128\gamma}{\beta^2|\tau_0 - \taumed|^2}\norm{v}_2^2 \text{ if } \tau_0 \le \taumed.\]
    \STATE $Y_i \gets \inprod{v}{X_i}$ for all $i \in \Tin$, $\taumed \gets \med(\{Y_i \mid i \in \Tin\})$
    \STATE $j \gets 0$
    \IF{$\tau_0 > \max_{i \in \Tin} Y_i$ or $\tau_0 < \min_{i \in \Tin} Y_i$}
    \RETURN ``Tail bound''
    \ENDIF
    \IF{$\tau_0 \geq \taumed$}
    \WHILE{$\tau_j \geq \taumed$}
    \STATE $g_j \gets \rho^+(\tau_j)$ and $r_j \gets \sqrt{\frac{2\gamma}{g_j}}$
    \IF{$\Tout^{(1)}$, $\Tout^{(2)}$ induced by $(\tau_j - r_j\norm{v}_2, r_j)$ satisfy \eqref{eq:twosetssizebound}}
    \RETURN $(\tau_j-r_j\norm{v}_2,r_j)$
    \ELSE
    \STATE $\tau_{j+1} \gets \tau_j - 2r_j$
    \ENDIF
    \STATE $j \gets j+1$
    \ENDWHILE
    \ELSE
    \WHILE{$\tau_j \leq \taumed$}
    \STATE $\ell_j \gets \rho^-(\tau_j)$ and $r_j \gets \sqrt{\frac{2\gamma}{\ell_j}}$
    \IF{$\Tout^{(1)}$, $\Tout^{(2)}$ induced by $(\tau_j + r_j\norm{v}_2, r_j)$ satisfy \eqref{eq:twosetssizebound}}
    \RETURN $(\tau_j+r_j\norm{v}_2,r_j)$
    \ELSE
    \STATE $\tau_{j+1} \gets \tau_j + 2r_j$
    \ENDIF
    \STATE $j \gets j+1$
    \ENDWHILE    
    \ENDIF
    \RETURN ``Tail bound''
  \end{algorithmic}
\end{algorithm}

\restatetboc*
\begin{proof}
This proof proceeds in two parts which we show separately. First, we demonstrate that if $\tau_0 \ge \taumed$ and none of the runs of Lines 9-17 in Algorithm~\ref{alg:tboc} return with a valid split, then indeed we can certify the tail bound \eqref{eq:tailbound} holds (and a similar guarantee holds for $\tau_0 < \taumed$). Second, we show whenever a split is returned and $\Tin$ is $\gamma$-saturated, then one of the output sets will be as well.

\textit{Correctness of tail bound.} We consider the case $\tau_0 \ge \taumed$ here, as the other case follows symmetrically. Let $K + 1$ be the first index such that $\tau_{K + 1} < \taumed$, so the algorithm checks all pairs $(\tau_j - r_j\norm{v}_2, r_j)$ for $0 \le j \le K$. Suppose that all such induced splits fail to satisfy \eqref{eq:twosetssizebound}. For all $0 \le j \le K$, let $T^{(1)}_j$, $T^{(2)}_j$ be the induced sets by the pair $(\tau_j - r_j\norm{v}_2, r_j)$. Then by construction
\[\frac{\abs{T^{(1)}_j}}{\abs{\Tin}} = 1 - g_j,\; \frac{\abs{T^{(2)}_j}}{\abs{\Tin}} = g_{j + 1}.\]
For all $0 \le j \le K - 1$, recall $(\tau_j - r_j\norm{v}_2, r_j)$ did not pass the check \eqref{eq:twosetssizebound}, but the second expression reads $g_j \ge \frac{2\gamma}{r_j^2}$ (since $g_j \le \thalf \le 1 - g_{j + 1}$), which is true by construction. Thus the first check did not pass and we conclude
\begin{equation}\label{eq:gjbound}g_{j + 1}^{1 + \beta} + \Par{1 - g_j}^{1 + \beta} > 1 \implies g_{j + 1}^{1 + \beta} > g_j.\end{equation}
By applying this inequality inductively with $j = K - 1$, and recalling $g_{K} \le \half$, we have
\begin{equation}\label{eq:kbound}\Par{\half}^{(1 + \beta)^{K}} \ge g_{K}^{(1 + \beta)^{K}} > g_0 \implies 2^{(1 + \beta)^{K}} < \frac{1}{g_0} \implies K = O\Par{\frac{1}{\beta}\log\log\Par{\frac{1}{g_0}}}.\end{equation}
Next, since $\tau_{K + 1} = \tau_0 - 2\sum_{0 \le j \le K} r_j \norm{v}_2 < \taumed$, we have
\[\tau_0 - \taumed < 2\sum_{j = 0}^{K} r_j \norm{v}_2 = \sqrt{8\gamma} \norm{v}_2\sum_{j = 0}^{K} \sqrt{\frac{1}{g_j}} < \sqrt{8\gamma} \norm{v}_2\sum_{j = 0}^{K} A^{\frac{1}{(1 + \beta)^j}},\text{ for } A \defeq \sqrt{\frac{1}{g_0}}.\]
where we used our earlier guarantee \eqref{eq:gjbound} inductively. By Lemma~\ref{lem:supergeometric}, we have the desired tail bound:
\[\tau_0 - \taumed < \sqrt{8\gamma} \norm{v}_2 \cdot \frac{4A}{\beta} \le \frac{\sqrt{128}}{\beta}\sqrt{\frac{\gamma}{g_0}} \implies \rho^+(\tau_0) = g_0 < \frac{128\gamma}{\beta^2\abs{\tau_0-\taumed}^2} \norm{v}_2^2.\]

\textit{Correctness of split.} Suppose that we find $(\tau, r)$ such that for
\[\Tout^{(1)} \defeq \{X_i \mid \inprod{v}{X_i} \le \tau + r\norm{v}_2\},\;
\Tout^{(2)} \defeq \{X_i \mid \inprod{v}{X_i} \ge \tau - r\norm{v}_2\},\]
we have
\begin{equation}\label{eq:tinscores}\min\Par{1 - \frac{\left|\Tout^{(1)}\right|}{|\Tin|},1 - \frac{\left|\Tout^{(2)}\right|}{|\Tin|}} \ge \frac{2\gamma}{r^2}.\end{equation}
We show that the downweighting $\frac 1 n \1_{\Tin} \to \frac 1 n \1_{\Tout^{(i)}}$ is a weight removal with respect to $\gamma$-safe scores for one of $i = 1, 2$. Let $\tau^* \defeq \inprod{v}{\mus}$ where $\mus$ is the ``true mean vector'' in Assumption~\ref{assume:sexists}. Clearly, either $\tau^* \ge \tau$ or $\tau^* \le \tau$; suppose without loss of generality $\tau^* \ge \tau$ as the other case follows symmetrically. Define the scores $\{s_i\}_{i \in \Tin}$ to be $1$ if $i \in \Tin \setminus \Tout^{(2)}$ and $0$ otherwise; then the downweighting $\frac 1 n \1_{\Tin} \to \frac 1 n \1_{\Tout^{(2)}}$ is of the form in Lemma~\ref{lem:safeimpliessat}, with respect to these scores. Note that
\[\frac{1}{\gamma} \sum_{i \in \Tin} \frac{1}{|\Tin|} s_i = \frac{1}{\gamma} \Par{1 - \frac{\abs{\Tout^{(2)}}}{\abs{\Tin}}} \ge \frac{2}{r^2}\]
by the assumption \eqref{eq:tinscores}. To apply Lemma~\ref{lem:safeimpliessat}, it remains to show that
\begin{equation}\label{eq:stinbound}\sum_{i \in S \cap \Tin} \frac{1}{\abs{S \cap \Tin}} s_i \le \frac{2}{r^2}.\end{equation}
However, we can extend the definition of the scores $\{s_i\}_{i \in S}$ to include points in $S \setminus \Tin$, so that $s_i$ is the indicator function of $\inprod{v}{X_i} < \tau - r\norm{v}_2$ for all $i \in S$, which is consistent with our definitions $\{s_i\}_{i \in \Tin}$. Then by Chebyshev's inequality and Assumption~\ref{assume:sexists}, using $\inprod{v}{\mus} \ge \tau$,
\begin{equation}\label{eq:sbound}\sum_{i \in S} \frac{1}{|S|} s_i \le \Pr_{i \sim_{\textup{unif}} S}\Brack{\inprod{v}{X_i - \mus}^2 > r^2\norm{v}_2^2} \le \frac{1}{r^2}.  \end{equation}
By Lemma~\ref{lem:gammasaturated}, if $\Tin$ is $\gamma$-saturated then $\abs{S \cap \Tin} \ge \half|S|$; combining with \eqref{eq:sbound} yields \eqref{eq:stinbound}.
\end{proof}

\begin{lemma}\label{lem:supergeometric}
Let $A > \sqrt{2}$, $\beta \in (0, 1]$, and let $K$ be such that $A^{\frac{1}{(1+\beta)^K}} > \sqrt{2}$. Then, we have
\[\sum_{j = 0}^K A^{\frac{1}{(1 + \beta)^j}} \le \frac{4A}{\beta}.\]
\end{lemma}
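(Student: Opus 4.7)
The plan is to exploit the defining recursion of the sequence $a_j \defeq A^{1/(1+\beta)^j}$, namely $a_j^{1+\beta} = a_{j-1}$, to turn the sum into a telescoping one. Rewriting the recursion as $a_{j-1} = a_j \cdot a_j^\beta$ gives
\[
a_{j-1} - a_j \;=\; a_j\Par{a_j^\beta - 1}.
\]
The elementary inequality $e^x \ge 1+x$ applied with $x = \beta\ln a_j$ yields $a_j^\beta - 1 \ge \beta \ln a_j$, so
\[
a_{j-1} - a_j \;\ge\; \beta\, a_j \ln a_j.
\]

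Next I would use the hypothesis that $a_K > \sqrt{2}$ together with the fact that $(a_j)$ is decreasing in $j$ (since each $a_j > 1$ and $a_{j+1} = a_j^{1/(1+\beta)}$), giving $\ln a_j \ge \ln a_K > \tfrac{1}{2}\ln 2$ for every $0 \le j \le K$. Substituting, this yields the key one-step bound
\[
a_j \;\le\; \frac{2}{\beta \ln 2}\Par{a_{j-1} - a_j} \quad \text{for all } 1 \le j \le K.
\]

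Telescoping the right-hand side over $j = 1, \dots, K$ then gives
\[
\sum_{j=1}^K a_j \;\le\; \frac{2}{\beta \ln 2}\Par{a_0 - a_K} \;\le\; \frac{2A}{\beta \ln 2}.
\]
Adding in the $j=0$ term $a_0 = A$, and using $\beta \le 1$ to write $A \le A/\beta$, one gets
\[
\sum_{j=0}^K a_j \;\le\; A + \frac{2A}{\beta \ln 2} \;\le\; \frac{A}{\beta}\Par{1 + \frac{2}{\ln 2}} \;<\; \frac{4A}{\beta},
\]
since $1 + 2/\ln 2 \approx 3.885 < 4$. There is no real obstacle here; the only mildly delicate point is making sure the lower bound $a_j \ge \sqrt{2}$ applies at every index in the sum, which is guaranteed by the monotonicity of $a_j$ together with the stated hypothesis on $a_K$. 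The linearization $a_j^\beta \ge 1+\beta\ln a_j$ is crucially the right tool: any purely geometric-ratio argument (e.g.\ bounding $a_{j+1}/a_j$) would yield $O(A/\beta)$ but with a worse constant, so the telescoping form above is what produces the sharp factor of $4$.
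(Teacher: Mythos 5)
Your proof is correct. You use a fundamentally different argument from the paper: the paper bounds the sum by $A + \int_0^K A^{1/(1+\beta)^x}\,dx$ and then evaluates the integral in closed form via the exponential integral $\textup{Ei}$, using $\log(1+\beta) \ge \beta/2$ and a (somewhat delicate) bound $\textup{Ei}(\log A) + 1 \le \tfrac{3}{2}A$ for $A > \sqrt{2}$. You instead exploit the multiplicative recursion $a_{j-1} = a_j^{1+\beta}$ directly, linearize $a_j^\beta - 1 \ge \beta \ln a_j$ via $e^x \ge 1+x$, and use the hypothesis $a_K > \sqrt{2}$ (together with monotonicity of $a_j$) to get a uniform lower bound $\ln a_j > \tfrac{1}{2}\ln 2$ on every term, which makes the sum telescope. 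Your route is more elementary and self-contained — no special functions, no calculus beyond $e^x \ge 1+x$ — and the arithmetic at the end ($1 + 2/\ln 2 < 4$) is transparent. It also makes clearer exactly where the hypothesis $A^{1/(1+\beta)^K} > \sqrt{2}$ enters (it pins down the constant in the per-term bound), whereas in the paper's proof this hypothesis is buried in the bound $\textup{Ei}(\log\sqrt{2}) > -1$. The one small thing worth stating explicitly (you gesture at it) is that $a_j > 1$ for all $j$ since $A > 1$, which is what justifies both the monotonicity $a_{j+1} = a_j^{1/(1+\beta)} < a_j$ and the positivity of $\ln a_j$.
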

\begin{proof}
Define $f(x) = A^{\frac{1}{(1 + \beta)^x}}$ for any $0 \le x \le K$, and note this is a decreasing function in $x$. Thus, since by direct computation the antiderivative of $A^{B^x}$ is $\frac{1}{\log B}\text{Ei}(B^x\log(A))$ where $\text{Ei}$ is the exponential integral,
\begin{align*}\sum_{j = 0}^K A^{\frac{1}{(1 + \beta)^j}} &\le A + \int_0^K f(x)dx = A + \frac{1}{\log(1+\beta)} \Par{\text{Ei}\Par{\log A} - \text{Ei}\Par{\frac{\log A}{(1+\beta)^K}}} \\
&\le A + \frac{2}{\beta}\Par{\text{Ei}\Par{\log A} + 1}.
\end{align*}
In the last line, we used $\log(1+\beta) \ge \frac \beta 2$ for $\beta \in (0, 1]$, $\frac{\log A}{(1 + \beta)^K} > \log(\sqrt 2)$ by assumption, and $\text{Ei}$ is increasing with $\text{Ei}(\log(\sqrt 2)) > -1$. The conclusion follows from $\text{Ei}(\log A) + 1 \le \frac{3}{2}A$ for $A > \sqrt{2}$.
\end{proof}

\subsection{Fixing a cluster via fast filtering}\label{ssec:fixing}

In this section, we prove Lemma~\ref{lem:fix} by providing $\Fix$ and giving its analysis. Before stating the algorithm, we provide a helper result which analyzes the effect of a ``randomized dropout scheme'' with respect to safe scores, and shows with high probability it is still safe.

\begin{algorithm}[ht!]
	\caption{$\RandDrop(T', \drd, s)$}\label{alg:randdrop}
	\begin{algorithmic}[1]
		\STATE \textbf{Input:} $T' \subseteq T$, $\drd \in (0, 1)$, $4\gamma$-safe scores $\{s_i\}_{i \in T'}$ with respect to $w \defeq \frac 1 n \1_{T'}$ such that $s_{\max} \defeq \max_{i \in T'} s_i \le 24|T' \cap S|$, and
		 \begin{equation}\label{eq:tscoresbound}\sum_{i \in T'} \frac{1}{|T'|} s_i \ge 288\gamma \log\Par{\frac{2}{\drd}}.\end{equation}
		\STATE \textbf{Output:} With failure probability $\le \drd$, outputs $T'' \subseteq T'$ such that if $w$ is $\gamma$-saturated, then $\frac1 n \1_{T''}$ is $\gamma$-saturated.
		\STATE $T'' \gets \emptyset$
		\FOR{$i \in T'$}
		\STATE $T'' \gets T'' \cup \{X_i\}$ with probability $1 - \frac{s_i}{s_{\max}}$
		\ENDFOR
		\RETURN $T''$
	\end{algorithmic}
\end{algorithm}

In other words, $\RandDrop$ removes points from $T'$ with probability proportional to their score.

\begin{lemma}\label{lem:randdrop}
The output of $\RandDrop$ satisfies the guarantees given in Line 2 of Algorithm~\ref{alg:randdrop}.
\end{lemma}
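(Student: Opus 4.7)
The strategy is to reduce the randomized step of $\RandDrop$ to the deterministic downweighting analyzed in Lemma~\ref{lem:safeimpliessat}, using Chernoff bounds to argue the realized removal fractions on $S \cap T'$ and on $T'$ concentrate sharply enough about their means. Let $s \defeq |S \cap T'|$, $t \defeq |T'|$, $s' \defeq |S \cap T''|$, $t' \defeq |T''|$, and set $\delta_S \defeq \E[s - s']/s$, $\delta_T \defeq \E[t - t']/t$, with $\delta_S'$ and $\delta_T'$ denoting the empirical analogues $(s-s')/s$ and $(t-t')/t$. The $4\gamma$-safety of the scores gives $\delta_S \le \delta_T/(4\gamma)$, and combining hypothesis \eqref{eq:tscoresbound} with $s_{\max} \le 24s$ yields
\[\delta_T t \;=\; \frac{1}{s_{\max}}\sum_{i \in T'} s_i \;\ge\; \frac{12\gamma t}{s}\log\Par{\frac{2}{\drd}} \;\ge\; 12\gamma\log\Par{\frac{2}{\drd}},\]
since $t \ge s$.

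Given this lower bound on the expected total drop, I would apply the multiplicative Chernoff lower tail to the independent Bernoullis $\Brace{\1[i \notin T'']}_{i \in T'}$ (mean $\delta_T t$): with probability at least $1 - \drd/2$, $t - t' \ge \delta_T t/2$, so $\delta_T' \ge \delta_T/2$. For the upper tail on the analogous sum over $S \cap T'$ with mean $\delta_S s$, I would argue in two regimes, unified by the estimate $s - s' \le \max(2\delta_S s,\, C\log(2/\drd))$ holding with probability at least $1 - \drd/2$ for an absolute constant $C$: the first summand comes from the standard $(1 + \eps)$-Chernoff when $\delta_S s$ is at least logarithmic, and the second from the Chernoff--Cram\'er bound $\Pr[X \ge a] \le \inf_{\theta > 0} e^{(e^\theta - 1)\mu - \theta a}$ in the opposite regime. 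In the first regime, $\delta_S' \le 2\delta_S \le \delta_T/(2\gamma) \le \delta_T'/\gamma$. In the second, $\delta_S' \le C\log(2/\drd)/s$, while the $t/s$ factor carried by the expected-drop lower bound forces $\delta_T' \ge \delta_T/2 \ge 6\gamma\log(2/\drd)/s$, again yielding $\delta_S' \le \delta_T'/\gamma$ after (if needed) enlarging the constant $288$ in \eqref{eq:tscoresbound}.

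A union bound over the two Chernoff events gives $\delta_S' \le \delta_T'/\gamma$ with probability at least $1 - \drd$. At this point the saturation conclusion is inherited from the proof of Lemma~\ref{lem:safeimpliessat}: the Bernoulli inequality $1 - \delta_T'/\gamma \ge (1 - \delta_T')^{1/\gamma}$ combined with $\gamma$-saturation of $w = \tfrac{1}{n}\1_{T'}$ gives
\[\frac{s'}{n} \;=\; (1 - \delta_S')\cdot\frac{s}{n} \;\ge\; (1 - \delta_T')^{1/\gamma}\alpha\Par{\frac{t}{n}}^{1/\gamma} \;=\; \alpha\Par{\frac{t'}{n}}^{1/\gamma},\]
so $\tfrac{1}{n}\1_{T''}$ is $\gamma$-saturated, as required. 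The main technical obstacle is the ``small-mean'' regime $\delta_S s = o(\log(1/\drd))$, where standard multiplicative Chernoff gives no useful relative error bound on $s - s'$; to cope, one must exploit the slack from $s_{\max} \le 24s$ (ensuring no single point contributes too large a share of $\delta_S s$) together with \eqref{eq:tscoresbound} (which supplies the extra $t/s$ factor in $\delta_T t$) to convert an additive tail bound into the multiplicative safety ratio that the saturation argument demands.
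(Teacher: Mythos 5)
Your proof takes essentially the same route as the paper: a Chernoff lower tail on the total drop over $T'$, a Chernoff upper tail on the drop over $S\cap T'$ driven by $4\gamma$-safety of the scores and the average-score lower bound \eqref{eq:tscoresbound}, followed by the Bernoulli-inequality argument from Lemma~\ref{lem:safeimpliessat}. The only packaging difference is your two-regime treatment of the $S$-side upper tail (which, as you note, forces a mild enlargement of the constant $288$); the paper avoids the case split by bounding $\Pr\Brack{\sum_{i\in T'\cap S} Z_i > (1+\Delta)\mu}$ with the single $\Delta$ for which $(1+\Delta)\mu$ equals the target $\tfrac{1}{2\gamma}\tfrac{|T'\cap S|}{|T'|}\sum_{i\in T'}\tfrac{s_i}{s_{\max}}$, observing that $4\gamma$-safety already forces $\Delta\ge 1$ while \eqref{eq:tscoresbound} together with $s_{\max}\le 24|T'\cap S|$ gives $\Delta\mu\ge 3\log(2/\drd)$, so the $\exp(-\Delta\mu/3)$ Chernoff form handles both regimes at once.
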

\begin{proof}
For all $i \in T'$, let $Z_i$ be the random variable defined as
\[Z_i = \begin{cases}
1 & \text{with probability } \frac{s_i}{s_{\max}} \\
0 & \text{with probability } 1 - \frac{s_i}{s_{\max}}
\end{cases}.\]
Note that the number of points removed from $T'$ and $T' \cap S$ are respectively $\sum_{i \in T'} Z_i$ and $\sum_{i \in T' \cap S} Z_i$. We now obtain high-probability bounds on both of these totals.

First, we lower bound $\sum_{i \in T'} Z_i$. Observe that $\E\Brack{\sum_{i \in T'}Z_i} = \sum_{i \in T'} \frac{s_i}{s_{\max}} $, and each $Z_i$ is Bernoulli. Thus we can apply a Chernoff bound to obtain
\[\Pr\Brack{\sum_{i \in T'} Z_i < \half \sum_{i \in T'} \frac{s_i}{s_{\max}}} \le  \exp\Par{-\frac{1}{8}\sum_{i \in T'} \frac{s_i}{s_{\max}}} \le \frac{\drd}{2},\]
where we used $s_{\max} \le 24|T'|$ and the assumed lower bound \eqref{eq:tscoresbound} to conclude
\[\sum_{i \in T'} \frac{s_i}{s_{\max}} \ge \sum_{i \in T'} \frac{s_i}{24|T'|} \ge 8\log\Par{\frac{2}{\drd}}.\]
Next, we upper bound $\sum_{i \in T' \cap S} Z_i$. We claim with failure probability at most $\frac{\drd}{2}$,
\begin{equation}\label{eq:claimtprimes}\sum_{i \in T' \cap S} Z_i \le \frac{1}{2\gamma} \frac{\abs{T' \cap S}}{|T'|} \sum_{i \in T'} \frac{s_i}{s_{\max}}. \end{equation}
Define $\mu \defeq \sum_{i \in T' \cap S} \frac{s_i}{s_{\max}}$ to be the expectation of the left hand side of \eqref{eq:claimtprimes}, and set
\[\Delta \defeq \frac{1}{\mu}\Par{\frac{1}{2\gamma} \frac{\abs{T' \cap S}}{|T'|} \sum_{i \in T'} \frac{s_i}{s_{\max}}} - 1\]
so that $(1 + \Delta)\mu$ is the right hand side of \eqref{eq:claimtprimes}. Recall that we assumed that $\{s_i\}_{i \in T}$ were $4\gamma$-safe; rearranging this definition (cf.\ Definition~\ref{def:safety}) yields
\[\mu \le \frac{1}{4\gamma} \cdot \frac{\abs{T' \cap S}}{|T'|} \sum_{i \in T'} \frac{s_i}{s_{\max}}\implies \Delta\mu = \frac{1}{2\gamma} \frac{\abs{T' \cap S}}{|T'|} \sum_{i \in T'} \frac{s_i}{s_{\max}} - \mu \ge \frac{1}{4\gamma} \cdot \frac{\abs{T' \cap S}}{|T'|} \sum_{i \in T'} \frac{s_i}{s_{\max}}.\]
However, since $\frac{\abs{T' \cap S}}{s_{\max}} \ge \frac{1}{24}$ by assumption, we use \eqref{eq:tscoresbound} and the above equation to conclude
\[\Delta\mu \ge 3\log\Par{\frac{2}{\drd}}.\]
Finally, a Chernoff bound shows the failure probability of \eqref{eq:claimtprimes} is at most $\exp\Par{-\frac{\Delta\mu}{3}} \le \frac{\drd}{2}$, as desired. Thus with probability at least $1 - \drd$,
\[\sum_{i \in T' \cap S}\frac{1}{|T' \cap S|} Z_i \le \frac{1}{\gamma} \sum_{i \in T'}\frac{1}{|T'|}  Z_i.\]
Now observe that the $\{Z_i\}_{i \in T'}$ meet the definition of $\gamma$-safe scores (Definition~\ref{def:safety}). Thus, Lemma~\ref{lem:safeimpliessat} applies with weights $\frac 1 n \1_{T'}$ and $\frac 1 n \1_{T''}$ and we obtain the conclusion.
\end{proof}

We are now ready to state the algorithm $\Fix$ and prove its guarantees in Lemma~\ref{lem:fix}.

\begin{algorithm}[ht!]
  \caption{$\Fix(\Tin, \alpha, v, \delta, R)$}\label{alg:fix}
  \begin{algorithmic}[1]
    \STATE \textbf{Input:} $\Tin \subseteq T$, $\alpha \in (0,\thalf)$, $v \in \R^d$, $\delta \in (0,1)$, $R \in \R_{\ge 0}$ satisfying (for a sufficiently large constant)
    \[R = \Omega\Par{\sqrt{\gamma\log\Par{\frac{\log d}{\delta}}}},\]
    such that for $\Tmid$ defined in Algorithm~\ref{alg:sorc}, $\inprod{\tcov_{\frac 1 n \1}(\Tmid)}{vv^\top} \le \frac 1 8 R^2\norm{v}_2^2$
    \STATE \textbf{Output:} Outputs $\Tout \subset \Tin$ with
    \[\inprod{\tcov_{\frac 1 n \1}(\Tout)}{vv^\top} \le \half R^2\norm{v}_2^2.\]
    If $\frac 1 n \1_{\Tin}$ is $\gamma$-saturated, so is $\frac 1 n \1_{\Tout}$, with failure probability $\le \delta$.
    \IF{$\inprod{\tcov_{\frac 1 n \1}(\Tin)}{vv^\top} \le \half R^2\norm{v}_2^2$}
    \RETURN $\Tin$
    \ENDIF
    \STATE $Y_i \gets \inprod{v}{X_i}$ for all $i \in \Tin$, $\taumed \gets \med\Par{\Brace{Y_i \mid i \in \Tin}}$
    \STATE $I \gets [\taumed - c, \taumed + c]$ is the smallest interval containing the $1 - \frac{\alpha}{4}$ quantiles of $\{Y_i \mid i \in \Tin\}$ for $c \in \R_{\ge 0}$ and $2I \gets [\taumed - 2c, \taumed + 2c]$
    \STATE Define scores $\{s_i\}_{i \in \Tin}$ by
    \[s_i \gets \begin{cases}
    0 & Y_i \in I \\
    (Y_i - (\taumed - c))^2 & Y_i \le \taumed - c \\
    (Y_i - (\taumed + c))^2 & Y_i \ge \taumed + c
    \end{cases}\]
    \STATE $\drd \gets \frac{\delta}{\Omega\Par{\log d \cdot \log \frac d \delta}}$ for a sufficiently large constant
    \STATE $\Tout \gets \Tin \setminus \{X_i \mid s_i \ge 12\norm{v}_2^2 |S|\}$
    \WHILE{$\inprod{\tcov_{\frac 1 n \1}(\Tout)}{vv^\top} > \half R^2\norm{v}_2^2$}
    \STATE $\Tout \gets \RandDrop(\Tout, \drd, \frac{s}{\norm{v}_2^2})$
    \ENDWHILE
    \RETURN $\Tout$
  \end{algorithmic}
\end{algorithm}

\restatefix*
\begin{proof}
This proof proceeds in three parts. First, we show that whenever the average score is small:
\[\frac{1}{|\Tout|}\sum_{i \in \Tout} s_i \le 288\gamma \log\Par{\frac{2}{\drd}}\norm{v}_2^2,\]
then the check in Line 11 will fail and the algorithm will terminate. Next, we show calls to $\RandDrop$ meet its input criteria so its conclusion holds inductively (correctness of Line 10 is also handled here). Finally, we show that $\Fix$ fails with probability at most $\delta$. Assume throughout that $\frac 1 n \1_{\Tin}$ is $\gamma$-saturated; else there is nothing to prove. We also use the following notation throughout:
\begin{equation}\label{eq:muvarvdef}\Var_v\Par{T'} \defeq \frac{1}{\abs{T'}} \sum_{i \in T'} \Par{Y_i - \mu_v\Par{T'}}^2,\text{ where } \mu_v\Par{T'} \defeq \inprod{v}{\mu_{\frac 1 n \1}\Par{T'}},\text{ for all } T' \subseteq T.\end{equation}

\textit{Small average score implies termination.} We show that whenever the average score is small: $\frac{1}{|\Tout|}\sum_{i \in \Tout} s_i \le 288\gamma \log\Par{\frac{2}{\drd}}\norm{v}_2^2$, we terminate since this implies
\[\inprod{\tcov_{\frac 1 n \1}(\Tout)}{vv^\top} \le \half R^2\norm{v}_2^2.\]
 To show this, we first prove
\begin{equation}\label{eq:boundnotin2i}\begin{aligned}s_i > \frac{1}{16}\Par{Y_i - \mu_{2I}}^2 \text{ for all } i \in \Tout, Y_i \not\in 2I,\\
\text{ where } \mu_{2I} \defeq \frac{1}{\abs{\Tin \cap \Brace{i \mid Y_i \in 2I}}}\sum_{i \in \Tout \cap \Brace{i \mid Y_i \in 2I}}Y_i.\end{aligned}\end{equation}
In other words, $\mu_{2I}$ is the mean of points in $2I$. To see this, if $Y_i = \taumed - 2c - \Delta$ for $\Delta > 0$,
\[s_i = \Par{c + \Delta}^2,\; \Par{Y_i - \mu_{2I}}^2 \le \Par{4c + \Delta}^2 < 16 s_i.\]
The case when $Y_i = \taumed + 2c + \Delta$ is handled similarly, which covers all $Y_i \not\in 2I$. Then, following notation \eqref{eq:muvarvdef},
\begin{align*}
\Var_v(\Tout) &= \sum_{i \in \Tout} \frac{1}{\abs{\Tout}} \Par{Y_i - \mu_v(\Tout)}^2 \le \sum_{i \in \Tout} \frac{1}{\abs{\Tout}} \Par{Y_i - \mu_{2I}}^2 \\
&= \sum_{i \in \Tout \cap \Brace{i \mid Y_i \in 2I}} \frac{1}{|\Tout|} \Par{Y_i - \mu_{2I}}^2 + \sum_{i \in \Tout \cap \Brace{i \mid Y_i \notin 2I}} \frac{1}{|\Tout|} \Par{Y_i - \mu_{2I}}^2 \\
&\le \sum_{i \in \Tout \cap \Brace{i \mid Y_i \in 2I}} \frac{1}{|\Tout|} \Par{Y_i - \mu_{2I}}^2 + 16\sum_{i \in \Tout} \frac{1}{|\Tout|} s_i \\
&\le \sum_{i \in \Tout \cap \Brace{i \mid Y_i \in 2I}} \frac{1}{|\Tout|} \Par{Y_i - \mu_{2I}}^2 + O\Par{\gamma \log\Par{\frac{1}{\drd}}}\norm{v}_2^2.
\end{align*}
Here, the first line used Fact~\ref{fact:meanshift}, the third used \eqref{eq:boundnotin2i} and that all scores are nonnegative, and the last used our assumption on the average score in $\Tout$. Thus,
\begin{align*}
\inprod{\tcov_{\frac 1 n \1}\Par{\Tout}}{vv^\top} &= \sum_{i \in \Tout} \frac{1}{n} \Par{Y_i - \mu_v(\Tout)}^2 \\
&\le \sum_{i \in \Tout \cap \{i \mid Y_i \in 2I\}} \frac{1}{n} (Y_i - \mu_{2I})^2 + O\Par{\gamma \log\Par{\frac{1}{\drd}}}\norm{v}_2^2 \\
&= \inprod{\tcov_{\frac 1 n \1}\Par{\Tout \cap \{i \mid Y_i \in 2I\}}}{vv^\top} + O\Par{\gamma \log\Par{\frac{1}{\drd}}}\norm{v}_2^2 \\
&\le \frac{1}{8} R^2\norm{v}_2^2 + O\Par{\gamma \log\Par{\frac{1}{\drd}}}\norm{v}_2^2 \le \half R^2 \norm{v}_2^2.
\end{align*}
In the second line we used $n \ge |\Tout|$ to handle the second term, the third line used the definition of $\tcov$, and the fourth line used the assumed bound on $\inprod{\tcov_{\frac 1 n \1}(\Tmid)}{vv^\top}$, and
\[\tcov_{\frac 1 n \1}\Par{\Tout \cap \{i \mid Y_i \in 2I\}} \preceq \tcov_{\frac 1 n \1}\Par{\Tmid}\]
since $\Tout \cap \{i \mid Y_i \in 2I\} \subseteq \Tmid$ and Fact~\ref{fact:meanshift} implies that dropping terms from the covariance formula and shifting to the mean only decreases Loewner order. The last line used the lower bound on $R$.

\textit{Correctness of calls to $\RandDrop$.} We first bound the average score in $\Tin \cap S$ at the beginning of the algorithm. Let $\Var_v(\Tin \cap S)$ denote the variance of $\Tin \cap S$ in the direction $v$ following \eqref{eq:muvarvdef}. We claim that the mean of $\Tin \cap S$ lies close to $I$: in particular,
\begin{equation}\label{eq:meanclose}\mu_{\frac 1 n \1}(\Tin \cap S) \in \Brack{\taumed - c - \sqrt{2\Var_v\Par{\Tin \cap S}}, \taumed + c + \sqrt{2\Var_v\Par{\Tin \cap S}}}.\end{equation}
If this were not the case, we would have a contradiction:
\begin{align*}
\Var_v\Par{\Tin \cap S} &\ge \frac{1}{\abs{\Tin \cap S}} \sum_{i \in \Tin \cap S \mid Y_i \in I} \Par{Y_i - \mu_v\Par{\Tin \cap S}}^2 \\
&> \frac{\abs{\Tin \cap S \cap \{i \mid Y_i \in I\}}}{\abs{\Tin \cap S}} \Par{2\Var_v\Par{\Tin \cap S}} \ge \Var_v\Par{\Tin \cap S}.
\end{align*}
The second inequality used that every summand is at least $2\Var_v\Par{\Tin \cap S}$ if \eqref{eq:meanclose} does not hold, and the last used that $I$ contains a $1 - \frac{\alpha}{4}$ proportion of the points in $\Tin$, and by Lemma~\ref{lem:gammasaturated} $\Tin \cap S$ contains at least $\frac{\alpha}{2}$ of the points in $\Tin$. Now using \eqref{eq:meanclose} and the definition of the scores,
\begin{equation}\label{eq:sscoresinit}
\begin{aligned}
\sum_{i \in \Tin \cap S} \frac{1}{\abs{\Tin \cap S}} s_i &\le \sum_{i \in \Tin \cap S} \frac{1}{\abs{\Tin \cap S}} \Par{\abs{Y_i - \mu_v\Par{\Tin \cap S}} + \sqrt{2\Var_v\Par{\Tin \cap S}}}^2 \\
&\le 6\Var_v\Par{\Tin \cap S} \le 12\norm{v}_2^2.
\end{aligned}
\end{equation}
In the last line, we used that $\Tin \cap S$ contains at least half the points in $S$ by Lemma~\ref{lem:gammasaturated}, so Assumption~\ref{assume:sexists} applies with a normalizing factor at most twice as large. This shows that Line 10 of Algorithm~\ref{alg:fix} preserves saturation, since it can only remove points in $\Tin \setminus S$ (if any point in $\Tin \cup S$ had a score larger than $12|S|\norm{v}_2^2$, it would violate \eqref{eq:sscoresinit}). This also shows that if at any point in running Algorithm~\ref{alg:fix} we have a $\gamma$-saturated subset $\Tout \subset \Tin$, then
\begin{equation}\label{eq:sscorestout}\sum_{i \in \Tout \cap S} \frac{1}{\abs{\Tout \cap S}} s_i \le 24\norm{v}_2^2.\end{equation}
This is because compared to \eqref{eq:sscoresinit}, we can at most double the normalizing factor by Lemma~\ref{lem:gammasaturated}, and all scores are nonnegative. By combining with the first part of this proof, whenever Line 11 passes,
\begin{equation}\label{eq:avgscorelarge}\frac{1}{|\Tout|} \sum_{i \in \Tout} s_i > 288\gamma\log\Par{\frac{2}{\drd}}\norm{v}_2^2,\end{equation}
and hence the scores are $4\gamma$-safe as required by $\RandDrop$. The second requirement of $\RandDrop$ is that $s_{\max} \le 24|\Tout \cap S|\norm{v}_2^2$, which is taken care of by Line 10 as $|S| \le 2|\Tout \cap S|$ by Lemma~\ref{lem:gammasaturated}. Finally, Line 11 implies \eqref{eq:avgscorelarge} by the first part of this proof, which is the third condition of $\RandDrop$.

\textit{Bounding failure probability.} We bound the failure probability in two steps. First, we show with probability at least $1 - \frac \delta 2$, there are at most (for a suitable constant)
\[N \defeq O\Par{\log d \cdot \log\frac{d}{\delta}}\]
calls to $\RandDrop$. Then, we union bound to show that all these calls to $\RandDrop$ pass with probability at least $1 - \frac \delta 2$. Combining gives the overall failure probability to $\Fix$.

To see the bound on $N$, observe that after Line 10, the largest score is at most $O(\norm{v}_2^2 d)$, and the algorithm ends when the largest score is at most a constant (since then \eqref{eq:avgscorelarge} clearly fails, at which point we terminate on Line 11 by the first part of this proof). Thus, the largest score can only halve at most $O(\log d)$ times. However, observing the implementation of $\RandDrop$, any point with score at least half the largest is dropped with probability at least $\thalf$, and hence after $O(\log \frac{d}{\delta})$ rounds, the largest score will halve with probability at least $1 - \frac{\delta}{\Omega(\log d)}$. Union bounding over all the phases of halving the max score implies after $N$ loops the algorithm terminates with probability $1 - \frac{\delta}{2}$.

Since there are at most $N$ calls to $\RandDrop$, it suffices to set $\drd = \frac{\delta}{2N}$ to check that all calls to $\RandDrop$ pass with probability $1 - \frac{\delta}{2}$. If all calls pass, we have the desired conclusion.
\end{proof}

\subsection{Runtime analysis}\label{ssec:runtime}

We now give a runtime bound for $\PartitionOneD$, and use it to obtain a similar bound on $\Partition$.

\begin{lemma}\label{lem:partitiononedruntime}
Let $n' \defeq |T'|$, where $T'$ is the input to $\PartitionOneD$. Then $\PartitionOneD$ can be implemented to run in time
\[O\Par{n' d + (n')^{1 + \beta}\Par{\frac{1}{\beta}\log\log d \cdot \log\Par{\frac{1}{\alpha\beta}} + \log d \log \frac d \delta}}.\]
\end{lemma}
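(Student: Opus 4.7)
The plan is to separate the preprocessing cost from the per-call cost, and then amortize the latter using the size potential $\sum_{T \in \mathcal{S}_{\textup{in}} \cup \mathcal{S}_{\textup{out}}} |T|^{1+\beta} \le (n')^{1+\beta}$, which is non-increasing throughout $\PartitionOneD$. First I would compute all projections $Y_i = \inprod{v}{X_i}$ for $i \in T'$ in total time $O(n'd)$, sort them, and build prefix-sum arrays of $\{Y_i\}$ and $\{Y_i^2\}$; the $O(n' \log n')$ sorting cost is absorbed into $O(n'd)$. The key structural observation is that every set in $\mathcal{S}_{\textup{in}}$ is a contiguous interval in the sorted order, since the overlapping threshold splits in \eqref{eq:twosetsdef} preserve this property. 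Consequently each such set is represented by $O(1)$ pointers, and the median, quantile, and variance-along-$v$ computations inside $\SOrC$ and $\TBOC$ all reduce to $O(\log n')$-time binary searches plus $O(1)$-time prefix-sum lookups.

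Next I would count the number of $\SOrC$ calls. The execution induces a binary tree whose internal nodes correspond to split calls and whose leaves correspond to cluster calls; by the potential bound each leaf set $T$ satisfies $\sum_T |T|^{1+\beta} \le (n')^{1+\beta}$, and since $|T| \ge 1$ there are at most $(n')^{1+\beta}$ leaves, hence also at most $(n')^{1+\beta}$ internal split nodes.

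For the per-call cost in the split branch, Lemma~\ref{lem:tboc}'s analysis shows that each $\TBOC$ invocation completes within $K = O(\frac{1}{\beta} \log\log d)$ inner iterations, each performing one quantile query and one size-bound check. Line 9 of $\SOrC$ runs $\TBOC$ for $O(\log \frac{1}{\alpha\beta})$ candidate values of $k$, so one split call costs $O(\frac{1}{\beta} \log\log d \cdot \log \frac{1}{\alpha\beta})$ --- here I use $\log n' = O(\log d)$ under the regime $\alpha \in [1/d, 1/\log^C d]$ to absorb the binary-search factor into the existing logarithms. In the cluster branch, $\Fix$'s main loop runs for $N = O(\log d \log \frac{d}{\delta})$ iterations by Lemma~\ref{lem:fix}, each dominated by a call to $\RandDrop$ of cost $O(|\Tin|)$ (with variance checks $O(1)$ via the prefix sums), for a total of $O(|\Tin| \log d \log \frac{d}{\delta})$ per cluster call.

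Summing, the total split contribution is $O((n')^{1+\beta} \cdot \frac{1}{\beta} \log\log d \cdot \log \frac{1}{\alpha\beta})$. For the cluster contribution, I would observe that the leaves of the split tree coincide with the inputs $\Tin$ of cluster calls, so $\sum_{\text{cluster}} |\Tin|^{1+\beta} \le (n')^{1+\beta}$; combined with $|\Tin| \ge 1$ (so that $|\Tin| \le |\Tin|^{1+\beta}$) this yields $\sum_{\text{cluster}} |\Tin| \le (n')^{1+\beta}$, and therefore a total cluster cost of $O((n')^{1+\beta} \log d \log \frac{d}{\delta})$. Adding the preprocessing $O(n'd)$ gives the claimed bound. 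The main obstacle is this amortization of $\Fix$'s cost across cluster leaves: without the $(1+\beta)$-power potential one cannot simultaneously bound the number of leaves and the sum of their sizes, and it is crucial that $|T| \ge 1$ on every leaf to convert the potential bound back into a bound on $\sum |\Tin|$.
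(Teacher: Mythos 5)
Your proof takes essentially the same approach as the paper: preprocess via projection and sorting in $O(n'd)$, represent each set in $\mathcal{S}_{\textup{in}}$ by interval endpoints, bound the number of $\SOrC$ calls by $(n')^{1+\beta}$ via the size potential, charge each split step $O(\frac{1}{\beta}\log\log d\cdot\log\frac{1}{\alpha\beta})$ using the $\kmax$ bound from Lemma~\ref{lem:tboc}, and amortize the $\Fix$ cost over cluster leaves using $\sum|\Tin|\le\sum|\Tin|^{1+\beta}\le(n')^{1+\beta}$. The one place where your argument departs slightly is the claim that the $O(\log n')$ binary-search factor on threshold queries in $\TBOC$ can be ``absorbed into the existing logarithms'' because $\log n'=O(\log d)$: this does not follow, since the split term contains no $\log d$ factor for it to hide in, and it is not generically dominated by the cluster term $\log d\log\frac{d}{\delta}$ (e.g.\ when $\delta$ is polynomially small in $d$ and $\beta=1/\log d$). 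That said, the paper's own proof simply asserts each threshold check is $O(1)$ given quantile information without detailing an implementation that avoids the binary search, so your writeup is no less rigorous than the original on this point, and the discrepancy is harmless for Theorem~\ref{thm:multifilter}.
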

\begin{proof}
We begin by computing all the points $Y_i \defeq \inprod{v}{X_i}$ for $i \in T'$ and sorting them, and store all quantiles (i.e.\ the number of points less than any given $Y_i$), which takes time $O(n' d + n' \log n')$.

Next, we bound the cost of running $\Fix$ on an input $\Tin$ of size $\nin$. Given access to quantile information, and since $\Fix$ is only ever called on a set which is formed after applying some number of splits to the original dataset $T'$, it is straightforward to implement Lines 3-10 in time $O(\nin)$. Moreover, each loop in Lines 11-13 costs $O(\nin)$ time, and by the proof of Lemma~\ref{lem:fix}, there are at most $O(\log d \log \frac{d}{\delta})$ loops. Thus overall the runtime of $\Fix$ is
\[O\Par{\nin \log d \log \frac{d}{\delta}}.\]
We now consider the cost of running $\TBOC$ with a given threshold $\tau_0$. If $\tau_0$ does not lie in the interval of $\{Y_i\}_{i \in \Tin}$, then the runtime is $O(1)$. Otherwise, consider the case $\tau_0 \ge \taumed$ (note $\taumed$ can be computed in constant time given quantile information). Since at least one point is larger than $\tau_0$, $g_0 \ge \frac{1}{n}$, and hence \eqref{eq:kbound} shows the number of threshold checks is bounded by $O(\frac{1}{\beta} \log\log d)$. Each threshold check takes constant time (we just need to compute the cardinalities of the induced $\Tout^{(1)}, \Tout^{(2)}$) and computing the next $g_j$ and $r_j$ takes constant time given quantile information, so the cost of $\TBOC$ is
\[O\Par{\frac{1}{\beta}\log\log d}.\]
Correspondingly, the cost of each run of Lines 8-11 of $\SOrC$ is bounded by
\[O\Par{\frac{1}{\beta}\log\log d \cdot \log\Par{\frac{1}{\alpha\beta}}}.\]
Now, consider the structure of $\PartitionOneD$. Lemma~\ref{lem:partitiononed} shows that there are at most $(n')^{1 + \beta}$ split steps total, so the total cost of all split steps (which run Lines 8-11 of $\SOrC$) is
\[O\Par{\frac{(n')^{1 + \beta}}{\beta}\log\log d \cdot \log\Par{\frac{1}{\alpha\beta}}}.\]
Finally, consider all nodes in the $\PartitionOneD$ which are parents of leaves. The sums of cardinalities of all such nodes is bounded by $(n')^{1 + \beta}$, so the cost of running $\Fix$ on all these nodes is
\[O\Par{(n')^{1 + \beta} \log d \log \frac{d}{\delta}}.\]
\end{proof}

As an immediate corollary, we obtain a runtime bound on $\Partition$.

\begin{corollary}\label{cor:pruntime}
Let $n_p \defeq |T_p|$ for some $T_p \subseteq T$. $\Partition$ called on input $T_p$ with parameter $C$ can be implemented to run in time%
\[O\Par{n_p^{1 + \beta} d \log d \log \frac d \delta + n_p^{1 + \beta}\Par{\frac{1}{\beta}\log\log d \cdot \log\Par{\frac{1}{\alpha\beta}}\log\frac d \delta + \log d \log^2 \frac d \delta}}.\]
\end{corollary}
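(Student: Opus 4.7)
The plan is to decompose the runtime of $\Partition$ into two contributions: the one-time cost of producing the $\ndir = \Theta(\log \tfrac{d}{\delta})$ directions $v_j = \my_p u_j$ on Line~3, and the aggregate cost of the nested calls to $\PartitionOneD$ inside the double loop on Lines~5--11. Since Lemma~\ref{lem:partitiononedruntime} already controls the cost of a single $\PartitionOneD$ call on an input of size $n'$, the argument will be essentially bookkeeping on top of that lemma.

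First I would handle the $v_j$ computation. A matrix--vector product against $\mm_p = \tcov_{\frac{1}{n}\1}(T_p)$ can be implemented in $O(n_p d)$ time (store the centered points once, then apply as a sum of rank-one updates), and $\my_p = \mm_p^{\log d}$ acts on a vector by iterating $\log d$ such products. Thus each $v_j$ is produced in $O(n_p d \log d)$ time, and all $\ndir$ of them in $O(n_p d \log d \log \tfrac{d}{\delta})$ total, which is absorbed into the first term of the claimed bound since $n_p \le n_p^{1+\beta}$. Crucially, these directions are fixed once at the start and reused across all layers, so they are paid for only once.

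Next I would bound the cost of the double loop by summing Lemma~\ref{lem:partitiononedruntime} layer by layer. The key inductive invariant, obtained by starting from $\mathcal{S}_0 = \{T_p\}$ and repeatedly applying the size potential \eqref{eq:partitiononedsize} of $\PartitionOneD$, is that $\sum_{T' \in \mathcal{S}_j} |T'|^{1+\beta} \le n_p^{1+\beta}$ for every $j$; in particular $\sum_{T' \in \mathcal{S}_j} |T'| \le n_p^{1+\beta}$ as well, since each $|T'| \le |T'|^{1+\beta}$. Each call is made with failure parameter $\delta/(2\ndir)$, which inflates the $\log(d/\delta)$ factor in Lemma~\ref{lem:partitiononedruntime} only by a constant because $\ndir$ is polylogarithmic in $d/\delta$. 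Summing the per-set bound of that lemma over $T' \in \mathcal{S}_{j-1}$ via the invariant gives a per-layer cost of
\[
O\!\left( n_p^{1+\beta} d \,+\, n_p^{1+\beta}\!\left( \tfrac{1}{\beta} \log\log d \cdot \log\!\tfrac{1}{\alpha\beta} \,+\, \log d \log\!\tfrac{d}{\delta} \right) \right),
\]
and multiplying through by the $\ndir = \Theta(\log \tfrac{d}{\delta})$ layers produces the second group of terms in the corollary, while the $n_p^{1+\beta} d \log \tfrac{d}{\delta}$ contribution merges into the first term.

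There is no real obstacle here --- the whole statement is a routine aggregation of Lemma~\ref{lem:partitiononedruntime} across layers, glued together by the monotone $(1+\beta)$-power size potential. The only points requiring a bit of care are (i) checking that the inductive invariant on $\sum |T'|^{1+\beta}$ genuinely propagates through $\PartitionOneD$ rather than only within a single call, and (ii) confirming that the $v_j$'s need not be recomputed at deeper layers, so the $n_p d \log d \log \tfrac{d}{\delta}$ setup cost is not multiplied by $\ndir$.
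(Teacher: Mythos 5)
Your proposal is correct and follows essentially the same approach as the paper, which simply references the analogous Corollary~\ref{cor:gpruntime}: compute the $v_j$'s once via $\log d$ matrix--vector products each costing $O(n_p d)$, then aggregate Lemma~\ref{lem:partitiononedruntime} across the $\ndir = \Theta(\log\frac{d}{\delta})$ layers using the monotone $(1+\beta)$-power size potential. Your additional remarks --- that $|T'| \le |T'|^{1+\beta}$ lets the invariant also control $\sum|T'|$, and that the shrunken failure parameter $\delta/(2\ndir)$ only inflates $\log\frac{d}{\delta}$ by a constant --- are exactly the details the paper leaves implicit.
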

\begin{proof}
The proof is identical to Corollary~\ref{cor:gpruntime}, where we use Lemma~\ref{lem:partitiononedruntime} to bound the cost over all elements of each $\mathcal{S}_j$, and there are $\ndir = \Theta(\log \frac d \delta)$ calls to $\PartitionOneD$. 
\end{proof}

\subsection{Full bounded covariance algorithm}\label{ssec:fullboundedcov}

Finally, we give our full algorithm for list-decodable mean estimation under Assumption~\ref{assume:sexists}. As in Section~\ref{ssec:fullgaussian}, we will reduce to the bounded diameter case via the algorithm $\NaiveCluster$ (cf.\ Lemma~\ref{lem:naivecluster}); we reproduce its guarantees for arbitrary failure probabilities as $\NaiveClusterPlus$.

\begin{lemma}[Lemma 12, \cite{DiakonikolasKKLT20}]\label{lem:naiveclusterplus}
There is a randomized algorithm, $\NaiveClusterPlus(T, \delta)$, which takes as input $T \subset \R^d$ satisfying Assumption~\ref{assume:sexists} and partitions it into disjoint subsets $\{T'_i\}_{i \in [k]}$ such that with probability at least $1 - \delta$, all of $S$ is contained in the same subset, and every subset has diameter bounded by $O(\frac{d^8}{\delta^2})$. The runtime of $\NaiveClusterPlus$ is $O(nd + n\log n)$.
\end{lemma}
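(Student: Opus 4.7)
The plan is to generalize the argument of Lemma~\ref{lem:naivecluster} (Lemma 12 of \cite{DiakonikolasKKLT20}) by parameterizing the failure probability with $\delta$; in the cited version this parameter was fixed to $1/d^2$, yielding the diameter bound $O(d^{12}) = O(d^8/(1/d^2)^2)$, which is consistent with the new claim. The algorithm itself is essentially identical to the one in \cite{DiakonikolasKKLT20}, and our task is to track the $\delta$-dependence carefully through each step of the analysis.

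The algorithm first samples $m = \Theta(\log(1/\delta)/\alpha)$ points from $T$ uniformly at random, so that by a Bernoulli union bound at least one of them lies in $S$ with probability at least $1 - \delta/2$. It then uses these samples to form a proximity-based partition of $T$, grouping points within a threshold distance $R = \textup{poly}(d, 1/\delta)$ of a common sampled ``root.'' This partition can be computed in $O(nd + n \log n)$ time by computing a sorted sequence of distances in a single pass.

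For correctness (all of $S$ contained in one subset), the key input is Assumption~\ref{assume:sexists}: the bounded covariance of $S$ yields $\sum_{i \in S} \|X_i - \mus\|_2^2 \le d |S|$, which by Cauchy--Schwarz gives the unconditional diameter bound $O(\sqrt{d |S|}) = O(d / \sqrt{\alpha})$ on $S$. Conditional on the sampled root landing in $S$ (a $\ge 1-\delta/2$ probability event), all of $S$ lies within $O(d/\sqrt{\alpha})$ of the root, and choosing $R$ appropriately polynomial in $d$ and $1/\delta$ places all of $S$ into a single cluster of the induced partition. For the diameter bound, each cluster has $\ell_2$-diameter at most $O(R \cdot n)$ by a triangle-inequality argument along the sorted distances; plugging in $R = \textup{poly}(d, 1/\delta)$ together with $n = O(d/\alpha) \le O(d^2)$ yields the claimed $O(d^8/\delta^2)$ bound after collecting polynomial factors.

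The main obstacle is bookkeeping: verifying that the $\delta$-dependence is tracked consistently through the sampling and concentration arguments, and ensuring the runtime remains $O(nd + n\log n)$ rather than $O(m n d)$. The latter requires the clustering to be implemented so that the effective number of center-based distance evaluations is $O(1)$ per point, for example by selecting a single representative after the sampling phase using a succinct statistic (such as the density of nearby points in a one-pass estimate) that identifies a root close to $\mus$.
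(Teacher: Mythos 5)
This lemma is cited from prior work (\cite{DiakonikolasKKLT20}), so the paper gives no proof, but your proposed construction departs from the one actually used and has several genuine gaps. The algorithm behind $\NaiveCluster$/$\NaiveClusterPlus$ projects the dataset onto a \emph{single} random Gaussian direction $g$, sorts the resulting scalars $\langle g, X_i\rangle$, and groups consecutive points whenever the gap exceeds a threshold; this immediately explains the $O(nd + n\log n)$ runtime (one matrix--vector product plus a sort). Keeping $S$ intact follows from Assumption~\ref{assume:sexists}, which bounds the one-dimensional spread of $S$ along $g$ by $O(\sqrt{|S|}\,\|g\|_2)$, and the $\ell_2$ diameter of every group is controlled by a one-dimensional anti-concentration argument: for any pair $X_i, X_j$, a single Gaussian projection satisfies $\Pr\bigl[|\langle g, X_i - X_j\rangle| \le \|X_i - X_j\|_2/C\bigr] = O(1/C)$, so union-bounding over all $O(n^2)$ pairs with $C = \Theta(n^2/\delta)$ converts the projected group width $O(nR)$ into an $\ell_2$ diameter $\mathrm{poly}(d)/\mathrm{poly}(\delta)$.

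Your scheme of sampling $m = \Theta(\log(1/\delta)/\alpha)$ candidate roots and grouping by distance to a root has three concrete problems. First, the runtime is $O(nmd) = O(nd\log(1/\delta)/\alpha)$, not $O(nd + n\log n)$; your proposed fix of selecting ``a single representative\ldots close to $\mus$'' cannot be implemented because nothing identifies which sample is in $S$, and a one-pass density statistic does not resolve this (the adversarial majority can create dense spurious clusters). Second, sorting by distance to a root and bounding the gap sequence bounds only the \emph{radial range} of a group, not its $\ell_2$ diameter: two points at distance $\approx a$ from the root, on opposite sides, can be $\approx 2a$ apart, and $a$ is uncontrolled for any group other than the one containing the root. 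The one-dimensional projection argument is essential precisely because it lets you relate projected width to $\ell_2$ distance via anti-concentration. Third, the partition is not well defined --- with multiple roots, ``within $R$ of a common root'' is not an equivalence relation, and if you instead take connected components the chain-length analysis reintroduces the $m$ factor into the diameter bound. Finally, a small algebra slip: $|S| = \alpha n = \Theta(d)$, so $\sqrt{d|S|} = O(d)$, not $O(d/\sqrt{\alpha})$.
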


We also require a post-processing procedure to reduce the list size, which we call $\IteratePostProcess$. We state its guarantees in Lemma~\ref{lem:postprocess}, and defer the description and analysis to Section~\ref{ssec:postprocess}.

\begin{restatable}{lemma}{restatepostprocess}\label{lem:postprocess}
There is an algorithm, $\IteratePostProcess$ (Algorithm~\ref{alg:postprocess}), which takes as input $T$ satisfying Assumption~\ref{assume:sexists} and a list $L \subset \R^d$ of length $m \le n$ such that
\[\min_{\hmu \in L} \norm{\hmu - \mus}_2 \le \Delta,\; \Delta = \Omega\Par{\frac{1}{\sqrt \alpha}}\]
and returns with probability at least $1 - \delta$ a subset $L' \subset L$ of size $O(\frac 1 \alpha)$ such that $\min_{\hmu \in L'} \norm{\hmu - \mus}_2 = O(\Delta)$, within runtime
\[O\Par{\Par{(m + n)d + \alpha m^2 n}\log \frac{d}{\delta}}.\]
\end{restatable}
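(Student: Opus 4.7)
The plan is to design $\IteratePostProcess$ as a greedy pruning algorithm based on supporter counts in an efficiently computed low-dimensional embedding. The key structural observation is that if $\hmu^* \in L$ satisfies $\norm{\hmu^* - \mus}_2 \le \Delta$, then projecting $S$ onto $\textup{span}(L)$ places a constant fraction of samples within distance $O(\sqrt{m} + \Delta)$ of $\hmu^*$. Writing $\mproj$ for the orthogonal projection onto $\textup{span}(L)$ and applying Assumption~\ref{assume:sexists}, we have $\E_{X \in S}\norm{\mproj(X - \mus)}_2^2 = \Tr(\mproj\smallcov) \le m$; Markov gives $\norm{\mproj X - \mproj\mus}_2 \le 2\sqrt{m}$ for at least $|S|/2$ samples in $S$, and since $\mproj \hmu^* = \hmu^*$ we also have $\norm{\hmu^* - \mproj\mus}_2 = \norm{\mproj(\hmu^* - \mus)}_2 \le \Delta$, yielding radius $R = O(\sqrt{m} + \Delta)$ by triangle inequality.

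To avoid the $O(mnd)$ cost of computing $\mproj$ explicitly, we plan to generate $k = O(\log(d/\delta))$ random directions drawn within $\textup{span}(L)$ and use Johnson--Lindenstrauss guarantees to estimate all distances $\norm{\mproj X_i - \hmu_j}_2$ with constant-factor accuracy. This preprocessing runs in time $O((m+n)dk)$, matching the first term in the target runtime. Each candidate $\hmu_j$ is then assigned a supporter set $B_j$ consisting of samples within estimated distance $R$ of $\hmu_j$; the good candidate satisfies $|B_{j^*}| = \Omega(\alpha n)$ by the structural observation above.

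The pruning phase will select candidates in decreasing order of $|B_j|$, adding $\hmu_j$ to the output $L'$ only if $|B_j \setminus \bigcup_{\hmu_k \in L'} B_k| = \Omega(\alpha n)$. Since each added candidate contributes $\Omega(\alpha n)$ fresh supporters and the total number of samples is $n$, this forces $|L'| = O(\alpha^{-1})$. If $\hmu^*$ is not itself selected, some previously-selected $\hmu_k$ must have $|B_k \cap B_{j^*}| = \Omega(\alpha n)$; picking any shared supporter and applying the triangle inequality in the projected space yields $\norm{\hmu_k - \hmu^*}_2 \le 2R$, so $\norm{\hmu_k - \mus}_2 = O(R + \Delta) = O(\Delta)$, provided $R = O(\Delta)$.

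The main obstacle will be enforcing $R = O(\Delta)$ even when $m \gg \alpha^{-1}$: in that regime the bound $R = O(\sqrt{m})$ is too loose to be informative. The resolution is to iterate the above procedure $O(\log(m\alpha))$ times; each iteration strictly shrinks $|L|$ by a constant factor, so the ambient dimension and hence $R$ decrease in lockstep, and after enough rounds $m$ falls to $O(\alpha^{-1})$, making $R = O(\Delta)$. The overlap checks during pruning can be performed in $O(\alpha n)$ time per pair via presorted supporter lists of size $O(\alpha n \cdot \textup{polylog})$ (after discarding excessively large balls that cannot correspond to a good candidate), giving $O(\alpha m_t^2 n)$ cost at iteration $t$; the geometric series over $m_t$ sums to $O(\alpha m^2 n \log(d/\delta))$, matching the second term in the claimed runtime.
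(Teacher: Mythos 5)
There is a genuine gap, and it is the one you yourself flag: the supporter radius $R = \Theta(\sqrt{m} + \Delta)$ coming from $\Tr(\mproj\smallcov) \le m$ can vastly exceed the target error $\Delta = \Theta(\alpha^{-1/2})$ when $m \gg \alpha^{-1}$, and the proposed iteration does not close it. Your greedy pruning (requiring $\Omega(\alpha n)$ \emph{fresh} supporters) compresses $|L'|$ all the way down to $O(\alpha^{-1})$ in a single pass, so there is no ``shrink-by-a-constant-factor'' dynamic to iterate over. More fundamentally, the pruning criterion does not ensure that the specific good candidate $\hmu^*$ (the one within $\Delta$ of $\mus$) survives: if $\hmu^*$ is skipped because some earlier $\hmu_k$ captured its supporters, the only guarantee you get is $\norm{\hmu_k - \mus}_2 = O(R + \Delta) = O(\sqrt{m})$. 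That degraded error is now permanently baked into the input of any subsequent iteration, and re-projecting onto the (smaller) span of the surviving list cannot recover $O(\Delta)$ accuracy -- the new $\Delta_{t+1}$ has already grown to $O(\sqrt{m_t})$. (There is also a minor overstatement: by pigeonhole over $O(\alpha^{-1})$ selected candidates you would only get $|B_k \cap B_{j^*}| = \Omega(\alpha^2 n)$, not $\Omega(\alpha n)$ -- though in fact a single shared supporter already suffices for your triangle inequality, so this piece is salvageable.)

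The paper avoids the $\sqrt{m}$ issue entirely by never projecting onto $\textup{span}(L)$. Instead it applies a single Johnson--Lindenstrauss sketch to the raw $d$-dimensional points and hypotheses (preserving the \emph{original} metric), computes a maximal $5\Delta$-separated subset $L'$ of $L$, and then repeatedly prunes batches $L_{\textup{head}}$ of $4k$ candidates (with $k = \lceil 1/\alpha \rceil$) by a nearest-neighbor count: $\hmu$ is removed if fewer than $\alpha n / 2$ points of $T$ have $\hmu$ as their nearest hypothesis among $L_{\textup{head}}$. Because the competitors in $L_{\textup{head}}$ are pairwise $5\Delta$-separated, Chebyshev on the one-dimensional projection toward each competitor shows that each of the $< 4k$ others can ``steal'' at most a $\tfrac{1}{8k}$ fraction of $S$ from the good candidate $\bmu$, so $\bmu$ always retains at least half of $S$ as nearest-neighbor supporters and is never pruned -- \emph{independently of $m$}. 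The loop then terminates with $|L'| < 4k = O(\alpha^{-1})$, and either $\bmu \in L'$ or some retained hypothesis is within $5\Delta$ of $\bmu$. Your nearest-neighbor-style idea is in the same spirit, but the batching into size-$O(k)$ chunks and the $5\Delta$-separation requirement are precisely the devices that decouple the error bound from the list length $m$; without something playing their role, the $\textup{span}(L)$-projection route cannot reach $O(\Delta)$ error when $m$ is large.
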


\begin{algorithm}[ht!]
	\caption{$\Multifilter(T, \alpha, \delta, \beta)$}\label{alg:multifilter}
	\begin{algorithmic}[1]
		\STATE \textbf{Input:} $T \subset \R^d$, $|T| = n$ satisfying Assumption~\ref{assume:sexists} with parameter $\alpha \in (0,\thalf)$, $\delta \in (0, 1)$, $\beta \in (0, 1]$ %
		\STATE \textbf{Output:} With failure probability $\le \delta$: $L$ with $|L| = O(\tfrac{1}{\alpha})$ such that some $\hmu \in L$ satisfies
		\begin{equation}\label{eq:error}\norm{\hmu - \mus}_2 = O\Par{\sqrt{\frac{\log\Par{\frac 1 \alpha}}{\alpha}} \cdot \max\Par{\frac{1}{\beta}\sqrt{\log\Par{\frac{1}{\alpha\beta}}}, \sqrt{\log\log d}}}.\end{equation}
		\STATE $\delta_{\text{outer}} \gets \half$
		\STATE $N_{\text{runs}}\gets \lceil 2 \log \frac 2 \delta\rceil$
		\STATE $L \gets \emptyset$
		\FOR{$j \in [N_{\text{runs}}]$}
		\STATE $\{T'_i\}_{i \in [k]} \gets \NaiveClusterPlus(T, \frac {\delta_{\text{outer}}} 3)$
		\STATE $\alpha_i \gets \frac{|T|}{|T'_i|}\alpha$ for all $i \in [k]$
		\STATE $L \gets L \cup \IteratePostProcess\Par{T, \bigcup_{i \in [k]} \MultifilterBD(T'_i, \alpha_i, \frac {\delta_{\text{outer}}} 3, \beta), \frac {\delta_{\text{outer}}} 3}$
		\ENDFOR
		\RETURN $\IteratePostProcess\Par{T, L, \frac \delta 2}$
	\end{algorithmic}
\end{algorithm}

\begin{algorithm}[ht!]
	\caption{$\MultifilterBD(T, \alpha, \delta, \beta)$}\label{alg:multifilterbd}
	\begin{algorithmic}[1]
		\STATE \textbf{Input:} $T \subset \R^d$, $|T| = n$ satisfying Assumption~\ref{assume:sexists} with parameter $\alpha \in (0,\thalf)$, $\delta \in (0, 1)$, $\beta \in (0, 1]$
		\STATE \textbf{Output:} With failure probability $\le \delta$: $L_{\text{out}}$ with $|L_{\text{out}}| = O(\tfrac{n^\beta}{\alpha})$ such that some $\hmu \in L_{\text{out}}$ satisfies
		\[\norm{\hmu - \mus}_2 = O\Par{\sqrt{\frac{\log\Par{\frac 1 \alpha}}{\alpha}} \cdot \max\Par{\frac{1}{\beta}\sqrt{\log\Par{\frac{1}{\alpha\beta}}}, \sqrt{\log\Par{\frac{\log d}{\delta}}}}}.\]
		\STATE $L^{(0)} \gets \{T\}$, $L_{\text{out}} \gets \emptyset$
		\STATE For sufficiently large constants,
		\[R \gets \Theta\Par{\max\Par{\frac 1 \beta \cdot \sqrt{\log\Par{\frac 1 \alpha}\log\Par{\frac{1}{\alpha\beta}}}, \sqrt{\log\Par{\frac 1 \alpha}\log\Par{\frac d \delta}}}},\; D \gets \Theta\Par{\log d \log \frac d \delta}\]
		\FOR{$\ell \in [D]$}
		\STATE $L^{(\ell)} \gets \emptyset$
		\FOR{$T' \in L^{(\ell - 1)}$}
		\STATE Append all elements of $\Partition(T', \alpha, \frac{\delta}{n^{1 + \beta} D}, \beta, R)$ to $L^{(\ell)}$
		\ENDFOR
		\ENDFOR
		\RETURN List of empirical means of all sets in $L^{(D)}$ with size at least $\tfrac{\alpha n}{2}$
	\end{algorithmic}
\end{algorithm}

\begin{proposition}\label{prop:multifilterbd}
$\MultifilterBD$ meets its output specifications with probability at least $1 - \delta$, within runtime
\[O\Par{n^{1 + \beta} d \log^2 d \log^2 \frac d \delta + n^{1 + \beta}\Par{\frac{1}{\beta}\log\log d \cdot \log\Par{\frac{1}{\alpha\beta}}\log d\log^2\frac d \delta + \log^2 d \log^3 \frac d \delta}}.\]
\end{proposition}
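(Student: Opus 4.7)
The proof will closely parallel the analysis of $\GMultifilterBD$ in Proposition~\ref{prop:gmfboundeddiameter}, substituting the stronger guarantees of $\Partition$ (Lemma~\ref{lem:partition}) for those of $\GPartition$ and using Lemma~\ref{lem:covbounddist} together with $\gamma$-saturation to produce the error bound. I view $\MultifilterBD$ as building a tree of subsets of depth $D$, where layer $L^{(\ell)}$ is obtained by invoking $\Partition$ on each set in $L^{(\ell-1)}$. Three things must be verified: the final list size, the error guarantee through a ``good'' root-to-leaf path, and the runtime, with an overall union bound tying the three together.

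For the list size, I would induct on the size potential \eqref{eq:sizepot} guaranteed by $\Partition$ to show that $\sum_{T' \in L^{(\ell)}} |T'|^{1+\beta} \le n^{1+\beta}$ for every $\ell$. Pruning to sets of cardinality at least $\tfrac{\alpha n}{2}$ leaves $O(n^\beta/\alpha)$ survivors, matching the claimed list size. For the error, I would identify a root-to-leaf path $T_0=T, T_1, \ldots, T_D$ in which the $\gamma$-saturation of $\frac{1}{n}\1_{T_\ell}$ is preserved at every step; the existence of such a path follows from the ``some child is saturated'' clause of Lemma~\ref{lem:partition}, initialized by Lemma~\ref{lem:gammasaturated} applied to $T_0$. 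Along this path, define $\Phi_\ell := \Tr(\mm_\ell^{2\log d})$ and combine \eqref{eq:filtercond} with Facts~\ref{fact:eltapp} and~\ref{fact:alphasplit} (with $\gamma = 2R^2$), exactly as in \eqref{eq:potargument}, to derive
\begin{equation*}
\Phi_{\ell+1} \le \tfrac12 \Phi_\ell + d(2R^2)^{2\log d}.
\end{equation*}
Since $\NaiveClusterPlus$ ensures the starting diameter is polynomial in $d/\delta$, we have $\Phi_0 \le (d/\delta)^{O(\log d)}$. Taking $D = \Theta(\log d \log \tfrac{d}{\delta})$ geometric halvings suffices to drive $\Phi_D = O(d(2R^2)^{2\log d})$, hence $\normop{\mm_D} = O(R^2)$. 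Combining this operator-norm bound with Lemma~\ref{lem:gammasaturated} (which gives $\norm{w_S}_1 \ge \tfrac{\alpha}{2}$ along the good path) and Lemma~\ref{lem:covbounddist} yields the stated error \eqref{eq:error}, using the chosen value of $R$.

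For the runtime, I would apply Corollary~\ref{cor:pruntime} to each call $\Partition(T', \alpha, \tfrac{\delta}{n^{1+\beta}D}, \beta, R)$, noting that its dominant cost scales linearly in $|T'|^{1+\beta}$. Summing across a single layer $L^{(\ell)}$ and using $\sum_{T' \in L^{(\ell)}} |T'|^{1+\beta} \le n^{1+\beta}$, the per-layer cost is bounded by the Corollary evaluated at $n_p = n$ with failure probability $\tfrac{\delta}{n^{1+\beta}D}$; multiplying by the $D = \Theta(\log d \log \tfrac{d}{\delta})$ layers gives the advertised runtime (the inner $\log\tfrac{d}{\delta}$ factors from Corollary~\ref{cor:pruntime} become $\log \tfrac{n^{1+\beta} D}{\delta} = O(\log \tfrac{d}{\delta})$ under the standing assumption $n = O(d/\alpha) = \poly(d)$).

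The main obstacle is to close the union bound cleanly: the total number of $\Partition$ invocations across the tree is bounded by $D \cdot |L^{(\ell)}|_{\max} \le n^{1+\beta} D$, so choosing the per-call failure probability $\tfrac{\delta}{n^{1+\beta}D}$ and using it simultaneously to absorb (i) each tree call succeeding, (ii) the randomness inside $\Fix$, and (iii) the Johnson--Lindenstrauss guarantee inside $\Partition$, produces a total failure probability of at most $\delta$ as required. I also need to confirm that the chosen $R$ and $D$ in $\MultifilterBD$ dominate the lower-bound requirements inherited by $\Partition$ with the scaled failure parameter; this is a routine verification of the $\log(\cdot/\delta)$ factors and explains the particular form of $R$ declared in the algorithm.
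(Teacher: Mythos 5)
Your proof is correct and follows essentially the same route as the paper's: track the potential $\Phi_\ell = \Tr(\mm_\ell^{2\log d})$ along a $\gamma$-saturated root-to-leaf path (using Facts~\ref{fact:eltapp}, \ref{fact:alphasplit}, and \eqref{eq:filtercond}), bound $\Phi_0 \le (d/\delta)^{O(\log d)}$ via $\NaiveClusterPlus$, geometrically decrease over $D$ layers to get $\normop{\mm_D} = O(R^2)$, apply $\gamma$-saturation and Lemma~\ref{lem:covbounddist} for the error, use the $1+\beta$ size potential for the list bound, sum Corollary~\ref{cor:pruntime} over layers for runtime, and union-bound over $\le n^{1+\beta}D$ calls to $\Partition$. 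One small imprecision: $\gamma$-saturation of the root node $T_0$ is established directly from Definition~\ref{def:saturation} and Assumption~\ref{assume:sexists} (since $|S| = \alpha n$ gives equality in the saturation inequality), not via Lemma~\ref{lem:gammasaturated}, which is instead the tool used at the leaf to convert saturation into the lower bound $|T_D \cap S| \ge \tfrac{\alpha n}{2}$ needed for Lemma~\ref{lem:covbounddist}.
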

\begin{proof}
The proof of the error rate is identical to that in Proposition~\ref{prop:gmfboundeddiameter}, where the initial potential $\Phi_0$ is bounded by $(\frac d \delta)^{O(\log d)}$ via Lemma~\ref{lem:naiveclusterplus}, which implies the operator norm of $\tcov_{\frac 1 n \1}(T')$ for every node $T'$ on layer $D$ is $O(R^2)$, and inductively at least one such node has $|T' \cap S| \ge \half |S|$ by virtue of being $\gamma$-saturated and applying Lemma~\ref{lem:gammasaturated}. The failure probability follows since $\Partition$ is called at most $n^{1 + \beta} D$ times, as there are at most $n^{1 + \beta}$ elements of each $L^{(\ell)}$. Finally, the list size follows since Lemma~\ref{lem:gammasaturated} implies every leaf node contains $\frac{\alpha n}{2}$, but the total size across leaves is at most $n^{1 + \beta}$.

Finally, to obtain the runtime bound we can sum the guarantee of Corollary~\ref{cor:pruntime} across each of the $D$ layers, and use the potential to bound the sum of all $n_p^{1 + \beta}$ across the layer.
\end{proof}

We are now ready to state our main claim on list-decodable mean estimation. For simplicity, we state the result for $\beta \ge \frac 1 {\log d}$, as otherwise there are no runtime or statistical gains asymptotically.

\begin{theorem}\label{thm:multifilter}
For $\frac 1 {\log d} \le \beta \le 1$, and $\log^{\Omega(1)}(d) \le  \alpha^{-1} \le d$, $\Multifilter$ returns a list of size $O(\frac 1 \alpha)$ such that
\[\min_{\hmu \in L} \norm{\hmu - \mus}_2 = O\Par{\frac 1 \beta \cdot \frac{\log\Par{\frac 1 \alpha}}{\alpha}},\]
with probability at least $1 - \delta$, within runtime
\[O\Par{n^{1 + 2\beta} d \log^4 d \log \frac 1 \delta + nd \log^2 \frac 1 \delta \log \frac d \delta}.\]
\end{theorem}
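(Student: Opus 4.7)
The plan is to combine $\NaiveClusterPlus$ (Lemma~\ref{lem:naiveclusterplus}), $\MultifilterBD$ (Proposition~\ref{prop:multifilterbd}), and $\IteratePostProcess$ (Lemma~\ref{lem:postprocess}) inside $N_{\textup{runs}} = \lceil 2\log(2/\delta)\rceil$ independent repetitions, followed by a single final post-processing call, exactly as Algorithm~\ref{alg:multifilter} is written. First I analyze one inner iteration with failure budget $\delta_{\textup{outer}} = 1/2$ split evenly across its three subroutines. Lemma~\ref{lem:naiveclusterplus} partitions $T$ into subsets $\{T'_i\}$ of diameter $\poly(d/\delta_{\textup{outer}})$ such that some cluster $T'_{i^\star}$ contains all of $S$; for this cluster $\alpha_{i^\star} = \alpha |T|/|T'_{i^\star}| \ge \alpha$, Assumption~\ref{assume:sexists} is inherited with parameter $\alpha_{i^\star}$, and Proposition~\ref{prop:multifilterbd} applied to $T'_{i^\star}$ returns a list containing some $\hmu$ within the claimed distance of $\mus$.

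Aggregating per-cluster outputs from $\MultifilterBD$ yields a union of size $\sum_i O(n_i^{1+\beta}/(n\alpha)) = O(n^{\beta}/\alpha)$, via $\sum_i n_i^{1+\beta} \le n^{1+\beta}$ (since $n_i^{1+\beta} \le n^\beta n_i$). The inner $\IteratePostProcess$ trims this union to $O(1/\alpha)$ entries while preserving the closest hypothesis up to a constant factor; its precondition $\Delta = \Omega(1/\sqrt\alpha)$ is met by our error. Union-bounding the three failures closes each inner iteration at success probability $\ge 1/2$, so the $N_{\textup{runs}}$ independent runs succeed (at least one produces a good $\hmu$) with probability $\ge 1 - \delta/2$. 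Concatenating produces a union of size $O(\log(1/\delta)/\alpha)$, which the final $\IteratePostProcess$ (failure probability $\delta/2$) trims to $O(1/\alpha)$ while again preserving the best hypothesis, closing the total failure probability at $\delta$. The error is simplified from Proposition~\ref{prop:multifilterbd} using $\beta \ge 1/\log d$ and $\alpha^{-1} \ge \log^{\Omega(1)}(d)$: both $\log(1/(\alpha\beta))$ and $\log\log d$ are $O(\log(1/\alpha))$, so $\beta^{-1}\sqrt{\log(1/(\alpha\beta))}$ dominates the max and the bound becomes $O\!\left(\beta^{-1} \log(1/\alpha)/\sqrt{\alpha}\right)$, matching the stated form.

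For runtime, Proposition~\ref{prop:multifilterbd} gives $\MultifilterBD(T'_i, \ldots)$ cost $\widetilde{O}(n_i^{1+\beta} d \log^2(d/\delta))$; summing against $\sum_i n_i^{1+\beta} \le n^{1+\beta}$ bounds the per-run $\MultifilterBD$ cost by $\widetilde{O}(n^{1+\beta} d \log^2(d/\delta))$. The inner $\IteratePostProcess$ on $m = O(n^\beta/\alpha)$ hypotheses costs $\widetilde{O}((m+n)d + \alpha m^2 n) = \widetilde{O}(n^{1+2\beta} d)$ after using $\alpha^{-1} \le d$. Multiplying by $N_{\textup{runs}} = O(\log(1/\delta))$ and then adding the cost of the final $\IteratePostProcess$ against $T$ (a list of size $O(\log(1/\delta)/\alpha)$, contributing the $nd \log^2(1/\delta)\log(d/\delta)$ summand) yields the stated bound. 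The main obstacle is the careful accounting of polylogarithmic factors across the nested calls and verifying both the error simplification and that list sizes passed to $\IteratePostProcess$ meet its $m \le n$ precondition in the operative parameter regime; the structural correctness otherwise follows by direct composition of the previously established guarantees.
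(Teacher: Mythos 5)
Your proposal follows the paper's proof of Theorem~\ref{thm:multifilter} essentially verbatim in structure: each of the $N_{\textup{runs}}$ iterations invokes $\NaiveClusterPlus$, then $\MultifilterBD$ on each cluster with rescaled $\alpha_i$, then an inner $\IteratePostProcess$; you argue each iteration succeeds with probability $\geq 1/2$, amplify over independent runs, and trim the union with a final $\IteratePostProcess$. The runtime accounting (summing $n_i^{1+\beta} \leq n^{1+\beta}$, bounding $\alpha m^2 n$ with $m = O(n^\beta/\alpha)$ and $\alpha^{-1} \leq d$) is identical to the paper's.

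One point worth flagging: you correctly derive the error bound $O\bigl(\beta^{-1}\log(\alpha^{-1})/\sqrt{\alpha}\bigr)$ from \eqref{eq:error} under the stated parameter restrictions, but you then claim this "matches the stated form" even though the theorem's display has $\alpha$ rather than $\sqrt{\alpha}$ in the denominator. This appears to be a typo in the formal Theorem~\ref{thm:multifilter} (the informal Theorem~\ref{thm:inflistmean} and the optimality discussion around \eqref{eq:optimalrate} both have the $\sqrt{\alpha}$ scaling), so your bound is the intended one and is in fact stronger than what is literally written; you should not describe it as matching but as tighter. You also correctly identify the $m \leq n$ precondition of $\IteratePostProcess$ as something to check, but leave it unresolved; the paper is equally silent on this, and it is a loose end in both arguments, though the correctness portion of Lemma~\ref{lem:postprocess} does not actually use $m \leq n$, so only the runtime bound would be affected in the worst case.
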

\begin{proof}
We first analyze Lines 6-10 of $\Multifilter$. We claim that each of the $N_{\text{runs}}$ times these lines run, there is a $\ge \half$ probability that some $\hmu$ will be added to $L$ satisfying \eqref{eq:error}, within runtime
\[O\Par{n^{1 + 2\beta} d \log^4 d + n^{1 + \beta}\Par{\frac{1}{\beta}\log\log d \cdot \log\Par{\frac{1}{\alpha\beta}}\log^3 d + \log^5 d}} = O\Par{n^{1 + 2\beta} d \log^4 d}.\]
To see this, we apply Proposition~\ref{prop:multifilterbd} to the relevant call of $\MultifilterBD$. The correctness follows identically to the proof of Theorem~\ref{thm:gaussian}, except that the size of the list of candidate means $\bigcup_{i \in [k]} \MultifilterBD(T'_i, \alpha_i, \frac {\delta_{\textup{outer}}} 3, \beta)$ is $m = O(\frac{n^{\beta}}{\alpha})$. By Lemma~\ref{lem:postprocess}, after applying $\IteratePostProcess$ the error rate is not affected by more than a constant, and the list size is $O(\frac 1 \alpha)$. The runtime of this last step is dominated by $O(\alpha m^2 n \log d) = O(n^{1 + 2\beta} d \log d)$.

Next, this implies that after all runs of Lines 6-10 have finished running (with independent internal randomness), there is a $\ge 1 - \frac \delta 2$ probability that $L$ contains an element $\hmu$ satisfying \eqref{eq:error}. At this point, the size of the list is $m = O(\frac{\log \delta^{-1}}{\alpha})$, so Line 11 takes time $O(nd \log^2 \frac 1 \delta \log \frac d \delta)$ by Lemma~\ref{lem:postprocess}. 
\end{proof}
This theorem, combined with the previously discussed fact that we can assume that $\alpha \in [1/d, 1 / \log^{\Omega (1)} d]$, gives our desired conclusion.

\subsection{Cleaning up the list}\label{ssec:postprocess}

In this section, we provide the subroutine $\IteratePostProcess$ used in $\Multifilter$, and prove Lemma~\ref{lem:postprocess}, which shows correctness of this subroutine. At a high level, $\IteratePostProcess$ first finds a greedy cover of the input list $L$ at distance $O(\Delta)$. Then, while the greedy cover has size at least $4k$ for $k \defeq \lceil \frac 1 \alpha \rceil$, it iteratively prunes away $2k$ out of $4k$ hypotheses by testing that there are enough datapoints closest to retained hypotheses; otherwise, it returns the greedy cover.

\begin{algorithm}[ht!]
	\caption{$\IteratePostProcess(T, \alpha, L, \delta, \Delta)$}\label{alg:postprocess}
	\begin{algorithmic}[1]
		\STATE \textbf{Input:} $T \subset \R^d$, $|T| = n$ satisfying Assumption~\ref{assume:sexists} with parameter $\alpha \in (0,\thalf)$, $\delta \in (0, 1)$, $L$ with $|L| = m \le n$ such that 
		\[\min_{\hmu \in L} \norm{\hmu - \mus}_2 \le \Delta,\; \Delta = \Omega\Par{\frac{1}{\sqrt \alpha}}.\]
		\STATE \textbf{Output:} With failure probability $\le \delta$: $L' \subset L$ with $|L'| = O(\tfrac{1}{\alpha})$ such that 
		\[\min_{\hmu \in L'} \norm{\hmu - \mus}_2 = O(\Delta).\]
		\STATE $\mg \in \R^{d \times c} \gets $ entrywise $\pm \frac{1}{\sqrt c}$ uniformly at random, for $c = \Theta(\log\frac{d}{\delta})$ (Johnson-Lindenstrauss matrix \cite{Achlioptas03})
		\STATE $k \gets \lceil \frac 1 \alpha \rceil$
		\STATE $L' \gets $ maximal subset of $L$ such that $\forall \hmu \neq \hmu' \in L'$, $\norm{\mg^\top(\hmu - \hmu')}_2 \ge 5\Delta$
		\WHILE{$|L'| \ge 4k$}
		\STATE $L_{\text{head}} \gets$ first $4k$ elements of $L'$
		\STATE $L_{\text{prune}} \gets$ elements of $L_{\text{head}}$ which are nearest neighbors of $<\frac{\alpha n}{2}$ elements of $T$, where $\hmu \in L_{\text{head}}$ is the nearest neighbor of $X_i \in T$ if $\norm{\mg^\top(\hmu - X_i)}_2$ is minimal amongst $L_{\text{head}}$ 
		\STATE $L \gets L \setminus L_{\text{prune}}$
		\STATE $L' \gets $ maximal subset of $L$ such that $\forall \hmu \neq \hmu' \in L'$, $\norm{\mg^\top(\hmu - \hmu')}_2 \ge 5\Delta$
		\ENDWHILE
		\RETURN $L'$
	\end{algorithmic}
\end{algorithm}

\restatepostprocess*
\begin{proof}
We first prove correctness, and then prove the runtime bound.

\textit{Correctness.} By the Johnson-Lindenstrauss lemma as analyzed in \cite{Achlioptas03}, with probability at least $1 - \delta$ every pair of points in $L \cup T \cup \{\mus\}$ has their distance preserved to a $1.1$ multiplicative factor under multiplication by $\mg^\top$. Condition on this event for the remainder of the proof.

Let $\bmu$ be the element of the input $L$ which is guaranteed to be within distance $\Delta$ of $\mus$. We will first show that $\bmu$ is never removed from $L$ by the loop in Lines 6-11. If $\bmu$ is not a part of $L_{\text{head}}$ in a given loop, clearly this is true, so suppose $\bmu \in L_{\text{head}}$, and let $\hmu$ be some other element in $L_{\text{head}}$ with $\norm{\mg^\top (\bmu - \hmu)}_2 \ge 5\Delta$; by definition of $L_{\text{head}}$ as a subset of $L'$, all such $\hmu$ satisfy this. Our goal will be to show that at least half of the points in $S$ have nearest neighbor $\bmu$; to do so, it suffices to show that $\norm{\mg^\top(X_i - \hmu))}_2 \ge \norm{\mg^\top(X_i - \bmu))}_2$ with probability at most $\frac{1}{8k}$ over $i \sim S$ for each $\hmu \neq \bmu$, so the $< 4k$ other hypotheses in $L_{\text{head}}$ can only remove $\frac{\alpha n}{2}$ of the points in $S$ from having nearest neighbor $\bmu$, and hence $\bmu$ will not be pruned.

We now show the key claim: that for all $\hmu \neq \bmu \in L_{\text{head}}$,
\[\Pr_{i \sim_{\textup{unif}} S}\Brack{\norm{\mg^\top\Par{X_i - \hmu}}_2 \le \norm{\mg^\top\Par{X_i - \bmu}}_2} \le \frac{1}{8k}.\]
Observe that by the triangle inequality, for any $i \in S$ satisfying the event above,
\begin{align*}
2\norm{\mg^\top\Par{X_i - \bmu}}_2 &\ge \norm{\mg^\top\Par{X_i - \bmu}}_2 + \norm{\mg^\top\Par{X_i - \hmu}}_2 \\
&\ge \norm{\mg^\top\Par{\hmu - \bmu}}_2 \ge 5\Delta \\
\implies \norm{X_i - \mus}_2 &\ge \norm{X_i - \bmu}_2 - \norm{\bmu - \mus}_2 \ge \Delta.
\end{align*}
Here we used $\norm{X_i - \bmu}_2 \ge \frac{1}{1.1}\norm{\mg^\top(X_i - \bmu)}_2 \ge 2\Delta$, and $\norm{\bmu - \mus}_2 \le \Delta$ by assumption. By Chebyshev's inequality and Assumption~\ref{assume:sexists}, we conclude for sufficiently large $\Delta = \Omega(\frac{1}{\sqrt{\alpha}})$,
\[\Pr_{i \sim_{\textup{unif}} S}\Brack{\norm{\mg^\top\Par{X_i - \hmu}}_2 \le \norm{\mg^\top\Par{X_i - \bmu}}_2} \le \Pr_{i \sim_{\textup{unif}} S}\Brack{\norm{\bmu - \mus}_2 \ge \Delta} \le \frac{1}{8k}.\]
Finally, we have shown that when the algorithm exits on Line 12, $L'$ is a maximal separated subset of a pruned list $L$ containing $\bmu$. If $\bmu \in L'$, the guarantee is immediate; otherwise, there must have been some other $\hmu \in L'$ with $\norm{\mg^\top(\hmu - \bmu)}_2 \le 5\Delta$, else $\bmu$ would have been added. For this $\hmu$,
\[\norm{\hmu - \mus}_2 \le 1.1\norm{\mg^\top(\hmu - \mus)}_2 \le 1.1\norm{\mg^\top(\hmu - \bmu)}_2 + 1.1\norm{\mg^\top(\bmu - \mus)}_2 = O(\Delta).\]
The list size bound follows from Line 6, as the returned $L'$ has at most $4k$ elements.

\textit{Runtime.} First, the cost of computing all projections $\mg^\top X$ for $X \in L \cup T$ is $O((m + n)d\log \frac d \delta)$. Next, Lines 6-11 can only be looped over at most $O(\alpha m)$ times, since every loop removes $2k$ elements from $L$ which originally has size $m$. It remains to argue about the complexity of each loop.

The cost of computing a maximal subset in Lines 5 and 10 is $O(m^2 \log \frac d \delta)$, since distance comparisons under multiplication by $\mg$ take $O(\log \frac d \delta)$ and it suffices to greedily loop over the list. Similarly, the cost of computing nearest neighbors of all elements in $T$ in Line 8 is $O(mn \log \frac d \delta)$, which is the dominant term. Combining these components yields the claim.
\end{proof}

\subsection{(Slightly) improving the error rate}\label{ssec:shavesqrtlog}

We give a brief discussion of how it is possible to shave a $\sqrt{\log \alpha^{-1}}$ factor from the error guarantees of Theorem~\ref{thm:multifilter}, bringing it to within a $\sqrt{\log \alpha^{-1}}$ factor from optimal when $\beta$ is a constant. At a high level, this extraneous factor is due to our insistence that all weight removals be $\gamma$-safe, for some $\gamma = \Theta(\log \alpha^{-1})$. This causes the thresholds required for termination of our subroutines (e.g.\ for $\SOrC$ to enter the $\Fix$ stage) to be inflated by roughly a $\gamma$ factor.

We can remove this factor by using $2$-safe scores instead of $\Theta(\log \alpha^{-1})$-safe scores, an idea introduced by \cite{DiakonikolasKKLT20} to obtain improved estimation rates over the multifilter of \cite{DiakonikolasKK20}. The idea is to restart the algorithm in \emph{phases}, where each phase corresponds to the total maintained weight being stable up to a factor of $2$ (in our case, this means subset sizes are stable up to factors of $2$).

We now summarize the changes to our algorithm. We will run the ``outer loop'' subroutine $\MultifilterBD$ (which can be viewed as constructing a multifilter tree) up until a depth of $O(\log^2 d \log \frac 1 \alpha)$ is reached, in batches of $O(\log^2 d)$ each corresponding to a stable phase. Each batch will either meet the relevant termination condition (bounded covariance, such that e.g.\ \eqref{eq:onedfiltercond} is trivially satisfied), or make progress by entering the next phase via safe weight removals.

Correspondingly, the condition \eqref{eq:filtercond} required to make improvements on the potential will be scaled differently, according to the size of the relevant set $T_p$ at some node $p$. In particular, suppose we are in a phase when $\half n' < |T_p| \le n'$. Then we will aim to guarantee
\[\inprod{\my_p^2}{\mm_{c_\ell}} \le R^2 \sqrt{\frac{n'}{n}} \Tr(\my_p^2),\]
where $R$ has the same value as in $\MultifilterBD$ up to removing a $\sqrt{\log \alpha^{-1}}$. This allows us to terminate when the operator norm of some (unnormalized) $\tcov$ matrix is $O(R^2 \sqrt{\frac{n'}{n}})$, at which point Lemma~\ref{lem:covbounddist} concludes a distance of
\[O\Par{\sqrt{R^2 \sqrt{\frac{n}{n'}} \cdot \frac{n'}{|T_p \cap S|}}} = O\Par{R \cdot \frac{\sqrt[4]{n' n}}{\sqrt{|T_p \cap S|}}} = O\Par{\frac{R}{\sqrt \alpha}},\]
where we use that $2$-saturation of $T_p$ implies $\frac{|T_p \cap S|}{n} \ge \alpha \sqrt{\frac{n'}{2n}}$ (cf.\ Lemma 1, \cite{DiakonikolasKKLT20}). 

Because of the complications this type of argument introduces, e.g.\ every one of our subroutines needs an extra exit condition (when the maintained subset enters the next phase), we omit a formal treatment in this paper. However, we remark that to remove the entire $\log \alpha^{-1}$ factor from our error likely requires new ideas. This is because both branches of our key subroutine $\SOrC$, namely $\TBOC$ and $\Fix$, require this overhead. The former is because integrating variance tail bounds decaying as $t \cdot \frac 1 {t^2}$ out to $O(\alpha)$ quantiles (cf.\ Lemma~\ref{lem:sorc}) introduces a gap of $\log(\frac 1 \alpha)$. The latter is because we employ randomize dropout to maintain subsets (rather than weights); our dropout method requires a threshold of roughly $\log \log d$ (cf.\ Lemma~\ref{lem:randdrop}) to obtain high-probability guarantees after union bounding $\text{polylog}(d)$ times. For $\alpha^{-1} = \log^{\Omega(1)} d$, this is again a $\log(\frac 1 \alpha)$ gap.

\section{Clustering mixture models}\label{sec:clustering}

We define a \emph{mixture model} to be a mixture $\sum_{i \in [k]} \alpha_i \dist_i$ where $\{\alpha_i\}_{i \in [k]} \in \R^k_{\ge 0}$, $\sum_{i \in [k]} \alpha_i = 1$, and all $\dist_i$ are supported on $\R^d$. In Sections~\ref{ssec:gmmuniform} and~\ref{ssec:gmmrobust} we handle the case where all distributions are sub-Gaussians: $\dist_i$ has mean $\mu_i$, and sub-Gaussian parameter $\le 1$ in all directions (cf.\ Section~\ref{ssec:notation}). We begin with the uncorrupted, uniform mixture case as a warmup in Section~\ref{ssec:gmmuniform}, and show how our method tolerates non-uniformity and adversarial outliers in Section~\ref{ssec:gmmrobust}. We then give a simple extension of our algorithm to handle mixtures where each component has bounded fourth moment in Section~\ref{ssec:bfmmm}, and finally tackle the case of bounded-covariance mixture models in Section~\ref{ssec:bcmm}.

Broadly, all of our clustering algorithms follow the same design framework. We first demonstrate using concentration and existence of a good hypothesis (the list-decodable learning guarantee), that the ``nearest hypothesis'' to every non-adversarial point is close to the true mean. We next prune our hypotheses down by only keeping those with a substantial number of nearby points; by arguing that the number of adversarial points (or points that appear adversarial due to anti-concentration) is small, no large ``coalition'' of bad points can be formed, and hence all kept hypotheses are near a true mean. Finally, assuming enough separation between true means, we can define a partition of the points based on their nearest hypotheses. In Section~\ref{ssec:bcmm}, we will use a more direct clustering process in the subspace spanned by candidates, combined with fast projected distance approximations, in order to obtain a tighter separation guarantee.

Throughout, we will frequently use that by Chernoff, the sum of any Bernoulli random variables whose expectation is $\Omega(d)$ will deviate from its expectation by at most any multiplicative constant with probability at least $1 - \exp(-\Omega(d))$. For example, for a dataset of size $n = \Omega(dk)$ drawn from a uniform mixture $\sum_{i \in [k]} \frac 1 k \dist_i$, each component $\dist_i$ will contribute between $0.99 \frac n k$ and $1.01 \frac n k$ points with probability at least $1 - k\exp(-\Omega(d))$, or more simply $1 - \exp(-\Omega(d))$ for $k = O(d)$.

\subsection{Clustering uniform (sub-)Gaussian mixture models}\label{ssec:gmmuniform}

We first consider the simple setting where all $\alpha_i = \frac 1 k$ and all $\dist_i$ has mean $\mu_i$ and sub-Gaussian parameter $\le 1$ in all directions. We assume access to a \emph{list-decoding} algorithm $\alg$ which returns a list $L$ of length $O(k)$, such that for each $i \in [k]$, $L$ contains $\hmu_i$ such that $\norm{\hmu_i - \mu_i}_2 \le \Delta$, for some $\Delta = \Omega(\sqrt{k})$ (in particular, $\Multifilter$ suffices for $\alg$). Finally, we assume access to a dataset $\mx = \{X_j\}_{j \in [n]}$ of size $n = \Theta(dk)$ drawn from the mixture model independently of $\alg$, where we say that each $X_j$ is ``associated with'' an index $i \in [k]$ (designating the component it is drawn from). We will now demonstrate how to cluster a dataset using calls to $\alg$, assuming a sufficiently large separation between the means of any two mixture components.

\begin{algorithm}[ht!]
	\caption{$\ClusterUGMM(\mx, L, \Delta, k, \delta)$}\label{alg:clusterugmm}
	\begin{algorithmic}[1]
		\STATE \textbf{Input:} $\mx = \{X_j\}_{j \in [n]} \sim \sum_{i \in [k]} \frac 1 k \dist_i$ where $\dist_i$ has mean $\mu_i$ and sub-Gaussian parameter $\le 1$ in all directions, and $n = \Theta(dk)$, $L$ of size $O(k)$ containing (for all $i \in [k]$) $\hmu_i \in L$ with $\norm{\hmu_i - \mu_i}_2 \le \Delta$ for $\Delta = \Omega(\sqrt{k})$, $\delta \in (0, 1)$
		\STATE $\mg \in \R^{d \times c} \gets $ entrywise $\pm \frac 1 {\sqrt c}$ uniformly at random, for $c = \Theta(\log \frac n \delta)$ (Johnson-Lindenstrauss matrix \cite{Achlioptas03})
		\STATE Let $m: [n] \to L$ map each $X_j$ to the element $\hmu \in L$ minimizing $\norm{\mg^\top (X_j - \hmu)}_2$
		\STATE Define an equivalence relation $\sim$ on $\mx$ by $X_i \sim X_j$ iff $\norm{\mg^\top(m(i) - m(j))}_2 \le 18\Delta$; if this is not an equivalence relation, then return any labeling
		\RETURN Labeling of $\mx$ associated with $\sim$
	\end{algorithmic}
\end{algorithm}

We begin with the following observation.

\begin{lemma}\label{lem:allpairsclose}
	Consider some $X_j \in \mx$ associated with $i \in [k]$. With probability at least $1 - \exp(-\Omega(\Delta^2)))$, for every pair $\hmu, \hmu' \in L$, $\hmu \neq \hmu'$ letting $v_{\hmu\hmu'}$ be the unit vector in the direction $\hmu - \hmu'$,
	\[\inprod{v_{\hmu\hmu'}}{X_j} < \inprod{v_{\hmu\hmu'}}{\mu_i} + \Delta.\]
\end{lemma}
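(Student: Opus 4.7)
The plan is a straightforward union bound over all pairs in $L$ combined with a one-sided sub-Gaussian tail bound in each fixed direction. Since $\mx$ is drawn independently of the algorithm $\alg$ that produced $L$, we may condition on $L$ (and on the component index $i$ associated with $X_j$, so that $X_j \sim \dist_i$) and treat $L$ as a deterministic collection of at most $O(k)$ fixed points, giving $\binom{|L|}{2} = O(k^2)$ deterministic unit vectors $v_{\hmu\hmu'}$ to handle.

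For any fixed such pair, the assumption that $\dist_i$ has sub-Gaussian parameter $\le 1$ in every unit direction implies that the scalar random variable $Z_{\hmu\hmu'} \defeq \inprod{v_{\hmu\hmu'}}{X_j - \mu_i}$ is centered sub-Gaussian with parameter $\le 1$. The standard Chernoff-style bound for sub-Gaussian variables then yields
\[
\Pr\Brack{Z_{\hmu\hmu'} \ge \Delta} \le \exp\Par{-\Omega(\Delta^2)}.
\]
This is exactly the one-sided inequality claimed in the lemma for the single pair $(\hmu, \hmu')$.

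Union-bounding this estimate over all $O(k^2)$ pairs in $L$ gives a total failure probability of at most $O(k^2) \cdot \exp(-\Omega(\Delta^2))$. Since the lemma's hypothesis $\Delta = \Omega(\sqrt{k})$ forces $\Delta^2 = \Omega(k) \gg \log k$ with an adjustable hidden constant, the factor of $k^2 = \exp(2\log k)$ is absorbed into the exponent, yielding an overall bound of $\exp(-\Omega(\Delta^2))$ as desired.

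The only mild subtlety to be careful about is the independence between $L$ and $X_j$ needed to treat each $v_{\hmu\hmu'}$ as a direction chosen before $X_j$ is sampled; this is guaranteed by the setup in Section~\ref{ssec:gmmuniform} stating that $\mx$ is drawn independently of $\alg$. Apart from this bookkeeping, there is no real technical obstacle: the proof is essentially one line of sub-Gaussian concentration plus a union bound, with the separation scale $\Delta = \Omega(\sqrt{k})$ chosen precisely to make the union bound go through with room to spare.
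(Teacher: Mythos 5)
Your proof is correct and follows essentially the same route as the paper's: one-sided sub-Gaussian tail bound in each of the $O(k^2)$ fixed directions $v_{\hmu\hmu'}$, then a union bound absorbed into the exponent using $\Delta = \Omega(\sqrt{k})$. You make explicit the independence of $L$ and $X_j$ (needed so the directions are non-adaptive), which the paper's one-line proof leaves implicit; otherwise the arguments coincide.
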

\begin{proof}
	This is a standard application of sub-Gaussian concentration (on the one-dimensional distribution $\Nor(\langle v_{\hmu\hmu'}, \mu_i\rangle, \langle \msig_i, v_{\hmu\hmu'}v_{\hmu\hmu'}^\top\rangle)$), where we union bound across $O(k^2)$ pairs of elements in $L$. We simplify by using $\Delta = \Omega(\sqrt k)$, so the exponential term dominates the $k^2$ union bound overhead.
\end{proof}

Next, we give our key structural lemma regarding the map $m$.

\begin{lemma}\label{lem:associatedclose}
Following notation of Algorithm~\ref{alg:clusterugmm}, with probability at least $1 - \delta - n\exp(-\Omega(\Delta^2))$, every $X_j$ associated with $i \in [k]$ satisfies $\norm{m(j) - \mu_i}_2 \le 7\Delta$.
\end{lemma}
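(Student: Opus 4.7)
To prove the lemma, the plan is to show that for every $X_j$ associated with cluster $i$, the map $m$ cannot assign $X_j$ to any $\hmu' \in L$ with $\norm{\hmu' - \mu_i}_2 > 7\Delta$. Since $L$ is guaranteed to contain a candidate $\hmu_i$ with $\norm{\hmu_i - \mu_i}_2 \le \Delta$, the argmin selection rule in Line~3 of Algorithm~\ref{alg:clusterugmm} will then automatically output some element of $L$ within $7\Delta$ of $\mu_i$.

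First, I would condition on two high-probability events. The first, call it $\mathcal{E}_{\textup{JL}}$, is that $\mg^\top$ preserves all pairwise distances among the $O(n + k)$ points in $L \cup \mx \cup \{\mu_1, \ldots, \mu_k\}$ to factor $1 \pm \eps$ for a small constant $\eps$; by \cite{Achlioptas03}, this holds with probability $\ge 1 - \delta$ for $c = \Theta(\log(n/\delta))$ with the hidden constant chosen large enough to make $\eps$ appropriately small. The second event, $\mathcal{E}_{\textup{dir}}$, is the conclusion of Lemma~\ref{lem:allpairsclose} applied simultaneously to every $X_j$: by a union bound over $n$ samples, with probability $\ge 1 - n\exp(-\Omega(\Delta^2))$, for every $X_j$ (with associated cluster $i$) and every pair $\hmu \neq \hmu' \in L$, $|\inprod{v_{\hmu\hmu'}}{X_j - \mu_i}| \le \Delta$.

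Next, I would fix such an $X_j$, the candidate $\hmu_i$, and any $\hmu' \in L$ with $D' \defeq \norm{\hmu' - \mu_i}_2 > 7\Delta$, and apply the identity
\begin{equation*}
\norm{\mg^\top(X_j - \hmu')}_2^2 - \norm{\mg^\top(X_j - \hmu_i)}_2^2 = \Par{\norm{\mg^\top(\mu_i - \hmu')}_2^2 - \norm{\mg^\top(\mu_i - \hmu_i)}_2^2} + 2\inprod{\mg^\top(\hmu_i - \hmu')}{\mg^\top(X_j - \mu_i)}.
\end{equation*}
Under $\mathcal{E}_{\textup{JL}}$, the ``static'' difference is bounded below by $(1-\eps)^2(D')^2 - (1+\eps)^2\Delta^2 = \Omega((D')^2)$ whenever $D' > 7\Delta$ and $\eps$ is a sufficiently small constant. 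For the ``cross'' term, I would invoke the polarization identity $\inprod{\mg^\top u}{\mg^\top v} = \tfrac{1}{4}(\norm{\mg^\top(u+v)}_2^2 - \norm{\mg^\top(u-v)}_2^2)$ with $u = \hmu_i - \hmu'$ and $v = X_j - \mu_i$: applying $\mathcal{E}_{\textup{JL}}$ to the fixed vectors $u \pm v$ shows that $\inprod{\mg^\top u}{\mg^\top v}$ is within additive error $O(\eps(\norm{u}_2^2 + \norm{v}_2^2))$ of $\inprod{u}{v}$, while $\mathcal{E}_{\textup{dir}}$ bounds $|\inprod{u}{v}| = \norm{\hmu_i - \hmu'}_2 \cdot |\inprod{v_{\hmu_i\hmu'}}{X_j - \mu_i}| \le (D' + \Delta)\Delta$ directly. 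Combining yields a lower bound of the form $\Omega((D')^2) - O(D' \Delta) - O(\eps(\norm{v}_2^2 + D^2))$, which is strictly positive at $D' = 7\Delta$ after choosing $\eps$ small.

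The hardest part will be managing the polarization error arising from $\norm{v}_2 = \norm{X_j - \mu_i}_2$, which for generic sub-Gaussian components with $\textup{Tr}(\msig_i) \approx d$ can be as large as $\Theta(\sqrt{d})$; the contribution $\eps\norm{v}_2^2 = O(\eps d)$ must therefore be kept below the $\Omega(\Delta^2) = \Omega(k)$ signal. To handle this, I would absorb a sufficiently small $\eps$ into the hidden constant defining $c = \Theta(\log(n/\delta))$, which controls the JL additive error and makes the net comparison strictly positive at the $7\Delta$ threshold. The remaining arithmetic --- tracking the constants $(1\pm\eps)^2$, the factor $7$, and collecting the $n$ union-bound terms --- is routine once these parameters are fixed.
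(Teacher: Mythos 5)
Your approach is a genuinely different decomposition from the paper's: the paper passes through the original-space inequality $\norm{X_j - m(j)}_2 \le 2\norm{X_j - \hmu_i}_2$ (obtained by applying JL only to the pairs $(X_j, m(j))$ and $(X_j, \hmu_i)$) and then derives a one-dimensional contradiction with Lemma~\ref{lem:allpairsclose} along the direction of $m(j) - \hmu_i$, whereas you expand $\norm{\mg^\top(X_j - \hmu')}_2^2 - \norm{\mg^\top(X_j - \hmu_i)}_2^2$ directly and bound the static and cross terms separately. Making the cross term explicit is a sensible instinct, and you correctly identify where the difficulty lies.

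The gap is in your proposed resolution of the cross-term error, which does not actually close the argument. The polarization error is $O(\eps(\norm{u}_2^2 + \norm{v}_2^2)) = O(\eps d)$ since $\norm{v}_2 = \norm{X_j - \mu_i}_2 = \Theta(\sqrt d)$, and it must be dominated by the $\Omega(\Delta^2) = \Omega(k)$ static signal; this forces $\eps = O(k/d)$. Such an $\eps$ is not a dimension-independent constant, so it cannot be ``absorbed into the hidden constant'' of $c = \Theta(\log(n/\delta))$: Achlioptas-style JL requires $c = \Theta(\eps^{-2}\log(N/\delta))$, so $\eps = O(k/d)$ would force $c = \Theta((d/k)^2\log(n/\delta))$, which is polynomially larger than the $c$ specified in Line~2 of Algorithm~\ref{alg:clusterugmm} whenever $k \ll d$, and the lemma is explicitly ``following the notation of Algorithm~\ref{alg:clusterugmm}''. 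So the arithmetic does not close at the $7\Delta$ threshold with the stated projection dimension, and the plan as written does not prove the lemma. (A secondary omission: your event $\mathcal{E}_{\textup{JL}}$ is stated as pairwise-distance preservation over $L \cup \mx \cup \{\mu_1,\dots,\mu_k\}$, but the polarization step requires norm preservation for vectors $(\hmu_i - \hmu') \pm (X_j - \mu_i)$, which are not pairwise differences within that set; the JL union bound must be taken explicitly over those $O(nk^2)$ vectors, which is harmless but should be stated.)
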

\begin{proof}
With probability at least $1 - \delta$, all pairwise distances between $\mx \cup L$ and itself are preserved by multiplication through $\mg^\top$ up to a $1 \pm 0.1$ factor \cite{Achlioptas03} (which we will call the ``Johnson-Lindenstrauss guarantee'' henceforth); condition on this event for the remainder of the proof. Suppose for contradiction that $\norm{m(j) - \mu_i}_2 > 7\Delta$, and let $\hmu_i \in L$ denote any (fixed) hypothesis which is promised to satisfy $\norm{\hmu_i - \mu_i}_2 \le \Delta$.\footnote{In the case multiple such hypotheses exist, any satisfactory (but fixed) $\{\hmu_i\}_{i \in [k]} \subseteq L$ will do.} By the triangle inequality, $\norm{m(j) - \hmu_i}_2 > 6\Delta$. Then letting $v$ be the unit vector in the direction of $m(j) - \hmu_i$, 
\begin{gather*}
\inprod{v}{\mu_i} \le \inprod{v}{\hmu_i} + \Delta,\; \inprod{v}{m(j)} > \inprod{v}{\hmu_i} + 6\Delta \\
\implies \inprod{v}{\mu_i} < \inprod{v}{m(j)} - 5\Delta.
\end{gather*}
Next, $\norm{\mg^\top(X_j - m(j))}_2 \le \norm{\mg^\top(X_j - \hmu_i)}_2$ implies $\norm{X_j - m(j)}_2 \le 2\norm{X_j - \hmu_i}_2$ by the Johnson-Lindenstrauss guarantee, or $\inprod{v}{X_j} \ge \inprod{v}{\frac{2\hmu_i + m(j)}{3}}$. Combining with the above displayed equation, 
\[\inprod{v}{X_j} \ge \frac 2 3 \inprod{v}{\hmu_i} + \frac 1 3 \inprod{v}{m(j)} > \inprod{v}{\mu_i}  + \Delta.\]
Applying Lemma~\ref{lem:allpairsclose} and union bounding over all $j \in [n]$ concludes the proof.
\end{proof}

This implies that with high probability, Algorithm~\ref{alg:clusterugmm} (given a list $L$ meeting its prerequisites) succeeds in correctly labelling all data points, assuming $\Omega(\Delta)$ separation between component means.
\begin{lemma}\label{lem:cugmmcorrect}
Suppose every pair $i, i' \in [k]$, $i \neq i''$ satisfies $\norm{\mu_i - \mu_{i'}}_2 > 34\Delta$. Then with probability at least $1 - \delta - n\exp(-\Omega(\Delta^2))$, Algorithm~\ref{alg:clusterugmm} (assuming its preconditions) outputs a correct clustering of all points (up to label permutation).
\end{lemma}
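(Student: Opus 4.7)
The plan is to condition on the good events from Lemma~\ref{lem:associatedclose} and from the Johnson--Lindenstrauss guarantee on $\mg$ (which hold together with probability at least $1 - \delta - n\exp(-\Omega(\Delta^2))$), and then show that under these events, the relation $\sim$ defined in Line 4 of Algorithm~\ref{alg:clusterugmm} coincides exactly with ``being drawn from the same mixture component.'' Since the latter is evidently an equivalence relation, this will simultaneously verify that $\sim$ is an equivalence relation (so the algorithm does not fall through to the default branch) and that it induces the correct clustering up to label permutation.

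For the forward direction, suppose $X_i$ and $X_j$ are both drawn from component $c$. By Lemma~\ref{lem:associatedclose}, both $\norm{m(i) - \mu_c}_2$ and $\norm{m(j) - \mu_c}_2$ are at most $7\Delta$, hence $\norm{m(i) - m(j)}_2 \le 14\Delta$ by the triangle inequality. The Johnson--Lindenstrauss guarantee then gives $\norm{\mg^\top(m(i) - m(j))}_2 \le 1.1 \cdot 14\Delta < 18\Delta$, so $X_i \sim X_j$.

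For the reverse direction, suppose $X_i$ is from component $c$ and $X_j$ is from component $c' \neq c$. The separation assumption yields $\norm{\mu_c - \mu_{c'}}_2 > 34\Delta$, and combining with Lemma~\ref{lem:associatedclose} via the triangle inequality gives $\norm{m(i) - m(j)}_2 > 34\Delta - 14\Delta = 20\Delta$. The Johnson--Lindenstrauss guarantee then gives $\norm{\mg^\top(m(i) - m(j))}_2 > 0.9 \cdot 20\Delta = 18\Delta$, so $X_i \not\sim X_j$. Putting the two directions together establishes the claim; the only substantive step is bookkeeping the constants $7$, $14$, $18$, $20$, $34$ so that the JL slack $\pm 0.1$ leaves a strict gap on both sides of the threshold $18\Delta$, which the choice of $34\Delta$ in the separation hypothesis is calibrated to provide.
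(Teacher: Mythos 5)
Your proof is correct and follows essentially the same approach as the paper: condition on the event of Lemma~\ref{lem:associatedclose} and the Johnson--Lindenstrauss guarantee, then do the two-sided constant-chasing to show that $\sim$ coincides exactly with being drawn from the same component, which simultaneously handles well-definedness of the equivalence relation and correctness of the resulting labeling. The only cosmetic difference is that you argue the ``different component'' direction directly via the lower bound $0.9\cdot 20\Delta = 18\Delta$, while the paper phrases it as a contradiction from $X_j \sim X_{j'}$; these are the same calculation.
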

\begin{proof}
Assume the result of Lemma~\ref{lem:associatedclose} and that multiplication through $\mg^\top$ preserves all pairwise distances between $\mx \cup L$ and itself up to a $1 \pm 0.1$ factor throughout this proof. 

We first prove that for any two $X_j$, $X_{j'}$ associated to the same $i \in [k]$, Line 4 of Algorithm~\ref{alg:clusterugmm} sets $X_j \sim X_{j'}$. To see this, Lemma~\ref{lem:associatedclose} and the triangle inequality give $\norm{m(j) - m(j')}_2 \le 14\Delta$, so this will pass Line 4 by the Johnson-Lindenstrauss guarantee. Next, suppose $X_j$ is associated with $i \in [k]$ and $X_{j'}$ is associated with $i' \in [k]$ with $i \neq i'$, and suppose for contradiction $X_j \sim X_{j'}$. By the Johnson-Lindenstrauss guarantee, $\norm{m(j) - m(j')}_2 \le 20\Delta$, which yields by Lemma~\ref{lem:associatedclose} and the triangle inequality that $\norm{\mu_i - \mu_{i'}}_2 \le 34\Delta$, contradicting the separation assumption. 
\end{proof}

We conclude with the following guarantee on $\ClusterUGMM$. 

\begin{corollary}\label{cor:fullugmm}
Suppose every pair $i, i' \in [k]$, $i \neq i'$ satisfies $\norm{\mu_i - \mu_{i'}}_2 = \Omega(\sqrt{k} \log k)$ for an appropriate constant. There is an algorithm drawing $n = \Theta(dk)$ samples from the mixture $\sum_{i \in [k]} \frac 1 k \dist_i$ where $\dist_i$ has mean $\mu_i$ and sub-Gaussian parameter $\le 1$ in all directions, and returns a correct clustering of all points (up to label permutation) with probability at least
\[1 - \delta - n\exp\Par{-\Omega\Par{k\log^2 k}} - k\exp(-\Omega(d)).\]
The algorithm runs in time, for any fixed $\eps_0 > 0$,
\[O\Par{n^{1 + \eps_0} d \log^4 n \log^4 \frac n \delta + k^2 \log^4 \frac n \delta + nk\log \frac n \delta}.\]
\end{corollary}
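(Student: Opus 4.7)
The plan is to combine $\Multifilter$ (Theorem~\ref{thm:multifilter}) with $\ClusterUGMM$ (whose correctness is established in Lemma~\ref{lem:cugmmcorrect}). First, draw $n = \Theta(dk)$ samples from the mixture. By a standard Chernoff bound, with failure probability at most $k\exp(-\Omega(d))$, each component contributes at least $n/(2k)$ points. For each $i \in [k]$, the subset $S_i$ of samples from $\dist_i$ has size $\ge \alpha n$ for $\alpha = 1/(2k)$ and, since sub-Gaussian parameter at most $1$ implies empirical covariance bounded by a constant multiple of $\id$, satisfies Assumption~\ref{assume:sexists} (up to a constant rescaling). Treating $S_i$ as the ``inlier'' set, one invocation of $\Multifilter$ with $\alpha = 1/(2k)$ and $\beta = \eps_0/2$ returns (with failure probability $\delta$) a list $L$ of size $O(k)$ that contains a hypothesis $\hmu_i$ within distance $\Delta = O(\sqrt{k} \log k)$ of $\mu_i$. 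Crucially, a \emph{single} call suffices: the algorithm's output depends only on $T$, and the list-decoding guarantee is quantified over any subset satisfying Assumption~\ref{assume:sexists}, so the conclusion holds simultaneously for all $k$ choices of ``good subset''.

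Second, feed this list $L$ into $\ClusterUGMM(\mx, L, \Delta, k, \delta)$. Under the hypothesis $\norm{\mu_i - \mu_{i'}}_2 = \Omega(\sqrt{k} \log k)$ with a sufficiently large constant (so that the separation exceeds $34 \Delta$), Lemma~\ref{lem:cugmmcorrect} guarantees that with failure probability $\delta + n \exp(-\Omega(\Delta^2)) = \delta + n \exp(-\Omega(k \log^2 k))$ the procedure outputs a correct clustering (in particular the relation $\sim$ is a valid equivalence relation so the fallback branch is not triggered). A union bound over the three sources of failure (Chernoff, $\Multifilter$, and $\ClusterUGMM$) yields the probability bound claimed, after rescaling $\delta$ by a constant.

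For the runtime, Theorem~\ref{thm:multifilter} (with $\beta = \eps_0/2$) contributes $O(n^{1+\eps_0} d \log^4 n \log \frac{1}{\delta} + nd \log^2 \frac{1}{\delta} \log \frac{n}{\delta})$, which is absorbed into the leading term after thickening the $\log$ exponents. For $\ClusterUGMM$: sampling $\mg$ and projecting $L \cup \mx$ through $\mg^\top$ costs $O((n+k)d \log \frac{n}{\delta})$; computing the map $m$ by brute-force nearest-neighbor on the $c = O(\log \frac{n}{\delta})$ dimensional sketches costs $O(nk \log \frac{n}{\delta})$; computing pairwise $\mg^\top$-distances on $L$ and partitioning $L$ into equivalence classes under $\sim$ costs $O(k^2 \log \frac{n}{\delta})$, after which labeling each $X_j$ via $m(j)$'s class is $O(1)$. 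Summing gives the claimed runtime.

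The only nontrivial step is the argument that a single list-decoding call simultaneously certifies all $k$ means, but this is immediate from the quantifier structure of Theorem~\ref{thm:multifilter}; the rest is a careful bookkeeping of probabilities and a direct composition of subroutine runtimes.
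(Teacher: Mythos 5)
Your reduction has the right ingredients — a single $\Multifilter$ call does simultaneously certify a good hypothesis for each of the $k$ components, and the runtime and failure-probability accounting is essentially sound — but it contains a genuine correctness gap: you run $\Multifilter$ and $\ClusterUGMM$ on the \emph{same} dataset $\mx$, and this violates the independence requirement built into the clustering analysis. Recall that the paper explicitly sets up Section~5.1 with ``a dataset $\mx$ drawn from the mixture model \emph{independently of $\alg$}.'' This is load-bearing: Lemma~\ref{lem:allpairsclose} applies sub-Gaussian concentration of $\inprod{v_{\hmu\hmu'}}{X_j}$ along the $O(k^2)$ directions $v_{\hmu\hmu'}$ determined by pairs in $L$, and Lemma~\ref{lem:associatedclose} builds on this. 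If $L$ is produced from $\mx$ itself, those directions are correlated with the $X_j$'s, and the concentration bound no longer follows from sub-Gaussianity; the hypotheses in $L$ are empirical means of adaptively chosen subsets of $\mx$ and could a priori ``point at'' specific samples.

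Fixing this forces you into the structure the paper actually uses: hold out a fraction of the data (say $1/10$) to run $\Multifilter$, then cluster the remaining $9/10$ with $\ClusterUGMM$ — but then you are not done, because the corollary promises a labeling of \emph{all} points (and the held-out $1/10$ has not been labeled). The paper resolves this by running the whole pipeline a second time on a disjoint $1/10$/$9/10$ split, producing a second clustering of a different $9/10$, and then matching labels across the two clusterings by majority overlap; a standard Chernoff argument (the source of one of the $k\exp(-\Omega(d))$ terms) guarantees that the $8/10$ overlap region contains at least half of every true cluster, so the matching is unambiguous. Your writeup does not address this step at all. Both issues — the independence of $L$ from the clustered data, and labeling the held-out portion — need to be handled explicitly for the proof to go through.
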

\begin{proof}
We begin by stating the algorithm. We take $\frac 1 {10}$ of the dataset and run $\Multifilter$\footnote{If $\alpha^{-1} = \log^{o(1)} d$, we instead run Algorithm 8 of \cite{DiakonikolasKKLT20} to obtain the desired error guarantee, which fits within the runtime budget by Theorem 4 of \cite{DiakonikolasKKLT20}. Similarly, if $\alpha^{-1} = \Omega(d)$, we instead run Algorithm 14 of \cite{DiakonikolasKKLT20} which fits within the runtime budget by Proposition 9 of that paper.} on it to produce $L$ satisfying the prerequisites of Algorithm~\ref{alg:clusterugmm}, and then cluster the remaining $\frac 9 {10}$ of the dataset using $L$ via $\ClusterUGMM$; then, we take a disjoint $\frac 1 {10}$  and cluster the remaining $\frac 9 {10}$ using $\ClusterUGMM$. We then match labels based on which clusters overlap on at least $\half$ of their points between these two runs. The runtime follows from Theorem~\ref{thm:gaussian}, and the runtime of $\ClusterUGMM$, which is clearly $O(nk \log \frac n \delta)$ since Line 3 dominates, as Line 4 can be greedily implemented using distance comparisons between only $L$ once the map $m$ has been formed.

Next, for correctness, Theorem~\ref{thm:multifilter} and Proposition B.1 of \cite{CharikarSV17} (which says Assumption~\ref{assume:sexists} is met for both $\Multifilter$ runs with probability $\ge 1 - \exp(\Omega(d))$) imply both runs of $\Multifilter$ correctly return lists satisfying the precondition of Algorithm~\ref{alg:clusterugmm}; here we note that the dataset partition ensures independence of lists used and datasets clustered. Then, Lemma~\ref{lem:cugmmcorrect} implies both clusterings are completely correct on $\frac 9 {10}$ of the data. The conclusion follows from standard binomial concentration, which implies that the $\frac 8 {10}$ of the data which was held-out contains at least $\half$ the points associated with each $i \in [k]$ in the overall dataset, with probability at least $1 - \exp(-\Omega(d))$.
\end{proof}

We remark that for $n$ which grows super-exponentially in $k$ (such that the failure probability guarantee of Corollary~\ref{cor:fullugmm} becomes vacuous), it is straightforward to obtain an appropriate high-probability guarantee for clustering all points by assuming that the minimum pairwise cluster separation scales as $\sqrt{\log n}$. A similar remark also applies to Corollary~\ref{cor:fullgmm}.
 
\subsection{Robustly clustering (sub-)Gaussian mixture models}\label{ssec:gmmrobust}

In this section, we generalize Corollary~\ref{cor:fullugmm} to non-uniform corrupted mixture models. In particular, we consider an adversarially corrupted mixture model 
\begin{equation}\label{eq:genmm}\mathcal{M} = (1 - \eps)\sum_{i \in [k]} \alpha_i \dist_i + \eps \dist_{\textup{adv}},\end{equation}
where for all $i \in [k]$, $\dist_i$ has mean $\mu_i$ and sub-Gaussian parameter $\le 1$ in all directions. Moreover, for some fixed known $\alpha$, we assume all $\alpha_i \ge \alpha$ and $\eps \le \frac{\alpha}{4}$. By definition of $\alpha$, note that we must have $\alpha = O(\frac 1 k)$. In this section, we assume our list-decoding subroutine $\alg$ returns a list of size $O(\alpha^{-1})$, and guarantees estimation error $\Delta$. We now state our algorithm.

\begin{algorithm}[ht!]
	\caption{$\ClusterGMM(\mx, L, \Delta, k, \delta, \alpha)$}\label{alg:clustergmm}
	\begin{algorithmic}[1]
		\STATE \textbf{Input:} $\mx = \{X_j\}_{j \in [n]} \sim (1 - \eps) \sum_{i \in [k]} \alpha_i \dist_i + \eps \dist_{\textup{adv}}$ where $\dist_i$ has mean $\mu_i$ and sub-Gaussian parameter $\le 1$ in all directions, all $\alpha_i \ge \alpha$, $\eps \le \frac \alpha 4$, and $n = \Theta(\frac d \alpha)$, $L$ of size $O(\alpha^{-1})$ containing (for all $i \in [k]$) $\hmu_i \in L$ with $\norm{\hmu_i - \mu_i}_2 \le \Delta$ for $\Delta = \Omega(\sqrt{\alpha^{-1}})$, $\delta \in (0, 1)$
		\STATE $\mg \in \R^{d \times c} \gets $ entrywise $\pm \frac 1 {\sqrt c}$ uniformly at random, for $c = \Theta(\log \frac n \delta)$ (Johnson-Lindenstrauss matrix \cite{Achlioptas03})
		\STATE Let $m: [n] \to L$ map each $X_j$ to the element $\hmu \in L$ minimizing $\norm{\mg^\top (X_j - \hmu)}_2$
		\STATE $\mathcal{S}_{\hmu} \gets \{j \in [n] \mid m(j) = \hmu\}$ for all $\hmu \in L$, $\mathcal{B}_{\hmu} \gets \bigcup_{\hmu' \in L \mid \norm{\mg^\top(\hmu - \hmu')}_2 \le 16\Delta} \mathcal{S}_{\hmu'}$
		\STATE $L' \gets \{\hmu \in L \mid |\mathcal{B}_{\hmu}| \ge 0.9\alpha n\}$
		\STATE Define an equivalence relation $\sim$ on $\mx'$ by $X_i \sim X_j$ iff $\norm{\mg^\top(m(i) - m(j))}_2 \le 55\Delta$, for $\mx' \defeq \{X_i \in \mx \mid m(i) \in L'\}$; if this is not an equivalence relation, then return any labeling
		\RETURN Labeling of $\mx'$ associated with $\sim$, along with $\mx \setminus \mx'$ as ``unlabeled''
	\end{algorithmic}
\end{algorithm}

We again refer to $X_j$ as associated with some $i \in [k]$ if it is a draw from $\dist_i$, and as ``adversarial'' if it is drawn from $\dist_{\text{adv}}$. We begin with the following consequences of Lemma~\ref{lem:associatedclose}.

\begin{corollary}\label{cor:lprimedichotomy}
	With probability at least $1 - \delta - n\exp(-\Omega(\Delta^2)) - k\exp(-\Omega(d))$, both of the following events hold. For all $i \in [k]$, every $\hmu \in L$ with $\norm{\hmu - \mu_i}_2 \le 7\Delta$ is in $L'$. Moreover, every $\hmu \in L'$ has $\norm{\hmu - \mu_i}_2 \le 25\Delta$ for some $i \in [k]$. 
\end{corollary}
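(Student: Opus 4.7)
The plan is to condition on three high-probability events and then verify both halves of the dichotomy by elementary triangle-inequality arguments. Crucially, the map $m$ in Algorithm~\ref{alg:clustergmm} is defined exactly as in Algorithm~\ref{alg:clusterugmm}, and the proof of Lemma~\ref{lem:associatedclose} uses only sub-Gaussian concentration on points drawn from some $\dist_i$; hence it applies verbatim to the non-adversarial sub-population of the robust mixture \eqref{eq:genmm} without modification.

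First I would collect the following events. (i) The Johnson--Lindenstrauss sketch $\mg$ preserves all pairwise $\ell_2$ distances among $\mx \cup L \cup \{\mu_i\}_{i \in [k]}$ to within a $(1 \pm 0.1)$ factor, which by \cite{Achlioptas03} fails with probability at most $\delta$. (ii) The conclusion of Lemma~\ref{lem:associatedclose} applied to non-adversarial points: $\norm{m(j) - \mu_i}_2 \leq 7\Delta$ for every $X_j$ drawn from $\dist_i$; this adds $n\exp(-\Omega(\Delta^2))$ to the failure probability. (iii) Multiplicative Chernoff bounds on the number $n_i$ of draws from each $\dist_i$ and on the total number $n_{\mathrm{adv}}$ of adversarial draws; using $\alpha n = \Omega(d)$, with failure probability $k\exp(-\Omega(d))$ one has $n_i \geq 0.9\alpha n$ for every $i \in [k]$ and $n_{\mathrm{adv}} < 0.9\alpha n$. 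Both inequalities use $\alpha_i \geq \alpha$ and $\eps \leq \alpha/4$, in the small-$\alpha$ regime that is the focus of the paper.

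With these events in hand, the first claim is immediate: fix $\hmu \in L$ with $\norm{\hmu - \mu_i}_2 \leq 7\Delta$, and let $X_j$ be any non-adversarial draw from $\dist_i$. By (ii) and the triangle inequality, $\norm{\hmu - m(j)}_2 \leq 14\Delta$, and (i) then gives $\norm{\mg^\top(\hmu - m(j))}_2 \leq 1.1\cdot 14\Delta < 16\Delta$, so $j \in \mathcal{B}_\hmu$. By (iii) there are at least $n_i \geq 0.9\alpha n$ such indices, so $\hmu \in L'$. For the second claim, fix $\hmu \in L'$, so $|\mathcal{B}_\hmu| \geq 0.9\alpha n > n_{\mathrm{adv}}$; thus some $j \in \mathcal{B}_\hmu$ is non-adversarial and drawn from some $\dist_i$. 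By definition of $\mathcal{B}_\hmu$, $m(j) = \hmu'$ with $\norm{\mg^\top(\hmu - \hmu')}_2 \leq 16\Delta$, whence (i) yields $\norm{\hmu - \hmu'}_2 \leq 16\Delta/0.9 < 18\Delta$, and combining with $\norm{\hmu' - \mu_i}_2 \leq 7\Delta$ from (ii) gives $\norm{\hmu - \mu_i}_2 \leq 25\Delta$.

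The only genuinely non-mechanical step is the constant bookkeeping in (iii), where the threshold $0.9\alpha n$ in Algorithm~\ref{alg:clustergmm} has to simultaneously sit below the Chernoff-concentrated non-adversarial mass $(1-\eps)\alpha_i n$ and above the adversarial mass $\eps n$. Since $(1-\eps)\alpha_i \geq (1-\alpha/4)\alpha$ while $\eps \leq \alpha/4$, both inequalities close with room for the standard $\pm o(1)$ Chernoff slack once $\alpha$ is below a small absolute constant, which is automatic in the list-decodable regime of interest. No new conceptual ingredient beyond Lemma~\ref{lem:associatedclose} and the JL guarantee is required.
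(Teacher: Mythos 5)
Your proof is correct and follows essentially the same approach as the paper's: condition on the Johnson--Lindenstrauss guarantee, Lemma~\ref{lem:associatedclose} (which, as you note, applies verbatim to the non-adversarial sub-population), and Chernoff bounds on the per-component and adversarial sample counts, then conclude by triangle-inequality bookkeeping. The one cosmetic difference is that for the second claim the paper argues the contrapositive (if $\hmu$ is $>25\Delta$ from every $\mu_i$, then only adversarial points can populate $\mathcal{B}_{\hmu}$, so $\hmu \not\in L'$), whereas you argue directly (if $\hmu \in L'$, then $\mathcal{B}_{\hmu}$ must contain a non-adversarial point, and tracing back through $m$ and the JL distortion gives $\norm{\hmu - \mu_i}_2 \le 25\Delta$); these are logically the same argument and both carry through with the stated constants.
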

\begin{proof}
	By standard binomial concentration, for every $i \in [k]$, the set of $j \in [n]$ associated with $i$ has size at least $0.9 \alpha n$ with probability $1 - k\exp(-\Omega(d))$. Condition on this event, all pairwise distances between $\mx \cup L$ and itself being preserved up to $1 \pm 0.1$ by multiplication through $\mg^\top$, and the conclusion of Lemma~\ref{lem:allpairsclose} for the remainder of this proof.
	
	We begin with the first claim: let $\norm{\hmu - \mu_i}_2 \le 7\Delta$. For every $X_j$ associated with $i$, Lemma~\ref{lem:associatedclose} shows $\norm{m(j) - \mu_i}_2 \le 7\Delta$, implying by the Johnson-Lindenstrauss guarantee and triangle inequality, $\norm{\mg^\top (m(j) - \hmu)}_2 \le 16\Delta$. Hence $X_j$ counts towards $\mathcal{B}_{\hmu}$, which then captures all points associated with $i$, so $\hmu \in L'$. 	
	For the second claim, suppose $\norm{\hmu - \mu_i}_2 > 25\Delta$; then, no $\hmu' \in L$ with $\norm{\hmu' - \mu_i}_2 \le 7\Delta$ will count towards $\mathcal{B}_{\hmu}$, since such $\hmu'$ has $\norm{\mg^\top(\hmu - \hmu')}_2 > 16\Delta$. By Lemma~\ref{lem:associatedclose}, the only points that can contribute to $\mathcal{B}_{\hmu}$ are then the ones drawn from $\mathcal{D}_{\textup{adv}}$, which by standard binomial concentration will be less than $0.6 \alpha n$ with probability $1 - \exp(-\Omega(d))$, and hence $\hmu \not\in L'$.
\end{proof}

The following conclusion then follows immediately in the vein of Lemma~\ref{lem:cugmmcorrect}.

\begin{corollary}\label{cor:cgmmcorrect}
Suppose every pair $i, i' \in [k]$, $i \neq i'$ satisfies $\norm{\mu_i - \mu_{i'}}_2 > 55\Delta$. Then with probability at least $1 - \delta - n\exp(-\Omega(\Delta^2)) - k\exp(-\Omega(d))$, Algorithm~\ref{alg:clustergmm} (assuming its preconditions) outputs a correct clustering of all points associated with some component $i \in [k]$ (up to label permutation).
\end{corollary}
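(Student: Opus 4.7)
The argument mirrors the proof of Lemma~\ref{lem:cugmmcorrect}, augmented to accommodate non-uniform mixing weights and adversarial contamination via the structural guarantees for $L'$ established in Corollary~\ref{cor:lprimedichotomy}. Throughout I will condition on both events of Corollary~\ref{cor:lprimedichotomy} together with the Johnson-Lindenstrauss event that pairwise distances among $\mx \cup L$ are preserved to within a $1 \pm 0.1$ factor under multiplication by $\mg^\top$; a union bound places the total failure probability below $\delta + n\exp(-\Omega(\Delta^2)) + k\exp(-\Omega(d))$, matching the claimed bound.

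The first step is to verify that every non-adversarial $X_j$, say associated with component $i \in [k]$, is retained in $\mx'$. By Lemma~\ref{lem:associatedclose}, $\norm{m(j) - \mu_i}_2 \le 7\Delta$ (the lemma was proved without using uniformity of weights or absence of outliers, since each $X_j$ was analyzed individually), and the first conclusion of Corollary~\ref{cor:lprimedichotomy} then forces $m(j) \in L'$, so $X_j \in \mx'$. Hence Algorithm~\ref{alg:clustergmm} will produce a label for every non-adversarial point.

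Next, I will show that two non-adversarial points are $\sim$-equivalent if and only if they share a component label, in analogy with the argument in Lemma~\ref{lem:cugmmcorrect}. For $X_j, X_{j'}$ associated with the same $i$, Lemma~\ref{lem:associatedclose} and the triangle inequality yield $\norm{m(j) - m(j')}_2 \le 14\Delta$, so the Johnson-Lindenstrauss guarantee implies $\norm{\mg^\top(m(j) - m(j'))}_2 \le 15.4\Delta < 55\Delta$, and $X_j \sim X_{j'}$. Conversely, suppose $X_j$ is associated with $i$ and $X_{j'}$ with $i' \neq i$, and for contradiction that $X_j \sim X_{j'}$. The Johnson-Lindenstrauss guarantee yields $\norm{m(j) - m(j')}_2 \le \tfrac{55\Delta}{0.9}$, which combined with $\norm{m(j) - \mu_i}_2, \norm{m(j') - \mu_{i'}}_2 \le 7\Delta$ bounds $\norm{\mu_i - \mu_{i'}}_2$ above by a constant multiple of $\Delta$ that contradicts the separation hypothesis $\norm{\mu_i - \mu_{i'}}_2 > 55\Delta$. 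Since there is at least one hypothesis within $7\Delta$ of each true mean in $L'$ and the equivalence classes are disjoint across components, it follows that the classes of $\sim$ restricted to $\mx'$ are in bijection with the set of true components actually present, so the produced labeling is correct up to a single label permutation.

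The main delicate point in fleshing this out will be the constant-chasing in the two directions of the equivalence step: the threshold $55\Delta$ in $\sim$, the $1 \pm 0.1$ distortion from $\mg^\top$, and the $7\Delta$ slack from Lemma~\ref{lem:associatedclose} must all be pinned down so that both directions go through with room to spare. As in the uniform case, this is handled by taking the Johnson-Lindenstrauss distortion sufficiently close to $1$ (only affecting $c = \Theta(\log \tfrac n \delta)$ by a multiplicative constant, so the stated runtime is unchanged). Everything else is immediate from Corollary~\ref{cor:lprimedichotomy} combined with the non-adversarial containment argument above.
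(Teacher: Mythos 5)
Your overall architecture is right—condition on the Johnson--Lindenstrauss event and on Corollary~\ref{cor:lprimedichotomy}, use Lemma~\ref{lem:associatedclose} to show every non-adversarial point lands in $\mx'$, then argue that $\sim$ agrees with the true component partition on non-adversarial points—and this is indeed the paper's structure. But there is a genuine gap in the middle: you never establish that $\sim$ is actually an equivalence relation over \emph{all} of $\mx'$, which Algorithm~\ref{alg:clustergmm} explicitly checks (``if this is not an equivalence relation, then return any labeling''). Unlike in $\ClusterUGMM$, the set $\mx'$ can contain adversarial points $X_j$ whose nearest hypothesis $m(j)$ happens to survive the $L'$-pruning, and you cannot invoke the $\norm{m(j)-\mu_i}_2\le 7\Delta$ bound for those points. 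The paper's proof handles this by using the \emph{second} conclusion of Corollary~\ref{cor:lprimedichotomy}—every $\hmu\in L'$ lies within $25\Delta$ of a \emph{unique} $\mu_i$ (uniqueness from the separation hypothesis)—to partition $L'$ into blocks by nearest true mean, and then observe that $\sim$ coincides with ``same block'' under $m$, hence is transitive. You condition on that event but only ever apply part (i) of the corollary; part (ii) is exactly what makes transitivity go through for triples involving adversarial points, and your argument omits it.

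Secondly, the constant-chasing you defer as a delicate detail is not merely delicate: with the stated constants, the converse direction does not close. You derive $\norm{\mu_i-\mu_{i'}}_2 \le 7\Delta + \frac{55\Delta}{0.9} + 7\Delta \approx 75.1\Delta$, which does not contradict $\norm{\mu_i-\mu_{i'}}_2 > 55\Delta$. Moreover, your proposed remedy—tighten the Johnson--Lindenstrauss distortion—does not help: even with exact distances, $14\Delta + 55\Delta = 69\Delta > 55\Delta$. The slack that must absorb the $\sim$-threshold of $55\Delta$ \emph{plus} the two-sided $25\Delta$ bounds (again from Corollary~\ref{cor:lprimedichotomy}(ii), since adversarial $m(j)$ may be up to $25\Delta$ from a mean) requires the separation constant to be larger than what is stated, not a tighter projection. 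The paper's own exposition is terse on this arithmetic, so this particular issue is inherited from the source, but an attentive reader should flag that adjusting $c$ cannot repair it.
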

\begin{proof}
The proof is identical to Lemma~\ref{lem:cugmmcorrect}, where we note for each $\hmu \in L'$, there is a unique $i \in [k]$ promised by Corollary~\ref{cor:lprimedichotomy} where $\norm{\hmu - \mu_i}_2 \le 25\Delta$ (by our separation assumption). Thus, $\sim$ defines an equivalence relation, since every $\hmu \in L'$ is mapped to the unique equivalence class associated with $\hmu_i$. This successfully recovers the true clusters, since every point $X_j$ associated with some $i \in [k]$ will be in $\mx'$ by Corollary~\ref{cor:lprimedichotomy} and Lemma~\ref{lem:associatedclose}, and thus it will be classified correctly by $\sim$.
\end{proof}

Finally, we give a complete guarantee on $\ClusterGMM$.

\begin{corollary}\label{cor:fullgmm}
Suppose every pair $i, i' \in [k]$, $i \neq i'$ satisfies $\norm{\mu_i - \mu_{i'}}_2 = \Omega(\sqrt{\alpha^{-1}} \log \alpha^{-1})$ for an appropriate constant. There is an algorithm drawing $n = \Theta(\frac d \alpha)$ samples from the mixture $(1 - \eps) \sum_{i \in [k]} \alpha_i \dist_i + \eps \dist_{\textup{adv}}$ where $\dist_i$ has mean $\mu_i$ and sub-Gaussian parameter $\le 1$ in all directions, and returns a correct clustering of all points drawn from some $\dist_i$ (up to label permutation) with probability at least
\[1 - \delta - n\exp\Par{-\Omega\Par{\alpha^{-1} \log^2 \alpha^{-1}}} - k\exp\Par{-\Omega(d)}.\]
The algorithm runs in time, for any fixed $\eps_0 > 0$,
\[O\Par{n^{1 + \eps_0} d \log^4 n \log^4 \frac n \delta + \frac 1 {\alpha^2} \log^4 \frac n \delta+ \frac n \alpha \log \frac n \delta}.\]
\end{corollary}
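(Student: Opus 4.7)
The plan is to emulate the proof of Corollary~\ref{cor:fullugmm}, substituting $\Multifilter$ (Theorem~\ref{thm:multifilter}) for its sub-Gaussian counterpart and $\ClusterGMM$ (analyzed by Corollary~\ref{cor:cgmmcorrect}) for $\ClusterUGMM$. Specifically, I would partition the $n = \Theta(d/\alpha)$ samples into two disjoint $\tfrac{1}{10}$-size holdouts; on each I invoke $\Multifilter$ with parameter $\Theta(\alpha)$ and $\beta = \eps_0/2$ to obtain a list $L$ of size $O(\alpha^{-1})$, then run $\ClusterGMM$ with that $L$ on the complementary $\tfrac{9}{10}$. Since every non-adversarial sample lies in at least one of the two clustered complements, a consistent labelling of all such points is stitched together by matching cluster labels across the two runs via majority overlap on their $\ge \tfrac{8}{10}$-size intersection.

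Before invoking Theorem~\ref{thm:multifilter}, I would verify Assumption~\ref{assume:sexists}. For each $i \in [k]$, binomial concentration ensures that with probability $1 - \exp(-\Omega(d))$ the samples drawn from $\dist_i$ form a subset of size $\ge \tfrac{3}{4}\alpha_i(1-\eps)n \ge \half \alpha n$. Since sub-Gaussian parameter $\leq 1$ implies covariance $\preceq \id$, the sub-Gaussian analogue of Proposition B.1 of \cite{CharikarSV17} then certifies Assumption~\ref{assume:sexists} with inlier fraction $\Theta(\alpha)$ and covariance bound $O(1)$. Theorem~\ref{thm:multifilter} thus returns a list $L$ of size $O(\alpha^{-1})$ such that for each $i \in [k]$ some $\hmu_i \in L$ has $\norm{\hmu_i - \mu_i}_2 = O(\sqrt{\alpha^{-1}} \log \alpha^{-1})$, which I denote $\Delta$; the preconditions of $\ClusterGMM$ are thus met.

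The assumed separation $\min_{i \ne i'} \norm{\mu_i - \mu_{i'}}_2 = \Omega(\sqrt{\alpha^{-1}} \log \alpha^{-1})$, for a sufficiently large hidden constant, exceeds $55\Delta$ and matches the prerequisite of Corollary~\ref{cor:cgmmcorrect}. That corollary then yields correct clustering of every point drawn from some $\dist_i$, with failure probability $\delta + n\exp(-\Omega(\Delta^2)) + k\exp(-\Omega(d))$ as claimed (using $\Delta^2 = \Omega(\alpha^{-1}\log^2 \alpha^{-1})$). The runtime: $\Multifilter$ dominates at $O(n^{1+\eps_0}d\log^4 n \log^4(n/\delta))$ via Theorem~\ref{thm:multifilter} with $\beta = \eps_0/2$, while $\ClusterGMM$ contributes $O((n/\alpha)\log(n/\delta))$ for the map $m$ (Line 3), $O(\alpha^{-2}\log(n/\delta))$ for assembling the $\mathcal{B}_{\hmu}$ by precomputing a pairwise $16\Delta$-graph on $L$ and tallying $|\mathcal{S}_{\hmu'}|$, and $O(n)$ for the final equivalence partition. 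Two runs and the label-matching step incur only a constant-factor overhead.

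The delicate point is that Theorem~\ref{thm:multifilter} is stated for a single ``target distribution,'' whereas I need $L$ to simultaneously contain a $\Delta$-close estimate of \emph{every} $\mu_i$. This resolves because each $\dist_i$ individually constitutes a valid list-decoding instance with inlier fraction $\Theta(\alpha)$; a single run of $\Multifilter$ must therefore output a list that is correct regardless of which $\dist_i$ we designate as the target, forcing $L$ to simultaneously contain some $\hmu_i$ near each $\mu_i$. This is precisely the reduction from mixture-clustering to list-decoding sketched in the introduction. A secondary but essential point is independence between $L$ and the clustered points, needed so that the one-dimensional sub-Gaussian tail bounds of Lemma~\ref{lem:allpairsclose} apply along data-dependent directions $v_{\hmu\hmu'}$; this is precisely what motivates the two-stage holdout/clustering split.
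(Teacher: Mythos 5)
Your proof follows the paper's route essentially step for step: two disjoint $\tfrac{1}{10}$-size holdouts feeding $\Multifilter$, clustering the complementary $\tfrac{9}{10}$ with $\ClusterGMM$, Corollary~\ref{cor:cgmmcorrect} for correctness, and label-matching across runs by cluster overlap. The one point the paper's proof of this corollary singles out — and that you state only at the level of ``majority overlap works'' — is the quantitative check that the adversarial points cannot corrupt the cross-run label-matching: adversarial points contribute at most a $1.1\eps \le \tfrac{1.1}{4}\alpha$ fraction to any cluster's overlap, strictly less than the $\ge \tfrac{4}{5}\alpha_i$ fraction of correctly-labeled points that both runs agree on, so majority agreement is preserved. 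Adding a sentence to that effect would make your write-up match the paper exactly; as written the argument is correct in spirit and takes the same approach.
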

\begin{proof}
The proof is the same as Corollary~\ref{cor:fullugmm}, where we note that the mislabeled points from $\dist_{\textup{adv}}$ can only affect the overlapping labels between the two runs by at most a $1.1 \eps \le  \frac {1.1} 4 \alpha$ fraction, a minority compared to the number of correct labels (due to the true mixture), which in both runs will contribute at least $\frac 4 5 \alpha_i$ of the points assigned by $\sim$ to the true clustering. Hence, between runs the clusters assigned by $\sim$ to the points drawn from $\dist_i$ will overlap on at least half their points, and we can use this agreement to correctly label all points not drawn from $\dist_{\textup{adv}}$ (where throughout, we conditioned on all high-probability events in Corollary~\ref{cor:fullugmm}'s proof holding).
\end{proof}

\subsection{Mixture models with bounded fourth moments}\label{ssec:bfmmm}

In this section, we consider non-uniform corrupted mixture models under a weaker distributional assumption based on bounded fourth moments. We will no longer be able to correctly cluster all (non-adversarial) points, but will instead guarantee that amongst non-adversarial points, at least a $1 - o(\alpha)$ fraction are correctly classified. Concretely, we consider the adversarially corrupted mixture model \eqref{eq:genmm}, where for all $i \in [k]$, and some constant $C = O(1)$,
\begin{equation}\label{eq:bfm}\E_{X \sim \dist_i}\Brack{\inprod{v}{X - \mu_i}^4} \le C \text{ for all } v \in \R^d,\;\norm{v}_2 = 1.\end{equation}
We remark that this fourth-moment bound also implies the covariance is bounded by $O(1)\id$ for all components, by Jensen's inequality. We again assume that all $\alpha_i \ge \alpha$, and $\eps \le \frac \alpha 4$. Our algorithm for this setting will be exactly the same as $\ClusterGMM$, but for convenience we restate it under the new distributional assumptions. We also assume $\Delta = \omega(\sqrt {\alpha^{-1}})$ for notational simplicity. 

\begin{algorithm}[ht!]
	\caption{$\ClusterBFMM(\mx, L, \Delta, k, \delta, \alpha)$}\label{alg:clusterbfmm}
	\begin{algorithmic}[1]
		\STATE \textbf{Input:} $\mx = \{X_j\}_{j \in [n]} \sim (1 - \eps) \sum_{i \in [k]} \alpha_i \dist_i + \eps \dist_{\textup{adv}}$ where $\dist_i$ has mean $\mu_i$ and satisfies \eqref{eq:bfm}, all $\alpha_i \ge \alpha$, $\eps \le \frac \alpha 4$, and $n = \Theta(\frac d \alpha)$, $L$ of size $O(\alpha^{-1})$ containing (for all $i \in [k]$) $\hmu_i \in L$ with $\norm{\hmu_i - \mu_i}_2 \le \Delta$ for $\Delta = \omega(\sqrt{\alpha^{-1}})$, $\delta \in (0, 1)$
		\STATE $\mg \in \R^{d \times c} \gets $ entrywise $\pm \frac 1 {\sqrt c}$ uniformly at random, for $c = \Theta(\log \frac n \delta)$ (Johnson-Lindenstrauss matrix \cite{Achlioptas03})
		\STATE Let $m: [n] \to L$ map each $X_j$ to the element $\hmu \in L$ minimizing $\norm{\mg^\top (X_j - \hmu)}_2$
		\STATE $\mathcal{S}_{\hmu} \gets \{j \in [n] \mid m(j) = \hmu\}$ for all $\hmu \in L$, $\mathcal{B}_{\hmu} \gets \bigcup_{\hmu' \in L \mid \norm{\mg^\top(\hmu - \hmu')}_2 \le 15\Delta} \mathcal{S}_{\hmu'}$
		\STATE $L' \gets \{\hmu \in L \mid |\mathcal{B}_{\hmu}| \ge 0.9 \alpha n\}$
		\STATE Define an equivalence relation $\sim$ on $\mx'$ by $X_i \sim X_j$ iff $\norm{\mg^\top(m(i) - m(j))}_2 \le 55\Delta$, for $\mx' \defeq \{X_i \in \mx \mid m(i) \in L'\}$; if this is not an equivalence relation, then return any labeling
		\RETURN Labeling of $\mx'$ associated with $\sim$, along with $\mx \setminus \mx'$ as ``unlabeled''
	\end{algorithmic}
\end{algorithm}

Our analysis of $\ClusterBFMM$ follows from the following key observation: define $X_j \in \mx$ as ``adversarial'' if it is drawn from $\dist_{\textup{adv}}$, and say it is ``pseudo-adversarial'' if it is drawn from some component $i \in [k]$, but satisfies $\norm{m(j) - \mu_i}_2 > 7\Delta$. Then, by the bounded fourth moment assumption \eqref{eq:bfm}, there are few pseudo-adversarial points, made rigorous as follows.

\begin{lemma}\label{lem:pseudobfm}
With probability at least $1 - \delta - \exp(-\Omega(d))$, the number of pseudo-adversarial points in $\mx$ is $o(\alpha n)$.
\end{lemma}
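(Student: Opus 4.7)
The plan is to mirror the geometric reduction of Lemma~\ref{lem:associatedclose}, but replace its sub-Gaussian tail bound with a Chebyshev-type estimate derived from the fourth-moment hypothesis \eqref{eq:bfm}. First, I would condition on the Johnson--Lindenstrauss event that $\mg^\top$ preserves all pairwise distances within $\mx \cup L$ up to a factor of $1 \pm 0.1$; this absorbs the $\delta$ in the failure probability. Under this event, the triangle-inequality/projection argument used to prove Lemma~\ref{lem:associatedclose} goes through verbatim without using any concentration of $\dist_i$: if $X_j$ is drawn from $\dist_i$ and $\|m(j) - \mu_i\|_2 > 7\Delta$, then, fixing some $\hmu_i \in L$ with $\|\hmu_i - \mu_i\|_2 \le \Delta$ and setting $v = (m(j) - \hmu_i)/\|m(j) - \hmu_i\|_2$, one obtains $\langle v, X_j - \mu_i \rangle > \Delta$.

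The key observation is that $v$ must lie in the deterministic, finite set
\[
V_i \defeq \left\{ \frac{\hmu - \hmu_i}{\|\hmu - \hmu_i\|_2} \;\middle|\; \hmu \in L \setminus \{\hmu_i\} \right\},
\]
which depends only on $L$ (and hence is independent of $X_j$, since $L$ is produced from a held-out sample as in Corollary~\ref{cor:fullgmm}). For each fixed unit $v \in V_i$, Markov's inequality applied to $\langle v, X - \mu_i\rangle^4$ combined with \eqref{eq:bfm} gives
\[
\Pr_{X_j \sim \dist_i}\left[ |\langle v, X_j - \mu_i\rangle| > \Delta \right] \le \frac{C}{\Delta^4}.
\]
Taking a union bound over $|V_i| \le |L| = O(\alpha^{-1})$ directions, each non-adversarial draw is pseudo-adversarial with (conditional) probability at most $O(\alpha^{-1}/\Delta^4)$. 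The assumption $\Delta = \omega(\sqrt{\alpha^{-1}})$ gives $\alpha^{-1}/\Delta^4 = o(\alpha)$, so the expected number of pseudo-adversarial points is $n \cdot o(\alpha) = o(\alpha n)$.

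Finally, because the pseudo-adversarial indicators are independent conditional on $\mg$ and $L$, a multiplicative Chernoff bound on this sum of Bernoullis upgrades the expectation estimate to the high-probability conclusion: since $\alpha n = \Theta(d)$, the count stays $o(\alpha n)$ except with probability $\exp(-\Omega(d))$. The only real subtlety is the independence point above: if one bounded $v$ from an $X_j$-dependent set, the union bound would be over $O(\alpha^{-2})$ pairs and give only $\Delta = \omega(\alpha^{-3/4})$. Fixing $\hmu_i$ first (which one may do by choosing any element of $L$ within distance $\Delta$ of $\mu_i$, independently of $X_j$) trims this to $O(\alpha^{-1})$ directions and is what makes the fourth-moment assumption suffice at the separation $\Delta = \omega(\sqrt{\alpha^{-1}})$.
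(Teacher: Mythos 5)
Your proposal is correct and takes essentially the same route as the paper's proof: reduce pseudo-adversariality to the one-dimensional event $\langle v, X_j - \mu_i\rangle > \Delta$ via the geometric argument in Lemma~\ref{lem:associatedclose}, bound this by $C/\Delta^4$ through Markov on the fourth moment \eqref{eq:bfm}, union bound over the $O(\alpha^{-1})$ candidate directions $\{m(j)-\hmu_i : m(j)\in L\}$ determined by the held-out $L$, and finish with binomial concentration since $\alpha n = \Theta(d)$. Your explicit remark about fixing $\hmu_i$ once (so the union is over $O(\alpha^{-1})$ directions rather than $O(\alpha^{-2})$ pairs) is exactly the choice the paper makes, and your observation that the Johnson--Lindenstrauss conditioning supplies the $\delta$ is the only place the stated failure probability is accounted for, which the paper leaves implicit.
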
 
\begin{proof}
We first bound the probability that some $X_j \sim \dist_i$ will be pseudo-adversarial. By the proof of Lemma~\ref{lem:associatedclose}, if $X_j$ were pseudo-adversarial, it must be the case that in the direction $v$ corresponding to $m(j) - \hmu_i$, $\inprod{v}{X_j} > \inprod{v}{\mu_i} + \Delta$. However, by \eqref{eq:bfm} and Markov,
\[\Pr_{X \sim \dist_i}\Brack{\inprod{v}{X - \mu_i}^4 \ge \Delta^4} \le \frac{C}{\Delta^4} = o\Par{\alpha^2}.\]
Union bounding over all possible directions $v$ (one for each of the $O(\alpha^{-1})$ elements of $L$ other than $\hmu_i$), this implies the chance that $X$ drawn from \emph{any} $\dist_i$ is pseudo-adversarial is $o(\alpha)$. Hence, the expected number of pseudo-adversarial points in our entire dataset is $o(\alpha n)$, and the conclusion follows by applying standard binomial concentration.
\end{proof}

Now, Corollary~\ref{cor:lprimedichotomy} holds for $\ClusterBFMM$ simply by lumping together the pseudo-adversarial points and the adversarial points in its proof (in particular, no hypothesis more than $25\Delta$ away from a true mean can capture any points in the dataset other than adversarial and pseudo-adversarial ones). We now state analogs of Corollaries~\ref{cor:cgmmcorrect} and~\ref{cor:fullgmm}.

\begin{corollary}\label{cor:cbfmcorrect}
Suppose every pair $i, i' \in [k]$, $i \neq i'$ satisfies $\norm{\mu_i - \mu_{i'}}_2 > 55\Delta$. Then with probability at least $1 - \delta - n\exp(-\Omega(\Delta^2)) - k\exp(-\Omega(d))$, Algorithm~\ref{alg:clusterbfmm} (assuming its preconditions) outputs a correct clustering of a $1 - o(\alpha)$ proportion of points associated with some component $i \in [k]$ (up to label permutation).
\end{corollary}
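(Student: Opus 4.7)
The plan is to mirror the proof of Corollary~\ref{cor:cgmmcorrect}, substituting Lemma~\ref{lem:pseudobfm} for the sub-Gaussian concentration bound of Lemma~\ref{lem:allpairsclose} wherever the latter was invoked to control tails of the clean components. Condition throughout on (i) all pairwise distances among $\mx \cup L$ being preserved up to a $1 \pm 0.1$ factor under $\mg^\top$, (ii) standard binomial concentration so that each component $\dist_i$ contributes at least $0.9\alpha_i n$ points, and (iii) the conclusion of Lemma~\ref{lem:pseudobfm}, so the total number of pseudo-adversarial points is $o(\alpha n)$. These together hold with the claimed probability $1-\delta-n\exp(-\Omega(\Delta^2)) - k\exp(-\Omega(d))$.

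First I would verify the analogue of Corollary~\ref{cor:lprimedichotomy}. For the ``good hypothesis in $L'$'' direction, for any $\hmu$ with $\norm{\hmu-\mu_i}_2\le 7\Delta$ and any \emph{non-pseudo-adversarial} $X_j$ drawn from $\dist_i$, the bound $\norm{m(j)-\mu_i}_2\le 7\Delta$ holds by the very definition of pseudo-adversarial, so $\norm{\mg^\top(m(j)-\hmu)}_2\le 15\Delta$ by the triangle inequality and the Johnson-Lindenstrauss guarantee. The number of such $X_j$ is at least $0.9\alpha_i n - o(\alpha n) \ge 0.9\alpha n$, so $\hmu\in L'$. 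For the reverse direction, if $\norm{\hmu-\mu_i}_2>25\Delta$ for every $i\in[k]$, then no non-pseudo-adversarial $X_j$ can contribute to $\mathcal{B}_{\hmu}$ (again by the argument in Lemma~\ref{lem:associatedclose}, which only uses $\norm{m(j)-\mu_i}_2\le 7\Delta$). Thus $|\mathcal{B}_{\hmu}|$ is bounded above by the combined count of adversarial and pseudo-adversarial points, which is at most $\eps n + o(\alpha n) \le \tfrac14\alpha n + o(\alpha n) < 0.9\alpha n$, contradicting $\hmu\in L'$.

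Given this dichotomy, the $55\Delta$ mean-separation hypothesis implies that each $\hmu\in L'$ is within $25\Delta$ of exactly one $\mu_i$, so partitioning $L'$ by nearest true mean induces a genuine equivalence relation $\sim$ on $\mx'$, exactly as in the proof of Corollary~\ref{cor:cgmmcorrect}: two points $X_j, X_{j'}$ are $\sim$-related iff their hypotheses $m(j),m(j')$ correspond to the same component. The $55\Delta$ threshold on $\norm{\mg^\top(m(i)-m(j))}_2$ separates across components (triangle inequality plus $25+25 < 55/1.1$) and identifies within components (triangle inequality plus $25+25 < 55/1.1$ from the other side), so the algorithm's test matches this relation.

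Finally, I would assemble the fraction estimate. Every non-pseudo-adversarial point $X_j$ drawn from some $\dist_i$ satisfies $m(j)\in L'$ (so $X_j\in\mx'$) and is placed by $\sim$ into the cluster for $\mu_i$. Thus all but the pseudo-adversarial fraction of the $(1-\eps)n$ clean points are correctly classified. Since pseudo-adversarials number $o(\alpha n)$ while there are $\Theta(n)$ clean points, the misclassified fraction among clean points is $o(\alpha)$, yielding the $1-o(\alpha)$ guarantee. The main subtle point—which should not cause real trouble but deserves care in the write-up—is confirming the counting in the second half of the $L'$ dichotomy: the replacement of sub-Gaussian tail control by the Markov/fourth-moment argument of Lemma~\ref{lem:pseudobfm} still leaves the total ``bad'' mass strictly below the $0.9\alpha n$ threshold used to define $L'$, which is where the assumption $\Delta=\omega(\sqrt{\alpha^{-1}})$ (rather than just $\Omega(\sqrt{\alpha^{-1}})$) is essential.
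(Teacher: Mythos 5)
Your proof is correct and takes essentially the same route as the paper, which simply invokes Corollary~\ref{cor:cgmmcorrect} with the remark (given just before the corollary statement) that Corollary~\ref{cor:lprimedichotomy} continues to hold after lumping pseudo-adversarial points with adversarial ones; you have reconstructed exactly this argument, filling in the details the paper leaves implicit. One small imprecision: the chain ``$0.9\alpha_i n - o(\alpha n) \ge 0.9\alpha n$'' is not literally valid when $\alpha_i = \alpha$ — the binomial concentration actually gives a count slightly above $0.9\alpha n$ (something like $0.95\alpha_i n$), and it is that slack which absorbs the $o(\alpha n)$ pseudo-adversarials; this is a constant-tuning matter shared with the paper's own exposition and does not affect correctness.
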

\begin{proof}
The proof is identical to Corollary~\ref{cor:cgmmcorrect}, where we only guarantee correctly labeling points which are not pseudo-adversarial; $\sim$ will correctly cluster all such points by separation and Corollary~\ref{cor:lprimedichotomy}. 
\end{proof}

\begin{corollary}\label{cor:fullbfmm}
Suppose every pair $i, i' \in [k]$, $i \neq i'$ satisfies $\norm{\mu_i - \mu_{i'}}_2 = \Omega(\sqrt {\alpha^{-1}} \log \alpha^{-1})$ for an appropriate constant. There is an algorithm drawing $n = \Theta(\frac d \alpha)$ samples from the mixture $(1 - \eps) \sum_{i \in [k]} \alpha_i \dist_i + \eps \dist_{\textup{adv}}$ where $\dist_i$ has mean $\mu_i$ and satisfies the bounded fourth moment condition \eqref{eq:bfm}, and returns a correct clustering of a $1 - o(\alpha)$ fraction of all points drawn from some $\dist_i$ (up to label permutation) with probability at least
\[1 - \delta - k\exp\Par{-\Omega(d)}.\]
The algorithm runs in time, for any fixed $\eps_0 > 0$,
\[O\Par{n^{1 + \eps_0} d \log^4 n \log^4 \frac n \delta + \frac 1 {\alpha^2} \log^4 \frac n \delta+ \frac n \alpha \log \frac n \delta}.\]
\end{corollary}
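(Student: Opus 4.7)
The plan is to follow the template of Corollary~\ref{cor:fullgmm}'s proof, combining the list-decoding guarantee of $\Multifilter$ (Theorem~\ref{thm:multifilter}) with the clustering correctness guarantee of $\ClusterBFMM$ (Corollary~\ref{cor:cbfmcorrect}). First I would split the dataset into two disjoint portions of size $\Theta(d/\alpha)$ each. On each portion, $\Multifilter$ will produce a list $L$ of size $O(\alpha^{-1})$ meeting the precondition of $\ClusterBFMM$ with error $\Delta = O(\sqrt{\alpha^{-1}} \log \alpha^{-1})$. To invoke Theorem~\ref{thm:multifilter}, I need Assumption~\ref{assume:sexists}, which follows because the bounded fourth moment condition \eqref{eq:bfm} yields covariance $\preceq O(1) \id$ per component by Jensen's inequality, so Proposition B.1 of \cite{CharikarSV17} supplies Assumption~\ref{assume:sexists} with inverse-exponential failure probability on each portion (applied to the smallest-weight component). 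As in Corollary~\ref{cor:fullugmm}, I fall back to Algorithm 8 or Algorithm 14 of \cite{DiakonikolasKKLT20} outside the regime $\log^{\Omega(1)} d \leq \alpha^{-1} \leq d$.

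Next, I would cluster the second portion using the list produced from the first (and vice versa), invoking Corollary~\ref{cor:cbfmcorrect}: each run correctly classifies all but an $o(\alpha)$-fraction of the non-adversarial points under the assumed separation. The two independent clusterings are then reconciled by the same majority-agreement label match used in Corollaries~\ref{cor:fullugmm} and~\ref{cor:fullgmm}: each true cluster contributes at least $(1-o(1))\alpha_i n$ correctly-labelled points per run, while the union of adversarial points ($\eps \le \alpha/4$) and pseudo-adversarial points ($o(\alpha)$ by Lemma~\ref{lem:pseudobfm}) supplies at most a $(\tfrac14 + o(1))\alpha$-fraction per cluster. Hence the correctly-labelled points dominate, so matching clusters by majority intersection assigns consistent labels across runs, recovering the true clustering on all but an $o(\alpha)$-fraction of the non-adversarial points.

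The main obstacle will be absorbing the various failure probability terms into the clean bound $1 - \delta - k\exp(-\Omega(d))$ claimed in the statement. Corollary~\ref{cor:cbfmcorrect} carries a term $n\exp(-\Omega(\Delta^2))$ which, because $\Delta^2 = \Omega(\alpha^{-1} \log^2 \alpha^{-1})$ and $n = \Theta(d/\alpha)$, is dominated by $\exp(-\Omega(d))$ in the regime $\alpha^{-1} = O(d/\log d)$; otherwise $k$ is a constant and we absorb it directly. The failure probabilities from both $\Multifilter$ runs, both $\ClusterBFMM$ runs, and the Johnson--Lindenstrauss sketches fold together into a single $\delta$ by a union bound with $\delta$ rescaled by a constant. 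Finally, the runtime is dominated by the two calls to $\Multifilter$ at $O(n^{1+\eps_0} d \log^4 n \log^4(n/\delta))$; the two runs of $\ClusterBFMM$ contribute only $O(n\alpha^{-1}\log(n/\delta))$ (Line 3, mapping points to nearest hypotheses under the JL projection) plus $O(\alpha^{-2}\log^4(n/\delta))$ for the pairwise list comparisons forming $\mathcal{B}_{\hmu}$ and the equivalence relation, yielding the claimed bound verbatim.
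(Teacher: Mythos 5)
Your overall strategy is exactly the paper's: run $\Multifilter$ on a holdout to get a list $L$ meeting the precondition of $\ClusterBFMM$, invoke Corollary~\ref{cor:cbfmcorrect}, then reconcile two independent runs to handle the points withheld from clustering. The use of Jensen's inequality to go from \eqref{eq:bfm} to a covariance bound (enabling Assumption~\ref{assume:sexists} via Proposition B.1 of \cite{CharikarSV17}), the fallback to the \cite{DiakonikolasKKLT20} algorithms outside the regime $\log^{\Omega(1)} d \le \alpha^{-1} \le d$, the accounting of pseudo-adversarial points via Lemma~\ref{lem:pseudobfm}, and the runtime breakdown all match the paper's argument (which, in the paper, is stated as ``identical to Corollary~\ref{cor:fullgmm}, using Corollary~\ref{cor:cbfmcorrect}'').

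There is, however, a gap in the data-splitting scheme you describe. You split into two \emph{disjoint halves} and cluster each half with the list learned from the other. In that setup, the two clusterings are over disjoint subsets of $\mx$, so there is no shared data on which to carry out the ``majority-agreement label match used in Corollaries~\ref{cor:fullugmm} and~\ref{cor:fullgmm}.'' The paper's scheme is asymmetric precisely to create such an overlap: it holds out a $\tfrac{1}{10}$ fraction to learn a list, clusters the \emph{remaining} $\tfrac{9}{10}$, then repeats with a disjoint $\tfrac{1}{10}$ holdout and clusters the new $\tfrac{9}{10}$. The two clustered sets then share $\tfrac{8}{10}$ of the data, and each true cluster contributes a majority of the points in its image on that common portion (even after subtracting adversarial and pseudo-adversarial points), which is what makes the label matching well-defined. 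Your half-half split needs to be replaced by this small-holdout scheme (or you would need to supply a different reconciliation step, e.g.\ matching pruned hypotheses across the two lists by $\ell_2$-distance).

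A secondary, smaller point: the $n\exp(-\Omega(\Delta^2))$ term you try to absorb does not actually arise in the bounded-fourth-moment argument. Corollary~\ref{cor:cbfmcorrect} inherits that term from its sub-Gaussian predecessor for notational uniformity, but the failure events in the BFM proof come from Lemma~\ref{lem:pseudobfm}, whose failure probability is already $\delta + \exp(-\Omega(d))$. Your absorption argument as written is shaky (it does not handle the regime $\alpha^{-1} = o(d/\log^2 d)$ with $k$ large), but it is also unnecessary; the target bound $1 - \delta - k\exp(-\Omega(d))$ follows directly from Lemma~\ref{lem:pseudobfm} plus the Johnson--Lindenstrauss and binomial-concentration events.
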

\begin{proof}
The proof is identical to Corollary~\ref{cor:fullgmm}, using Corollary~\ref{cor:cbfmcorrect} instead of Corollary~\ref{cor:cgmmcorrect}, and using that the pseudo-adversarial points will not substantially affect cluster overlap guarantees. 
\end{proof}

\subsection{Bounded-covariance mixture models}\label{ssec:bcmm}

In this section, we finally handle the case of non-uniform corrupted mixture models under only a bounded-covariance assumption; in particular, we study \eqref{eq:genmm} where every component $\dist_i$ for $i \in [k]$ has covariance bounded by $\id$. We again assume that all $\alpha_i \ge \alpha$, and $\eps \le \frac \alpha 4$.

\begin{algorithm}[ht!]
	\caption{$\ClusterBCMM(\mx, L, \Delta, k, \delta, \alpha)$}\label{alg:clusterbcmm}
	\begin{algorithmic}[1]
		\STATE \textbf{Input:} $\mx = \{X_j\}_{j \in [n]} \sim (1 - \eps) \sum_{i \in [k]} \alpha_i \dist_i + \eps \dist_{\textup{adv}}$ where $\dist_i$ has mean $\mu_i$ and has covariance bounded by $\id$, all $\alpha_i \ge \alpha$, $\eps \le \frac \alpha 4$, and $n = \Theta(\frac d \alpha)$, $L$ of size $O(\alpha^{-1})$ containing (for all $i \in [k]$) $\hmu_i \in L$ with $\norm{\hmu_i - \mu_i}_2 \le \Delta$ for $\Delta = \Omega(\sqrt{\alpha^{-1}} \log \alpha^{-1})$, $\delta \in (0, 1)$
		\STATE $\mg \in \R^{d \times c} \gets $ entrywise $\pm \frac 1 {\sqrt c}$ uniformly at random, for $c = \Theta(\log \frac n \delta)$ (Johnson-Lindenstrauss matrix \cite{Achlioptas03})
		\STATE $\mproj \gets \ml^\top (\ml \ml^\top)^{-1} \ml$ (i.e.\ projection matrix onto the subspace spanned by $L$) where $\ml \in \R^{O(\alpha^{-1}) \times d}$ is the vertical concatenation of $L$
		\STATE $\tX_j \gets \mg^\top \mproj X_j$ for all $j \in [n]$, $\tmu_i \gets \mg^\top \hmu_i$ for all $\hmu_i \in L$
		\STATE $\tL \gets \Brace{\hmu_i \in L \,\Big\vert \left|\mathcal{S}(\hmu_i)\right| \ge \frac{\alpha n}{2}}$, where $\mathcal{S}(\hmu_i) \defeq \Brace{X_j \in \mx \,\Big\vert \norm{\tX_j - \tmu_i}_2 \le 2.5\Delta}$
		\STATE Define an equivalence relation $\sim$ on $\tL$ by $\hmu_i \sim \hmu_j$ iff $\norm{\tmu_i - \tmu_j}_2 \le 20\Delta$; if this is not an equivalence relation, then return any labeling
		\RETURN Labeling of $\mx$ where $\tX_j$ and $\tX_{j'}$ have the same label iff $\tX_j \in \mathcal{S}(\hmu_i)$ and $\tX_{j'} \in \mathcal{S}(\hmu_{i'})$ for $\hmu_i \sim \hmu_{i'}$
	\end{algorithmic}
\end{algorithm}

We briefly describe Algorithm~\ref{alg:clusterbcmm}. Line 3 forms the projection matrix onto the span of $L$. Line 5 ``prunes'' the set $L$ to a set $\tL$ where we only keep candidates with enough data points close by (in the subspace spanned by $L$). Line 6 then appropriately partitions the candidates, and Line 7 partitions the dataset based on the candidate partition. In the following proof we use $\mproj \hmu = \hmu$ for all $\hmu \in L$, and that all $\mu_i$ have a point in the subspace projected to by $\mproj$ at most $\Delta$ away.

\begin{lemma}\label{lem:boundcovmain}
With probability at least $1 - \delta - k\exp(-\Omega(d))$, if $\Delta = \Omega(\sqrt{\alpha^{-1}} \log \alpha^{-1})$, and every pair $i, i' \in [k]$, $i \neq i'$ satisfies $\norm{\mu_i - \mu_{i'}}_2 > 20\Delta$, Algorithm~\ref{alg:clusterbcmm} returns a correct clustering of a $1 - o(1)$ fraction of all pointns drawn from some $\dist_i$. The algorithm runs in time
\[O\Par{n \cdot \Par{d + \frac 1 \alpha} \cdot \log \frac n \delta}.\]
\end{lemma}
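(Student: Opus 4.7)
The plan is to condition on two high-probability events and then verify each stage of $\ClusterBCMM$ in turn. First, by the Johnson-Lindenstrauss guarantee of \cite{Achlioptas03}, with probability at least $1-\delta$ the matrix $\mg^\top$ preserves all pairwise distances among the finite set $(\mproj \mx) \cup L \cup \{\mproj \mu_i\}_{i \in [k]}$ up to a $(1 \pm \eta)$ factor for a small constant $\eta$, provided $c = \Theta(\log(n/\delta))$ (with the hidden constant tuned to $\eta$). Second, standard binomial concentration applied to $n = \Theta(d/\alpha)$ i.i.d.\ samples shows that each component $\dist_i$ contributes at least $(1-o(1)) \alpha_i n \ge (1-o(1))\alpha n$ points, with failure probability at most $k\exp(-\Omega(d))$. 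Both events will be assumed throughout.

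The central concentration observation is that $\mproj$ is a projection onto a subspace of rank $r \le |L| = O(\alpha^{-1})$. For $X_j \sim \dist_i$, since $\cov(\dist_i) \preceq \id$,
\[
\E\bigl\|\mproj(X_j - \mu_i)\bigr\|_2^2 \;=\; \Tr\Par{\mproj \cov(\dist_i) \mproj} \;\le\; \Tr(\mproj) \;=\; O(\alpha^{-1}).
\]
Markov's inequality then yields that for any slowly growing $t = \omega(1)$, a $1 - o(1)$ fraction of the points drawn from $\dist_i$ satisfy $\|\mproj(X_j - \mu_i)\|_2 \le t\sqrt{\alpha^{-1}}$. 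Taking $t$ a tiny polylog and using the assumption $\Delta = \Omega(\sqrt{\alpha^{-1}} \log \alpha^{-1})$, this bound becomes $o(\Delta)$. Combining with the triangle inequality and the identity $\mproj \hmu_i = \hmu_i$ (so $\|\mproj \mu_i - \hmu_i\|_2 \le \|\mu_i - \hmu_i\|_2 \le \Delta$) gives $\|\mproj X_j - \hmu_i\|_2 \le (1+o(1))\Delta$, and the JL guarantee then implies $\|\tX_j - \tmu_i\|_2 \le 2.5\Delta$, so such $X_j$ lie in $\mathcal{S}(\hmu_i)$. In particular $|\mathcal{S}(\hmu_i)| \ge \alpha n / 2$, so each true estimate $\hmu_i$ survives pruning into $\tL$.

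Next, the partition induced by $\sim$ on $\tL$ is shown to correspond, via a pigeonhole/counting argument, to the true means. For any $\hmu \in \tL$, the set $\mathcal{S}(\hmu)$ has size $\ge \alpha n / 2$; of these, at most $\eps n \le \alpha n/4$ can be adversarial, so a majority of $\mathcal{S}(\hmu)$ must come from some genuine component $\dist_i$, and the concentration argument (run in the other direction) forces $\hmu$ to lie within $O(\Delta)$ of $\mu_i$. The separation hypothesis $\|\mu_i - \mu_{i'}\|_2 > 20\Delta$ combined with the $\Delta$-accuracy of the hypotheses in $L$ and the $(1\pm\eta)$-preservation of JL then ensures $\|\tmu_i - \tmu_{i'}\|_2 > 20\Delta$ for distinct true-mean clusters (tightening $\eta$ as necessary by enlarging $c$), so $\sim$ restricted to $\tL$ has exactly $k$ equivalence classes, each corresponding to a unique true mean. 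For the $1-o(1)$ fraction of good $X_j \sim \dist_i$, any $\hmu \in \tL$ whose neighborhood $\mathcal{S}(\hmu)$ captures $X_j$ must have $\|\tmu - \tmu_i\|_2 \le 5\Delta$ by the triangle inequality and hence $\hmu \sim \hmu_i$, so $X_j$ is labeled consistently with cluster $i$.

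For runtime, the plan is to exploit $\mg^\top \mproj = (\mproj \mg)^\top$ (using symmetry of $\mproj$). We build an orthonormal basis for $\im(\mproj) \subseteq \R^d$ by QR on $L$ in $O(d |L|^2) = O(n\alpha^{-1})$ time, reducing the cost of applying $\mproj$ to a vector to $O(d|L|)$. Then $\mproj \mg$ is computed column-by-column in total time $O(cd|L|) = O(n\log(n/\delta))$, after which each $\tX_j = (\mproj \mg)^\top X_j$ costs $O(cd)$, giving $O(nd\log(n/\delta))$ overall. The pruning step compares each of the $n$ projected points against each of $O(\alpha^{-1})$ candidate means in $c$-dimensional space, costing $O(n\alpha^{-1} \log(n/\delta))$; the equivalence relation on $\tL$ and the final labeling are cheaper. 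Summing gives the stated $O(n(d+\alpha^{-1})\log(n/\delta))$ bound. The main obstacle is controlling the constants so that both the radius $2.5\Delta$ in the definition of $\mathcal{S}(\cdot)$ and the threshold $20\Delta$ in $\sim$ remain simultaneously correct after (i) the $\Delta$ loss from hypothesis error and (ii) the $(1\pm\eta)$ distortion of JL, which requires choosing the JL embedding dimension $c$ with a sufficiently small distortion target.
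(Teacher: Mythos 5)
Your proof follows the paper's overall structure (JL conditioning, Markov on the projected norm to show the true hypotheses survive pruning, equivalence class argument, QR-based runtime), and the runtime analysis is a valid, slightly more careful version of the paper's. However, the pruning step — showing that every $\hmu$ that survives into $\tL$ is $O(\Delta)$-close to some true mean — has a genuine gap, and it is the most delicate step of the proof.

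You argue: $|\mathcal{S}(\hmu)| \ge \alpha n/2$, at most $\alpha n/4$ are adversarial, ``so a majority of $\mathcal{S}(\hmu)$ must come from some genuine component $\dist_i$.'' This does not follow. The $\ge \alpha n/4$ genuine points in $\mathcal{S}(\hmu)$ may be spread over many of the $k$ components, with none contributing a majority or even a constant fraction of $\mathcal{S}(\hmu)$. The pigeonhole simply does not localize a component. Even granting a single component $i$ contributing many points, you have not spelled out why ``the concentration argument run in the other direction forces $\hmu$ to lie within $O(\Delta)$ of $\mu_i$.'' The correct approach is the contrapositive, done component by component: if $\norm{\hmu-\mu_i}_2 > 7\Delta$ for every $i$, then for each $i$ the probability that $X \sim \dist_i$ lands in $\mathcal{S}(\hmu)$ is $o(\alpha)$, and summing these (weighted by $\alpha_i$, which sum to $1$) together with the $\le \alpha n/4$ adversarial points gives $|\mathcal{S}(\hmu)| < \alpha n/2$, so $\hmu \notin \tL$.

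There is a second, more subtle issue hiding in the patch. The per-component bound must be $o(\alpha)$, not $o(1)$, and this cannot be obtained from the full-projection Markov bound you use elsewhere: $\Pr[\norm{\mproj(X-\mu_i)}_2 \ge 3\Delta] \le \Tr(\mproj)/(9\Delta^2) = O(\alpha^{-1}/\Delta^2) = O(1/\log^2\alpha^{-1})$, which is $o(1)$ but nowhere near $o(\alpha)$. The paper instead projects onto the \emph{single} unit direction $v$ along $\hmu - \mproj\mu_i$: membership in $\mathcal{S}(\hmu)$ forces $X$ to be closer to $\hmu$ than to $\mproj\mu_i$ in the projected space, which forces $\inprod{v}{X-\mu_i} \gtrsim \Delta$, and a one-dimensional Chebyshev bound (variance $\le 1$ along any fixed direction since $\cov(\dist_i) \preceq \id$) gives probability $\le 1/\Omega(\Delta^2) = o(\alpha)$ because $\Delta = \Omega(\sqrt{\alpha^{-1}}\log\alpha^{-1})$. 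Your proposal never makes this move to a one-dimensional projection, so the quantitative bound needed for pruning is missing. Your final concern about tuning $\eta$ and the constants $2.5\Delta$, $20\Delta$ is a real but secondary issue (the paper is itself somewhat loose with these constants); the missing one-dimensional Chebyshev is the substantive gap.
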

\begin{proof}
Throughout this proof, condition on the event that distances between $\mproj \mx \cup L$ are preserved up to $1 \pm 0.1$ by multiplication through $\mg^\top$ and that there are at most $\frac{\alpha n}{3}$ adversarial points; we will union bound this conditioning with an event of probability $1 - k\exp(-\Omega(d))$.

We first claim that any $\hmu \in L$ satisfying $\norm{\hmu - \mu_i}_2 \le \Delta$ for some $i \in [k]$ is contained in $\tL$, with probability at least $1 - \exp(-\Omega(d))$: call this claim $\dagger$. To see $\dagger$, with probability at least $1 - \exp(-\Omega(d))$, a $\frac{9}{10}$ fraction of $X_j \in \mx$ sampled from $\dist_i$ satisfy $\norm{\mproj(X_j - \mu_i)}_2 \le \Delta$. This follows since Chebyshev's inequality, $\Delta = \omega(\sqrt{\alpha^{-1}})$, and $\mproj$ projecting onto a $O(\alpha^{-1})$-dimensional subspace imply the probability $\norm{\mproj(X_j - \mu_i)}_2 \le \Delta$ is $o(1)$. By triangle inequality and $\mproj \preceq \id$, for such $X_j$,
\[\norm{\mproj(X_j - \hmu)}_2 \le \norm{\mproj(X_j - \mu_i)}_2 + \norm{\mproj(\hmu - \mu_i)}_2 \le \Delta + \norm{\hmu - \mu_i}_2 \le 2\Delta.\]
Since $\mg^\top$ preserves distances to a $1.1$ factor, this implies $\norm{\tX_j - \tmu}_2 \le 2.5\Delta$ as desired.

We next claim that any $\hmu \in L$ satisfying $\norm{\hmu - \mu_i}_2 > 7\Delta$ for all $i \in [k]$ will not be contained in $\tL$, with probability at least $1 - \exp(-\Omega(d))$. To see this, we claim $\mathcal{S}(\hmu)$ can only contain an $o(\alpha)$ fraction of the points drawn from $\dist_i$; summing over all $\dist_i$, and adding in all $\le \frac{\alpha n}{3}$ adversarial points, shows $|\mathcal{S}(\hmu)| < \frac{\alpha n}{2}$. This latter claim follows by first observing that for $X_j \in \mathcal{S}(\hmu)$,
\begin{align*}\norm{\mproj(X_j - \mu_i)}_2 &\ge \norm{\mu_i - \hmu}_2 - \norm{\mu_i - \mproj\mu_i}_2 - \norm{\mproj(X_j - \hmu)}_2 \\
&> 7\Delta - \Delta - 3\Delta = 3\Delta \ge \norm{\mproj(X_j - \hmu)}_2.
\end{align*}
The first inequality used that $\mproj \hmu = \hmu$, and the triangle inequality twice. In the second inequality, we used $\norm{\mu_i - \mproj \mu_i}_2 \le \Delta$ since the subspace projected to by $\mproj$ contains a point at most $\Delta$ away from $\mu_i$, and $\norm{\mproj(X_j - \hmu)}_2 \le 3\Delta$ by the assumption on $\mg$ and $X_j \in \mathcal{S}(\hmu)$. This implies that the projection of $X_j$ is closer to $\mproj \hmu = \hmu$ than $\mproj \mu_i$. Letting $\dist_i'$ be the projection of $\dist_i$ by $\mproj$, $X \sim \dist'_i$ is closer to $\hmu$ than $\mproj \mu_i$ with probability at most $\frac{1}{\Omega(\Delta^2)} = o(\alpha)$ by arguments in Lemma~\ref{lem:associatedclose} and Chebyshev. Hence, the probability that $X_j \in \mathcal{S}(\hmu)$ is $o(\alpha)$, as desired.

Now, under the success of all above conditioning events, by the minimum separation assumption of $20\Delta$, each surviving $\hmu \in \tL$ has a unique $\mu_i$ such that $\norm{\hmu - \mu_i}_2 \le 7\Delta$, and each designated $\hmu_i$ with $\norm{\hmu_i - \mu_i}_2 \le \Delta$ survives. Hence, the equivalence partition succeeds and captures all $\hmu \in \tL$ at distance at most $7\Delta$ from a $\mu_i$. Moreover, for every pair $\hmu, \hmu' \in \tL$ such that $\norm{\hmu - \mu_i}_2 \le 7\Delta$ and $\norm{\hmu' - \mu_{i'}}_2 \le 7\Delta$ for some $i \neq i'$,
\[X \in \mathcal{S}(\hmu) \cap \mathcal{S}(\hmu') \implies \norm{\hmu - \hmu'}_2 \le 6\Delta. \]
Since the minimum separation between $\mu_i$ and $\mu_{i'}$ is $20\Delta$, this is a contradiction by the triangle inequality. So, no sets $\mathcal{S}(\hmu)$ and $\mathcal{S}(\hmu')$ intersect, where the hypotheses are close to different means. Thus the labeling in Line 7 is well-defined. It remains to show that if $X_j \sim \dist_i$, we will have $X_j \in \mathcal{S}(\hmu_i)$ with probability $1 - o(1)$ (and hence a $1 - o(1)$ fraction of points is labeled correctly). This was in fact shown earlier, when we proved the claim $\dagger$.

Finally, it is clear that all operations can be performed in the desired runtime: $\mg^\top \mproj$ can be computed explicitly in time $O(d^2\alpha^{-1} + d^2 \log \frac n \delta) = O(nd + d^2 \log \frac n \delta)$ via na\"ive matrix multiplication, and all projection and distance computations (multiplying all points and hypotheses by $\mg^\top \mproj$, computing all $\mathcal{S}(\hmu)$, and checking all pairwise distances in Line 6) take time $O(n \cdot (d + \frac 1 \alpha)\cdot \log \frac n \delta)$. 
\end{proof}

\begin{corollary}\label{cor:fullbcmm}
Suppose every pair $i, i' \in [k]$, $i \neq i'$ satisfies $\norm{\mu_i - \mu_{i'}}_2 = \Omega(\sqrt{\alpha^{-1}} \log \alpha^{-1})$ for an appropriate constant. There is an algorithm drawing $n = \Theta(\frac d \alpha)$ samples from the mixture $(1 - \eps)\sum_{i \in [k]} \alpha_i \dist_i + \eps \dist_{\textup{adv}}$ where $\dist_i$ has mean $\mu_i$ and covariance bounded by $\id$, and returns a correct clustering of a $1 - \eps_1$ fraction of all points drawn from some $\dist_i$ (up to label permutation) for any fixed $\eps_1 > 0$, with probability at least 
\[1 - \delta - k\exp(-\Omega(d)).\]
The algorithm runs in time, for any fixed $\eps_0 > 0$,
\[O\Par{n^{1 + \eps_0} d \log^4 n \log^4 \frac n \delta + \frac 1 {\alpha^2} \log^4 \frac n \delta + \frac n \alpha \log \frac n \delta}.\]
\end{corollary}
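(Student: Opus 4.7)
The plan is to reduce to the combination of Theorem~\ref{thm:multifilter} (to produce a list) and Lemma~\ref{lem:boundcovmain} (to cluster given the list), in direct analogy with the proofs of Corollaries~\ref{cor:fullgmm} and~\ref{cor:fullbfmm}. Specifically, I would split the $n = \Theta(d/\alpha)$ drawn samples into two disjoint halves $\mx_1, \mx_2$ of equal size, use $\mx_1$ to invoke $\Multifilter(\mx_1, \alpha/2, \delta/3, \beta)$ with $\beta = \eps_0/2$, and use the output list $L$ together with $\mx_2$ to invoke $\ClusterBCMM(\mx_2, L, \Delta, k, \delta/3, \alpha)$ for the appropriate $\Delta = \Theta(\sqrt{\alpha^{-1}}\log \alpha^{-1})$ guaranteed by Theorem~\ref{thm:multifilter}.

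The main step is verifying the preconditions of both subroutines. For Theorem~\ref{thm:multifilter}, I would fix an arbitrary index $i \in [k]$ and view the samples in $\mx_1$ drawn from $\dist_i$ as the ``good set'' and everything else (other components plus adversarial) as arbitrary corruption. Standard binomial concentration implies that with probability $1 - k\exp(-\Omega(d))$, every such set has size at least $0.9\alpha_i(1-\eps)|\mx_1| = \Omega(d)$, and Proposition B.1 of \cite{CharikarSV17} then certifies Assumption~\ref{assume:sexists} is satisfied (with the ``$\alpha$'' of that assumption being at least $\alpha/2$ after accounting for $\eps \le \alpha/4$) with probability $1 - \exp(-\Omega(d))$. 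Theorem~\ref{thm:multifilter} thus returns, for each component, some $\hmu_i \in L$ with $\|\hmu_i - \mu_i\|_2 = O(\sqrt{\alpha^{-1}}\log \alpha^{-1})$, which upon union bounding over $i \in [k]$ meets the precondition of Lemma~\ref{lem:boundcovmain} on the independently sampled set $\mx_2$.

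Invoking Lemma~\ref{lem:boundcovmain} on $\mx_2$ with this $L$ and the minimum separation hypothesis $\|\mu_i - \mu_{i'}\|_2 = \Omega(\sqrt{\alpha^{-1}}\log \alpha^{-1})$ then yields, up to label permutation, a correct clustering of a $1 - o(1)$ fraction of the non-adversarial points of $\mx_2$, which absorbs the $\eps_1$ slack (and can tolerate the $\eps \le \alpha/4$ adversarial fraction since those points are only a small additive contribution to any claimed cluster). The runtime bound follows by adding the bound of Theorem~\ref{thm:multifilter}, namely $O(n^{1+\eps_0} d \log^4 n \log^4(n/\delta) + \alpha^{-2}\log^4(n/\delta))$, to the $O(n(d + \alpha^{-1})\log(n/\delta))$ bound of Lemma~\ref{lem:boundcovmain}; the failure probability follows by summing the $\delta/3$ tolerances of each subroutine and the $k\exp(-\Omega(d))$ concentration slack.

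The main bookkeeping obstacle is matching labels across what is effectively a clustering of only $\mx_2$: if the statement requires a clustering of all drawn samples, I would repeat the split symmetrically (running the procedure on $(\mx_2,\mx_1)$ and then on $(\mx_1,\mx_2)$) and match clusters between the two runs by majority overlap on common clusters, as is done in the proof of Corollary~\ref{cor:fullugmm}; since each $\dist_i$ contributes $\Omega(d)$ points to each half and the $\eps_1 + \eps$ fraction of misclassifications is $o(\alpha)$, the majority matching is well-defined. This is routine but must be checked against the $\eps_1$ slack to ensure no cascading error inflation.
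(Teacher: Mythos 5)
Your high-level plan (split data, learn a list from one part, cluster the other part via Lemma~\ref{lem:boundcovmain}) is the right reduction, but the 50--50 split you propose does not deliver the stated guarantee, and your proposed repair does not close the gap.

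The issue: the corollary promises a correct clustering of a $1-\eps_1$ fraction of \emph{all} points drawn from some $\dist_i$. If you learn $L$ from $\mx_1$ and then cluster only $\mx_2$, you label (at best) about half of the non-adversarial points; a $1-o(1)$ accuracy \emph{restricted to $\mx_2$} gives roughly a $\tfrac12$ fraction of the whole dataset, not $1-\eps_1$. The symmetric-matching fix you sketch --- run on $(\mx_1,\mx_2)$ and $(\mx_2,\mx_1)$ and match by majority overlap ``as in Corollary~\ref{cor:fullugmm}'' --- does not transfer to a 50--50 split. In the proof of Corollary~\ref{cor:fullugmm} the two holdouts are each a $\tfrac{1}{10}$ fraction, so the two clustered sets (each $\tfrac{9}{10}$ of the data) overlap on $\tfrac{8}{10}$ of the points, and label-matching by overlap is well-defined. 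With equal halves, one run clusters $\mx_2$ and the other clusters $\mx_1$; these are disjoint, so there is no overlap on which to match clusters.

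The paper's proof is much simpler and avoids the matching issue entirely: since only a $1-\eps_1$ fraction is required, hold out an $\tfrac{\eps_1}{2}$ fraction (not a half) to learn $L$, and run $\ClusterBCMM$ on the remaining $1-\tfrac{\eps_1}{2}$ fraction. Lemma~\ref{lem:boundcovmain} then correctly clusters a $1-o(1)$ fraction of those points, giving $(1-\tfrac{\eps_1}{2})(1-o(1)) \ge 1-\eps_1$ of all non-adversarial points overall; the held-out $\tfrac{\eps_1}{2}$ fraction can simply be left unlabeled or labeled arbitrarily. No second run, no matching. For a fixed constant $\eps_1 > 0$, an $\tfrac{\eps_1}{2}$ fraction of $n = \Theta(d/\alpha)$ samples is still $\Theta(d/\alpha)$, so the sample-complexity precondition for $\Multifilter$ still holds; the rest of your verification of preconditions and the runtime accounting are fine. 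You should replace the equal split with a small holdout and drop the symmetric-matching step.
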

\begin{proof}
Since the goal is to correctly label a $1 - \eps_1$ fraction of all points, we can use a $\frac{\eps_1}{2}$ fraction of our dataset as holdout for learning an independent list $L$. We then use this list to cluster a $1 - \frac{\eps_1}{2}$ fraction of the remaining points via Lemma~\ref{lem:boundcovmain}.
\end{proof}

\subsection*{Acknowledgments}
Ilias Diakonikolas is supported by NSF Medium Award CCF-2107079,
NSF Award CCF-1652862 (CAREER), a Sloan Research Fellowship, and
a DARPA Learning with Less Labels (LwLL) grant.
Daniel Kane is supported by NSF Medium Award CCF-2107547, 
NSF CAREER Award ID 1553288, and a Sloan fellowship. 
Kevin Tian is supported by a Google Ph.D.\ Fellowship, a Simons-Berkeley VMware Research Fellowship, NSF CAREER Award CCF-1844855, 
NSF Grant CCF-1955039, and the Alfred P.\ Sloan Foundation. 

\newpage
\bibliographystyle{alpha}	
\bibliography{nd}
\begin{appendix}
  
\end{appendix}

\end{document}